\pdfoutput=1
\documentclass[11pt]{article}
\usepackage[margin=1in]{geometry}

\usepackage{mathtools}
\usepackage{amssymb,amsthm}
\usepackage{thmtools}
\usepackage{thm-restate}		
\usepackage[varbb]{newpxmath}
\usepackage{nicefrac}
\usepackage[most]{tcolorbox}

\usepackage{graphicx}
\usepackage{enumerate}
\usepackage{enumitem}
\usepackage{bbm} 
\usepackage{appendix}
\usepackage{caption}
\RequirePackage{subfigure}

\usepackage{algorithm}
\usepackage{multicol}
\usepackage[framemethod=TikZ]{mdframed}
\mdfsetup{%
backgroundcolor=gray!10, , 
roundcorner=4pt,
linewidth=1pt}
\usepackage[noend]{algpseudocode}

\usepackage[dvipsnames,svgnames]{xcolor}
\usepackage{color}
\definecolor{niceRed}{RGB}{190,38,38}
\definecolor{niceYellow}{HTML}{f5b400}
\definecolor{blueGrotto}{HTML}{059DC0}
\definecolor{royalBlue}{HTML}{057DCD}
\definecolor{navyBlue}{HTML}{0B579C}
\definecolor{limeGreen}{HTML}{81B622}
\definecolor{nicePurple}{HTML}{9c27b0}
\definecolor{lightRoyalBlue}{HTML}{def2ff}  
\definecolor{gold}{HTML}{ffa300}
\colorlet{MyRed}{FireBrick!50!Crimson}
\colorlet{MyBlue}{DodgerBlue!75!black}
\colorlet{MyGreen}{DarkGreen!85!black}
\colorlet{MyViolet}{DarkMagenta}

\colorlet{MyLightBlue}{DodgerBlue!20}
\colorlet{MyLightGreen}{MyGreen!20}

\colorlet{PrimalColor}{MyBlue}
\colorlet{PrimalFill}{MyLightBlue}
\colorlet{DualColor}{MyRed}

\colorlet{RevColor}{MyRed}
\colorlet{LinkColor}{MediumBlue}

\usepackage{hyperref} 
\usepackage[capitalize,nameinlink]{cleveref}

\hypersetup{
  colorlinks = true, 
  urlcolor = {blueGrotto},
  linkcolor = {royalBlue},
  citecolor = {navyBlue}
}
\usepackage[
  backref=true,
  backend=biber,
  natbib=true,
  style=alphabetic,
  sorting=alphabeticlabel,
  sortcites=true,
  minbibnames=3,
  maxbibnames=999,
  mincitenames=4,
  maxcitenames=4,
  minalphanames=4,
  maxalphanames=4
]{biblatex}

\usepackage{enumitem}

\usepackage{tikz}
\usepackage{pgfplots}
\pgfplotsset{compat=1.18}
\usepgfplotslibrary{fillbetween}

\usepackage{wrapfig}

\DeclareSortingTemplate{alphabeticlabel}{
  \sort[final]{%
    \field{labelalpha} 
  }
  \sort{%
    \field{year}   
  }
  \sort{%
    \field{title}  
  }
}

\AtBeginRefsection{\GenRefcontextData{sorting=ynt}}
\AtEveryCite{\localrefcontext[sorting=ynt]}
\usepackage{algpseudocode}

\usepackage[suppress]{color-edits}

\addauthor[Shuchen]{sl}{blue}
\addauthor[Alkis]{ak}{niceRed}
\addauthor[Manolis]{mz}{limeGreen}
\addauthor[Pravesh]{pk}{nicePurple}

\usepackage{acronym}		

\usepackage[normalem]{ulem}		

\newcommand{\debug}[1]{{#1}}		

\newcommand{\newmacro}[2]{\newcommand{#1}{\debug{#2}}}		
\newcommand{\newop}[2]{\DeclareMathOperator{#1}{\debug{#2}}}		

\DeclarePairedDelimiter{\braces}{\{}{\}}		
\DeclarePairedDelimiter{\bracks}{[}{]}		
\DeclarePairedDelimiter{\parens}{(}{)}		

\DeclarePairedDelimiter{\abs}{\lvert}{\rvert}		

\DeclarePairedDelimiter{\inner}{\langle}{\rangle}

\DeclarePairedDelimiter{\setof}{\{}{\}}		
\DeclarePairedDelimiterX{\setdef}[2]{\{}{\}}{#1:#2}		
\DeclarePairedDelimiterXPP{\exclude}[1]{\mathopen{}\setminus}{\{}{\}}{}{#1}		

\newcommand{\N}{\mathbb{N}}		
\newcommand{\R}{\mathbb{R}}		

\DeclareMathOperator*{\argmax}{arg\,max}		
\DeclareMathOperator*{\argmin}{arg\,min}		
\DeclareMathOperator{\poly}{poly}		
\DeclareMathOperator*{\trunc}{Trunc}		

\DeclareMathOperator{\bigoh}{\mathcal{O}}		
\DeclareMathOperator{\conv}{conv}		
\DeclareMathOperator{\dom}{dom}		
\DeclareMathOperator{\grad}{\nabla}		
\DeclareMathOperator{\one}{\mathds{1}}		
\DeclareMathOperator{\supp}{supp}		
\DeclareMathOperator{\unif}{unif}		

\newmacro{\dd}{\:d}		
\newcommand{\eps}{\varepsilon}		

\newmacro{\const}{c}		
\newmacro{\Const}{\rho}		
\newmacro{\coefalt}{\mu}		
\usepackage{xparse}
\newmacro{\param}{\theta}		
\newmacro{\params}{\Theta}		

\newmacro{\pexp}{p}		
\newmacro{\qexp}{q}		
\newmacro{\rexp}{r}		

\newmacro{\radius}{r}

\newmacro{\tstart}{0}			
\renewcommand{\time}{\debug{t}}	
\newmacro{\timealt}{\tau}		
\newmacro{\timealtalt}{s}		
\newmacro{\tend}{T}				

\newop{\brep}{br}		

\DeclarePairedDelimiter{\norm}{\lVert}{\rVert}		
\DeclarePairedDelimiterXPP{\dnorm}[1]{}{\lVert}{\rVert}{_{\ast}}{#1}		

\DeclarePairedDelimiterXPP{\onenorm}[1]{}{\lVert}{\rVert}{_{1}}{#1}		
\DeclarePairedDelimiterXPP{\twonorm}[1]{}{\lVert}{\rVert}{_{2}}{#1}		
\DeclarePairedDelimiterXPP{\supnorm}[1]{}{\lVert}{\rVert}{_{\infty}}{#1}		
\DeclarePairedDelimiterXPP{\opnorm}[1]{}{\lVert}{\rVert}{_{\operatorname{op}}}{#1}        

\makeatletter
\newcommand*{\tran}{{\mathpalette\@tran{}}}
\newcommand*{\@tran}[2]{\raisebox{\depth}{$\m@th#1\intercal$}}
\makeatother

\newcommand{\defeq}{\coloneqq}		
\newcommand{\eqdef}{\eqqcolon}		

\newcommand{\from}{\colon}		

\newop{\ex}{\mathbb{E}}		
\newop{\prob}{\mathbb{P}}		
\newop{\Var}{Var}		

\DeclarePairedDelimiterXPP{\exof}[1]{\ex}{[}{]}{}{
 #1}

\DeclarePairedDelimiterXPP{\probof}[1]{\prob}{(}{)}{}{		
 #1}

\DeclarePairedDelimiterXPP{\oneof}[1]{\one}{\{}{\}}{}{
 #1}

\newmacro{\sample}{\omega}		
\newmacro{\samples}{\Omega}		

\newmacro{\seed}{\theta}		
\newmacro{\seeds}{\Theta}		

\usepackage[kerning=true]{microtype}		

\usepackage{tabto}		
\usepackage{xspace}		

\newacro{FTRL}{Follow-The-Regularized-Leader}

\newmacro{\xnum}{n}		
\newmacro{\ynum}{m}		

\newmacro{\xsimplex}{\simplex^{\xnum}}
\newmacro{\ysimplex}{\simplex^{\ynum}}

\newmacro{\learn}{\eta}		

\newmacro{\x}{x}	
\newmacro{\y}{y}	

\newmacro{\xt}{x(t)}
\newmacro{\yt}{y(t)}

\newcommand{\xstrat}[1]{\x \parens*{#1}}	
\newcommand{\ystrat}[1]{\y \parens*{#1}}	
\newmacro{\xc}{\x_c}
\newmacro{\xd}{\x_d}

\newmacro{\mat}{A}		
\newmacro{\matalt}{B}	

\newmacro{\A}{A}
\newmacro{\B}{B}

\newop{\Valop}{Val}
\newmacro{\val}{\Valop(\mat)}	

\newmacro{\payoff}{u_{o}}	
\newmacro{\valfix}{u_{o}^*}	

\newmacro{\xsp}{\xsimplex}		
\newmacro{\ysp}{\ysimplex}		

\newop{\hr}{S}	
\newop{\HR}{S}	

\newop{\reg}{h} 
\newmacro{\regmax}{\reg_{\max}} 
\newmacro{\regmin}{\reg_{\min}} 
\newmacro{\choicemap}{Q} 

\newop{\regconj}{\reg^*} 
\newmacro{\dualy}{z} 
\newmacro{\dualysp}{\ysp_{\ast}} 

\newmacro{\fgen}{f}
\newmacro{\ugen}{u} 
\newmacro{\vgen}{v} 

\newop{\breg}{D} 

\newop{\simplex}{\Delta}		

\newmacro{\ball}{\mathbb{B}} 

\newop{\contreward}{R_{\mathrm{cont}}} 
\newcommand{\rcont}[3]{\contreward \parens*{#1,\, #2,\, #3}} 
\newcommand{\optrcont}[2]{\contreward^* \parens*{#1,\, #2}} 

\newop{\discreward}{R_{\mathrm{disc}}} 
\newcommand{\rdisc}[3]{\discreward \parens*{#1,\, #2,\, #3}} 
\newcommand{\optrdisc}[2]{\discreward^* \parens*{#1,\, #2}} 

\newmacro{\zeroreward}{0} 

\newmacro{\learnhr}{Y} 

\newmacro{\firstterm}{\Psi} 

\newmacro{\scparam}{\alpha} 

\newcommand{\sol}[1]{#1^{\ast}}		

\newcommand{\xopt}{\sol{\x}}		
\newmacro{\xfix}{\hat \x}
\newmacro{\xne}{\x^{\star}} 
\newmacro{\yopt}{\sol{\y}}		
\newmacro{\ybrep}{\sol{\hat \y}}

\renewcommand{\vv}{\debug{v}} 

\newmacro{\tpennies}{k} 

\newmacro{\smoothparam}{\beta}

\newop{\neven}{N_e}
\newop{\nodd}{N_o}

\newmacro{\gap}{\delta}
\newmacro{\gapmin}{\delta_{\min}}
\newmacro{\gapmax}{\delta_{\max}}
\newmacro{\optset}{\brep(\xfix)}
\newmacro{\optnum}{k}
\newop{\expl}{\operatorname{EXP}}
\newop{\expld}{\expl^{\mathrm{d}}}

\newmacro{\xcp}{\tilde{\x}}
\newmacro{\kkt}{\lambda}

\newop{\vagop}{VAG}
\newop{\lagop}{LAG}
\newmacro{\vag}{\vagop(\tend)}
\newmacro{\lag}{\lagop(\tend)}
\newmacro{\lagd}{\lagop^{\mathrm{d}}(\tend)}

\newmacro{\ns}{\text{ns}}

\newmacro{\Ch}{C_{\reg}(\tend)}
\newmacro{\Chl}{\underline{\Delta V}(\tend)}
\newmacro{\Chu}{\overline{\Delta V}(\tend)}
\newmacro{\Chsteep}{C_{\reg}^{\mathrm{steep}}}
\newmacro{\Chsteepl}{\underline{\Chsteep}}
\newmacro{\Chsteepu}{\overline{\Chsteep}}
\newmacro{\Chnsteep}{C_{\reg}^{\mathrm{nonsteep}}}
\newmacro{\Chnsteepl}{\underline{\Chnsteep}}
\newmacro{\Chnsteepu}{\overline{\Chnsteep}}

\newop{\acc}{\varepsilon}
\newop{\manip}{\mathcal{M}_{\reg}}
\newop{\trade}{\mathcal F}

\newop{\cost}{\mathcal{M}_{\reg}}

\newmacro{\cnorm}{c_{\norm{\cdot}}}

\DeclarePairedDelimiter{\posi}{[}{]_+}

\newmacro{\eventtwo}{\mathcal A_{\xnum \ynum}}
\newmacro{\eventgap}{\mathcal{E}_{\mathrm{gap}}}

\newcommand{\gvalue}{V^\star}            
\newcommand{\gregret}{\text{Regret}}   

\newmacro{\nsub}{N_{\mathrm{sub}}} 

\usepackage{tikz}
\usepackage{bm}

\theoremstyle{plain} 
\newtheorem{theorem}{Theorem}[section]
\newtheorem{corollary}[theorem]{Corollary}
\newtheorem{proposition}[theorem]{Proposition}
\newtheorem{lemma}[theorem]{Lemma}

\newtheorem{claim}[theorem]{Claim}

\newtheorem{assumption}{Assumption}

\newtheorem{definition}{Definition}

\newtheorem*{definition*}{Definition}

\newtheorem{observation}[theorem]{Observation}

\theoremstyle{definition} 

\theoremstyle{remark} 

\newtheorem{remark}{Remark}

\AfterEndEnvironment{definition}{\noindent\ignorespaces}
\AfterEndEnvironment{infdefinition}{\noindent\ignorespaces}
\AfterEndEnvironment{example}{\noindent\ignorespaces}
\AfterEndEnvironment{assumption}{\noindent\ignorespaces}
\AfterEndEnvironment{lemma}{\noindent\ignorespaces}
\AfterEndEnvironment{claim}{\noindent\ignorespaces}
\AfterEndEnvironment{theorem}{\noindent\ignorespaces}
\AfterEndEnvironment{proposition}{\noindent\ignorespaces}
\AfterEndEnvironment{fact}{\noindent\ignorespaces}
\AfterEndEnvironment{question}{\noindent\ignorespaces}
\AfterEndEnvironment{corollary}{\noindent\ignorespaces}
\AfterEndEnvironment{model}{\noindent\ignorespaces}
\AfterEndEnvironment{remark}{\noindent\ignorespaces}
\AfterEndEnvironment{proof}{\noindent\ignorespaces}
\AfterEndEnvironment{fact}{\noindent\ignorespaces}
\AfterEndEnvironment{minftheorem}{\noindent\ignorespaces}
\AfterEndEnvironment{inftheorem}{\noindent\ignorespaces}
\AfterEndEnvironment{maintheorem}{\noindent\ignorespaces}
\AfterEndEnvironment{restatable}{\noindent\ignorespaces}
\AfterEndEnvironment{observation}{\noindent\ignorespaces}

\crefname{section}{Section}{Sections}
\crefname{theorem}{Theorem}{Theorems}
\crefname{theorem*}{Theorem}{Theorems}
\crefname{inftheorem}{Informal Theorem}{Informal Theorems}
\crefname{assumption}{Assumption}{Assumptions}
\crefname{lemma}{Lemma}{Lemmas}
\crefname{definition}{Definition}{Definitions}
\crefname{infdefinition}{Informal Definition}{Informal Definitions}
\crefname{conjecture}{Conjecture}{Conjectures}
\crefname{corollary}{Corollary}{Corollaries}
\crefname{construction}{Construction}{Constructions}
\crefname{conjecture}{Conjecture}{Conjectures}
\crefname{claim}{Claim}{Claims}
\crefname{observation}{Observation}{Observations}
\crefname{proposition}{Proposition}{Propositions}
\crefname{fact}{Fact}{Facts}
\crefname{question}{Question}{Questions}
\crefname{problem}{Problem}{Problems}
\crefname{remark}{Remark}{Remarks}
\crefname{example}{Example}{Examples}
\crefname{equation}{Equation}{Equations}
\crefname{appendix}{Appendix}{Appendices}
\crefname{algorithm}{Algorithm}{Algorithms}
\crefname{model}{Model}{Models}
\crefname{figure}{Figure}{Figures}
\crefname{condition}{Condition}{Conditions}

\AfterEndEnvironment{definition}{\noindent\ignorespaces}
\AfterEndEnvironment{infdefinition}{\noindent\ignorespaces}
\AfterEndEnvironment{assumption}{\noindent\ignorespaces}
\AfterEndEnvironment{problem}{\noindent\ignorespaces}
\AfterEndEnvironment{lemma}{\noindent\ignorespaces}
\AfterEndEnvironment{theorem}{\noindent\ignorespaces}
\AfterEndEnvironment{proposition}{\noindent\ignorespaces}
\AfterEndEnvironment{fact}{\noindent\ignorespaces}
\AfterEndEnvironment{question}{\noindent\ignorespaces}
\AfterEndEnvironment{corollary}{\noindent\ignorespaces}
\AfterEndEnvironment{model}{\noindent\ignorespaces}
\AfterEndEnvironment{remark}{\noindent\ignorespaces}
\AfterEndEnvironment{proof}{\noindent\ignorespaces}
\AfterEndEnvironment{fact}{\noindent\ignorespaces}
\AfterEndEnvironment{minftheorem}{\noindent\ignorespaces}
\AfterEndEnvironment{inftheorem}{\noindent\ignorespaces}
\AfterEndEnvironment{maintheorem}{\noindent\ignorespaces}
\AfterEndEnvironment{restatable}{\noindent\ignorespaces}
\AfterEndEnvironment{infassumption}
{\noindent\ignorespaces}
\AfterEndEnvironment{infcorollary}{\noindent\ignorespaces}

\crefname{section}{Section}{Sections}
\crefname{theorem}{Theorem}{Theorems}
\crefname{lemma}{Lemma}{Lemmas}
\crefname{problem}{Problem}{Problems}
\crefname{program}{Program}{Progams}
\crefname{definition}{Definition}{Definitions}
\crefname{conjecture}{Conjecture}{Conjectures}
\crefname{corollary}{Corollary}{Corollaries}
\crefname{construction}{Construction}{Constructions}
\crefname{conjecture}{Conjecture}{Conjectures}
\crefname{claim}{Claim}{Claims}
\crefname{observation}{Observation}{Observations}
\crefname{proposition}{Proposition}{Propositions}
\crefname{fact}{Fact}{Facts}
\crefname{question}{Question}{Questions}
\crefname{problem}{Problem}{Problems}
\crefname{remark}{Remark}{Remarks}
\crefname{example}{Example}{Examples}
\crefname{equation}{Equation}{Equations}
\crefname{appendix}{Section}{Sections}
\crefname{algorithm}{Algorithm}{Algorithms}
\crefname{model}{Model}{Models}
\crefname{figure}{Figure}{Figures}
\crefname{infassumption}{Informal Assumption}{Informal Assumptions}
\crefname{inftheorem}{Informal Theorem}{Informal Theorems}
\crefname{infdefinition}{Informal Definition}{Informal Definitions}
\crefname{minftheorem}{Main Informal Theorem}{Main Informal Theorems}
\crefname{maintheorem}{Main Theorem}{Main Theorems}
\crefname{assumption}{Assumption}{Assumptions}
\crefname{step}{Step}{Steps}
\crefname{result}{Result}{Results}
\crefname{event}{Event}{Events}
\crefname{none}{}{}

\definecolor{myC}{rgb}{0, 255, 255}
\definecolor{myY}{rgb}{204, 204, 0}
\definecolor{myM}{rgb}{255, 0, 255}
\definecolor{secinhead}{RGB}{249,196,95}
\definecolor{lgray}{gray}{0.8}

\usepackage{appendix}
\crefname{appsec}{Appendix}{Appendices}
\title{No Coin Left Behind: Maximizing Strategic Surplus Against No-Regret Dynamics}

\date{}

\author{
Yiheng Su\\
University of Wisconsin--Madison\\
\texttt{ysu24@cs.wisc.edu}
\and
Emmanouil-Vasileios Vlatakis-Gkaragkounis\\
University of Wisconsin--Madison\\
\texttt{vlatakis@wisc.edu}
}

\begin{document}

\pagenumbering{roman}

\maketitle

\bigskip 
\bigskip 
\thispagestyle{empty}

\begin{abstract}
{
We investigate the \emph{strategic surplus} obtainable against a \emph{Follow-the-Regularized-Leader (FTRL)} learner with constant step size $\eta$ in  $n\times m$ two-player zero-sum games played over $T$ rounds against a clairvoyant optimizer. In contrast with prior analysis, we show that the extraction of such regret-scale surplus is an inherent feature of the FTRL family, rather than an artifact of specific instantiations. 
First, for a fixed max-min optimizer, we establish a sweeping law of order $\Omega(\nsub/\eta)$, proving that utility surplus scales with the number of the learner's suboptimal actions $N$ and vanishes in their absence.
Second, for an alternating optimizer, a surplus of $\Omega(\eta T/\poly(n,m))$ can be guaranteed regardless of the equilibrium structure, with high probability, in random games.
Our analysis uncovers a sharp geometric dichotomy: \emph{non-steep}
regularizers allow the optimizer to realize the maximal transient surplus via
finite-time elimination of suboptimal actions, whereas \emph{steep} regularizers
introduce a vanishing tail correction that can delay surplus saturation.
Finally, we discuss whether this leverage persists under bilateral payoff
uncertainty and propose a susceptibility measure quantifying which regularizers
are most vulnerable to learner-aware strategic steering.

\noindent
\textbf{Keywords:} FTRL no-regret learning; exploitability; regularization; best-response dynamics.
}
\end{abstract}

\clearpage
\tableofcontents
\thispagestyle{empty}

\clearpage
\pagenumbering{arabic}

\section{Introduction}
\label{section:introduction}
The study of learning algorithms in multi-agent systems has traditionally been viewed through a \emph{defensive} lens.
For instance, in the archetypal setting of repeated zero-sum games, algorithms like \emph{Multiplicative Weight Updates (MWU)} \citep{arora2012multiplicative} are celebrated for their robustness: 
they guarantee that a learner secures an average payoff at least equal to the minimax value of the game, effectively neutralizing any adversary in the long run.
This “no-regret” property has established such dynamics as the bedrock of modern equilibrium computation from high-frequency trading \citep{yang2012behavior}  to solving imperfect-information large games like Poker \citep{zinkevich2007regret,brown2019superhuman}.

However, a parallel line of inquiry at the intersection of economics and learning theory 
suggests that this is only half the story. Pioneering work by \citet{braverman2018selling,cai2023selling} demonstrated that in auction settings, a strategic seller can {exploit} a no-regret buyer to {extract nearly the entire social surplus}. More recently, this perspective has evolved into an emerging theory of ``steering'' learning agents, with applications ranging from contract design \citep{kolumbus2024contracting} to Bayesian persuasion \citep{mansour2022strategizing}. The unifying insight across these works is that when the symmetry of information is broken—specifically, when a ``clairvoyant'' optimizer faces a learner with a known update rule—the interaction morphs from a standard game into a \emph{Principal-Agent} control problem \citep{lin2024generalized}. Hence, from this viewpoint, the learner's regret is no longer merely a vanishing error term; it becomes a structured and potentially harvestable reservoir of additional utility \cite{arunachaleswaran2025learning,brown2023learning}.

\paragraph{The \(\Omega(T)\) and \(o(T)\) exploitation regimes.}
Two distinct scales of exploitation have emerged in the literature.
Early results showed that against mean-based algorithms, a strategic optimizer can secure an \(\Omega(T)\) gain above the Stackelberg payoff \citep{braverman2018selling,cai2023selling,kolumbus2024contracting}.
More recent work, however, identifies settings in which the maximum attainable excess utility shrinks to \(o(T)\), together with intriguing asymmetries between the corresponding lower and upper bounds \citep{lin2024generalized,assos2024maximizing}.
The latter results reveal that even the sublinear regime is not merely a technical footnote: the order of a cumulative gain--\(\Theta(\sqrt{T})\) or \(\Theta(T^{2/3})\)-- represents different levels of persistent finite-horizon cost of learning.

Given their central role, zero-sum games have offered a natural testbed for these questions. 
Recent work \citep{assos2024maximizing} demonstrates that learner-aware optimization can exceed the von Neumann benchmark $T \cdot V^*$ even in this classical setting---yet results remain tied to specific algorithms such as Hedge, or to particular update geometries. Consequently, a unified account of exploitability across the full family of no-regret dynamics is still lacking, which brings us to the following central question:
\begin{center} \emph{Is the learner's vulnerability an artifact of the complex economic setting,\\ or is it a structural inevitability encoded in the algorithm itself?}\end{center}
\noindent In this work, we argue for the latter. To substantiate this thesis, we anchor our investigation in the canonical setting of repeated zero-sum games between an \emph{FTRL learner} and a \emph{clairvoyant optimizer}. We focus on the \emph{Follow-the-Regularized-Leader} (FTRL) framework—the standard paradigm for no-regret learning—as it unifies algorithms like Multiplicative Weights, 1/2-Tsallis or Online  Gradient Descent under a single geometric umbrella. 
For clarity, we work primarily in the \emph{full-feedback} model. Appendix~\ref{appendix:azuma} provides concentration bounds relating realized \emph{bandit-feedback} payoffs or noisy observations to the corresponding full-feedback quantities, while Appendix~\ref{appendix:experiment} reports exploratory bandit-feedback experiments with explicit exploration. 

\begin{wrapfigure}{r}{0.35\textwidth}
\footnotesize
\centering
\begin{tikzpicture}[scale=0.75, transform shape]
  \begin{axis}[
      width=\linewidth,
      height=4cm,
      scale only axis,
      axis lines=left,
      xmin=0, xmax=10,
      ymin=0, ymax=12,
      xlabel={Time-step $t$},
      ylabel={Cumulative Optimizer Gain ($R_T$)},
      xtick={0,10},
      xticklabels={0,$T$},
      ytick=\empty,
      clip=false,
      enlargelimits=false,
      axis line style={semithick},
      tick style={semithick},
      xlabel style={font=\scriptsize, yshift=2.2ex},
      ylabel style={font=\scriptsize, yshift=-0.2ex},
      title={\textbf{Cumulative Optimizer Gain ($R_T$)}\\[-0.2ex]\textbf{and Surplus Extraction}},
      title style={
        align=center,
        font=\footnotesize,
        yshift=7.6ex,%
        xshift=3.6ex,
        inner sep=0pt
      },
      every axis plot/.append style={semithick},
  ]

    \addplot[name path=baseline, domain=0:10, dashed] {0.51*x};

    \addplot[name path=clairvoyant, domain=0:10] {0.6*x + 1.6*sqrt(x)};

    \addplot[fill=blue!12, draw=none] fill between[of=baseline and clairvoyant];

    \node[font=\scriptsize, align=center, inner sep=1pt]
      at (axis cs:7.35,5.9) {Surplus Reservoir};


    \draw[<-]
      (axis cs:5.1,{0.6*5.1 + 1.6*sqrt(5.1)}) -- (axis cs:4.15,8.45)
      node[above, align=center, font=\scriptsize, inner sep=1pt]
      {Optimal Clairvoyant\\Optimizer};

    \draw[<-]
      (axis cs:6.45,{0.51*6.45}) -- (axis cs:7.05,2.1)
      node[below, align=center, font=\scriptsize, inner sep=1pt]
      {Naive Optimizer\\(baseline)};

    \draw[dotted, semithick]
      (axis cs:10,0) -- (axis cs:10,{0.6*10 + 1.6*sqrt(10)});

    \draw[<->]
      (axis cs:10.24,0) -- (axis cs:10.24,5.1)
      node[midway, right, font=\scriptsize, inner sep=1pt] {$T \cdot V^\star$};

    \draw[<->]
      (axis cs:10.24,5.2) -- (axis cs:10.24,{0.6*10 + 1.6*sqrt(10)})
      node[midway, right, font=\scriptsize, inner sep=1pt] {$+\Omega(\mathrm{Regret})$};

  \end{axis}
\end{tikzpicture}
\vspace{-1\baselineskip}
\label{img:regret-curve}
\vspace{-1\baselineskip}
\end{wrapfigure}
\paragraph{The Surplus Reservoir.}
To formalize the interaction, consider the optimizer’s cumulative reward $\mathrm{Rew}_T$ over $T$ rounds against the FTRL learner. Standard no-regret bounds confine this reward within a tight envelope: 
\begin{equation}\label{eq:reward sandwich}
{
\boxed{
T \cdot \gvalue
\;\le\;
\mathrm{Rew}_T
\;\le\;
T \cdot \gvalue + \gregret_{\mathcal{L}}^{\learn}(T)
}}
\tag{$\star$}
\end{equation}
where $T \cdot V^\star$ is the cumulative von Neumann value 
and $\mathrm{Regret}_{\mathcal{L}}^{\eta}(T)$ the learner's regret under its update rule $\mathcal{L}$ (e.g., FTRL) with step size $\learn$.
The lower bound is trivially achieved by any max-min strategy. 
Maximizing the upper bound, however, is not: obtaining both the learner's maximal regret and the game's cumulative value requires the optimizer to actively exploit the learner's specific update rule.
Recall that the folklore analysis shows that $\mathrm{Regret}_{\mathcal{L}}^{\eta}(T)\leq A/\eta+B\eta T$. This leads to the crux of our work, which we term \emph{maximal extractability}: 
\begin{center} \emph{Can a strategic optimizer actively \emph{steer} the learner’s dynamics so as to consistently saturate this reservoir of extractable surplus such that
\(
    \mathrm{Rew}_T \;\approx\; T \cdot V^\star + \Omega(\max\!\big\{1/\eta,\; \eta\,T\big\}).
\)
}\end{center}
For two-player zero-sum games, we answer affirmatively by constructing explicit optimizer strategies---equivalently, adversarial loss sequences---that force the learner to incur regret at least
\( \Omega(\max\!\big\{C_1/\eta,\; C_2\,\eta\,T\big\}),
\)
for game-dependent constants $C_1, C_2 > 0$, matching the classical upper bound up to constant factors. Hence, our results constitute \emph{instance-dependent adversarial regret lower bounds}: tight, game-interpretable certificates that the surplus reservoir is fully saturatable. Our analysis reveals that this saturability is governed by two structural features of the learner's algorithm: the \emph{suboptimality geometry} of the best-response polytope, and the \emph{curvature} of the regularizer---which together determine how much inertia the optimizer can exploit.

\subsection{Our Contributions}
\label{subsection:contributions}
Formally, we provide a comprehensive characterization for FTRL learners,
showing that surplus saturation is inextricably linked to the boundary
geometry of the regularizer. As a preliminary step, we develop a toolkit
that links discrete-time FTRL updates to their continuous-time
counterparts, yielding sharp best-response convergence rates for generic
regularizers. (Lemma~\ref{lem:bound-for-cont-ftrl} \&
Corollary~\ref{cor:discrete-exploit-finite})

Our contributions divide naturally into three results of increasing depth:

\noindent \textbf{1. Passive Utility-Extraction: The ``Inverse-Rate Law''} 
\textbf{Fixed Strategy.}
We begin with the warm-up case of a fixed optimizer strategy---the most natural choice being the max-min strategy. 
Although static play might seem
easily learnable and thus harmless, we show the opposite. Since the learner
does not know the game a priori, she must explore all actions before
identifying best responses. Every suboptimal action incurs a convergence
delay the optimizer passively collects. We establish the Inverse-Rate Law:
the surplus scales as 
{$\Theta(|\nsub|/\eta)$, where $|\nsub|$}
counts actions that are not best
responses to the optimizer's fixed strategy. At step size
$\eta = O(T^{-\beta})$ this yields surplus $\Theta(T^{\beta})$
(Fig.~\ref{fig:fixed-subopt}). The boundary case is instructive: when all
learner actions are best responses (e.g., Rock-Paper-Scissors against the
max-min strategy, Fig.~\ref{fig:rps-fixed}), fixed-strategy surplus vanishes
entirely, since the learner merely oscillates among equally optimal actions.
(Theorem~\ref{thm:exploit-finite})

\newpage
\begin{wrapfigure}[29]{r}{0.35\textwidth}
  \centering
  \includegraphics[width=\linewidth]{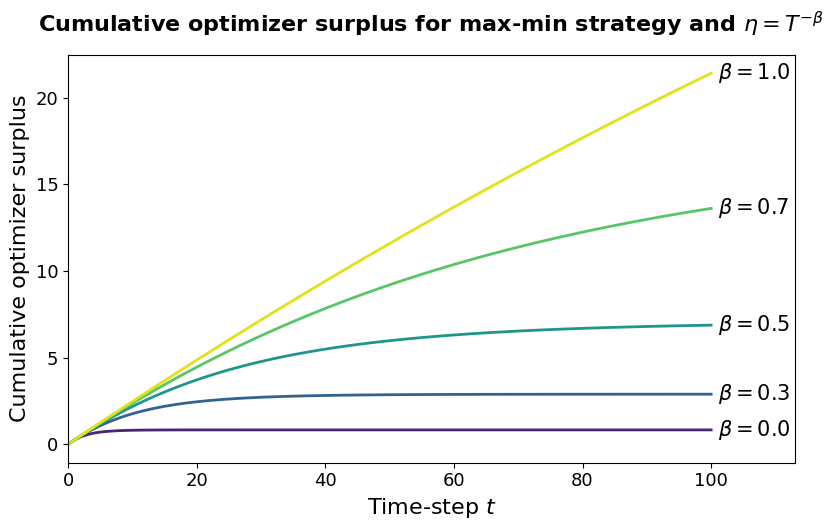}
  \captionsetup{font=scriptsize}
  \caption{\label{fig:fixed-subopt}Fixed max-min strategy, Game~1\textsuperscript{*}.  
    Surplus $\Theta(T^{\beta})$, confirming the \emph{Inverse-Rate Law}.}

  \includegraphics[width=\linewidth]{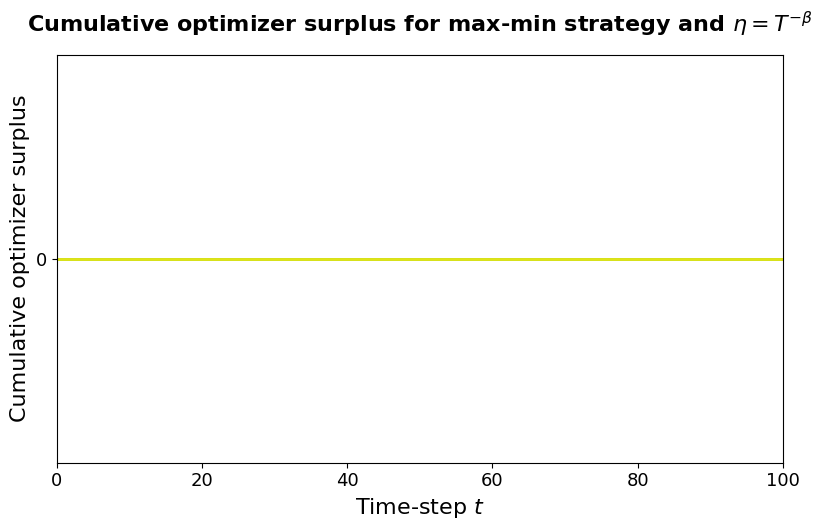}
  \caption{\label{fig:rps-fixed}Fixed max-min strategy, Game~2\textsuperscript{$\dagger$}.
    All curves coincide at $0$: surplus vanishes.}

  \includegraphics[width=\linewidth]{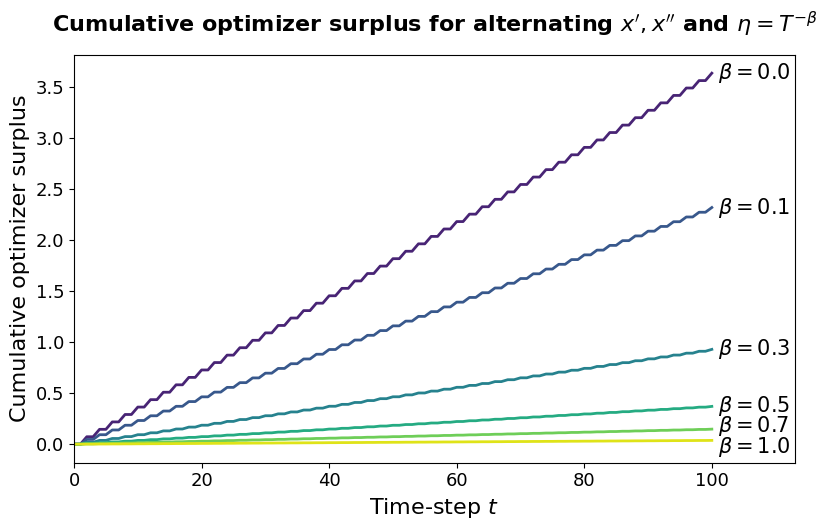}
  \caption{\label{fig:rps-alternating}Alternating trap, Game~2\textsuperscript{$\dagger$}.
    Surplus $\Omega(T^{1-\beta})$ for all $\beta<1$.}
\end{wrapfigure}
\noindent \textbf{2. Active Utility-Extraction: The ``Trap'' Dynamic Strategy.}
However, the vanishing of fixed-strategy surplus in fully mixed games is
not a dead end. We show the optimizer can recover nonzero surplus by
\emph{actively engineering} the learner's trajectory. The key structural
condition is that at least two best responses to the optimizer's max-min
support are not ordered by weak domination; this prevents the learner from
settling and keeps her oscillating perpetually. While this condition was
assumed in prior work \citep{assos2024maximizing}, its prevalence and
quantitative bite were left unexamined. Rather than taking it as given,
we turn to random games: we prove that the condition holds with high
probability. Our resulting \emph{alternating trap strategy}---reminiscent
of trap constructions in auctions \citep{braverman2018selling} and
free-fall phases in contracts \citep{kolumbus2024contracting}---guarantees
surplus $\Omega(\eta T)$ with high probability
(Theorem~\ref{thm:avg-main}). At step size $\eta = O(T^{-\beta})$,
this yields surplus $\Theta(T^{1-\beta})$ (Fig.~\ref{fig:rps-alternating}).

\par\smallskip
\noindent\textit{Results 1 and 2 are complementary: Result~1 captures surplus
$\Theta(|\nsub|/\eta)$ whenever suboptimal actions
exist; Result~2 guarantees nonzero surplus in fully mixed games where
Result~1 gives nothing. Together they cover all cases. Notably, the learner
minimizes total surplus by balancing the two terms, which is achieved at
$\eta = \Theta(T^{-1/2})$---recovering the classical $\Theta(\sqrt{T})$
benchmark.}
\par\smallskip

\noindent \textbf{3. The Price of Learning Best Responses.}
Due to space constraints, we defer the complementary learner-side view of saturability, based on the cost of identifying a best response in Appendix~\ref{appendix:app_pbr}.This yields a sharp dichotomy: non-steep regularizers (e.g., Euclidean) reach an $\varepsilon$-best response in $O(1/\varepsilon)$ time, whereas steep regularizers (e.g., entropic) require $\omega(1/\varepsilon)$. The reason is structural: their infinite boundary gradients---the very feature that grants them robustness near the boundary---also prevent rapid
convergence to pure strategies. (Theorem~\ref{thm:learning-cost})

\begingroup
\renewcommand{\thefootnote}{\fnsymbol{footnote}}

\footnotetext[1]{\noindent\textbf{Game~1}:
$A=\left(\begin{smallmatrix}
  1 & \!-1 & 1 & 2\\
  -1 & 1 & 1 & 0\\
  0 & 0 & \!-1 & 0
\end{smallmatrix}\right)$,\;
$x_{\max\min}^\star=(\tfrac{1}{3},\tfrac{1}{3},\tfrac{1}{3})$,\;
$|\nsub|=2$,\;
fixed-strategy surplus $\Theta(2/\eta)$.
}

\footnotetext[2]{\noindent\textbf{Game~2}:
$A_{\mathrm{RPS}}=\left(\begin{smallmatrix}
  0 & \!-1 & 1\\
  1 & 0 & \!-1\\
  \!-1 & 1 & 0
\end{smallmatrix}\right)$,\;
$x_{\max\min}^\star=(\tfrac{1}{3},\tfrac{1}{3},\tfrac{1}{3})$,\;
$|\nsub|=0$,\;
alternating-trap surplus $\Omega(\eta T)$.
}
\endgroup

\section{Preliminaries}
\label{section:preliminaries}
We consider a two-player zero-sum game played between an \emph{Optimizer} and a \emph{Learner} over a time horizon $T$, which may be discrete ($t \in \{0, \dots, T-1\}$) or continuous ($t \in [0, T]$).
The optimizer chooses strategies $\xstrat{t} \in \xsimplex$ and the learner chooses $\ystrat{t} \in \ysimplex$, where $\simplex^d$ denotes the probability simplex in $\R^d$.
The game is defined by a payoff matrix $\mat \in \R^{\xnum \times \ynum}$. At time $t$, the optimizer receives payoff $\payoff(x,y)=\xstrat{t}^\top \mat \ystrat{t}$ and the learner receives $u_{\ell}(x,y)=\xstrat{t}^\top \matalt \ystrat{t}$, where $\matalt = -\mat$.
Additionally, we denote by $\val \defeq \max_{\x \in \xsimplex} \min_{\y \in \ysimplex} \x^\top \mat \y$ the \emph{minimax value} of the game.

Next, we briefly review two essential concepts from convex analysis\footnote{
For an extensive convex-analysis recap, closed-form derivations, and the worked examples, see Appendix~\ref{appendix:app_preliminaries}.
} that underpin our results:
\vspace{-0.6em}
\begin{itemize}[topsep=2pt,itemsep=0pt, leftmargin=*]
\item\textbf{Fenchel (convex) conjugate.}
Let $\reg: \R^m \to \mathbb{R}\cup\{+\infty\}$ be a \emph{proper} convex function\footnote{By \emph{proper}, we mean that the effective domain $\dom(\reg) \defeq \{x \in \R^m \mid \reg(x) < +\infty\}$ is nonempty and $\reg(x) > -\infty$ everywhere. In our context, functions are usually defined on the simplex $\Delta^m$ and are set to $+\infty$ outside it.}. 
The convex conjugate $\reg^{\ast}:\mathbb{R}^m\to\mathbb{R}\cup\{+\infty\}$ is defined by
\(    \reg^{\ast}(y)\;\coloneqq\;\sup_{x\in \dom(\reg)}\;\big\{\langle x,y\rangle - \reg(x)\big\},
 y\in\mathbb{R}^m.
\)\item\textbf{Bregman divergence.}
Let $\reg:\Delta^m\to\mathbb{R}\cup\{+\infty\}$ be a strictly convex function that is differentiable on the \emph{relative interior} of the simplex\footnote{Recall that the relative interior $\mathrm{ri}(C)$ is the interior of a set $C$ relative to its affine hull. For the probability simplex, this corresponds to the set of strictly positive probability distributions, $\mathrm{ri}(\Delta^m) = \{x \in \Delta^m \mid x_i > 0, \forall i\}$.}, denoted by $\mathrm{ri}(\Delta^m)$.
The Bregman divergence induced by $\reg$ is defined as
\(
\breg_{\reg}(x, x')\;\coloneqq\;\reg(x)-\reg(x')-\langle \nabla \reg(x'),\,x-x'\rangle,
 x\in \dom(\reg), \; x'\in \mathrm{ri}(\Delta^m).
\)
\end{itemize}
\paragraph{$\blacktriangleright$The Learner. (Alice) } \vspace{-0.75em}
The learner employs a fixed update \emph{Follow-the-Regularized-Leader (FTRL)} rule with a constant step size $\learn > 0$.\footnote{ Although we use a constant step size, we do not treat it as independent of the horizon. Instead, we work in an asymptotic regime where $\eta=\eta(T)\to 0$ as $T\to\infty$ while $\eta T\to\infty$; for instance, $\eta=\Theta(1/\sqrt{T})$ satisfies these conditions.}
She maintains a cumulative payoff vector, or \emph{score}, denoted by $\HR(t) \in \R^{\ynum}$.
Initially $\HR(0) = 0$. The score accumulates her observed utility gradients:
\begin{align}
    \HR(t) =
    \begin{cases}
        \sum_{\tau=0}^{t-1} \matalt^\top \xstrat{\tau} & \text{(Discrete time),} \\
        \int_{0}^{t} \matalt^\top \xstrat{\tau} \dd\tau & \text{(Continuous time).}
    \end{cases}\tag{Cumulative Reward}
    \label{eq:score_def}
\end{align}
The learner's strategy $\ystrat{t}$ is derived from this score via a regularized projection.
Let $\reg: \ysimplex \to \R$ be a strictly convex, continuous \emph{regularizer}.
We define the \emph{choice map} $\choicemap_{\reg}: \R^{\ynum} \to \ysimplex$ as
\begin{align} \label{eq:choice-map}
       \text{Learner plays strategy} \quad
 \ystrat{t} = \choicemap_{\reg}\big(\learn \HR(t)\big)
    \defeq \argmax_{\y \in \ysimplex} \big\{ \inner{\learn \HR(t), \y} - \reg(\y) \big\}. \tag{Choice Map}
\end{align}
Intuitively, FTRL is a ``smoothed" best-response: as $\learn \to \infty$ (or $\reg \to 0$), the learner approaches the exact best-response dynamics. In this work, we focus on the standard \emph{separable} regularizers of the form $\reg(\y) = \sum_{i=1}^{\ynum} \theta(y_i)$, where the kernel $\theta: [0,1] \to \R$ is strictly convex and differentiable on $(0,1]$.  In this setting, the choice map decouples, as follows:
\begin{align}\label{eq:kkt_structure}
    \ystrat{t}_i = \phi\big(\lambda(t) + \learn \HR(t)_i\big),
    \quad \text{subject to } \sum_i \ystrat{t}_i = 1,
\end{align}
where $\phi=\trunc_{[0,1]}\{(\theta')^{-1}\}$ is the inverse of kernel's derivative and $\lambda(t)$ is a normalization scalar.
\subsubsection*{Examples of FTRL methods (see Appendix~\ref{appendix:kernel-examples} for details)}
\begin{description}[leftmargin=0pt, font=\bfseries, itemsep=0.5pt]
    \item[$\star$Example 1. \emph{Entropic}:] The negative entropy $\reg(y)=\sum_{i} y_i \ln y_i$ leads to the \emph{logit} choice map {\small$\ystrat{t}_i \propto \exp(\eta \HR_i(t))$}, yielding the classic Exponential Weights, cf.~
    \citep{auer1995gambling}.
    \item[$\star$Example 2. \emph{Euclidean}:] The quadratic penalty $\reg(y)=\frac{1}{2}\|y\|_2^2$ induces the standard \emph{projection} map\\{\small$\ystrat{t}= \argmin_{y \in \ysimplex} \|y - \eta \HR(t)\|_2$}, corresponding to Online Gradient Descent, cf.~\citep{zinkevich2003online}.
    \item[$\star$Example 3. \emph{Tsallis}:]\vspace{-0.25em}
    For $q\in(0,1)$, the negative Tsallis regularizer is
    $\reg(y)=\frac{1}{q-1}\sum_i (y_i^q-y_i),$\vspace{-0.25em}\\ 
    cf.~\citep{abernethy2015fighting}. The associated choice map is characterized by $q\,\ystrat{t}_i^{\,q-1}
    =
    1+(q-1)\eta\bigl(\HR_i(t)-\tau\bigr),$
    where $\tau\in\mathbb{R}$ is the normalization scalar chosen so that $\sum_i \ystrat{t}_i=1$. As $q\to 1$, this regularizer converges to the negative entropy.

\end{description}
\vspace{-0.2em}
\noindent The geometry of $\theta$ near the boundary induces a fundamental dichotomy in the learner's behavior:
\begin{definition}[Steep vs. Non-Steep Regularizers]\it
    A regularizer is \textbf{steep} if $|\theta'(y)| \to \infty$ as $y \to 0^+$ (e.g., {Negative Entropy/Negative Tsallis}). It is \textbf{non-steep} if $\theta'(0^+)$ is finite (e.g., {Euclidean}).
\end{definition}

\begin{remark}\it This distinction is critical: steep regularizers force $\ystrat{t}$ to strictly stay in the interior of the simplex (since $\phi(-\infty)=0$), whereas non-steep regularizers allow the learner to assign \emph{exact zero probability} to actions in finite time, enabling the ``finite-time elimination" of suboptimal strategies.
\end{remark}

\begin{remark}\it
The map $\phi$, introduced in \cite{giannou2021rate}, is an analytic workhorse that lifts our
analysis beyond the Hedge-specific calculations of
\cite{assos2024maximizing}. For any regularizer $\theta$, $\phi$
encodes its local curvature in a single object: the weights
$1/\theta''(y_i(r))$ emerge naturally, driving the sharp
coordinate-wise bounds of Lemma~\ref{lem:ftrl-fixed-x} in fixed strategy and the
curvature terms in the alternating-strategy analysis in Lemma~\ref{lem:variance-identity}. This is
precisely what enables matching upper and lower bounds across the
full FTRL family.
\end{remark}
\paragraph{$\blacktriangleright$ The Optimizer. (Bob)}
The optimizer is a strategic agent with \emph{strategic foresight}. Unlike the learner, who reacts to past payoffs, the optimizer knows the game matrix $\mat$ and the learner's update rule. The optimizer's goal is to select a trajectory $\{\xstrat{t}\}_{t}$ to maximize his total payoff $\int \xstrat{t}^\top \mat \ystrat{t} \dd t$.
We summarize the interaction protocol below:
\begin{center}
\fbox{%
\begin{minipage}{0.95\textwidth}
\textbf{Protocol: Repeated Game with FTRL Learner}
\vspace{0.3em}
\hrule
\vspace{0.3em}
For each round $t = 1, \dots, T$:
\begin{enumerate}[leftmargin=1.5em,itemsep=-1pt,topsep=0pt]
    \item \textbf{Learner updates:} The learner computes $\ystrat{t} = \choicemap_{\reg}(\learn \HR(t))$.
    \item \textbf{Optimizer plays:} The optimizer selects $\xstrat{t} \in \xsimplex$ (potentially using knowledge of $\ystrat{t}$).
    \item \textbf{Payoffs \& Feedback:} Players receive payoffs $u_{o/\ell}( \xstrat{t},\ystrat{t})$. The learner observes the feedback vector $\matalt^\top \xstrat{t}$.
    \item \textbf{State update:} The learner updates the score $\HR(t) = \HR(t-1) + \matalt^\top \xstrat{t}$.
\end{enumerate}
\end{minipage}}
\end{center}
\paragraph{Feedback Models.}
Our primary analysis assumes \emph{full-feedback}, where the learner observes the entire vector $\matalt^\top \xstrat{t}$. In Appendix~\ref{appendix:azuma}, we briefly extend our results to the \emph{bandit-feedback} setting, where players only observe the realized scalar payoff $\pm\A_{i_t,j_t}$ from sampled actions $i_t\sim \x(t)$ and $j_t\sim \y(t)$. 
Unless stated otherwise, all results refer to the full-feedback setting.

\section{Our Toolbox}
\label{section: toolbox}

\subsection{A Discrete-Continuous Bridge}
\label{section:reward_decomposition}
To understand the long-run behavior in our setting, we establish a fundamental link between the optimizer's discrete and continuous objectives. We show that the discrete reward tracks a continuous-time benchmark, deviating only by a \emph{discretization gap} governed by the Bregman divergence of the learner's dual updates. We begin by defining the optimizer’s total reward in both the continuous-time and discrete-time settings, each of which depends on the learner’s strategy as determined by FTRL. For any optimizer’s strategy $\x(t) \from \bracks{\tstart, \tend} \to \xsp$:
\begin{align}
    \rcont{\x(t)}{\HR(\tstart)}{\tend}
    &= \int_{\tstart}^{\tend}
    \inner*{
        \x(t),
        \A \cdot
        \choicemap_{\reg}\parens*{
            \learn \bracks*{
                \HR(\tstart) +
                \int_{\tstart}^{\time}
                \B^\top \x(\timealt)\, \dd \timealt
            }
        }
    } \dd \time \label{eqn: cont reward} \\[-0.5ex]
    \rdisc{\x(t)}{\HR(\tstart)}{\tend}
    &= \sum_{\time=\tstart}^{\tend-1}
    \inner*{
        \x(t),
        \A \cdot
        \choicemap_{\reg}\parens*{
            \learn \bracks*{
                \HR(\tstart) +
                \sum_{\timealtalt=\tstart}^{\time-1}
                \B^\top \x(\timealtalt)
            }
        }
    } \label{eqn: disc reward}
\end{align}
We define the optimizer's optimal reward in both continuous and discrete time settings as follows:
\[
    \optrcont{\HR(\tstart)}{\tend} \defeq \max_{\x(t)} \rcont{\x(t)}{\HR(\tstart)}{\tend}\text{ and }
    \optrdisc{\HR(\tstart)}{\tend} \defeq \max_{\x(t)} \rdisc{\x(t)}{\HR(\tstart)}{\tend}.
\]
In continuous time, $\x(t)$ is defined for $\time\in[\tstart,\tend]$. In discrete time, $\x(t)$ denotes the sequence $\{\x(0),\x(1),\ldots,\x(\tend-1)\}$. To analyze these reward functions, we introduce the learner's \emph{potential function}:
$\Phi_\learn(Z) \defeq  \regconj(\learn Z)/\learn$ for any $Z \in \R^{\ynum}.$ We defer the proofs of this part in Appendix~\ref{appendix:reward-decomposition}.
\vspace{0.5em}
\\\textbf{$\blacktriangleright$The Continuous Benchmark}
\\ 
 Our first observation reveals that in continuous time, the optimizer's reward is fully determined by the initial and final states of the learner, rendering the intermediate trajectory irrelevant: 
\begin{restatable}{lemma}{rewardpotentialcont}
\label{lem:reward-potential-cont}
    For any continuous optimizer strategy $\xc: [\tstart, \tend] \to \xsp$, the total reward is given exactly by the drop in the learner's potential:
    $
        \rcont{\xc(t)}{\HR(\tstart)}{\tend}
        \;=\;
        \Phi_\learn\bigl(\HR(\tstart)\bigr) - \Phi_\learn\bigl(\HR(\tend)\bigr).
$\footnote{The proof follows since $\nabla \Phi_\learn(Z)=\choicemap_{\reg}(\learn Z)$, so by the chain rule 
$\frac{d}{dt}\Phi_\learn(\HR(t))
=\big\langle \nabla \Phi_\learn(\HR(t)),\,\dot{\HR}(t)\big\rangle
=\big\langle \ystrat{t},\,\matalt^\top \xc(t)\big\rangle .
$
Integrating on $[\tstart,\tend]$ gives the result. We defer the full proof for the continuous-time case to Theorem~\ref{lem:reward-potential-both} in Appendix~\ref{appendix:reward-decomposition}. The range for the total reward is characterized in Proposition \ref{prop:optimal-continuous-reward-range}, which recovers the reward sandwich \eqref{eq:reward sandwich} in the continuous-time setting.
}
\end{restatable}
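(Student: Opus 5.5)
The plan is to show that $t\mapsto \Phi_\learn(\HR(t))$ is absolutely continuous, compute its derivative via the chain rule, and integrate over $[\tstart,\tend]$. There are three steps: (i) identify $\nabla\Phi_\learn(Z)=\choicemap_{\reg}(\learn Z)$; (ii) check that this gradient is continuous and that the score path $\HR(\cdot)$ is regular enough for the chain rule; (iii) recognize the resulting integrand as minus the instantaneous reward and integrate.

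For step (i), I will use the Fenchel conjugate directly. Since $\reg$ is continuous and strictly convex on the compact set $\ysimplex$, the supremum defining $\regconj(w)=\max_{\y\in\ysimplex}\{\inner{w,\y}-\reg(\y)\}$ is finite for every $w\in\R^{\ynum}$ and is attained at a unique point, which is $\choicemap_{\reg}(w)$ by \eqref{eq:choice-map}. By Danskin's theorem (equivalently, $\partial\regconj(w)$ is the set of maximizers, which here is a singleton), $\regconj$ is differentiable everywhere with $\nabla\regconj(w)=\choicemap_{\reg}(w)$. Rescaling gives $\nabla\Phi_\learn(Z)=\frac{1}{\learn}\,\learn\,\nabla\regconj(\learn Z)=\choicemap_{\reg}(\learn Z)$. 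Because $\regconj$ is a finite convex function that is differentiable on all of $\R^{\ynum}$, it is automatically $C^1$ (gradients of differentiable convex functions are continuous). Alternatively, continuity of $\choicemap_{\reg}$ follows from Berge's maximum theorem together with uniqueness of the maximizer.

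For step (ii), I will use that the continuous score is $\HR(t)=\HR(\tstart)+\int_{\tstart}^{t}\matalt^\top\xc(\timealt)\dd\timealt$. For any measurable $\xc$ taking values in the bounded set $\xsp$, this map is Lipschitz, hence absolutely continuous, with $\dot{\HR}(t)=\matalt^\top\xc(t)$ for almost every $t$. The composition of a $C^1$ function with an absolutely continuous path is absolutely continuous and satisfies the chain rule almost everywhere. This gives
\[
\tfrac{d}{dt}\Phi_\learn(\HR(t))=\inner{\choicemap_{\reg}(\learn\HR(t)),\matalt^\top\xc(t)}=\inner{\xc(t),\matalt\,\ystrat{t}}=-\inner{\xc(t),\mat\,\ystrat{t}},
\]
where the last equality uses the zero-sum relation $\matalt=-\mat$.

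For step (iii), the fundamental theorem of calculus for absolutely continuous functions gives $\Phi_\learn(\HR(\tend))-\Phi_\learn(\HR(\tstart))=-\int_{\tstart}^{\tend}\inner{\xc(t),\mat\,\ystrat{t}}\dd t$. The right-hand side is exactly $-\rcont{\xc(t)}{\HR(\tstart)}{\tend}$ as defined in \eqref{eqn: cont reward}, and rearranging yields the claim.

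The main point requiring care is the regularity in step (ii). One must use an absolute-continuity version of the chain rule, since no smoothness in time is assumed on the optimizer's strategy. One must also use everywhere-differentiability of $\regconj$, which holds for both steep and non-steep kernels because the argmax is unique. I do not expect any genuine obstacle beyond stating these facts carefully.
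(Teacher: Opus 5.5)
Your proposal is correct and matches the paper's proof: you identify $\nabla\Phi_\eta(Z)=Q_h(\eta Z)$, use $\dot S(t)=B^\top x_c(t)$ and $A=-B$ to write the instantaneous reward as $-\frac{d}{dt}\Phi_\eta(S(t))$, and integrate over $[t_0,T]$. Two of your details fill in regularity points the paper leaves implicit without changing the argument: you obtain differentiability of $h^*$ from uniqueness of the maximizer (Danskin) rather than from strong convexity, and you use absolute continuity so the chain rule and the fundamental theorem of calculus hold for merely measurable strategies.
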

\begin{remark}[Path Independence] \it\label{rem:path_independence}
    Crucially, Lemma~\ref{lem:reward-potential-cont} implies that the continuous reward depends on the strategy $\xc(t)$ only through its time-average $\bar{\x} \defeq \frac{1}{\tend} \int_{\tstart}^{\tend} \xc(t) \dd t$.
    Since $\HR(\tend) = \HR(\tstart) + \tend \matalt^\top \bar{\x}$, any two strategies with the same mean yield identical rewards.
    This collapses the infinite-dimensional variational problem of finding the optimal trajectory $\xc(t)$ into a finite-dimensional convex minimization over constant strategies $\x \in \xsp$:
\begin{align}
    \label{eq:cont_opt_problem}
        \optrcont{\HR(\tstart)}{\tend} \;=\; 
        \Phi_\learn\bigl(\HR(\tstart)\bigr)
        - \min_{\x \in \xsp} \Phi_\learn\bigl(\HR(\tstart) + \tend \matalt^\top \x\bigr).
\end{align}
\end{remark}
\paragraph{Efficient Computation \& Correction to Prior Work.}
The reduction of Remark~\ref{rem:path_independence} allows for
efficient optimization via Frank-Wolfe. Setting
$G(\x) \defeq \Phi_\learn(\HR(\tstart) + \tend \matalt^\top \x)$,
one verifies that $G$ is convex and $\beta$-smooth with
$\beta = {(\learn \tend \opnorm{\mat})^2}/{\scparam}$, yielding the following complexity.\textsuperscript{\hyperref[fn:one]{6},\hyperref[fn:two]{7}}
\stepcounter{footnote}\footnotetext[6]{\label{fn:one}
From Proposition \ref{prop: alpha strongly 1/alpha smooth},
if $\reg$ is $\scparam$-strongly convex then $\regconj$
is $(1/\scparam)$-smooth. The extra $(\learn\tend\opnorm{\mat})^2$
factor arises from the chain rule through the affine argument.
See Appendix~\ref{appendix:claim:smooth-of-composition}.}
\stepcounter{footnote}\footnotetext[7]{\label{fn:two}\it Correction to \cite{assos2024maximizing}:
\citet[Proposition~5]{assos2024maximizing}
analyze the MWU setting assuming smoothness constant $\beta=1$.
Our analysis shows $\beta = \Theta((\learn\tend\opnorm{\mat})^2)$,
which is not $O(1)$ for large horizons. Correcting for this, the
Frank-Wolfe complexity is
$\bigoh(nm\learn\tend^2\opnorm{\mat}^2/(\scparam\varepsilon))$
rather than $\bigoh(nm/\varepsilon)$. Recall that even if $\eta=T^{-\beta}$ for some $\beta\in[0,1]$, the runtime will be heavily horizon-dependent.}
\begin{restatable}[Frank-Wolfe Complexity
{\cite[Theorem~3.8]{bubeck2015convexoptimizationalgorithmscomplexity}}]{corollary}{fwcomplexity}
\label{cor:fw_complexity}
An $\epsilon$-approximate strategy $\hat{\x}$ with
$\optrcont{\HR(\tstart)}{\tend} - \rcont{\hat{\x}}{\HR(\tstart)}{\tend}
\le \epsilon$
can be computed in time
$\bigoh\!\left(nm\,{R^{2}\learn\tend^2\opnorm{\mat}^{2}}
/({\scparam\epsilon})\right),$
where $R$ is the diameter of $\xsp$.
\end{restatable}
\textbf{$\blacktriangleright$The Discretization Gap}\\
In discrete time, the optimizer cannot perfectly integrate the learner's potential.\footnote{\it
The challenges of optimal control differ substantially in continuous and discrete time. The continuous optimum $\optrcont{x_c(t)}{T}$ is obtained from a single static convex program in Equation~\eqref{eq:cont_opt_problem}. In contrast, the discrete optimum $\optrdisc{x_d(t)}{T}$ maximizes a cumulative Bregman sum, and thus becomes a sequential decision problem: the optimizer must balance minimizing the terminal potential with maximizing the path-dependent divergence term, i.e., volatility harvesting. Since the continuous optimum provides a lower bound, and since we later show that asymptotically simple fixed or alternating strategies already achieve the optimal upper bounds, we do not study the exact computation of global or local optima here. For this perspective, see \cite{ananthakrishnan2025learning,brown2023learning}.
} The error induced by the step size $\learn$, as we will see, manifests as a strictly non-negative ``bonus" for the optimizer, captured by the Bregman divergence.
\begin{restatable}[Discrete-Continuous Decomposition]{theorem}{mainreduction}
\label{thm:main-reduction}
    Let $\xd(\time)$ be a discrete strategy and $\xcp(t) = \xd(\lfloor t \rfloor)$ its piecewise-constant extension. The discrete reward satisfies:
    {\setlength{\abovedisplayskip}{1pt}
\setlength{\belowdisplayskip}{-3pt}
    \begin{equation} \label{eq:main-reduction-decomp}
        \rdisc{\xd(t)}{\hr(\tstart)}{\tend}
        \;=\;
        \underbrace{\rcont{\xcp(t)}{\HR(\tstart)}{\tend}\vphantom{\frac{1}{\learn}\sum_{t=\tstart}^{\tend-1}\breg_{\regconj}\bigl(\learn \hr(t+1),\, \learn \hr(t)\bigr)}}_{\text{Continuous Benchmark}}
        \;+\;
        \underbrace{\frac{1}{\learn}\sum_{t=\tstart}^{\tend-1} \breg_{\regconj}\bigl(\learn \HR(t+1),\, \learn \HR(t)\bigr)}_{\text{Discretization Gap }:=\mathrm{DG}(\xd,T) \ge 0}. 
    \end{equation}}
\end{restatable}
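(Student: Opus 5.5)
The plan is to verify the identity round by round and then sum over rounds. The key tools are Lemma~\ref{lem:reward-potential-cont} and the gradient identity $\nabla \Phi_\learn(Z) = \choicemap_{\reg}(\learn Z)$. This identity holds because $\nabla\regconj(w)=\choicemap_{\reg}(w)$ by Danskin's theorem, since $\reg$ is strictly convex on the compact simplex. By the chain rule, $\nabla\Phi_\learn(Z)=\frac{1}{\learn}\learn\nabla\regconj(\learn Z)$.

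\textbf{Continuous benchmark.} Consider the piecewise-constant extension $\xcp$. On each interval $[t,t+1)$ the score moves linearly from $\HR(t)$ to $\HR(t)+\matalt^\top\xd(t)=\HR(t+1)$. The continuous and discrete scores therefore agree at integer times. Lemma~\ref{lem:reward-potential-cont}, applied on each unit interval and telescoped, gives
\[
\rcont{\xcp(t)}{\HR(\tstart)}{\tend}=\sum_{t=\tstart}^{\tend-1}\bigl[\Phi_\learn(\HR(t))-\Phi_\learn(\HR(t+1))\bigr].
\]
Equivalently, the lemma can be applied once on $[\tstart,\tend]$.

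\textbf{Per-round discrete reward.} The discrete reward at round $t$ is $\inner{\xd(t),\mat\,\y(t)}$ with $\y(t)=\choicemap_{\reg}(\learn\HR(t))=\nabla\Phi_\learn(\HR(t))$. Since $\matalt=-\mat$, this equals $-\inner{\nabla\Phi_\learn(\HR(t)),\,\HR(t+1)-\HR(t)}$. Now expand the Bregman divergence of $\Phi_\learn$:
\[
\breg_{\Phi_\learn}(\HR(t+1),\HR(t))=\Phi_\learn(\HR(t+1))-\Phi_\learn(\HR(t))-\inner{\nabla\Phi_\learn(\HR(t)),\HR(t+1)-\HR(t)}.
\]
This gives the per-round identity: discrete reward at $t$ equals $\Phi_\learn(\HR(t))-\Phi_\learn(\HR(t+1))+\breg_{\Phi_\learn}(\HR(t+1),\HR(t))$. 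The scaling $\Phi_\learn(Z)=\regconj(\learn Z)/\learn$ then gives $\breg_{\Phi_\learn}(Z',Z)=\frac{1}{\learn}\breg_{\regconj}(\learn Z',\learn Z)$, which is a one-line check. Summing over $t$ and substituting the telescoped benchmark yields \eqref{eq:main-reduction-decomp}.

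\textbf{Nonnegativity and the main obstacle.} Nonnegativity of $\mathrm{DG}$ follows from convexity of $\regconj$, since a conjugate is always convex, so each Bregman term is $\ge 0$. The only delicate point is regularity. We need $\regconj$ to be differentiable everywhere with gradient $\choicemap_{\reg}$, because the Bregman divergence is taken at arbitrary dual points and the chain rule in Lemma~\ref{lem:reward-potential-cont} needs $\Phi_\learn\circ\HR$ absolutely continuous. Both follow from strict convexity of $\reg$ on the compact domain: the argmax is unique, so Danskin's theorem gives differentiability and the gradient is continuous. $\regconj$ is also Lipschitz, so the chain rule applies to the piecewise-linear score path. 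No steepness assumption is needed, so the argument covers both steep and non-steep kernels.
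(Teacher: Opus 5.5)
Your proposal is correct and follows the paper's proof essentially step for step. The paper likewise derives the continuous part from $\nabla\Phi_\learn(Z)=\choicemap_{\reg}(\learn Z)$ together with the chain rule and the fundamental theorem of calculus. It derives the discrete part by writing each round's payoff as $\inner{\HR(t)-\HR(t+1),\nabla\Phi_\learn(\HR(t))}$, expanding it with the Bregman identity, telescoping, and rescaling via $\breg_{\Phi_\learn}(Z',Z)=\frac{1}{\learn}\breg_{\regconj}(\learn Z',\learn Z)$. Your regularity remarks justify the same gradient identity through Danskin's theorem under strict convexity, whereas the paper cites results stated under strong convexity; this is harmless and slightly more general.
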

\vspace{0.25em}
\begin{remark}\label{remark:bound} \it
    The non-negativity of $ \breg_{\regconj}$ in Theorem~\ref{thm:main-reduction} implies that the discrete optimizer always outperforms the continuous benchmark: $\optrdisc{\zeroreward}{\tend} \ge \optrcont{\zeroreward}{\tend}$.
    Conversely, maximizing the decomposition \eqref{eq:main-reduction-decomp} yields an upper bound controlled by the maximum cumulative divergence:
    {\small\begin{equation}\label{eqn:upper-bound-optimal-discrete-reward-by-optimal-continuous-reward-plus-breg}
        \optrdisc{\zeroreward}{\tend}
        \;\le\;
        \optrcont{\zeroreward}{\tend}
        \;+\;
        \sup_{\xd(t)}\mathrm{DG}(\xd(t),T)
      \; \text{ and } \;  
     \mathrm{DG}(\xd(t),T) \;\le\;\frac{\learn}{2\scparam}\sum_{t=0}^{\tend-1}
\dnorm*{\A^\top \xd(t)}^2,
    \end{equation}}
where $\reg$ is $\scparam$-strongly convex with respect to $\norm{\cdot}$.\vspace{-1em} 
\end{remark}
\subsection[Surplus Toolbox: Benchmarks, Decomposition, and Decay]{{Surplus} Toolbox: Benchmarks, Decomposition, and Decay}
\label{section:exploitation}
To set the stage for our formal notion of {surplus}, we introduce a few definitions and remarks.
\vspace{-0.5em}
\paragraph{$\ast$ \emph{Learner's Viewpoint}. (See Appendix~{\ref{appendix:learner-viewpoint}} for details and proof)}
By path independence (Remark~\ref{rem:path_independence}) in continuous time, it suffices to consider a fixed optimizer strategy $\xfix\in\xsp$. Let $\vv \defeq \A^\top \xfix$ be the vector of expected payoffs. We denote the \emph{best-response value} as $\valfix \defeq \min_i \vv_i$, the \emph{suboptimality gaps} as $\gap_i \defeq \vv_i - \valfix \ge 0$, and the \emph{best-response set} as $\optset \defeq \{i \mid \gap_i = 0\}$ and $\optnum := |\optset|$.
The following lemma characterizes the learner's trajectory.
\begin{restatable}[Learner's limit]{lemma}{lemmainftrlfixedx}
    \label{lemmain:ftrl-fixed-x} $\y(t)$ converges to $\yopt\in\brep(\xfix)$ as $t\to\infty$. Moreover, strict convexity of $\reg$ implies that $\yopt$ is unique. When $\xfix=\xne$ is max-min, we have that $\yopt\in\argmin_{\y\in\conv(\brep(\xne))}\reg(\y)$, so the payoff converges to the value $\val$ although $\yopt$ need not be a min-max strategy.
\end{restatable}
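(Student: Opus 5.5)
The plan is to work with the learner's score directly. Under the fixed strategy $\xfix$, the score is $\HR(t) = t\,\matalt^\top \xfix = -t\vv$, so $\y(t) = \choicemap_{\reg}(-\learn t \vv)$. Since $\choicemap_{\reg}(Z)$ is invariant under adding a constant vector to $Z$, we may replace $-\vv$ by $-\gap$:
\[
\y(t) = \argmax_{\y \in \ysp} \bigl\{ -\learn t \inner{\gap, \y} - \reg(\y) \bigr\} = \argmin_{\y \in \ysp} \bigl\{ \inner{\gap, \y} + \tfrac{1}{\learn t}\reg(\y) \bigr\}.
\]
This is a Tikhonov-type regularization of the linear program $\min_{\y \in \ysp}\inner{\gap,\y}$. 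The LP's optimal face is exactly $F \defeq \conv(\brep(\xfix)) = \{\y \in \ysp : \y_i = 0 \text{ for all } i \text{ with } \gap_i > 0\}$, and $\epsilon \defeq 1/(\learn t) \to 0$ as $t \to \infty$. The claim then becomes the classical statement that the regularization path converges to the $\reg$-minimal point of the optimal face.

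\textbf{Step 1 (cluster points lie in $F$ and minimize $\reg$ there).} Let $\y_\epsilon$ be the unique minimizer; uniqueness comes from strict convexity of $\reg$. Fix any $\y \in F$. Optimality gives
\[
\inner{\gap,\y_\epsilon} + \epsilon \reg(\y_\epsilon) \le \inner{\gap,\y} + \epsilon\reg(\y) = \epsilon \reg(\y),
\]
so $0 \le \inner{\gap,\y_\epsilon} \le \epsilon(\reg(\y)-\reg(\y_\epsilon)) \le \epsilon(\regmax - \regmin)$. Here $\regmax$ and $\regmin$ are the extrema of $\reg$ on the compact simplex, which are finite because $\reg$ is continuous. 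Hence $\inner{\gap,\y_\epsilon} \to 0$, and every cluster point $\bar\y$ (one exists by compactness of $\ysp$) satisfies $\inner{\gap,\bar\y}=0$, i.e. $\bar\y \in F$. Since $\inner{\gap,\y_\epsilon} \ge 0$, the same inequality also gives $\reg(\y_\epsilon) \le \reg(\y)$ for every $\y \in F$. Passing to the limit using continuity of $\reg$ yields $\reg(\bar\y) \le \min_F \reg$.

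\textbf{Step 2 (uniqueness and convergence).} $F$ is convex and compact and $\reg$ is strictly convex, so $\argmin_F \reg$ is a single point $\yopt$. Every cluster point equals $\yopt$, and therefore the whole trajectory converges, $\y(t)\to\yopt \in F$. The phrase ``$\yopt\in\brep(\xfix)$'' in the statement should be read as membership in the best-response face $\conv(\brep(\xfix))$; I would note this reading explicitly. This already covers the max-min specialization $\xfix=\xne$, where $\yopt = \argmin_{\conv(\brep(\xne))}\reg$.

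\textbf{Step 3 (payoff and non-min-max remark).} Continuity gives $\xne^\top \mat \y(t) \to \xne^\top\mat\yopt$. Every $\y \in F$ satisfies $\vv^\top\y = \valfix$, and for the max-min $\xne$ we have $\valfix = \val$, so the payoff converges to the value. To show $\yopt$ need not be min-max, a small example suffices: pick $\mat$ in which all learner actions are best responses to $\xne$ but only a proper subset of mixtures is min-max, for instance a game with a duplicated column that is weakly worse against other rows. Then the $\reg$-minimal point, which is the barycenter for symmetric kernels, is not min-max.

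The main subtlety is Step 1's use of continuity of $\reg$ up to the boundary for steep kernels, which is assumed in the setup. The example in Step 3 requires some care in choosing the matrix.
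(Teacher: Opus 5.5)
Your proposal is correct and follows essentially the same route as the paper's proof. In both, $y(t)$ is rewritten as the minimizer of the linear objective plus $\frac{1}{\eta t}h$, compactness shows every cluster point lies in the best-response face $\conv(\mathrm{br}(\hat x))$ with $h$-value at most that of the $h$-minimizer there, and strict convexity finishes; reading $y^\ast\in\mathrm{br}(\hat x)$ as membership in this face matches the appendix version. The one difference is that you get membership in the face from the explicit bound $0\le\langle\delta,y(t)\rangle\le(h_{\max}-h_{\min})/(\eta t)$ instead of citing epi-convergence (Rockafellar--Wets, Thm.~7.33), which is more elementary and also gives an $O(1/(\eta t))$ rate for the payoff gap.
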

\paragraph{$\ast$ \emph{Optimizer's Viewpoint}. (See Appendix~{\ref{appendix:optimizer-viewpoint}} for details and proof)}\vspace{-0.5em}
Returning to the optimizer's payoff, it is driven by the learner's transient mass on suboptimal actions.
\begin{restatable}[Optimizer's payoff]{lemma}{surplusgapdecomp}
    \label{lem:surplus-gap-decomp}
    $\payoff(\xfix,\y(t)) = \valfix + \sum_{i\notin\optset} \gap_i\, \y_i(t)$, for all $t\ge 0$.
\end{restatable}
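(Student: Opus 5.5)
The identity in Lemma~\ref{lem:surplus-gap-decomp} is pure algebra: it only uses bilinearity of the payoff and the fact that $\y(t)$ lies in the simplex. Nothing about the FTRL dynamics enters. I would therefore prove it pointwise in $t$, with no appeal to Lemma~\ref{lemmain:ftrl-fixed-x} or any convergence statement.

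First, write the optimizer's payoff against an arbitrary $\y\in\ysp$ using the payoff vector $\vv=\A^\top\xfix$:
\[
\payoff(\xfix,\y)=\xfix^\top\A\y=\inner{\A^\top\xfix,\y}=\sum_{i=1}^{\ynum}\vv_i\,\y_i .
\]
Next, substitute the definition of the gaps, $\vv_i=\valfix+\gap_i$, and split the sum:
\[
\payoff(\xfix,\y)=\valfix\sum_{i}\y_i+\sum_i\gap_i\y_i .
\]
Since $\y\in\ysp$, we have $\sum_i\y_i=1$, so the first term is exactly $\valfix$. For $i\in\optset$ the gap $\gap_i$ vanishes by the definition of $\optset$, so those terms drop from the second sum and only $i\notin\optset$ remain. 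Finally, specialize to $\y=\y(t)=\choicemap_{\reg}(\learn\HR(t))$, which lies in $\ysp$ by the definition of the choice map. This gives the claim for every $t\ge0$, in both discrete and continuous time.

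The only point that needs care is the sign convention. In the paper's notation $\vv$ is defined from $\A$, the optimizer's payoff matrix, while the learner receives $-\vv$. The "best response" $\optset$ is therefore the set of minimizers of $\vv_i$, which is consistent with $\valfix=\min_i\vv_i$ and $\gap_i\ge0$. I would state this explicitly so that the nonnegativity of the surplus term $\sum_{i\notin\optset}\gap_i\y_i(t)$ is transparent.

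With that settled, I expect no real obstacle. I would add a remark that, combined with the positivity of $\y_i(t)$ (steep case) or its finite-time vanishing (non-steep case), the lemma reduces the surplus to the integrated transient mass on suboptimal actions.
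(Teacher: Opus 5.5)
Your proof is correct and matches the paper's argument exactly: you write $\payoff(\xfix,\y(t))=\inner{\vv,\y(t)}$, substitute $\vv_i=\valfix+\gap_i$, and use $\sum_i \y_i(t)=1$ together with $\gap_i=0$ on $\optset$. Your remark on the sign convention, that $\optset$ is the set of minimizers of $\vv$, is a helpful clarification but is not needed for the identity itself.
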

\paragraph{$\ast$ \emph{The Cumulative {Surplus}}.}\vspace{-0.5em}
Then, we define the \emph{cumulative {surplus}} as the optimizer's excess payoff above the game value.
\begin{definition}
    \it
For a horizon $T>0$,
$
\expl(T)\defeq\int_{0}^{T}\big(\payoff(\x(t),\y(t))-\val\big)\,dt .
$
When the optimizer plays a fixed strategy $\xfix$, Lemma~\ref{lem:surplus-gap-decomp} yields the decomposition
\vspace{-0.25em}
\begin{align}
    \label{eqn:surplus_decomposition}
\boxed{
    \expl(T) = 
    \underbrace{T\,(\valfix-\val)\vphantom{\int_0^T \sum_{i\notin\optset}\gap_i\,\y_i(t)\,dt}}_{\vag} 
    + 
    \underbrace{\int_0^T \sum_{i\notin\optset}\gap_i\,\y_i(t)\,dt}_{\lag}
    \quad \tag{Surplus Decomposition}
}
\end{align}
The \textbf{Value Gap} $\vag$ is a linear penalty (since $\valfix \le \val$) that vanishes iff $\xfix$ is max-min. The \textbf{Learner Approximation Gap} $\lag \ge 0$ captures the profit extracted from the learner's suboptimal actions before convergence.
\end{definition}
\begin{remark}\it
\ref{eqn:surplus_decomposition} separates a static \emph{cost of commitment}, $\vag$, from dynamic \emph{gains from learning}, $\lag$. The former vanishes iff $\xfix$ is max-min. The latter measures the transient profit extracted from the learner’s suboptimal actions. Hence, under perfect knowledge, controlling {surplus} amounts to bounding $\lag$, namely, how quickly the regularizer drives $y_i(t)$ to zero.
\end{remark}
\noindent The following structural lemma offers a unified analysis for generic separable regularizers, acting as our ``speedometer,'' linking the geometry of the regularizer (through the inverse map $\phi$) to the decay rate of non-best-response probabilities.\footnote{%
Specializing Lemma~\ref{lem:bound-for-cont-ftrl} to standard regularizers,
in the supplement we show that for suboptimal weights $i \notin \optset$:
\textbf{(Negative Entropy)} yields \emph{exponential decay}
$y_i(t) \sim e^{-\learn t \gap_i}$;
\textbf{(Euclidean)} quadratic penalty yields \emph{finite-time decay}
$y_i(t) \sim [\text{const} - \learn t \gap_i]_+$;
\textbf{(Negative Tsallis)} for $q\in(0,1)$ yields \emph{polynomial decay}
$y_i(t) \sim (\text{const} + \learn t \gap_i)^{-1/(1-q)}$,
while $q>1$ allows sparse solutions similar to the Euclidean case.%
}
The proof of it is in Appendix~\ref{appendix:lem:bound-for-cont-ftrl}.
\begin{restatable}[Pointwise FTRL Bounds]{lemma}{lemboundforcontftrl}
    \label{lem:bound-for-cont-ftrl}
    Fix $\xfix$, step size $\learn > 0$, and a separable regularizer $\reg$. The FTRL response satisfies $y_i(t)=\phi(\kkt(t)-\learn t\, \vv_i)$ for a normalization scalar $\kkt(t)$. Then for any $t\ge 0$:
    \begin{description}[topsep=-1pt,itemsep=-0.5pt, leftmargin=1em]
        \item[$\star$\textit{Optimal Actions:}]For all $i \in \optset$, $y_i(t)$ are identical $a_t$ with $\frac{1}{\ynum} \le a_t \le \frac{1}{\optnum}$.
        \item[$\star$\textit{Suboptimal Actions:}]For all $j \notin \optset$:
        {\small \(
            \phi\big(\theta'(1/\ynum)-\learn t\,\gap_j\big)
            \;\le\; y_j(t) \;\le\;
            \phi\big(\theta'(1/\optnum)-\learn t\,\gap_j\big).
        \)}
    \end{description}
    \noindent Additionally, \textit{steep} regularizers yield asymptotic decay ($y_j(t)\to 0$), whereas \textit{non-steep} ones eliminate suboptimal actions in finite time $t \ge t_{\theta}^*\coloneqq\frac{\theta'(1/\optnum)-\theta'(0^+)}{\learn\gapmin}$.\vspace{-1em}
\end{restatable}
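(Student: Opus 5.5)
The proof rests on the KKT structure \eqref{eq:kkt_structure} and on the monotonicity of $\phi$. With $\xfix$ fixed, the continuous-time score is $\HR(t)=t\,\matalt^\top\xfix=-t\vv$, so $y_i(t)=\phi(\kkt(t)-\learn t\vv_i)$. Write $\vv_i=\valfix+\gap_i$ and absorb the common shift $-\learn t\valfix$ into the multiplier, defining $\mu(t)\defeq\kkt(t)-\learn t\valfix$. This gives $y_i(t)=\phi(\mu(t)-\learn t\gap_i)$. Every $i\in\optset$ has $\gap_i=0$, so all such coordinates equal $a_t\defeq\phi(\mu(t))$. For $j\notin\optset$ we have $\gap_j>0$, and since $\phi$ is nondecreasing, $y_j(t)\le a_t$.

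The bounds on $a_t$ then follow from normalization. The upper bound: the sum of the $\optnum$ optimal coordinates is $\optnum a_t\le\sum_i y_i=1$, so $a_t\le 1/\optnum$. The lower bound: $a_t$ is the largest coordinate, so $1=\sum_i y_i\le\ynum a_t$, giving $a_t\ge 1/\ynum$.

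\textbf{Step 2: suboptimal actions.} We have to translate the bounds on $a_t$ into bounds on $\mu(t)$. When $a_t\in(0,1)$, $\phi$ is the genuine inverse of $\theta'$ there, so $\mu(t)=\theta'(a_t)$. Because $\theta$ is strictly convex, $\theta'$ is increasing, and $1/\ynum\le a_t\le 1/\optnum$ yields $\theta'(1/\ynum)\le\mu(t)\le\theta'(1/\optnum)$. Substituting into $y_j(t)=\phi(\mu(t)-\learn t\gap_j)$ and using monotonicity of $\phi$ gives both inequalities of the statement.

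The main technical obstacle is the truncation in $\phi=\trunc_{[0,1]}\{(\theta')^{-1}\}$. Two edge cases need separate treatment. If $a_t=1$, then $\optnum=1$ and the other coordinates vanish, so the bounds hold trivially. If $a_t$ is at a kink, I take $\mu(t)$ to be any valid subgradient multiplier; any choice in $[\theta'(1/\ynum),\theta'(1/\optnum)]$ works, because $\phi$ is constant outside the range of $\theta'$.

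\textbf{Step 3: steep versus non-steep.} In the steep case, $\theta'(0^+)=-\infty$ means $\phi(s)\to0$ as $s\to-\infty$. The upper bound argument $\theta'(1/\optnum)-\learn t\gap_j\to-\infty$ therefore forces $y_j(t)\to0$.

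In the non-steep case, $\phi(s)=0$ for all $s\le\theta'(0^+)$. The upper argument falls below $\theta'(0^+)$ exactly when
\[
t\ge\frac{\theta'(1/\optnum)-\theta'(0^+)}{\learn\gap_j}.
\]
Since $\gap_j\ge\gapmin$, this holds for every $j\notin\optset$ once $t\ge t_\theta^*$, so all suboptimal actions are eliminated by that time.
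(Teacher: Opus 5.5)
Your proof is correct and follows the paper's argument essentially step for step. Both show that the coordinates on $\optset$ share a common value $a_t$, obtain $1/\ynum\le a_t\le 1/\optnum$ from normalization and $y_j(t)\le a_t$, and then write $\kkt(t)-\learn t\,\valfix=\theta'(a_t)$ and apply monotonicity of $\phi$. Your explicit handling of the truncation and your proof of the steep/non-steep addendum fill in details the paper's proof leaves implicit. One small correction: in the $a_t=1$ case the lower bound is not literally trivial, but it follows at once from $\theta'(1/\ynum)\le\theta'(1)\le\mu(t)$ and monotonicity of $\phi$.
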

\section{Main Results}
\label{section:main-results}
\vspace{-4pt}
%
\subsection{Passive Extraction: The Inverse-Rate Law}
\label{subsec:inverse-rate}
\vspace{-3pt}
For our first result, fix a constant optimizer strategy $\xfix$.
Let $\optset$ be the set of best responses,
$\optnum=|\optset|$, $\ynum-\optnum = N_{\mathrm{sub}}$
the number of suboptimal actions, and
$\gapmin\le\gap_i\le\gapmax$ the payoff gaps of suboptimal actions.
Define the \emph{dual potential}
$V(u)\coloneqq\theta^*(\theta'(u))=u\,\theta'(u)-\theta(u)$%
\footnote{For standard kernels: Euclidean $V(u)=\tfrac12u^2$,
Tsallis-$q$ $V(u)=u^q$, entropic $V(u)=u$ (up to constants).},
and the positive potential drops $\overline{\Delta V}(T),\underline{\Delta V}(T)$:
\begin{align}
\overline{\Delta V}(T)
&\coloneqq \tfrac{\gapmax}{\gapmin}
\Bigl[V(1/\optnum)-\theta^*\!\bigl(\max\{\theta'(1/\optnum)
-\learn\gapmin T,\,\theta'(0^+)\}\bigr)\Bigr],
\label{eq:dv_high}\\
\underline{\Delta V}(T)
&\coloneqq \tfrac{\gapmin}{\gapmax}
\Bigl[V(1/\ynum)-\theta^*\!\bigl(\max\{\theta'(1/\ynum)
-\learn\gapmax T,\,\theta'(0^+)\}\bigr)\Bigr].
\label{eq:dv_low}
\end{align}

\begin{restatable}[Inverse-Rate Law]{theorem}{thmexploitfinite}
\label{thm:exploit-finite}
The cumulative surplus $\lag$ against a fixed strategy satisfies
\[
\frac{N_{\mathrm{sub}}}{\learn}\cdot\underline{\Delta V}(T)
\;\le\;\lag\;\le\;
\frac{N_{\mathrm{sub}}}{\learn}\cdot\overline{\Delta V}(T).
\tag{Exploitation Bound}
\]
\end{restatable}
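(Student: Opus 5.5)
The plan is to start from the definition $\lag=\int_0^T\sum_{j\notin\optset}\gap_j\,y_j(t)\,dt$ in the \ref{eqn:surplus_decomposition}. I would bound each summand pointwise in time using Lemma~\ref{lem:bound-for-cont-ftrl}, integrate the resulting scalar bounds in closed form through the conjugate kernel $\theta^*$, and then sum over the $N_{\mathrm{sub}}$ suboptimal coordinates.

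\emph{Step 1: uniform pointwise bounds.} Since $\phi=\trunc_{[0,1]}\{(\theta')^{-1}\}$ is nondecreasing and $\gapmin\le\gap_j\le\gapmax$, Lemma~\ref{lem:bound-for-cont-ftrl} gives, for every $j\notin\optset$ and every $t\ge 0$,
\[
\gapmin\,\phi\bigl(\theta'(1/\ynum)-\learn t\gapmax\bigr)\;\le\;\gap_j\,y_j(t)\;\le\;\gapmax\,\phi\bigl(\theta'(1/\optnum)-\learn t\gapmin\bigr).
\]
Monotonicity of $\phi$ is what lets me trade $\gap_j$ inside $\phi$ for $\gapmin$ or $\gapmax$. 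This exchange is exactly the source of the ratios $\gapmax/\gapmin$ and $\gapmin/\gapmax$ in \eqref{eq:dv_high}--\eqref{eq:dv_low}.

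\emph{Step 2: the one-dimensional integral.} The key identity I would establish is $(\theta^*)'=\phi$, where $\theta^*(s)=\sup_{u\in[0,1]}\{su-\theta(u)\}$ is the scalar conjugate of the kernel. By Danskin's theorem (the maximizer is unique by strict convexity), $(\theta^*)'(s)$ is the maximizer, namely $\phi(s)$. Concretely:
\begin{itemize}
\item for $s$ in the range of $\theta'$ on $(0,1]$, the maximizer is $(\theta')^{-1}(s)$;
\item for $s\le\theta'(0^+)$ (relevant only in the non-steep case), the maximizer is $0$, so $\theta^*$ is constant there and $\phi=0$;
\item the upper truncation at $1$ is never active, because all arguments satisfy $s\le\theta'(1/\optnum)\le\theta'(1)$.
\end{itemize}
In addition, $\theta^*(\theta'(u))=u\theta'(u)-\theta(u)=V(u)$.

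Substituting $s=c-\learn\gap t$ then gives, for $c\in\{\theta'(1/\optnum),\theta'(1/\ynum)\}$ and $\gap\in\{\gapmin,\gapmax\}$,
\[
\int_0^T\phi(c-\learn\gap t)\,dt=\frac{1}{\learn\gap}\Bigl[\theta^*(c)-\theta^*\bigl(\max\{c-\learn\gap T,\theta'(0^+)\}\bigr)\Bigr].
\]
The $\max$ with $\theta'(0^+)$ accounts for the flat region of $\theta^*$ after finite-time elimination; in the steep case $\theta'(0^+)=-\infty$, so it is vacuous.

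\emph{Step 3: assemble.} For the upper bound, take $c=\theta'(1/\optnum)$ and $\gap=\gapmin$ and multiply by $\gapmax$. This yields exactly $\frac{1}{\learn}\overline{\Delta V}(T)$ per suboptimal action. For the lower bound, take $c=\theta'(1/\ynum)$ and $\gap=\gapmax$ and multiply by $\gapmin$, which yields $\frac1\learn\underline{\Delta V}(T)$. Summing over the $N_{\mathrm{sub}}$ indices $j\notin\optset$ gives the Exploitation Bound.

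I expect the main obstacle to be Step 2, specifically justifying $(\theta^*)'=\phi$ uniformly across kernels. This covers the steep case, where one has to check that $\theta^*$ remains finite as $s\to-\infty$ (it tends to $-\theta(0)$), and the non-steep case, where $\theta^*$ has a kink-free flat extension below $\theta'(0^+)$. It also requires confirming that the change of variables is legitimate when $\phi$ is merely monotone and continuous rather than smooth. If there is any difficulty at the boundary, I would fall back on writing $\theta^*(b)-\theta^*(a)=\int_a^b\phi$ directly from convexity, using the fact that $\phi$ is the monotone selection of the subdifferential of $\theta^*$. That identity holds for any proper closed convex function on $\R$.
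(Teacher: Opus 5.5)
Your proposal is correct and follows the paper's proof essentially step for step. The paper likewise sandwiches $\sum_{i\notin\optset}\gap_i y_i(t)$ between $\gapmin N_{\mathrm{sub}}\,\phi(\theta'(1/\ynum)-\learn t\gapmax)$ and $\gapmax N_{\mathrm{sub}}\,\phi(\theta'(1/\optnum)-\learn t\gapmin)$ via Lemma~\ref{lem:bound-for-cont-ftrl} and monotonicity of $\phi$, then substitutes $u=c-\learn\gap t$ and integrates $\phi=(\theta^*)'$ with clipping at $\theta'(0^+)$. Your explicit justification of $(\theta^*)'=\phi$, including the flat extension of $\theta^*$ below $\theta'(0^+)$, is something the paper uses without comment, so it adds rigour rather than changing the route.
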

\noindent The discrete-time version holds up to an upper-bound additive
$O(N_{\mathrm{sub}}\gapmax/\optnum)$ term
(Corollary~\ref{cor:discrete-exploit-finite}). The asymptotic regime is dictated by
$\theta'(0^+)$, yielding a sharp dichotomy.
\begin{restatable}[Steep vs.\ Non-Steep]{theorem}{thmexploitinfinite}
\label{thm:exploit-infinite}
Let $V_{\mathrm{bdry}}\coloneqq\theta^*(\theta'(0^+))$.
As $\learn T\to\infty$:
\begin{enumerate}[label=\textbf{(\Alph*)},
  leftmargin=*,itemsep=-2pt,topsep=-1pt]
\item \textbf{Non-steep} ($\theta'(0^+)>-\infty$,
  $V_{\mathrm{bdry}}<\infty$): suboptimal actions are
  eliminated in finite time $T^*<\infty$ and surplus
  \emph{saturates} to a constant in
  $\frac{N_{\mathrm{sub}}}{\learn}
  \bigl[\frac{\gapmin}{\gapmax}(V(1/\ynum)-V_{\mathrm{bdry}}),\,
  \frac{\gapmax}{\gapmin}(V(1/\optnum)-V_{\mathrm{bdry}})\bigr]$.
\item \textbf{Steep} ($\theta'(0^+)=-\infty$,
  $V_{\mathrm{bdry}}=0$): the learner approaches the
  boundary only asymptotically and surplus has an
  \emph{infinite tail}, converging to
  $\frac{N_{\mathrm{sub}}}{\learn}
  \bigl[\frac{\gapmin}{\gapmax}V(1/\ynum),\,
  \frac{\gapmax}{\gapmin}V(1/\optnum)\bigr]\pm o(1)$.
\end{enumerate}
\end{restatable}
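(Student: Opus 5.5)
The plan is to derive both parts as limits of the finite-horizon Exploitation Bound in Theorem~\ref{thm:exploit-finite}. Since $\learn$ is held in front as the prefactor $N_{\mathrm{sub}}/\learn$, it suffices to find the limit of the brackets in \eqref{eq:dv_high}--\eqref{eq:dv_low} as $\learn T\to\infty$. Each bracket is $V(1/k')-\theta^*(\max\{\theta'(1/k')-\learn\gap T,\theta'(0^+)\})$ with $k'\in\{\optnum,\ynum\}$ and $\gap\in\{\gapmin,\gapmax\}$. Its argument is nonincreasing in $T$, and $\theta^*$ is nondecreasing on the relevant range, because $(\theta^*)'=(\theta')^{-1}\ge 0$ on the image of $\theta'$ restricted to $[0,1]$. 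Hence both brackets are monotone in $T$ and their limits exist.

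For \textbf{(A)}, $\theta'(0^+)$ is finite. Once $\learn\gapmin T\ge\theta'(1/\optnum)-\theta'(0^+)$, the $\max$ in \eqref{eq:dv_high} equals $\theta'(0^+)$, and likewise for \eqref{eq:dv_low} once $\learn\gapmax T\ge \theta'(1/\ynum)-\theta'(0^+)$. From that point on, both brackets are exactly constant, equal to $V(1/\optnum)-V_{\mathrm{bdry}}$ and $V(1/\ynum)-V_{\mathrm{bdry}}$ respectively, where $V_{\mathrm{bdry}}=\theta^*(\theta'(0^+))=\lim_{u\to0^+}(u\theta'(u)-\theta(u))=-\theta(0)$ is finite by continuity of $\theta$ on $[0,1]$.

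To get genuine saturation of $\lag$ itself, and not only of its bounds, I will invoke the finite-time elimination clause of Lemma~\ref{lem:bound-for-cont-ftrl}. For $t\ge t_\theta^*$ we have $y_j(t)=0$ for all $j\notin\optset$, so the integrand $\sum_{i\notin\optset}\gap_i y_i(t)$ of $\lag$ vanishes. Therefore $\lag$ is constant for $T\ge T^*\coloneqq t_\theta^*<\infty$, and this constant lies in the stated interval.

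For \textbf{(B)}, $\theta'(0^+)=-\infty$, so the $\max$ is always the first entry, and its argument tends to $-\infty$ as $\learn T\to\infty$. We then need $\theta^*(s)\to V_{\mathrm{bdry}}$ as $s\to-\infty$. Writing $s=\theta'(u)$ with $u=(\theta')^{-1}(s)\to 0^+$, we get $\theta^*(s)=u\theta'(u)-\theta(u)$. The normalization $V_{\mathrm{bdry}}=0$ in the statement corresponds to taking $\theta(0)=0$, consistent with the footnote's normalizations $V(u)=u$ and $V(u)=u^q$.

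The main obstacle is showing $u\theta'(u)\to 0$ as $u\to 0^+$ when $\theta'(u)\to-\infty$. I will try convexity first. For $u<w$, monotonicity of $\theta'$ gives $\theta(w)-\theta(u)\ge (w-u)\theta'(u)$, so $u\theta'(u)\ge \frac{u}{w-u}(\theta(u)-\theta(w))\cdot(-1)$ up to sign bookkeeping. Letting $u\to0$ with $w$ fixed makes the right side tend to $0$ by boundedness of $\theta$. This yields $\liminf u\theta'(u)\ge 0$, and the bound $\theta'(u)<0$ near $0$ gives $\limsup\le 0$.

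It follows that each bracket converges to $V(1/k')$. The limit is approached from below, with a residual $\theta^*(\theta'(1/k')-\learn\gap T)=o(1)$. This residual is strictly positive for every finite $T$ because $y_j(t)>0$ (steepness keeps iterates interior). That is exactly the infinite tail: $\lag$ is strictly increasing in $T$ and never saturates, but converges to a value in $\frac{N_{\mathrm{sub}}}{\learn}[\frac{\gapmin}{\gapmax}V(1/\ynum),\frac{\gapmax}{\gapmin}V(1/\optnum)]$ up to $o(1)$ corrections. Monotone convergence of $\int_0^T\sum_{i\notin\optset}\gap_i y_i$ gives finiteness of the limit directly from the upper bound.
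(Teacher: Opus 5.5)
Your overall route is the paper's: both parts are read off as limits of the bounds in Theorem~\ref{thm:exploit-finite}. In (A) the $\max$ clips to $\theta'(0^+)$ after finitely many rounds, and in (B) the argument of $\theta^*$ diverges to $-\infty$. Your use of the elimination clause of Lemma~\ref{lem:bound-for-cont-ftrl} to show that $\mathrm{LAG}(T)$ itself is constant for $T\ge t_\theta^*$ is a genuine improvement. The paper's proof only records that the two bounds become constant, and that alone does not show the surplus saturates.

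\textbf{The gap.} The step you single out as the main obstacle in (B) is argued in the wrong direction.
\begin{itemize}
\item For $0<u<w$, convexity gives $\theta'(u)\le(\theta(w)-\theta(u))/(w-u)$, hence $u\theta'(u)\le u(\theta(w)-\theta(u))/(w-u)\to 0$.
\item That is $\limsup_{u\to0^+}u\theta'(u)\le 0$, which is the same direction you then obtain again from $\theta'(u)<0$.
\item So $\liminf_{u\to 0^+}u\theta'(u)\ge 0$ is never established, and without it you cannot conclude $\theta^*(s)\to-\theta(0)$.
\end{itemize}

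\textbf{The fix.} The missing half comes from convexity on the other side of $u$. The tangent line at $u$ evaluated at $0$ gives $\theta(0)\ge\theta(u)-u\theta'(u)$, i.e.\ $u\theta'(u)\ge\theta(u)-\theta(0)\to0$ by continuity of $\theta$ at $0$. A cleaner version avoids $u\theta'(u)$ entirely:
\begin{itemize}
\item Fenchel--Young with $y=0$ gives $\theta^*(s)\ge-\theta(0)$ for all $s$.
\item For $s<0$, $\theta^*(s)=s\phi(s)-\theta(\phi(s))\le-\theta(\phi(s))\to-\theta(0)$, because $\phi(s)\to0$.
\end{itemize}

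The paper sidesteps this step. It simply takes $\theta^*(-\infty)=0$ as a normalization, i.e.\ reads $V_{\mathrm{bdry}}$ as the monotone limit $\lim_{s\to-\infty}\theta^*(s)$. The residual $\theta^*(\theta'(1/k')-\eta\delta T)$ is then $o_{\eta T}(1)$ by definition. Your argument is needed only to identify that normalization with $\theta(0)=0$.
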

We provide an expanded discussion in Appendix \ref{app subsec:inverse-rate}. The formal statements and proofs of Theorems \ref{thm:exploit-finite} and \ref{thm:exploit-infinite} are presented as Theorems \ref{app thm:exploit-finite} and \ref{app thm:exploit-infinite}, respectively, with concrete examples available in Appendix \ref{appendix:app:examples-gaps}. Therefore, if the optimizer simply plays max-min strategy $\xfix=\xne$, he will receive as surplus exactly $\lag$.

\vspace{-4pt}
\subsection{Active Extraction: The Alternating Trap}
\label{subsec:alternating}
\vspace{-2pt}

As discussed in the introduction, a fixed strategy fails whenever $N_{\mathrm{sub}}=0$
(e.g., RPS against the max-min). Here, we show that the optimizer
can recover again optimal surplus by \emph{actively engineering}
the learner's trajectory.
\begin{restatable}{theorem}{thmavgmain}
\label{thm:avg-main}
Let $\A\in\R^{\xnum\times\ynum}$ have i.i.d.\
$\mathrm{Unif}[-1,1]$ entries. Fix any
$\scparam$-strongly convex separable regularizer $\reg$
with curvature bound
$M=\sup_{a\in[\delta,1-\delta]}\theta''(a)$, and
{step size
$\learn\le\frac{\scparam}{\cnorm\sup_{u\in[-1,1]^{\ynum}}\dnorm{u}}
(\frac{1}{\ynum}-\delta)$ where $\cnorm = \sup_{u\neq 0} (\norm{u}_\infty / \norm{u})$.}
Then with probability at least
$1-\frac{\xnum!\ynum!}{(\xnum+\ynum-1)!}-\frac{1}{\xnum\ynum}$,
there exists an \emph{alternating trap strategy}
$\xd(t)$ such that for every horizon $T$,
\[
\rdisc{\xd(t)}{0}{T}
\;\ge\;
T\,\val \;+\; \learn T\cdot C_{\A}^{\reg},
\qquad
C_{\A}^{\reg}
\;\defeq\;
\frac{1}{2M(\xnum\ynum)^6}.
\]
\end{restatable}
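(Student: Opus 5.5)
The plan is to combine the Discrete--Continuous Decomposition (Theorem~\ref{thm:main-reduction}) with a probabilistic analysis of the random game. We will construct an alternating strategy whose time average is a max-min strategy, so that the continuous benchmark is at least $T\val$. The surplus then comes entirely from the discretization gap $\mathrm{DG}$, which we lower bound by $\Omega(\learn T)$ using a curvature (variance) identity for $\breg_{\regconj}$.

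\textbf{Step 1 (random-game structure).} First we show that, with probability at least $1-\frac{\xnum!\ynum!}{(\xnum+\ynum-1)!}-\frac{1}{\xnum\ynum}$, three properties hold.
\begin{itemize}
\item The game has a unique equilibrium $(\xne,\yopt)$ with supports of equal size $k$. Uniqueness fails with probability at most $\frac{\xnum!\ynum!}{(\xnum+\ynum-1)!}$, which is the classical bound on degeneracy and the probability that a random game has no pure saddle, of the kind in Roberts/Jonasson-type arguments.
\item There are two learner best responses $j_1\neq j_2$ to $\xne$, together with two optimizer pure actions $i_1,i_2$, whose payoff differences $\A_{i_1 j_1}-\A_{i_1 j_2}$ and $\A_{i_2 j_1}-\A_{i_2 j_2}$ have opposite signs. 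This is the ``no weak domination'' condition.
\item All relevant gaps and entries of $\xne$ are at least $1/(\xnum\ynum)^2$. A union bound over $\xnum\ynum$ anti-concentration events for $\mathrm{Unif}[-1,1]$ differences costs the $\frac{1}{\xnum\ynum}$ term.
\end{itemize}

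\textbf{Step 2 (the trap and the benchmark).} Define $\xd(t)$ to alternate between the two perturbed strategies $\xne\pm\epsilon(e_{i_1}-e_{i_2})$ on even and odd rounds. The choice $\epsilon\asymp 1/(\xnum\ynum)^2$ keeps both strategies in the simplex. The two-round average of $\xd$ equals $\xne$. The step-size condition $\learn\le\frac{\scparam}{\cnorm\sup\dnorm{u}}(\frac1\ynum-\delta)$ guarantees that each score increment $\learn\B^\top\xd(t)$ moves $\y(t)$ by at most $\frac1\ynum-\delta$ in sup norm. Starting from $\y(0)$ uniform, the score after any even number of rounds is $\HR(2s)=s\,\B^\top\xne\cdot 2$, so the trajectory stays $O(\learn)$-close to the max-min score path. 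The coordinates of $\y$ on $j_1,j_2$ therefore remain in $[\delta,1-\delta]$. We are then in the region where $\theta''\le M$.

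By Lemma~\ref{lem:reward-potential-cont} and path independence, the continuous benchmark of the piecewise-constant extension equals $\Phi_\learn(0)-\Phi_\learn(\HR(T))$. When $T$ is even this quantity is at least $T\val$, since the time average is $\xne$ and $\Phi_\learn(0)-\Phi_\learn(T\B^\top\xne)\ge T\val$ by the lower half of the sandwich~\eqref{eq:reward sandwich}. Odd $T$ costs at most an $O(\learn)$ one-round correction, which is absorbed into the bound.

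\textbf{Step 3 (gap lower bound).} We write $\breg_{\regconj}(\learn\HR(t+1),\learn\HR(t))=\int_0^1(1-s)\,\langle g,\nabla^2\regconj(\cdot)g\rangle\,ds$ with $g=\learn\B^\top\xd(t)$. For separable $\reg$, the Hessian of $\regconj$ restricted to the simplex is $\mathrm{diag}(w)-ww^\top/\sum w$, where $w_i=1/\theta''(y_i)$. This gives a weighted variance of $g$, as in Lemma~\ref{lem:variance-identity}. Because $y_{j_1},y_{j_2}\in[\delta,1-\delta]$, the weights satisfy $w_{j_1},w_{j_2}\ge 1/M$. The variance is then at least $\frac{1}{2M}\cdot\frac{(g_{j_1}-g_{j_2})^2}{2}$. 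The perturbation $\pm\epsilon(e_{i_1}-e_{i_2})$ changes $g_{j_1}-g_{j_2}$ by $\pm\learn\epsilon\cdot\Theta(\text{gap})$, and the opposite-sign condition ensures this difference does not cancel. Each round therefore contributes at least $\learn^2/(2M(\xnum\ynum)^6)$ to the divergence, up to normalization. Summing over rounds and dividing by $\learn$ yields $\mathrm{DG}\ge\learn T\,C_\A^\reg$. Combined with Step 2, this gives the theorem.

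\textbf{Main obstacle.} The delicate part is Step 1. We must show that the probability of the non-degeneracy and non-domination events is exactly as stated, and that the polynomial margins $(\xnum\ynum)^{-2}$ hold simultaneously. A secondary difficulty is verifying that the step-size condition truly confines $\y$ to $[\delta,1-\delta]$ on the two trap coordinates for all $T$. I would try to prove this by an invariance argument on the score differences $\HR_{j_1}-\HR_{j_2}$, which stay bounded because the perturbations alternate.
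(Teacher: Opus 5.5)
\textbf{Overall architecture.} Your architecture is sound and is essentially equivalent to the paper's. You split the reward via Theorem~\ref{thm:main-reduction} into a continuous benchmark $\Phi_\learn(0)-\Phi_\learn(T\B^\top\xne)\ge T\val$ (for even $T$) plus the discretization gap. You then bound each $\breg_{\regconj}$ term below by the curvature-weighted variance $\frac{(g_{j_1}-g_{j_2})^2}{\theta''(y_{j_1})+\theta''(y_{j_2})}$. This amounts to the paper's two-round bound $2\val+\inner{\vv',y_{2s}}-\inner{\vv',y_{2s+1}}$ followed by interpolation and Lemma~\ref{lem:variance-identity}, and the per-pair constants coincide.

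The confinement of $y_{j_1},y_{j_2}$ to $[\delta,1-\delta]$ also works without any invariance argument. Compare each one-step segment with the nearest max-min score point: $2s$ for the even step and $2s+2$ for the odd step. At those points Lemma~\ref{lem:bound-for-cont-ftrl} gives best-response mass at least $1/\ynum$. Both increments lie in $[-1,1]^{\ynum}$, so the stated step size suffices.

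\textbf{The gap: your trap perturbation.} The strategy $\xne\pm\epsilon(e_{i_1}-e_{i_2})$ lies in the simplex only if $\epsilon\le\min\{\xne_{i_1},\xne_{i_2}\}$. Your bound therefore needs a high-probability lower bound on possibly small entries of the random equilibrium strategy. Your third bullet asserts such a bound, but a union bound over anti-concentration of entry differences $\A_{pq}-\A_{pq'}$ says nothing about the entries of $\xne$, which are cofactor ratios of a random submatrix. Any such bound would also cost failure probability beyond the stated budget.

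\textbf{The bookkeeping does not close either.} First, the relevant pair $(j_1,j_2)$ depends on $\A$, so you must union bound over all $\xnum\binom{\ynum}{2}$ triples. A threshold of $(\xnum\ynum)^{-2}$ then gives failure probability about $1/(2\xnum)$, not $1/(\xnum\ynum)$; this is why the paper takes $\gamma=2/(\xnum^2\ynum^3)$. Second, even granting $\epsilon,\gamma\asymp(\xnum\ynum)^{-2}$, your per-round gain is $\learn\epsilon^2\gamma^2/M\asymp\learn(\xnum\ynum)^{-8}/M$, not $(\xnum\ynum)^{-6}$.

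\textbf{The missing idea.} Use the paper's construction $x'=(1-\xne_\ell)\xne+\xne_\ell e_\ell$ and $x''=(1+\xne_\ell)\xne-\xne_\ell e_\ell$, with $\ell$ maximizing $\xne$ over its support. Then:
\begin{itemize}
\item $\xne_\ell\ge 1/\xnum$ holds deterministically.
\item $x''_\ell=(\xne_\ell)^2\ge 0$, so no other entry of $\xne$ needs a lower bound.
\item $v'_i-v'_j=\xne_\ell(\A_{\ell i}-\A_{\ell j})$, so a single separating row suffices and your opposite-sign condition is unnecessary.
\end{itemize}
This gives $|v'_i-v'_j|\ge 2/(\xnum\ynum)^3$. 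Plugged into your own divergence bound, each round contributes $\learn(v'_i-v'_j)^2/(4M)$, which recovers the stated constant.

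\textbf{Two smaller corrections.} First, $\frac{\xnum!\ynum!}{(\xnum+\ynum-1)!}$ is the probability that a pure saddle \emph{exists}. Uniqueness of the equilibrium holds almost surely and is not needed. What you need is the absence of a pure saddle plus almost surely distinct entries, which forces the learner's equilibrium support to have size at least two. Second, for odd $T$, ending with $x'$ can cost order $\epsilon$, not $O(\learn)$. The paper instead plays $\xne$ in the last round.
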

\begin{remark}
    \it At a high level, the following presentation has the same general blueprint as in \citet[Proposition 11]{assos2024maximizing}, but several technical ingredients must be refined substantially in order to extend the analysis to arbitrary regularizers. The first issue is to obtain a quantitative lower bound ensuring that two distinct learner best responses to the max-min strategy are not ordered by weak dominance. Rather than taking such a condition as an assumption, we show that, with high probability, a uniform polynomial lower bound holds for random zero-sum games. This is a critical ingredient: although a related property is implicitly used in \cite{assos2024maximizing}, the relevant game-dependent magnitude is never made explicit there.
A second key ingredient is a uniform control of the curvature of the optimization landscape under an arbitrary regularizer. This is precisely where our proof departs from prior entropy-specific arguments. In this way, the negative-entropy case emerges as a special instance of a broader proof template that applies uniformly to all regularizers under consideration.
\end{remark}
\vspace{-0.5em}
\paragraph{Our Proof Sketch.}
The formal statement and proof are provided in Appendix~\ref{appendix:app_alternating}. Below, we provide a proof sketch.
The \emph{key idea} is to construct two perturbed strategies
$x',x''$ with $(x'+x'')/2=\xne$ that flip the strict
preference between two best responses $i\neq j$, locking
the learner in perpetual oscillation:
\vspace{-1em}
\paragraph{Step 1: Two high-probability events on $\A$.}
With high probability over the random payoff matrix $\A$, there exist a max-min optimizer strategy $\xne$ such that two distinct learner best responses $i\neq j\in\brep(\xne)$, and an index $\ell\in\supp(\xne)$ such that
$e_\ell^\top \A e_i = \A_{\ell i} \neq \A_{\ell j} = e_\ell^\top \A e_j.$
We record this support separation property as the event $\eventtwo$ (Assumption~\ref{assump:non-identify-on-support}) and show that it holds with high probability in Lemma~\ref{lem:assump:non-identify-on-support-holds-w.h.p.}.
We also work on an anti-degeneracy event $\eventgap$, which guarantees a uniform polynomial lower bound on $\abs{A_{pq}-A_{pq'}}$ whenever $A_{pq}\neq A_{pq'}$.
Conditioning on $\eventtwo\cap \eventgap$, we fix such $\xne$, $i\neq j$, and $\ell$ for the remainder of the proof.
\vspace{-1em}
\paragraph{Step 2: An alternating perturbation construction.}
Using the fixed $(\xne,i,j,\ell)$, we construct two perturbed strategies $x',x''$ satisfying $(x' + x'') / 2 = \xne$,
such that the strict preference between $i$ and $j$ flips $(A^\top x')_i > (A^\top x')_j,
(A^\top x'')_j > (A^\top x'')_i .$
The perturbation is defined in \eqref{eq:push-to-ek} and its properties are proved in Lemma~\ref{lem:push-to-ek-properties}.
We then define an alternating optimizer strategy $\xd(t)$ that plays $x'$ on even rounds and $x''$ on odd rounds. This keeps the even-round time-average at $\xne$ while repeatedly switching the unique best response between $i$ and $j$.
Let $\vv \defeq \A^\top \xne$ and $\vv' \defeq \A^\top x'$.
Using $(x'+x'')/2=\xne$ and the max-min property of $\xne$, we lower bound the optimizer's payoff over each pair of rounds $(2s,2s+1)$ by
\begin{align}
2\,\val \;+\; \Bigl(\inner{\vv',y_{2s}}-\inner{\vv',y_{2s+1}}\Bigr).
\label{eq:two-round-decomposition}
\end{align}
Thus, it suffices to prove a uniform lower bound on the one-step difference
$\inner{\vv',y_{2s}}-\inner{\vv',y_{2s+1}}$.
\vspace{-1em}
\paragraph{Step 3. Interpolation and a variance identity.}
To control the difference term, we interpolate between the two consecutive FTRL iterates $y(r)=\choicemap_{\reg}\!\bigl(-2\learn s\, \vv - r\,\learn\, \vv'\bigr), r\in[0,1]$, so that $\inner{\vv', y_{2s}} - \inner{\vv', y_{2s+1}} = -\int_0^1 \frac{d}{dr}\inner{\vv',y(r)} \dd r$.
Differentiating the KKT conditions along this path yields the variance identity
\begin{align*}
\frac{d}{dr}\inner{\vv',y(r)}
\;=\;
-\learn\,W_r\,\Var_{\pi_r}(Z_r),
\end{align*}
where $w_r(\ell)\propto 1/\theta''(y_\ell(r))$, $W_r\defeq\sum_{\ell} w_r(\ell)$, $\pi_r(\ell)\defeq w_r(\ell)/W_r$, and $Z_r\defeq v'_I$ for $I\sim \pi_r$ (curvature-weighted sampling).
A two-point lower bound on variance then isolates the distinguished coordinates $i\neq j$:\vspace{-1.5em}
\begin{align}\label{eqn:roadmap-theta''-lower bound}
W_r\,\Var_{\pi_r}(Z_r)
\;\ge\;
\frac{(v'_i-v'_j)^2}{\theta''(y_i(r))+\theta''(y_j(r))}.
\end{align}
We formalize this step in Lemma~\ref{lem:variance-identity}.
\vspace{-1em}
\paragraph{Step 4. Uniform control of curvature.}
We next show that along the interpolation, the learner keeps nontrivial mass on the two distinguished actions $i$ and $j$: both $y_i(r)$ and $y_j(r)$ stay uniformly bounded away from 0 by the choice of our step size $\learn$ (Lemma~\ref{lem:path-uniform-lb-yi-yj}). This prevents the curvature $\theta''(y_i(r))$ and $\theta''(y_j(r))$ from blowing up along the path and stay uniformly upper bounded by some constant $M$.
Therefore, the denominator in \eqref{eqn:roadmap-theta''-lower bound} is at most $2M$, which yields a constant lower bound    \vspace{-0.5em}
\begin{align}
    \inner{\vv', y_{2s}} - \inner{\vv', y_{2s+1}}
    = \int_0^1 \learn\, W_r\, \Var(Z_r)\, dr
    \ge \learn \frac{(v'_i - v'_j)^2}{2M} > 0.
    \label{eq:roadmap-uniform-lower-bound}
\end{align}
\paragraph{Step 5. Sum the uniform one-step lower bound.}
We lower bound the payoff gap by combining the perturbation property from Lemma \ref{lem:push-to-ek-properties}, $v'_i-v'_j \;=\; x_{\ell}\bigl(\A_{\ell i}-\A_{\ell j}\bigr),$
with the anti-degeneracy event $\eventgap$ from Step~1, which yields $v'_i-v'_j \ge C_{\A}>0$.
Together with the uniform curvature bound $M$ from Step 4, this gives the one-step lower bound $\inner{\vv',y_{2s}}-\inner{\vv',y_{2s+1}}
\ge 
\learn C_{\A}^{\reg}$
as stated in \eqref{eq:roadmap-uniform-lower-bound}, where $C_{\A}^{\reg} = (C_{\A})^2 / (2M)$.
Substituting into \eqref{eq:two-round-decomposition} and summing over $T/2$ pairs yields
\begin{align*}
\sum_{t=0}^{T-1}\xd(t)^\top \A y_t
\;\ge\;
T\,\val + \learn T\,C_{\A}^{\reg}.
\end{align*}
The stated success probability follows by a union bound over $\eventtwo\cap\eventgap$. 
\vspace{-6pt}
\subsection{The Price of $\varepsilon$-Best-Response}
\label{subsec:pbr}
\vspace{-5pt}

Results~1 and~2 quantify how much surplus a clairvoyant
optimizer can extract. We close by inverting this
perspective: \emph{how much surplus must the learner
inevitably surrender simply to identify her approximate $\varepsilon$-best response?}

Under classical minimax analysis, the choice of regularizer
is largely immaterial---all strongly convex choices yield
$\Theta(\sqrt{T})$ regret, with the entropic regularizer
preferred on the simplex for its dimension dependence.
This apparent equivalence breaks once we view learning as
\emph{information acquisition} under adversarial asymmetry.
The optimizer's surplus is not merely extracted; it is a
\emph{tuition fee} the learner pays to filter noise and
converge to $\yopt$. We formalize the minimum such fee from learner's perspective as
the \emph{Price of Best-Response}:
\[
\mathcal{C}(\varepsilon)
\;\defeq\;
\inf_{\learn,\,t}\;\sup_{\xd(\cdot),\,\A}
\Bigl\{\,\expl_{\reg}^{(\learn,t)}
\;\Big|\;
\|y(t)-\yopt\|_1 \le \varepsilon \,\Bigr\}.
\tag{PBR}
\]
In this regime, we provide the following results:
\begin{restatable}{theorem}{learningCost}
\label{thm:learning-cost}
For any sufficiently small $\varepsilon > 0$,
\(
\mathcal{C}(\varepsilon)
\;=\;
\Omega(
  \theta'(1/\optnum)
  -\theta'(\tfrac{\varepsilon}{2(\ynum-\optnum)})).
\)
\end{restatable}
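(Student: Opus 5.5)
Since $\mathcal{C}(\varepsilon)$ is an $\inf$ over $(\learn,t)$ of a $\sup$ over $(\xd,\A)$, it suffices to exhibit, for every admissible $\learn$ and $t$, one game and one optimizer strategy whose surplus up to any time $t$ with $\|y(t)-\yopt\|_1\le\varepsilon$ is at least a constant times $\theta'(1/\optnum)-\theta'(\varepsilon/(2(\ynum-\optnum)))$. The witness is a fixed max-min strategy $\xfix=\xne$ in a game with $\optnum$ best responses and $\ynum-\optnum$ suboptimal actions, all with a common gap $\gap$ (for instance, $\A$ built from a fully mixed matching-pennies block plus dominated columns). Then $\vag=0$, and by the \ref{eqn:surplus_decomposition} the surplus equals $\lag=\gap\int_0^t\sum_{j\notin\optset}y_j(s)\,ds$. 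We work in continuous time first and transfer to discrete time with Corollary~\ref{cor:discrete-exploit-finite}; the correction there is additive and bounded, so it does not affect the $\Omega(\cdot)$ claim.

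\textbf{Step 1: a minimum time before $\varepsilon$-accuracy.} By Lemma~\ref{lem:bound-for-cont-ftrl}, the optimal actions share a mass $a_s\le 1/\optnum$, and this also holds for $\yopt$ (uniform on $\optset$). Hence $\|y(s)-\yopt\|_1=2\sum_{j\notin\optset}y_j(s)$. The condition $\|y(t)-\yopt\|_1\le\varepsilon$ therefore forces $y_j(t)\le \varepsilon/(2(\ynum-\optnum))$ for at least one suboptimal $j$. By symmetry of the equal gaps, all suboptimal coordinates are equal, so in fact this holds for every $j$. The upper bound of Lemma~\ref{lem:bound-for-cont-ftrl} runs the wrong way for this, so we use the lower bound $y_j(s)\ge\phi(\theta'(1/\ynum)-\learn s\gap)$. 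Inverting $\phi$ on its monotone range gives
\[
\learn t\gap\;\ge\;\theta'(1/\ynum)-\theta'\!\bigl(\tfrac{\varepsilon}{2(\ynum-\optnum)}\bigr).
\]

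\textbf{Step 2: integrate the mass.} For $s\le t$ we have $y_j(s)\ge y_j(t)$, since the suboptimal mass is monotone nonincreasing under a fixed strategy. This monotonicity follows from $\lambda(s)-\learn s v_j$ decreasing relative to the optimal coordinates, via the KKT relation \eqref{eq:kkt_structure}. Combining,
\[
\lag\;\ge\;\gap\,(\ynum-\optnum)\int_0^t y_j(s)\,ds .
\]
It is better not to bound $y_j$ pointwise. Instead, change variables $u=\theta'(1/\ynum)-\learn s\gap$, which turns the integral into
\[
\frac{\ynum-\optnum}{\learn}\int_{u_\varepsilon}^{\theta'(1/\ynum)}\phi(u)\,du .
\]
This is exactly the potential drop $V(1/\ynum)-\theta^*(u_\varepsilon)$ appearing in Theorem~\ref{thm:exploit-finite}, so that theorem can be invoked directly with horizon $t$.

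\textbf{Step 3: the main obstacle.} The target is the raw difference of derivatives $\theta'(1/\optnum)-\theta'(\varepsilon/\cdot)$, with no $1/\learn$ factor and a potential integral in place of a derivative gap. Two issues must be handled.

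First, the infimum over $\learn$ must not send the bound to zero. A large $\learn$ shrinks the $1/\learn$ factor but also shrinks the admissible $t$. The key point is that $\learn$ and $t$ appear only through $\learn t$ and $1/\learn$. For $\learn$ large enough that $\learn\gap\ge$ the derivative gap, the learner already lies inside $\phi$'s flat region, and $\theta'(1/\optnum)$ is attained. My first attempt is to lower bound the surplus by $\gap\,t\,y_j(t)\cdot(\ynum-\optnum)$, which lies within a constant of $\tfrac{\varepsilon}{2}\,t\,\gap$. Then substitute $t\gap\ge(\theta'(1/\optnum)-\theta'(\varepsilon/\cdot))/\learn$, after replacing $1/\ynum$ by $1/\optnum$ using the adversary's freedom to choose $\A$ with $\ynum$ close to $\optnum$.

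Second, the remaining $\learn$-dependence should be absorbed by the restriction to the bounded-step regime $\learn\le1$ used throughout, or by having the adversary rescale $\gap$ (the payoffs) by a factor of $1/\learn$. Whether this payoff rescaling is permitted in the $\sup$ over $\A$ is the step I am least sure of, and it is where I would check the appendix normalization.
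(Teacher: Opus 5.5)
\textbf{The fixed max-min witness cannot give the stated bound, and neither repair in Step~3 fixes this.} Your own change of variables in Step~2 already shows that the gaps cancel. Apply the upper bound of Lemma~\ref{lem:bound-for-cont-ftrl} coordinatewise: for every suboptimal $j$ and every horizon $t$,
\[
\gap_j\int_0^t y_j(s)\,ds\;\le\;\gap_j\int_0^\infty\phi\bigl(\theta'(1/\optnum)-\learn s\,\gap_j\bigr)\,ds\;=\;\frac{1}{\learn}\bigl(V(1/\optnum)-V_{\mathrm{bdry}}\bigr).
\]
So the fixed-strategy surplus is at most $\frac{\ynum-\optnum}{\learn}\bigl(V(1/\optnum)-V_{\mathrm{bdry}}\bigr)$, whatever $\A$, the payoff scale and $t$ are. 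This has three consequences:
\begin{itemize}
\item It vanishes under $\inf_{\learn}$ as $\learn\to\infty$.
\item Rescaling the payoffs cannot help, because this bound does not involve the gaps at all.
\item Even if you impose $\learn\le1$, it remains a difference of \emph{potentials}, not of derivatives. For the entropic kernel $V(u)=u$, so at $\learn=1$ the passive surplus is at most $(\ynum-\optnum)/\optnum$ for every $\varepsilon$. The target $\theta'(1/\optnum)-\theta'(\tfrac{\varepsilon}{2(\ynum-\optnum)})\approx\log(1/\varepsilon)$ diverges.
\end{itemize}
Your ``first attempt'' $\tfrac{\varepsilon}{2}t\gap\ge\tfrac{\varepsilon}{2\learn}(\cdots)$ has the same $1/\learn$ defect, plus an extra factor $\varepsilon$ that vanishes. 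At best, with $\learn\le1$ imposed, a passive argument could match the $\Theta(1)$ target for non-steep kernels such as the Euclidean one. For steep kernels the inverse-rate mechanism is simply the wrong witness: its surplus scales like $1/\learn$ times a potential, whereas the claim needs something of order $\learn t$.

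\textbf{The missing idea is the active mechanism.} The paper proceeds as follows.
\begin{itemize}
\item Fix $\A$ in the high-probability event of Theorem~\ref{thm:avg-main}, and let the optimizer play the alternating trap $x',x''$ of \eqref{eq:push-to-ek}, ending with $\xne$ when $t$ is odd.
\item Since $(x'+x'')/2=\xne$, the score at time $t$ is exactly $-t\A^\top\xne$. Hence $y(t)=\choicemap_{\reg}(-\learn t\,\A^\top\xne)$ is the same iterate as under fixed $\xne$, and Corollary~\ref{cor:l1-two-sided} applies verbatim.
\item $\varepsilon$-accuracy then forces $\learn t\,\gapmax\ge\theta'(a_t)-\theta'(\tfrac{\varepsilon}{2(\ynum-\optnum)})$ together with $\optnum a_t\ge1-\varepsilon/2$. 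This is how $\theta'(1/\optnum)$ enters, so there is no need to take $\ynum$ close to $\optnum$.
\item Meanwhile, Theorem~\ref{thm:avg-main} gives surplus at least $\learn(t-1)C_{\A}^{\reg}$.
\end{itemize}
The surplus is of order $\learn t$, and the accuracy condition is a lower bound on $\learn t$, so the step size cancels:
\[
\mathcal{C}(\varepsilon)\;\ge\;\frac{C_{\A}^{\reg}}{2\gapmax}\bigl(\theta'(\tfrac{1-\varepsilon/2}{\optnum})-\theta'(\tfrac{\varepsilon}{2(\ynum-\optnum)})\bigr).
\]
Continuity of $\theta'$ at $1/\optnum$ finishes the argument. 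Theorem~\ref{thm:avg-main} itself requires $\learn$ below its step-size threshold, so this argument also lives in a small-step regime. The difference is that here the resulting bound does not depend on $\learn$ at all.
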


\begin{observation}[Euclidean vs.\ Entropic tuition]
\label{obs:tuition} \it
As $\varepsilon\to 0$: Euclidean regularization
($\theta'(u)=u$) yields
$\mathcal{C}_{\mathrm{Euc}}(\varepsilon)=\Theta(1)$---a
\emph{finite}, one-time fee, independent of precision.
Entropic regularization ($\theta'(u)\approx\log u$) yields
$\mathcal{C}_{\mathrm{Ent}}(\varepsilon)=\Theta(\log(1/\varepsilon))
\to+\infty$---an \emph{unbounded} rent that grows as the
learner seeks higher precision.
\end{observation}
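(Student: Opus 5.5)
The statement has two sides in each regime. I would get both lower bounds by specializing Theorem~\ref{thm:learning-cost}. I would get both upper bounds by plugging the time needed for $\varepsilon$-accuracy into the regret sandwich \eqref{eq:reward sandwich}, which is controlled through Theorem~\ref{thm:main-reduction} and Remark~\ref{remark:bound}. Throughout, $n,m,\optnum$, the gaps $\gapmin,\gapmax$ and $\opnorm{\A}$ are treated as constants. Only the dependence on $\varepsilon$ is tracked. Write $\varepsilon'\defeq \varepsilon/(2(\ynum-\optnum))$.

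\emph{Lower bounds.} For the Euclidean kernel $\theta'(u)=u$, Theorem~\ref{thm:learning-cost} gives $\mathcal{C}_{\mathrm{Euc}}(\varepsilon)=\Omega(1/\optnum-\varepsilon')$. This is $\Omega(1)$ once $\varepsilon'\le 1/(2\optnum)$. For the entropic kernel $\theta'(u)=\ln u+1$, the same theorem gives $\Omega\bigl(\ln(1/\optnum)-\ln\varepsilon'\bigr)=\Omega\bigl(\ln\tfrac{2(\ynum-\optnum)}{\optnum\varepsilon}\bigr)=\Omega(\log(1/\varepsilon))$. No further work is needed on this side.

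\emph{Upper bounds.} Here the learner picks $(\learn,t)$, and I must bound the surplus against every optimizer and every game. The plan has three steps.

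First, bound the time needed for accuracy. For suboptimal $j$, the upper bound in Lemma~\ref{lem:bound-for-cont-ftrl} gives $y_j(t)\le \phi(\theta'(1/\optnum)-\learn t\gapmin)$. This is at most $\varepsilon'$ as soon as
\[
\learn t\;\ge\; t_\varepsilon\;\defeq\;\frac{\theta'(1/\optnum)-\theta'(\varepsilon')}{\gapmin}.
\]
Summing over the $\ynum-\optnum$ suboptimal coordinates, and using the symmetric structure of the optimal block from the same lemma, gives $\|y(t)-\yopt\|_1\le\varepsilon$.

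Second, bound the surplus at that time. By \eqref{eq:reward sandwich}, the surplus is at most the learner's regret over horizon $t$. Via Theorem~\ref{thm:main-reduction}, this splits into two pieces:
\begin{itemize}
\item the continuous part, which is the potential drop of Lemma~\ref{lem:reward-potential-cont} and is $O(1/\learn)$ times a bounded range of $\reg$;
\item the discretization gap, which by Remark~\ref{remark:bound} is at most $\frac{\learn}{2\scparam}\sum_{s<t}\dnorm{\A^\top\xd(s)}^2=O(\learn t)$.
\end{itemize}
The learner should choose the smallest accurate horizon, so $\learn t=\Theta(t_\varepsilon)$. Taking $\learn=\Theta(1)$ is the normalization under which Theorem~\ref{thm:learning-cost} is stated. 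This yields $\mathcal{C}(\varepsilon)=O\bigl(1+\theta'(1/\optnum)-\theta'(\varepsilon')\bigr)$.

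Third, evaluate this for each kernel. In the Euclidean case, $\theta'$ is bounded on $[0,1]$, so the quantity is $O(1)$. This is also consistent with finite-time elimination at $t^*_\theta$ in Lemma~\ref{lem:bound-for-cont-ftrl}. In the entropic case it is $O(\log(1/\varepsilon))$.

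The main obstacle is the upper bound, because the $\sup$ in (PBR) ranges over adaptive optimizers. An active optimizer of the type in Theorem~\ref{thm:avg-main} can also delay convergence of $y(t)$. The $t_\varepsilon$ argument above only covers a fixed $\xfix$, so I need to make sure the learner's accuracy time is not pushed beyond $\Theta(t_\varepsilon/\learn)$. To handle this, I would restrict to the instances on which the accuracy constraint is actually attained: these are the optimizer paths whose cumulative score gives every suboptimal coordinate a dual gap of at least $\theta'(1/\optnum)-\theta'(\varepsilon')$. I would then bound the regret along any such path by the same $O(1/\learn)+O(\learn t)$ estimate. This shows that the $\log(1/\varepsilon)$ comes entirely from the dual distance $\theta'(1/\optnum)-\theta'(\varepsilon')$, which is what separates the steep and non-steep cases.
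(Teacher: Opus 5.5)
Your lower-bound half is exactly the paper's route. The paper supports this Observation only by substituting $\theta'(u)=u$ and $\theta'(u)=1+\ln u$ into Theorem~\ref{thm:learning-cost}, and it never argues the matching upper bounds. The upper-bound half is where your proposal has a genuine gap, because it mixes two incompatible readings of (PBR).

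The step ``$\learn t=\Theta(t_\varepsilon)$'' is forced only if accuracy is a \emph{requirement} on the learner's $(\learn,t)$ that must hold for every instance in the $\sup$. Your resolution of the adaptive-optimizer obstacle (keep only the paths on which accuracy is attained) instead treats accuracy as a \emph{filter} on the adversary.
\begin{itemize}
\item Under the filter reading, your $O(1/\learn)+O(\learn t)$ estimate already holds for every path, adaptive or not; it is just the regret bound from Proposition~\ref{prop:optimal-continuous-reward-range} and Remark~\ref{remark:bound}. So there is no obstacle, but nothing forces $\learn t\ge t_\varepsilon$ either. Take $t=1$: one round of surplus is at most $2$ when $\A\in[-1,1]^{\xnum\times\ynum}$, so $\mathcal{C}(\varepsilon)\le 2$ for \emph{every} regularizer. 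This contradicts the entropic $\Omega(\log(1/\varepsilon))$ you imported.
\item Under the requirement reading, the lower bound is meaningful. But an arbitrary optimizer, or a game with $\gapmin\to 0$, can make accuracy unattainable or arbitrarily late. Then no single $t$ is feasible, and discarding the bad paths is not permitted.
\end{itemize}
As written, the two halves of your $\Theta$ are about different quantities.

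The missing ingredient is a single admissible class used for both halves, namely the one implicit in the paper's proof of Theorem~\ref{thm:learning-cost}:
\begin{itemize}
\item a fixed $\A$, so $\optnum,\gapmin,\gapmax$ are constants;
\item optimizer paths with $\sum_{\tau<t}x(\tau)=t\,\xne$ (constant $\xne$, or the alternating trap with the odd-round patch);
\item step sizes $\learn\le\learn_0$, the ceiling in Theorem~\ref{thm:avg-main}.
\end{itemize}
The cap is essential, and it is not an ``$\learn=\Theta(1)$ normalization'' of the theorem, which takes $\inf$ over all $\learn$. Without it, the learner reaches accuracy at $t=2$ with $\learn\approx t_\varepsilon/2$ and pays surplus at most $4$, so $\mathcal{C}_{\mathrm{Ent}}(\varepsilon)=O(1)$. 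The cap is also precisely what the paper's lower-bound proof needs in order to invoke Theorem~\ref{thm:avg-main}.

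On this class, every admissible path yields $y(t)=\choicemap_{\reg}(-\learn t\,\A^\top\xne)$. Accuracy therefore becomes a deterministic condition on $\learn t$, governed by Corollary~\ref{cor:l1-two-sided}. Your Step~1 then shows that $(\learn_0,\lceil t_\varepsilon/\learn_0\rceil)$ is feasible. Your Step~2 bound, now uniform over the class, gives $O(1/\learn_0+t_\varepsilon+\learn_0)$. This is $O(1)$ for the Euclidean kernel, where $t_\varepsilon\le 1/(\optnum\gapmin)$, and $O(\log(1/\varepsilon))$ for the entropic one.
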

\noindent This reveals a fundamental ``paradox'': the very
property that makes entropic regularization
robust---infinite boundary gradients that prevent
over-committing to pure strategies---is precisely what
makes it \emph{infinitely saturatable} in the limit.
\emph{Strict safety and efficient best-response
identification are structurally incompatible.}
The full analysis is in Appendix~\ref{appendix:app_pbr}, and Appendix~\ref{appendix:azuma} records the corresponding realized-payoff concentration bounds for the bandit-feedback setting.

\paragraph{Epilogue.}
Taken together, our three results describe a single phenomenon from complementary angles: \emph{how a clairvoyant optimizer extracts utility against an FTRL learner, and how the regularizer governs that extraction}. The experiments in Appendix~\ref{appendix:experiment} indicate that the predicted rates are tight up to constants. Natural next steps include adaptive step sizes and movement-constrained learners, partial-feedback models such as bandit observations against optimal learning rates, and extensions beyond full-information zero-sum games to richer strategic settings. Another intriguing direction is to understand how much surplus can still be recovered when the optimizer knows only the learner's algorithmic profile, but not the payoff matrix \(\A\). A broader discussion is deferred to Appendix~\ref{appendix:future-work}.



\printbibliography
\newpage
\appendix
\newpage
{ \section{Broader Impact and Limitations}
\label{appendix:broader-impact}
\vspace{-0.75em}
\paragraph{Broader Impact.}
This work contributes to the mathematical foundations of
multi-agent learning, with implications for the design,
evaluation, and governance of learning algorithms in strategic
environments.

On the positive side, our results provide \emph{explicit,
quantitative} warnings to practitioners who deploy FTRL-based
algorithms---including Multiplicative Weights, Online Gradient
Descent, and Tsallis-regularized methods---in settings where
an informed counterpart may anticipate their dynamics. The
Inverse-Rate Law (Section \ref{subsec:inverse-rate}) and the Price
of \(\varepsilon\)-Best-Response (Section \ref{subsec:pbr}) together furnish a
concrete diagnostic checklist: how many suboptimal actions
does the learner maintain, what step size is used, and how
steep is the regularizer? These are, in principle, measurable
quantities, and our bounds translate them into explicit
surplus estimates. In particular, the geometric dichotomy
between steep and non-steep regularizers offers actionable
guidance: when high-precision convergence is required in the
presence of a strategic counterpart, non-steep regularizers
(e.g., Euclidean projection) incur a \emph{finite} structural
cost, whereas steep regularizers (e.g., entropic/MWU) incur
a diverging one.

Such awareness is especially consequential in domains where
algorithmic decisions interact repeatedly with strategic human
or institutional actors, including automated market-making and
algorithmic trading \citep{yang2012behavior}, repeated
procurement and contract design \citep{kolumbus2024contracting},
and multi-agent reinforcement learning systems in which one
agent has partial knowledge of another's policy class. In such
settings, understanding the structural cost of learning---not
just worst-case regret or asymptotic convergence---is essential
for responsible system design.

On the negative side, we acknowledge that our results also
provide a \emph{constructive} toolkit: the alternating trap
strategy of Theorem~\ref{thm:avg-main} and the fixed-strategy
surplus bounds of Theorem~\ref{thm:exploit-finite} could, in
principle, be used by a strategic agent to improve its own
payoff at the expense of a learning counterpart. This ethical
concern is especially relevant in high-stakes settings, where
even modest but repeated utility extraction may lead to unfair
or systematically asymmetric outcomes for less informed
participants. We believe, however, that transparency about
such vulnerabilities---rather than obscuring them---is the
more responsible path, because it enables scrutiny,
robustification, and, where appropriate, institutional or
regulatory safeguards. The present work is purely theoretical
and does not introduce a deployable system or recommend
adversarial use in practice.\hrule
\vspace{-0.5em}
\paragraph{Limitations.}
We highlight four boundaries of the current analysis that
represent natural directions for future work.

\begin{enumerate}[leftmargin=*, itemsep=2pt, topsep=0pt]

\item \textbf{Zero-sum and separable structure.}
Our main results (Theorems~\ref{thm:exploit-finite}
and~\ref{thm:avg-main}) are established for two-player
zero-sum matrix games with separable regularizers. While
this is the canonical adversarial benchmark, extension to
\emph{general-sum} games or \emph{non-separable} regularizers
(e.g., matrix-entropy or spectral regularizers) requires new
tools beyond those developed here.
\item \textbf{Constant step size.}
Our results are established for a \emph{constant} step size \(\eta\) throughout the horizon \(T\). Although adaptive or decaying schedules, such as \(\eta_t=\Theta(1/\sqrt{t})\), are theoretically attractive due to their noise-damping and stabilizing effects, constant or non-vanishing step sizes remain central in practice because of their simplicity, scalability, and strong finite-time behavior. In particular, vanishing schedules may incur long transients, whereas constant step sizes often reach near-stationary regimes more quickly. Extending our framework to adaptive and decaying schedules remains an important direction for future work.

\item \textbf{Full-information feedback.}
The main body of the paper operates under the
\emph{full-information} (full-feedback) model, where the
learner observes the entire payoff vector at each round.
We extend the analysis to \emph{bandit-feedback} in
Appendix~\ref{appendix:azuma}, but the bounds there apply
under multiplicatively suboptimal constants; a tight
characterization of surplus under optimal bandit algorithms
remains open.

\item \textbf{Clairvoyance assumption.}
The optimizer is assumed to know the learner's update rule
$\mathcal{L}$, step size $\eta$, and regularizer $\theta$
exactly. This ``white-box'' model is natural for theoretical
lower bounds but may be unrealistic in practice. Robustness
of the surplus extraction results to partial or approximate
knowledge of the learner's algorithm is not addressed here.

\end{enumerate}

\noindent None of these limitations undermines the core
contribution: showing that surplus saturation is a
\emph{structural property} of the FTRL family in zero-sum
games, encoded in the geometry of the regularizer rather
than in any external economic mechanism.}
\section{Related Work}
\label{appendix:app_literature}
In this section, we provide a detailed overview of the literature surrounding the strategic anticipation of learning algorithms, placing our contributions within the broader landscape of mechanism design and game theory.

\paragraph{Strategic Anticipation in Mechanism Design and Principal--Agent Problems.}
The paradigm of {exploiting} a learning agent's predictability originated in the field of algorithmic mechanism design. The seminal work of \citet{braverman2018selling} and subsequently \citet{cai2023selling} established that in repeated auctions, a strategic seller can leverage the ``mean-based'' property of standard no-regret algorithms (such as EXP3 and MWU) to  {extract revenue approaching the entire social surplus}. This line of work highlights that without specific format constraints (e.g., no overbidding), the learner's regret guarantees are insufficient to protect their utility. \citet{collina2023efficientpriorfreemechanismsnoregret} further refined this by addressing adversarial states and policy-regret goals, proposing ``stable'' policies to mitigate alignment assumptions.

Beyond auctions, this perspective has been generalized to Principal--Agent interactions. \citet{lin2024generalized} provide a unified framework for general Stackelberg settings, reducing the repeated interaction to a one-shot approximate best response problem and bounding the principal's utility based on the agent's regret (or swap-regret) guarantees. In the specific context of dynamic contracting, \citet{kolumbus2024contracting} demonstrate how a principal can induce a ``free-fall'' in the agent's action space by switching from a linear to a zero-reward contract, effectively securing rewards at zero cost. 
\paragraph{Strategizing in Repeated Games.}
Parallel to mechanism design, recent research has examined rational optimizers in general repeated games. \citet{deng2025strategizingnoregretlearners} analyze the limits of what an optimizer can force against a no-regret learner, showing that while ``mean-based'' learners can be {exploited} to yield strictly more than the Stackelberg value, learners with no-swap-regret can effectively cap the optimizer's payoff. In the Bayesian setting, \citet{mansour2022strategizing} extend this inquiry to games with private types, introducing the concept of ``polytope swap regret'' to characterize the minimum payoff an optimizer can guarantee against a learning opponent. 

\paragraph{\emph{Comparison with our work. \hspace{-1.5em} }} The works above share a common thread: in rich economic environments---auctions, contracts, Bayesian games---a clairvoyant principal can steer a learning agent to yield substantially more than any static benchmark, often achieving linear surplus $\Omega(T)$ above the Stackelberg value. Our work departs from this literature in two respects:
\begin{itemize}
\item First, we operate in the simplest possible strategic environment: a \emph{zero-sum matrix game}, where the Stackelberg value coincides with the minimax von Neumann value $V^\star$ and no asymmetric economic structure is present. This is precisely the setting where one might expect learning to be safest---and yet we show that structural surplus persists.
\item Second, and more importantly, we focus on the \emph{sublinear regime}. The reward sandwich 
{$T \cdot V^\star \le \mathrm{Rew}_T \le T \cdot V^\star + \gregret_{\mathcal{L}}^{\learn}(T)$}
caps the optimizer's gain at $o(T)$ when the learner employs standard no-regret algorithms with $\eta = \Theta(T^{-1/2})$. Rather than asking whether linear surplus is achievable---it is not, in this setting---we ask whether a clairvoyant optimizer can \emph{saturate} the sublinear ceiling $\Theta(\sqrt{T})$. This is a strictly harder and more delicate question, requiring a fine-grained understanding of how the regularizer's geometry governs the learner's convergence rate. Our answer is affirmative, and the saturation conditions are encoded entirely in the structure of the FTRL algorithm itself---not in any external economic mechanism.
\end{itemize}

In other words, both the works on general-sum games that obtain surplus \(\Omega(T)\) and the works on zero-sum games that obtain surplus \(\Omega(T^{1/2})\)  revolve around the same central question:\begin{center} \emph{How can one optimally anticipate a no-regret learner so as to maximize one’s own utility?}
\end{center}

\paragraph{The $o(T)$ Regime.} It is important to note that $\Omega(T)$ {exploitation is not inevitable under all conditions}. \citet{zhao2026noregretstrategicallyrobustlearning} identify conditions in single-item auctions where monotone bidding strategies combined with gradient feedback render learners ``strategically robust,'' preventing the auctioneer from exceeding Myerson's optimal revenue. The authors provided results where saturated surplus is tightly $\Theta(\sqrt{T})$. {The most closely related thread to our work involves optimizers that explicitly anticipate the learner's update rule. \citet{assos2024maximizing} adopt this ``white-box'' view, designing optimal strategies against Replicator Dynamics (the continuous-time analogue of MWU) and providing discrete-time guarantees for MWU. Our work aligns with this ``control-theoretic'' approach but generalizes the scope significantly. Instead of focusing solely on MWU/Replicator dynamics, we develop a unified theory for the entire FTRL family. Crucially, we introduce the geometric dichotomy between Steep and Non-Steep regularizers, showing that maximal {extractability} is not merely a feature of multiplicative updates but a structural consequence of boundary curvature.
}
\vspace{-1em}
\paragraph{Technical comparison with \citet{assos2024maximizing}}
Assos et al. establish a strong zero-sum result for MWU/Replicator Dynamics, which corresponds to the negative-entropy regularizer in our FTRL framework. The argument is fundamentally tied to the entropy structure of MWU and does not provide a theory for general FTRL regularizers. In contrast, our analysis works directly with the FTRL choice map
$y(t)=\choicemap_{\reg}(\learn \HR(t)),$
and therefore covers entropy/MWU, Tsallis, Euclidean, and other separable regularizers within a single framework. This also lets us expose how the regularizer geometry changes the optimizer's surplus. In particular, while the analysis of Assos et al. does not compare different no-regret learners, our \(\phi=(\theta')^{-1}\)-based proof shows that the surplus is governed by the curvature and boundary behavior of \(\reg\): steep regularizers such as entropy and Tsallis keep all actions in the interior and approach the boundary only asymptotically, whereas non-steep regularizers such as Euclidean may eliminate suboptimal actions in finite time.

A second difference concerns how the fixed-strategy surplus is organized. 
Assos et al.'s proof of Proposition 7 contains an implicit separation between the contribution of best-response coordinates and the exponentially vanishing contribution of non-best-response coordinates in the MWU formula. However, this separation is used inside the entropy-specific closed-form calculation, rather than being formulated as a general structural decomposition for arbitrary constant optimizer strategies and general FTRL choice maps. Our analysis makes this separation explicit through the decomposition from Equation \eqref{eqn:surplus_decomposition}
\[
    \expl(T)=\vag+\lag.
\]
This decomposition holds for every constant optimizer strategy \(\hat x\). When \(\hat x=x^\star\) is max-min, the value-gap term vanishes, so the surplus is exactly the learner-approximation gap. By contrast, if \(\hat x\) is not max-min, then the value-gap term is negative and linear in \(T\). Since the learner-approximation gap is only \(O(1/\eta)\), this negative linear term dominates whenever \(1/\eta=o(T)\). Thus, among fixed strategies, the max-min strategy is the only one that can sustain positive surplus asymptotically, and the remaining surplus comes precisely from the learner's transient mass on suboptimal actions.

Finally, for alternating optimizer strategies, Assos et al. prove an \(\Omega(\eta T)\) lower bound under a structural condition, with a game-dependent constant that may depend on the payoff matrix \(A\). Our contribution is not only to extend the proof architecture beyond MWU, but also to quantify when this alternating construction is meaningful. In their result, the constant hidden in the \(\Omega(\eta T)\) term is game-dependent, and the argument does not verify that this constant is bounded away from zero for a typical game, nor does it provide an explicit rate. In contrast, we show that for random zero-sum games the required alternating-trap condition holds with high probability, and we obtain an explicit polynomial lower bound on the corresponding game-dependent constant. Thus the lower bound is not merely qualitative: the \(\Omega(\eta T)\) term has a nondegenerate coefficient under high-probability random-game events. The randomization in our theorem is used only to certify these structural events.

\paragraph{Dynamics-Specific Analysis.}
The technical foundation for our discrete-continuous bridge is rooted in the work of \citet{kwon2014continuoustimeapproachonlineoptimization}, who formalize the disparity between continuous and discrete time to provide a unified treatment of FTRL regret through potential-based decompositions. Regarding the stability of the learning process itself, \citet{giannou2021survivalstricteststableunstable} provide a comprehensive analysis of FTRL dynamics in multi-player games, establishing an equivalence between the stochastic stability of equilibria and the strictness of best responses. While their work focuses on convergence and stability rather than adversarial surplus extraction, it provides the dynamical foundations upon which our  surplus-extraction analysis is built.


\section{Details Omitted from Section \ref{section:preliminaries} (Preliminaries)}
\label{appendix:app_preliminaries}
\subsection{Definitions and Notation}
\label{appendix:definitions}
\begin{definition}[Simplex] \it
    For a positive integer $d\in\N$, we define the probability simplex in $\R^d$ as
    \begin{align*}
            \simplex^d\;:=\;\setdef{z\in\R^d}{\sum_{i=1}^d z_i=1,\ z_i\ge 0,\ \forall i\in[d]}.
    \end{align*}
\end{definition}

\begin{definition}[Best-response set]\label{def:best-response} \it
For $\x\in\xsp$, define the learner's best-response set
\begin{align*}
\brep(\x)
\;\defeq\;
\argmin_{i\in[\ynum]}\ \inner{\x,\,\A e_i}
\;\subseteq\;
[\ynum].
\end{align*}
\end{definition}
This definition is equivalent to the one used in Section~\ref{section:exploitation},
\begin{align*}
\brep(\x)\;\defeq\;\{i\in[\ynum]\mid \gap_i=0\},
\end{align*}
since $\gap_i=0$ holds exactly for learner actions $i$ attaining the minimum payoff against $\x$.

\begin{definition}[Support] \it
    For a mixed strategy $\x\in\xsp$, define its support as the index set
    \[
    \supp(\x)\;:=\;\setdef{k\in[\xnum]}{\x_k>0}\ \subseteq [\xnum].
    \]
\end{definition}

\subsection{Omitted Convex Analysis Preliminaries}
\label{appendix:omitted definitions and properties}
First, we introduce the definitions of strong convexity and smoothness.
\begin{definition}\label{def:strongly-convex-and-smooth} \it
    Let $\fgen \from \ysp \to \R$ be a differentiable function and let $\norm{\cdot}$ be a norm on the space.
    For $\scparam > 0$ and $\smoothparam > 0$, we define:
    \begin{enumerate}
        \item $\fgen$ is $\scparam$-strongly convex with respect to $\norm{\cdot}$ if, for all $\y_1, \y_2 \in \ysp$,
        \[
            \fgen(\y_1) \geq \fgen(\y_2) + \inner{\grad \fgen(\y_2),\, \y_1 - \y_2}
            + \frac{\scparam}{2} \norm{\y_1 - \y_2}^2.
        \]

        \item $\fgen$ is $\smoothparam$-smooth with respect to $\norm{\cdot}$ if, for all $\y_1, \y_2 \in \ysp$,
        \[
            \fgen(\y_1) \leq \fgen(\y_2) + \inner{\grad \fgen(\y_2),\, \y_1 - \y_2}
            + \frac{\smoothparam}{2} \norm{\y_1 - \y_2}^2.
        \]
    \end{enumerate}
\end{definition}

Strong convexity is equivalent to smoothness of the conjugate.
\begin{proposition}[{\citet[Proposition 12.60]{rockafellar1998variational}}] \label{prop: alpha strongly 1/alpha smooth}
    Let $\fgen \from \ysp \to \R \cup \setof{+\infty}$ be proper and lower semi-continuous. Then for $\scparam > 0$, the following are equivalent:
    \begin{itemize}
        \item $\fgen$ is $\scparam$-strongly convex with respect to $\norm{\cdot}$.
        \item $\fgen^*$ is differentiable and $\grad \fgen^*$ is $1 / \scparam$-Lipschitz.
        \item $\fgen^*$ is $\smoothparam = 1 / \scparam$-strongly smooth with respect to the dual norm $\dnorm{\cdot}$ on $\dualysp$.
    \end{itemize}
\end{proposition}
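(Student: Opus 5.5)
The plan is to prove the cycle (i)$\Rightarrow$(ii)$\Rightarrow$(iii)$\Rightarrow$(i). Throughout, $\fgen$ is taken to be convex (implicit in the statement; otherwise $\fgen^{**}\neq\fgen$). The basic tool is the subgradient inversion rule for proper lsc convex functions: $\y\in\partial\fgen(\x)$ iff $\x\in\partial\fgen^*(\y)$ iff $\fgen(\x)+\fgen^*(\y)=\inner{\x,\y}$.

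\textbf{(i)$\Rightarrow$(ii).} Fix $\y$. Strong convexity makes $\x\mapsto\inner{\x,\y}-\fgen(\x)$ strongly concave, and proper and lsc. Its supremum is therefore attained at a unique point $\x(\y)$, so $\partial\fgen^*(\y)=\{\x(\y)\}$. A convex function with a singleton subdifferential at every point is differentiable, so $\fgen^*$ is differentiable with $\grad\fgen^*(\y)=\x(\y)$.

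For the Lipschitz bound, set $\x_k=\grad\fgen^*(\y_k)$, so that $\y_k\in\partial\fgen(\x_k)$. Write the strong convexity inequality of Definition~\ref{def:strongly-convex-and-smooth} at $(\x_1,\x_2)$ and at $(\x_2,\x_1)$, using subgradients in place of gradients, and add the two. This gives strong monotonicity:
\[
\inner{\y_1-\y_2,\x_1-\x_2}\ge\scparam\norm{\x_1-\x_2}^2 .
\]
The left side is at most $\dnorm{\y_1-\y_2}\,\norm{\x_1-\x_2}$ by the definition of the dual norm. Dividing through yields $\norm{\x_1-\x_2}\le\dnorm{\y_1-\y_2}/\scparam$.

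\textbf{(ii)$\Rightarrow$(iii).} This is the descent lemma. Write
\[
\fgen^*(\y_1)-\fgen^*(\y_2)-\inner{\grad\fgen^*(\y_2),\y_1-\y_2}=\int_0^1\inner{\grad\fgen^*(\y_2+s(\y_1-\y_2))-\grad\fgen^*(\y_2),\,\y_1-\y_2}\,ds .
\]
Bound the integrand by $(s/\scparam)\dnorm{\y_1-\y_2}^2$, using the Lipschitz property together with the pairing between $\norm{\cdot}$ and $\dnorm{\cdot}$. Integrating gives $\frac{1}{2\scparam}\dnorm{\y_1-\y_2}^2$.

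\textbf{(iii)$\Rightarrow$(i).} Use $\fgen=\fgen^{**}$, which holds by Fenchel--Moreau. Take $\x_1$ with some $\y_1\in\partial\fgen(\x_1)$; then $\x_1=\grad\fgen^*(\y_1)$. Apply the smoothness upper bound on $\fgen^*$ around $\y_1$:
\[
\fgen(\x)=\sup_{\dualy}\{\inner{\x,\dualy}-\fgen^*(\dualy)\}\ge\fgen^*$-free bound below:
\]
more precisely,
\[
\fgen(\x)\ge\sup_{\dualy}\Big\{\inner{\x,\dualy}-\fgen^*(\y_1)-\inner{\x_1,\dualy-\y_1}-\tfrac{1}{2\scparam}\dnorm{\dualy-\y_1}^2\Big\}.
\]
Substitute $\dualy=\y_1+w$ and use the Fenchel--Young equality $\fgen(\x_1)+\fgen^*(\y_1)=\inner{\x_1,\y_1}$. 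The right side becomes
\[
\fgen(\x_1)+\inner{\y_1,\x-\x_1}+\sup_w\Big\{\inner{\x-\x_1,w}-\tfrac{1}{2\scparam}\dnorm{w}^2\Big\}.
\]
The last supremum equals $\frac{\scparam}{2}\norm{\x-\x_1}^2$, since the conjugate of $\frac{1}{2\scparam}\dnorm{\cdot}^2$ is $\frac{\scparam}{2}\norm{\cdot}^2$. This is exactly the strong convexity inequality at every $\x_1$ with nonempty subdifferential.

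\textbf{Main obstacle: the domain.} The last step only gives strong convexity at points $\x_1\in\dom\partial\fgen$. Here $\fgen$ lives on the simplex and equals $+\infty$ outside it, and steep kernels have no subgradient on the boundary. I would first note that $\dom\partial\fgen\supseteq\mathrm{ri}(\ysp)$. I would then pass to the equivalent midpoint form
\[
\fgen(\lambda\x_1+(1-\lambda)\x_2)\le\lambda\fgen(\x_1)+(1-\lambda)\fgen(\x_2)-\tfrac{\scparam}{2}\lambda(1-\lambda)\norm{\x_1-\x_2}^2,
\]
which does not require gradients. This form follows from the subgradient version on $\mathrm{ri}(\ysp)$. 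I would extend it to the boundary by approximating along segments from interior points and using lower semicontinuity together with continuity of a convex function along segments in its domain.

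A related subtlety is that $\grad\fgen$ is only defined modulo the normal direction $\mathbf 1$ of the simplex. I would handle this by restricting all pairings to the tangent space $\{\sum_i z_i=0\}$, where the dual norm is understood as a quotient norm.
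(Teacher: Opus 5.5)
The paper does not prove this proposition; it simply cites \citet[Proposition~12.60]{rockafellar1998variational}. Your argument is correct and gives a self-contained alternative:
\begin{itemize}
\item for (i)$\Rightarrow$(ii), strong monotonicity of $\partial f$ plus H\"older's inequality;
\item for (ii)$\Rightarrow$(iii), the integral descent lemma;
\item for (iii)$\Rightarrow$(i), Fenchel--Moreau together with the conjugacy $\bigl(\tfrac{1}{2\alpha}\|\cdot\|_*^2\bigr)^*=\tfrac{\alpha}{2}\|\cdot\|^2$.
\end{itemize}
The comparison is instructive. Rockafellar--Wets work with the Euclidean norm; their strong convexity means $f-\tfrac{\alpha}{2}|\cdot|^2$ is convex. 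The paper, by contrast, states the result for an arbitrary norm paired with its dual, and later uses it that way (Claim~\ref{claim:smooth-of-composition}, Lemma~\ref{lem:path-uniform-lb-yi-yj}). None of your steps uses inner-product structure, so your route actually certifies the general-norm version the paper relies on. It also confronts the relative-boundary issue (steep kernels have no gradient there), which the gradient-based Definition~\ref{def:strongly-convex-and-smooth} glosses over. The citation buys only brevity, plus extra Euclidean characterizations coming from monotone-operator theory. Taking convexity as a standing hypothesis is legitimate, since the cited proposition assumes it.

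A few small repairs are needed.
\begin{itemize}
\item The map $x\mapsto\langle x,y\rangle-f(x)$ is upper, not lower, semicontinuous.
\item The first display in (iii)$\Rightarrow$(i) is garbled; simply replace it by your ``more precisely'' display.
\item Write $\operatorname{dom}\partial f\supseteq\mathrm{ri}(\operatorname{dom} f)$. This equals $\mathrm{ri}(\Delta^m)$ only when $f$ is finite on the whole simplex.
\item In (i)$\Rightarrow$(ii), the points $x_k=\nabla f^*(y_k)$ can lie on the relative boundary (e.g.\ for the Euclidean kernel). The subgradient form of strong convexity you invoke there should therefore be derived from the midpoint form, exactly as in your last step. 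For $g\in\partial f(x)$, combine $f(x+\lambda(x'-x))\ge f(x)+\lambda\langle g,x'-x\rangle$ with the midpoint inequality, divide by $\lambda$, and let $\lambda\downarrow 0$.
\item The quotient-norm device is unnecessary. Working with subgradients of the extended-valued $f$ on all of $\mathbb{R}^m$ already absorbs the $\mathbf{1}$-direction. Moreover, because $f^*(y+c\mathbf{1})=f^*(y)+c$, the smoothness bound with the full dual norm is equivalent to the one with the quotient norm.
\end{itemize}
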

Therefore, since our regularizers are proper and lower semicontinuous by definition, they satisfy the assumptions of Proposition~\ref{prop: alpha strongly 1/alpha smooth}.
\begin{proposition}[{\citet[Section 2.7]{shalev2012online}}]
    Under the assumption of Proposition~\ref{prop: alpha strongly 1/alpha smooth}, the choice map $\choicemap_{\fgen}$ is given by the gradient of the conjugate:
    \begin{align*}
        \choicemap_{\fgen}(\y) = \grad \fgen^*(\y),\qquad \forall \y\in\dualysp.
    \end{align*}
\end{proposition}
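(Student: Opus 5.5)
This is the classical identity $\choicemap_{\fgen}=\grad\fgen^*$, and we derive it from the definitions together with Proposition~\ref{prop: alpha strongly 1/alpha smooth}. Fix a dual vector $\y\in\dualysp$. Consider the concave objective $\psi(z)\defeq\inner{\y,z}-\fgen(z)$ over $z\in\ysp$, with $\fgen=+\infty$ outside $\ysp$.

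\textbf{Step 1: well-definedness.} The simplex is compact and $\fgen$ is lower semicontinuous, so $\psi$ is upper semicontinuous and attains its supremum. Strong convexity makes $\psi$ strictly concave on $\dom(\fgen)$, so the maximizer is unique. Hence $\choicemap_{\fgen}(\y)$ is a single point $z^\star$, and $\fgen^*(\y)=\psi(z^\star)$ is finite.

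\textbf{Step 2: identify $z^\star$ as a subgradient of $\fgen^*$.} For any $\y'$, the definition of the conjugate gives $\fgen^*(\y')\ge\inner{\y',z^\star}-\fgen(z^\star)$. Combined with $\fgen^*(\y)=\inner{\y,z^\star}-\fgen(z^\star)$, this yields
\[
\fgen^*(\y')\ \ge\ \fgen^*(\y)+\inner{z^\star,\y'-\y}.
\]
So $z^\star\in\partial\fgen^*(\y)$. This is the Fenchel--Young equality case and needs nothing beyond the supremum definition.

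\textbf{Step 3: differentiability pins down the subgradient.} By Proposition~\ref{prop: alpha strongly 1/alpha smooth}, $\fgen^*$ is differentiable everywhere. A convex function that is differentiable at a point has exactly one subgradient there, namely its gradient, so $\partial\fgen^*(\y)=\{\grad\fgen^*(\y)\}$. Therefore $\choicemap_{\fgen}(\y)=z^\star=\grad\fgen^*(\y)$.

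\textbf{Main point requiring care.} The delicate step is Step 1: ensuring the argmax exists and is unique even though $\fgen$ may take the value $+\infty$ or be non-differentiable at the boundary. This covers non-steep kernels, where the maximizer can sit on a face of the simplex. I handle it via compactness plus lower semicontinuity for existence, and strict concavity for uniqueness. This avoids any KKT or interior argument, so boundary maximizers require no separate treatment.
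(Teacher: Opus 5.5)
Your proof is correct and is the standard Fenchel--Young argument behind the cited result (the paper gives no proof of its own, only the citation to Shalev-Shwartz): every maximizer of $\inner{\y,z}-\fgen(z)$ lies in $\partial\fgen^*(\y)$, which is the singleton $\{\grad\fgen^*(\y)\}$ by Proposition~\ref{prop: alpha strongly 1/alpha smooth}. One minor remark is that the strict-concavity uniqueness in Step~1 is redundant: Steps~2--3 already force any maximizer to equal $\grad\fgen^*(\y)$, so only the existence part of Step~1 is needed.
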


We next introduce the Bregman divergence induced by a convex potential, which will serve as our basic geometric tool when we move from continuous to discrete time doing discretization.
\begin{definition}[Bregman Divergence] \label{def:bregman-divergence} \it
    Let $\fgen \from \ysp \to \R$ be a differentiable convex function. The Bregman divergence induced by $\fgen$ is defined as
    \begin{align*}
        \breg_\fgen(\y_1, \y_2) \defeq \fgen(\y_1) - \fgen(\y_2) - \inner*{\grad \fgen(\y_2),\, \y_1 - \y_2},
    \end{align*}
    for all $\y_1, \y_2 \in \ysp$.
\end{definition}
Much of our analysis will take place in the dual space $\dualysp$. In this context, it is natural to consider the Bregman divergence induced by the conjugate function $\fgen^*$, which measures deviation in $\dualysp$ relative to its linearization.
Building on the equivalence between strong convexity and smoothness, we now derive a useful upper bound on the Bregman divergence of the conjugate.
\begin{proposition} \label{prop: upper bound Bregman divergence}
    If $\fgen$ is $\scparam$-strongly convex with respect to $\norm{\cdot}$, then for all $z_1, z_2 \in \dualysp$,
    \begin{align*}
        \breg_{\fgen^*}(z_1, z_2) \le \frac{1}{2\scparam} \dnorm{z_1 - z_2}^2.
    \end{align*}
\end{proposition}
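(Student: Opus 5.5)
The plan is to reduce the claim to the smoothness of the conjugate, using Proposition~\ref{prop: alpha strongly 1/alpha smooth}. Since $\fgen$ is $\scparam$-strongly convex with respect to $\norm{\cdot}$ (and proper and lower semicontinuous, as all our regularizers are), that proposition gives that $\fgen^*$ is differentiable and $\beta$-smooth with $\beta=1/\scparam$ with respect to the dual norm $\dnorm{\cdot}$ on $\dualysp$. Writing out the smoothness inequality of Definition~\ref{def:strongly-convex-and-smooth} for $\fgen^*$ at the pair $(z_1,z_2)$ gives
\[
\fgen^*(z_1)\le \fgen^*(z_2)+\inner{\grad\fgen^*(z_2),\,z_1-z_2}+\frac{1}{2\scparam}\dnorm{z_1-z_2}^2 .
\]
Rearranging, the left-hand side minus the first two terms on the right is exactly $\breg_{\fgen^*}(z_1,z_2)$ by Definition~\ref{def:bregman-divergence}, which yields the stated bound.

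If one prefers not to quote the ``strongly smooth'' item directly, I would instead use only the $1/\scparam$-Lipschitz gradient item. Set $g(s)=\fgen^*(z_2+s(z_1-z_2))$, so that by the fundamental theorem of calculus
\[
\breg_{\fgen^*}(z_1,z_2)=\int_0^1\inner{\grad\fgen^*(z_2+s(z_1-z_2))-\grad\fgen^*(z_2),\,z_1-z_2}\,ds .
\]
Here $\grad\fgen^*$ takes values in the primal space, where $\norm{\cdot}$ is used, and $z_1-z_2$ lives in the dual space. So I apply the generalized Cauchy--Schwarz inequality $\inner{u,z}\le\norm{u}\,\dnorm{z}$, and then the Lipschitz bound $\norm{\grad\fgen^*(z)-\grad\fgen^*(z')}\le \frac1\scparam\dnorm{z-z'}$. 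The integrand is then at most $\frac{s}{\scparam}\dnorm{z_1-z_2}^2$, and integrating over $s\in[0,1]$ gives the factor $\tfrac12$.

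The only point needing care is that $\fgen^*$ is finite and differentiable on all of $\dualysp$, so that the Bregman divergence is well defined everywhere. This follows because the domain of $\fgen$ is the compact simplex, which makes the supremum defining $\fgen^*$ finite, and the proposition above then supplies differentiability. Keeping the primal/dual norm pairing consistent is the main thing to check; beyond that there is no real obstacle.
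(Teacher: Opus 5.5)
Your proposal is correct and matches the paper's proof: it uses Proposition~\ref{prop: alpha strongly 1/alpha smooth} to get that $\fgen^*$ is $(1/\scparam)$-smooth with respect to $\dnorm{\cdot}$, and then rearranges the smoothness inequality into the stated Bregman bound. Your backup argument, integrating the $1/\scparam$-Lipschitz gradient along the segment and using $\inner{u,z}\le\norm{u}\,\dnorm{z}$, is also valid but not needed.
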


\begin{proof}
    Since $\fgen$ is $\scparam$-strongly convex, Proposition~\ref{prop: alpha strongly 1/alpha smooth} implies that $\fgen^*$ is $\smoothparam = 1 / \scparam$-strongly smooth. By the definition of Bregman divergence and smoothness (Definition~\ref{def:strongly-convex-and-smooth}),
    \begin{align*}
        \breg_{\fgen^*}(z_1, z_2)
        = \fgen^*(z_1) - \fgen^*(z_2) - \inner{\grad \fgen^*(z_2),\, z_1 - z_2}
        \le \frac{1}{2\scparam} \dnorm{z_1 - z_2}^2.
    \end{align*}
\end{proof}

\subsection{KKT Derivation of the Kernel Choice Map}
\label{appendix:kkt-derivation}
Fix $z\in\R^m$ and recall \eqref{eq:choice-map}
\begin{align}
    \choicemap_{\reg}(z)\;\defeq\;\argmax_{y\in\ysp}\,\{\inner{z,y}-\reg(y)\}.
\end{align}
Assume $\reg$ is a separable regularizer on $\ysp$ with kernel $\theta$.

\paragraph{KKT conditions.}
Introduce $\kkt\in\R$ for the equality constraint $\sum_i y_i=1$ and $\nu\in\R_+^m$ for the inequalities $y_i\ge 0$.
The Lagrangian is
\begin{align}\label{eq:appendix:lagrangian}
    \mathcal{L}(y,\kkt,\nu)
    \;\defeq\;
    \langle z,y\rangle-\reg(y)
    +\kkt\Bigl(1-\sum_{i=1}^m y_i\Bigr)
    +\sum_{i=1}^m \nu_i y_i .
\end{align}
At the maximizer $y^*=\choicemap_{\reg}(z)$, the KKT system is
\begin{align}
    &y^*\in\ysp,\quad \nu\in\R_+^m, \label{eq:appendix:kkt-feas}\\
    &0\in\partial_{y_i}\mathcal{L}(y^*,\kkt,\nu)
    \;\;\Longleftrightarrow\;\;
    \theta'(y_i^*)=z_i+\kkt+\nu_i,\qquad i\in[m], \label{eq:appendix:kkt-stationary}\\
    &\nu_i\,y_i^*=0,\qquad i\in[m]. \label{eq:appendix:kkt-comp}
\end{align}
Since $\reg$ is strictly convex on $\ysp$, the maximizer $y^*=\choicemap_{\reg}(z)$ is unique.
By \eqref{eq:appendix:kkt-comp}, if $y_i^*>0$ then $\nu_i=0$ and $\theta'(y_i^*)=\kkt+z_i$.
If $y_i^*=0$, then $\nu_i\ge 0$ and \eqref{eq:appendix:kkt-stationary} implies $z_i+\kkt\le \theta'(0^+)$.
If $y_i^*=1$, then \eqref{eq:appendix:kkt-stationary} implies $z_i+\kkt\ge \theta'(1)$.
Recalling the definition of $\phi=\trunc[(\theta')^{-1}]$, \eqref{eq:appendix:kkt-stationary}--\eqref{eq:appendix:kkt-comp} yield Equation \eqref{eq:kkt_structure} we need:
\begin{align}\label{eq:appendix:Qh-coord}
    (\choicemap_{\reg}(z))_i \;=\; \phi(\kkt(z)+z_i),\qquad i\in[m],
\end{align}
for some scalar $\kkt(z)\in\R$, determined by the simplex constraint
\begin{align}\label{eq:appendix:kkt-eq}
    \sum_{i=1}^m \phi(\kkt(z)+z_i)\;=\;1.
\end{align}

\paragraph{Existence and uniqueness of $\kkt(z)$.}
Since $\theta$ is strongly convex, $\phi$ is strictly increasing and continuous, the function
\begin{align*}
    g(\kkt)\;\defeq\;\sum_{i=1}^m \phi(\kkt+z_i)
\end{align*}
is strictly increasing and continuous, with $\lim_{\kkt\to-\infty} g(\kkt)=0$ and $\lim_{\kkt\to+\infty} g(\kkt)=m$.
Therefore, by the intermediate value theorem, \eqref{eq:appendix:kkt-eq} admits at least one solution.
Uniqueness follows from strict monotonicity of $g$ at the solution.

\subsection{Examples: Euclidean, Negative Entropy, and Negative Tsallis Kernels}
\label{appendix:kernel-examples}
We collect three standard separable regularizers $\reg(y)=\sum_{i=1}^{\ynum}\theta(y_i)$ on $\ysp$ and their associated one-dimensional \eqref{eq:kkt_structure}.
For each example we list $\theta$, $\theta'$, the boundary slopes $\theta'(0^+)$, the resulting $\theta^*$ and $\phi$.

\subsubsection{Euclidean Regularizer.}
\label{appendix:euclidean-regularizer}
We take the quadratic kernel $\theta(y_i)\defeq \tfrac12 y_i^2$ for $y_i\in[0,1]$, and the associated regularizer $\reg(y)\defeq \tfrac12\norm{y}_2^2$.
Its derivative satisfies $\theta'(y_i)=y_i$ and $\theta'(0^+)=0$, so $\reg$ is non-steep at the boundary $y_i=0$.
The conjugate is obtained by maximizing a concave quadratic over $y_i\in[0,1]$:
\begin{align}\label{eq:appendix:euclid-conj}
    \theta^*(z_i)
    \;=\;
    \sup_{y_i\in[0,1]}\Big\{z_i y_i-\tfrac12 y_i^2\Big\}
    \;=\;
    \begin{cases}
        0, & z_i < 0,\\
        \tfrac12 z_i^2, & 0\leq z_i \leq 1,\\
        z_i-\tfrac12, & z_i > 1,
    \end{cases}
\end{align}
and its derivative recovers the kernel choice map
\begin{align}\label{eq:appendix:euclid-phi}
    \phi(z_i)
    \;=\;
    (\theta^*)'(z_i)
    \;=\;
    \Pi(z_i)
    \;\defeq\;
    \argmin_{y_i\in[0,1]} \norm{y_i-z_i}^2
    \;=\;
    \begin{cases}
        0, & z_i < 0,\\
        z_i, & 0\leq z_i \leq 1,\\
        1, & z_i > 1.
    \end{cases}
\end{align}

\subsubsection{Negative Entropy Regularizer.}
\label{appendix:entropy-regularizer}
We take the entropy kernel $\theta(y_i)\defeq y_i\log y_i$ for $y_i\in(0,1]$ with $\theta(0)\defeq 0$, and the associated regularizer $\reg(y)\defeq \sum_{i=1}^{\ynum} y_i\log y_i$.
Its derivative satisfies $\theta'(y_i)=1+\log y_i$ and $\theta'(0^+)=-\infty$, so $\reg$ is steep at the boundary $y_i=0$.
The conjugate is obtained by maximizing $z_i y_i-y_i\log y_i$ over $y_i\in[0,1]$:
\begin{align}\label{eq:appendix:entropy-conj}
    \theta^*(z_i)
    \;=\;
    \sup_{y_i\in[0,1]}\Big\{z_i y_i-y_i\log y_i\Big\}
    \;=\;
    \begin{cases}
        \exp(z_i-1), & z_i\leq 1,\\
        z_i, & z_i > 1,
    \end{cases}
\end{align}
and its derivative recovers the kernel choice map
\begin{align}\label{eq:appendix:entropy-phi}
    \phi(z_i)
    \;=\;
    (\theta^*)'(z_i)
    \;=\;
    \begin{cases}
        \exp(z_i-1), & z_i\leq 1,\\
        1, & z_i > 1.
    \end{cases}
\end{align}

\subsubsection{Negative Tsallis Regularizer ($q\in(0,1)$).}
\label{appendix:tsallis-regularizer}
We take the negative Tsallis kernel $\theta(y_i)\defeq \frac{y_i^q-y_i}{q-1}$ for $y_i\in[0,1]$, and the associated regularizer $\reg(y)\defeq \sum_{i=1}^{\ynum}\frac{y_i^q-y_i}{q-1}$.
Its derivative satisfies $\theta'(y_i)=\frac{q\,y_i^{q-1}-1}{q-1}$ and $\theta'(0^+)=-\infty$, so $\reg$ is steep at the boundary $y_i=0$.
The conjugate is obtained by maximizing a concave function over $y_i\in[0,1]$
\begin{align}\label{eq:appendix:tsallis-conj}
    \theta^*(z_i)
    \;=\;
    \sup_{y_i\in[0,1]}\Big\{z_i y_i-\theta(y_i)\Big\}
    \;=\;
    \begin{cases}
        \Bigl(\frac{1+(q-1)z_i}{q}\Bigr)^{\frac{q}{q-1}}, & z_i\leq 1,\\
        z_i, & z_i > 1,
    \end{cases}
\end{align}
and its derivative recovers the kernel choice map:
\begin{align}\label{eq:appendix:tsallis-phi}
    \phi(z_i)
    \;=\;
    (\theta^*)'(z_i)
    \;=\;
    \begin{cases}
        \Bigl(\frac{1+(q-1)z_i}{q}\Bigr)^{\frac{1}{q-1}}, & z_i\leq 1,\\
        1, & z_i > 1.
    \end{cases}
\end{align}

\section{Details Omitted from Section \ref{section: toolbox} (Toolbox)}

\subsection{Completed Proof of Lemma \ref{lem:reward-potential-cont} and Theorem \ref{thm:main-reduction}}
\label{appendix:reward-decomposition}
We combine these statements into a single theorem, restate it here, and prove it.
\begin{restatable}[Lemma~\ref{lem:reward-potential-cont} \& Theorem~\ref{thm:main-reduction}]{theorem}{rewardpotentialboth}
    \label{lem:reward-potential-both}
Fix a constant step size $\learn>0$ and let $\reg$ be a regularizer on $\ysp$. 
\begin{itemize}[leftmargin=*]
\item\textbf{Continuous time.}
Let the continuous historical reward $\HR(\time)$ be defined as in \eqref{eq:score_def}.
Then, for any continuous strategy $\xc(\time):[\tstart,\tend]\to\xsp$,
\begin{align}\label{eq:cont-reward-potential}
    \rcont{\xc(t)}{\HR(\tstart)}{\tend}
    = \frac{1}{\learn}\,\regconj\!\bigl(\learn\,\HR(\tstart)\bigr) - \frac{1}{\learn}\,\regconj\!\bigl(\learn\,\HR(\tend)\bigr).
\end{align}
\item\textbf{Discrete time.}
Let the discrete historical reward $\HR(\time)$ be defined as in \eqref{eq:score_def}.
Then, for any discrete strategy $\xd(\time)\from\setof{0,1,\dots,\tend-1}\to\xsp$,
\begin{align}\label{eq:disc-reward-potential}
    \rdisc{\xd(t)}{\HR(0)}{\tend}
    = \frac{1}{\learn}\,\regconj\!\bigl(\learn\,\HR(\tstart)\bigr) - \frac{1}{\learn}\,\regconj\!\bigl(\learn\,\HR(\tend)\bigr)
    + \frac{1}{\learn}\sum_{t=0}^{\tend-1}
    \breg_{\regconj}\parens*{\learn \HR(t+1),\, \learn \HR(t)}.
\end{align}
\end{itemize}
\end{restatable}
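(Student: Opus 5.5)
The plan is to use the single fact that $\nabla \regconj = \choicemap_{\reg}$, from the proposition of \citet[Section 2.7]{shalev2012online} recalled above. For this to be valid, $\reg$ must be strongly convex, so that $\regconj$ is differentiable by Proposition~\ref{prop: alpha strongly 1/alpha smooth}. Scaling gives $\nabla \Phi_\learn(Z) = \choicemap_{\reg}(\learn Z)$ for $\Phi_\learn(Z)=\regconj(\learn Z)/\learn$. Both identities then follow from expressing the optimizer's per-round payoff as a directional derivative of $\Phi_\learn$ along the score increment. The key observation is that $\B=-\A$, so
\[
\inner{\x, \A y} = -\inner{y, \B^\top \x} = -\inner{\nabla\Phi_\learn(\HR), \B^\top \x}.
\]
Thus the optimizer's reward equals minus the first-order change of the potential.

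\textbf{Continuous time.} Here $\HR(t)=\HR(\tstart)+\int_{\tstart}^t \B^\top\xc(\tau)\,d\tau$ is absolutely continuous, with $\dot\HR(t)=\B^\top\xc(t)$ almost everywhere. Since $\regconj$ is $C^1$ with a Lipschitz gradient, the chain rule applies to the composition $\Phi_\learn(\HR(t))$. It gives
\[
\frac{d}{dt}\Phi_\learn(\HR(t)) = \inner{\choicemap_{\reg}(\learn\HR(t)), \B^\top\xc(t)} = -\inner{\xc(t),\A\,\y(t)}
\]
for almost every $t$. Integrating over $[\tstart,\tend]$ via the fundamental theorem of calculus for absolutely continuous functions gives \eqref{eq:cont-reward-potential}.

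\textbf{Discrete time.} Write $Z_t=\learn\HR(t)$, so that $Z_{t+1}-Z_t=\learn\B^\top\xd(t)$ and $\y(t)=\nabla\regconj(Z_t)$. By the definition of the Bregman divergence,
\[
\breg_{\regconj}(Z_{t+1},Z_t) = \regconj(Z_{t+1})-\regconj(Z_t)-\learn\inner{\y(t),\B^\top\xd(t)}.
\]
The last term equals $+\learn\,\xd(t)^\top\A\y(t)$. Rearranging gives the per-round identity
\[
\xd(t)^\top \A \y(t) = \tfrac1\learn\bigl[\regconj(Z_t)-\regconj(Z_{t+1})\bigr] + \tfrac1\learn\breg_{\regconj}(Z_{t+1},Z_t).
\]
Summing over $t=0,\dots,\tend-1$, the potential differences telescope to $\frac1\learn[\regconj(\learn\HR(\tstart))-\regconj(\learn\HR(\tend))]$. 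This is exactly \eqref{eq:disc-reward-potential}. Comparing it with the continuous identity applied to the piecewise-constant extension $\xcp$, which produces the same endpoint $\HR(\tend)$, yields Theorem~\ref{thm:main-reduction}. Convexity of $\regconj$ gives the nonnegativity of the gap.

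\textbf{Expected obstacle.} The only delicate point is regularity. First, differentiability of $\regconj$ on all of $\R^{\ynum}$ must be justified, including at scores where the choice map hits the boundary of the simplex. This is covered by the strong-convexity assumption and Proposition~\ref{prop: alpha strongly 1/alpha smooth}. Second, the chain rule requires justification when $\xc$ is merely measurable. Since $\regconj$ is $C^1$ and $\HR$ is absolutely continuous, the composition is absolutely continuous and the chain rule holds almost everywhere, which suffices. Beyond this, the argument is bookkeeping: one must track the sign convention $\B=-\A$ and the indexing $\HR(0)=\HR(\tstart)$.
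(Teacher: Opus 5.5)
Your proposal is correct and follows essentially the same argument as the paper: both rest on $\nabla\Phi_\learn(Z)=\choicemap_{\reg}(\learn Z)$. This lets you write the continuous-time payoff as $-\frac{d}{dt}\Phi_\learn(\HR(t))$, and the discrete per-round payoff as the one-step potential drop plus a Bregman remainder that survives the telescoping sum. Working with $\breg_{\regconj}$ at $Z_t=\learn\HR(t)$ instead of $\breg_{\Phi_\learn}$, and adding the almost-everywhere chain-rule justification for absolutely continuous $\HR$, are only cosmetic refinements of the paper's proof.
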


\begin{proof}
Recall that the FTRL choice at dual state \(z\) is
$y(z)=\choicemap_{\reg}(z)=\nabla\regconj(z).$
To analyze these reward functions, define the learner's potential
\begin{align}
    \Phi_{\learn}(Z) \defeq \frac{1}{\learn}\regconj(\learn Z),
\qquad Z\in\R^{\ynum}.
\end{align}
Then, by the chain rule,
$\nabla \Phi_{\learn}(Z)
=
\nabla \regconj(\learn Z)
=
\choicemap_{\reg}(\learn Z).$
Thus, changes in the potential \(\Phi_{\learn}\) track the reward accumulated along the trajectory.
\paragraph{Continuous time.}
Suppose \(\xc(t)\from [0, \tend] \to \xsp\) is a strategy for the continuous case. The continuous reward is defined as the integral of the instantaneous payoff
\begin{align*}
    \rcont{\xc(t)}{\HR(0)}{\tend}
    = \int_{0}^{\tend} \inner{\xc(t), \mat \cdot \choicemap_{\reg}(\learn \HR(t))} \dd t.
\end{align*}
Since \(\dot{\HR}(t) = \matalt^\top \xc(t)\) and \(\mat = -\matalt\), we rewrite
\begin{align*}
    \inner{\xc(t), \mat \cdot \choicemap_{\reg}(\learn \HR(t))}
    = \inner{\mat^\top \xc(t), \choicemap_{\reg}(\learn \HR(t))}
    = \inner{-\dot{\HR}(t), \choicemap_{\reg}(\learn \HR(t))}.
\end{align*}
Using \(\choicemap_{\reg}(\learn \HR(t))=\nabla \Phi_{\learn}(\HR(t))\), we obtain
\begin{align*}
    \inner{-\dot{\HR}(t), \nabla \Phi_{\learn}(\HR(t))}
    = - \frac{d}{dt} \Phi_{\learn}(\HR(t)).
\end{align*}
Therefore, by the Fundamental Theorem of Calculus,
\begin{align}
    \rcont{\xc(t)}{\HR(0)}{\tend}
    = -\int_{0}^{\tend} \frac{d}{dt} \Phi_{\learn}(\HR(t)) \dd t 
    & = \Phi_{\learn}(\HR(0)) - \Phi_{\learn}(\HR(\tend))  \\
    & = \frac{1}{\learn} \regconj(\learn \HR(0)) - \frac{1}{\learn} \regconj(\learn \HR(\tend)).
\end{align}
\paragraph{Discrete time.}
Suppose \(\xd(t)\from \setof{0, 1, \dots, T-1} \to \xsp\) is a strategy for the discrete case. The reward at time step \(t\) is
\[
\inner{\xd(t), \mat \cdot \choicemap_{\reg}(\learn \HR(t))}.
\]
Since \(\HR(t+1) = \HR(t) + \matalt^\top \xd(t)\) and \(\mat = -\matalt\), we have
\[
\mat^\top \xd(t) = \HR(t) - \HR(t+1).
\]
Substituting this into the reward expression yields
\begin{align*}
    \rdisc{\xd(t)}{\HR(t)}{1}
    = \inner{\HR(t) - \HR(t+1), \choicemap_{\reg}(\learn \HR(t))}
    = \inner{\HR(t) - \HR(t+1), \nabla \Phi_{\learn}(\HR(t))}.
\end{align*}
Applying the standard identity for the Bregman divergence from Definition~\ref{def:bregman-divergence} gives
\begin{align*}
    \inner{\HR(t) - \HR(t+1), \nabla \Phi_{\learn}(\HR(t))}
    = \Phi_{\learn}(\HR(t)) - \Phi_{\learn}(\HR(t+1))
    + \breg_{\Phi_{\learn}}(\HR(t+1), \HR(t)).
\end{align*}
Substituting this back into the reward equation gives
\begin{align*}
    \rdisc{\xd(t)}{\HR(t)}{1}
    = \Phi_{\learn}(\HR(t)) - \Phi_{\learn}(\HR(t+1))
    + \breg_{\Phi_{\learn}}(\HR(t+1), \HR(t)).
\end{align*}
Summing from \(t=0\) to \(T-1\), the first two terms form a telescoping sum:
\begin{align*}
    \sum_{t=0}^{T-1} \rdisc{\xd(t)}{\HR(t)}{1}
    &= \Phi_{\learn}(\HR(0)) - \Phi_{\learn}(\HR(T))
    + \sum_{t=0}^{T-1} \breg_{\Phi_{\learn}}(\HR(t+1), \HR(t)) \\
    &= \frac{1}{\learn} \regconj(\learn \HR(0)) - \frac{1}{\learn} \regconj(\learn \HR(T))
    + \sum_{t=0}^{T-1} \frac{1}{\learn} \breg_{\regconj}(\learn \HR(t+1), \learn \HR(t)),
\end{align*}
because $\breg_{\Phi_{\learn}}(Z', Z) = \frac{1}{\learn} \breg_{\regconj}(\learn Z', \learn Z)$.
\end{proof}

\subsubsection{Completed Proof of Smoothness and Corollary \ref{cor:fw_complexity}}
\label{appendix:claim:smooth-of-composition}
\begin{restatable}
{claim}{smoothofcomposition} \label{claim:smooth-of-composition}
The function $G$ is $\beta$-smooth with respect to $\norm{\cdot}$, with
\[
\beta \;=\; \frac{(\learn\,\tend\,\opnorm{\B^\top})^{2}}{\scparam}, \qquad \opnorm{\B^\top}\defeq \sup_{\x\neq 0}\frac{\dnorm{\B^\top \x}}{\norm{\x}}.
\]
\end{restatable}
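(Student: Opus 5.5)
The proof is a chain-rule computation combined with the conjugate-smoothness fact of Proposition~\ref{prop: alpha strongly 1/alpha smooth}. Recall that $G(\x)=\Phi_\learn(\HR(\tstart)+\tend\matalt^\top\x)=\frac{1}{\learn}\regconj(\learn\HR(\tstart)+\learn\tend\matalt^\top\x)$. The plan is to write $G=\frac{1}{\learn}\,\regconj\circ L$, where $L(\x)=\learn\HR(\tstart)+\learn\tend\B^\top\x$ is affine with linear part $\learn\tend\B^\top$. Then we bound the Lipschitz constant of $\nabla G$ by multiplying the smoothness constant of $\regconj$ by the squared operator norm of this linear part.

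\textbf{Step 1.} Since $\reg$ is $\scparam$-strongly convex with respect to $\norm{\cdot}$, Proposition~\ref{prop: alpha strongly 1/alpha smooth} gives that $\regconj$ is differentiable and that $\nabla\regconj$ is $1/\scparam$-Lipschitz from $\dnorm{\cdot}$ to $\norm{\cdot}$. Equivalently, $\regconj(z_1)\le\regconj(z_2)+\inner{\nabla\regconj(z_2),z_1-z_2}+\frac{1}{2\scparam}\dnorm{z_1-z_2}^2$, which is also the content of Proposition~\ref{prop: upper bound Bregman divergence}.

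\textbf{Step 2.} Apply this quadratic upper bound at $z_i=L(\x_i)$. The affine structure gives $z_1-z_2=\learn\tend\B^\top(\x_1-\x_2)$ and $\nabla G(\x)=\tend\,\B\,\nabla\regconj(L(\x))$, so the linear terms match exactly:
\[
G(\x_1)\le G(\x_2)+\inner{\nabla G(\x_2),\x_1-\x_2}+\frac{1}{2\scparam\learn}\dnorm{\learn\tend\B^\top(\x_1-\x_2)}^2 .
\]
By the definition of $\opnorm{\B^\top}$, the last term is at most $\frac{\learn\tend^2\opnorm{\B^\top}^2}{2\scparam}\norm{\x_1-\x_2}^2$. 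This is a smoothness constant $\learn\tend^2\opnorm{\B^\top}^2/\scparam$, which is $\le\beta=(\learn\tend\opnorm{\B^\top})^2/\scparam$ whenever $\learn\ge 1$.

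\textbf{Main obstacle.} The difficulty is reconciling the factor $1/\learn$ in $\Phi_\learn$ with the claimed $\learn^2$. The composition above naturally yields $\learn\tend^2\opnorm{\B^\top}^2/\scparam$, which matches the exponent appearing in Corollary~\ref{cor:fw_complexity}. I would therefore prove this sharper constant and then remark on how it compares with the stated $\beta$. If the claim is meant for $G$ without the $1/\learn$ prefactor, i.e.\ for $\regconj\circ L$, the same computation gives exactly $(\learn\tend\opnorm{\B^\top})^2/\scparam$.

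A secondary care point is norms. $\x$ lives in the primal space with $\norm{\cdot}$, while $\B^\top\x$ lives in the dual space of $\ysp$, measured by $\dnorm{\cdot}$. These are exactly the norms appearing in the definition of $\opnorm{\B^\top}$, so the bound in Step~2 is consistent. Convexity of $G$ is immediate, since it is the composition of the convex function $\regconj$ with an affine map.
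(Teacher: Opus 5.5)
Your argument is correct and is essentially the paper's proof. The paper likewise writes $G$ as $\regconj$ composed with the affine map $f(\x)=\learn(\HR(\tstart)+\tend\B^\top\x)$ and picks up the factor $\learn\tend\opnorm{\B^\top}$ twice around the $1/\scparam$-smoothness of $\regconj$; the only difference is that it bounds the Lipschitz constant of $\nabla G$, whereas you use the equivalent quadratic upper bound, which matches the paper's own definition of smoothness more directly.

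On the $1/\learn$ question, the paper's proof takes $G=\regconj\circ f$ with no $1/\learn$ prefactor, which is your second reading and also the one used in the proof of Corollary~\ref{cor:fw_complexity}. That is the reading to adopt. Under the $\Phi_\learn$ reading, $\learn\tend^2\opnorm{\B^\top}^2/\scparam$ is not a sharpening: it exceeds $\beta$ when $\learn<1$, which is the paper's regime, and the stated $\beta$ then generally fails (already for the Euclidean regularizer).
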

\begin{proof}
Define the affine map
\[
f(\x)\defeq \learn\big(\HR(\tstart)+\tend \B^\top \x\big),
\]
so that $G(\x)=\regconj(f(\x))$ and $\nabla f(\x)=\learn\,\tend\,\B^\top$.
By the chain rule,
\[
\nabla G(\x) = (\learn\,\tend\,\B^\top)^\top \nabla \regconj\!\big(f(\x)\big).
\]
Since $\reg$ is $\scparam$-strongly convex with respect to $\norm{\cdot}$, its conjugate
$\regconj$ is $(1/\scparam)$-smooth with respect to the dual norm $\dnorm{\cdot}$.
Thus, for any $\x,\x'\in\xsp$,
\begin{align*}
\dnorm*{\nabla G(\x)-\nabla G(\x')}
&=
\dnorm*{(\learn\,\tend\,\B^\top)^\top\!\Big(\nabla \regconj(f(\x))-\nabla \regconj(f(\x'))\Big)} \\
&\le
\learn\,\tend\,\opnorm{\B^\top}\,
\dnorm*{\nabla \regconj(f(\x))-\nabla \regconj(f(\x'))} \\
&\le
\learn\,\tend\,\opnorm{\B^\top}\,\frac{1}{\scparam}\,\dnorm*{f(\x)-f(\x')} \\
&=
\learn\,\tend\,\opnorm{\B^\top}\,\frac{1}{\scparam}\,\dnorm*{\learn\,\tend\,\B^\top(\x-\x')} \\
&\le
\frac{(\learn\,\tend\,\opnorm{\B^\top})^{2}}{\scparam}\,\norm{\x-\x'}.
\end{align*}
\end{proof}
\fwcomplexity*
\begin{proof}
    We apply the Frank-Wolfe algorithm to minimizing the objective function $G(\x)$.
    According to \citet[Theorem~3.8]{bubeck2015convexoptimizationalgorithmscomplexity}, applying Frank--Wolfe to a $\beta$-smooth convex function over a domain with diameter $R$ yields iterates $\{\x_k\}_{k\ge 0}$ satisfying the convergence rate
    \[
        G(\x_k)-G(\xopt)
        \;\le\;
        \frac{2R^2\,\beta}{k+1}
        \;=\;
        \frac{2R^2\,(\learn\,\tend)^2\,\opnorm{\mat}^2}{\scparam\,(k+1)},
    \]
    where $\xopt \in \argmin_{\x\in\xsp} G(\x)$.
    
    Recall from Lemma~\ref{lem:reward-potential-cont} (and Remark~\ref{rem:path_independence}) that the optimizer's continuous reward is given by
    \[
        \rcont{\x}{\HR(\tstart)}{\tend} = \frac{1}{\learn}\regconj(\learn\HR(\tstart)) - \frac{1}{\learn}G(\x).
    \]
    Therefore, the suboptimality gap in terms of reward is directly proportional to the suboptimality gap of the function $G$, scaled by $1/\learn$
    \[
        \optrcont{\HR(\tstart)}{\tend} - \rcont{\x_k}{\HR(\tstart)}{\tend}
        \;=\;
        \frac{G(\x_k) - G(\xopt)}{\learn}.
    \]
    To achieve a target accuracy of $\epsilon$ in the reward, we require
    \[
        \frac{G(\x_k) - G(\xopt)}{\learn} \le \epsilon \implies G(\x_k) - G(\xopt) \le \epsilon \learn.
    \]
    Substituting the convergence rate of Frank-Wolfe
    \[
        \frac{2R^2\,(\learn\,\tend)^2\,\opnorm{\mat}^2}{\scparam\,(k+1)} \le \epsilon \learn.
    \]
    Solving for the number of iterations $k$
    \[
        k+1 \;\ge\; \frac{2R^2\,\learn^2\,\tend^2\,\opnorm{\mat}^2}{\scparam\,\epsilon\,\learn}
        \;=\; \frac{2R^2\,\learn\,\tend^2\,\opnorm{\mat}^2}{\scparam\,\epsilon}.
    \]
    Thus, the number of iterations required is $k = \bigoh\!\left(\frac{R^{2}\,\learn\,\tend^2\,\opnorm{\mat}^{2}}{\scparam\,\epsilon}\right)$.
    Since each Frank-Wolfe iteration involves a Linear Minimization Oracle over the simplex (or strategy space), which takes $\bigoh(nm)$ time, the total runtime is
    \[
        \text{Total Time} = k \times \bigoh(nm) = \bigoh\!\left(nm\,\frac{R^{2}\,\learn\,\tend^2\,\opnorm{\mat}^{2}}{\scparam\,\epsilon}\right).
    \]
\end{proof}

\subsection{Optimizer Reward Sandwich}
\label{appendix:prop:optimal-continuous-reward-range}
Below we present a self-contained proof of the sandwich property for the reward of the optimizer. This recovers \eqref{eq:reward sandwich} from the Introduction.
Proposition~\ref{prop:optimal-continuous-reward-range} is the continuous-time analogue of the standard FTRL “regret $\le$ regularizer diameter$/\eta$” bound.
\begin{restatable}{proposition}{optimalcontinuousrewardrange}
\label{prop:optimal-continuous-reward-range}
    Let $\reg$ be a regularizer on $\ysp$. Fix a constant step size $\learn>0$ and horizon $\tend \in \N$. The optimizer's optimal continuous-time reward is bounded as
    \begin{align}\label{eq:cont-time-one-range}
    \tend \cdot \val
    \;\le\;
    \optrcont{\zeroreward}{\tend}
    \;\le\;
    \tend \cdot \val
    + \frac{\parens*{\regmax-\regmin}}{\learn}.
    \end{align}
\end{restatable}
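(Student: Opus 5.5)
The plan is to reduce everything to the path-independence formula of Lemma~\ref{lem:reward-potential-cont} and Remark~\ref{rem:path_independence}. With $\HR(\tstart)=0$, equation \eqref{eq:cont_opt_problem} gives
\[
\optrcont{\zeroreward}{\tend}=\tfrac{1}{\learn}\regconj(0)-\min_{\x\in\xsp}\tfrac{1}{\learn}\regconj\bigl(\learn\tend\matalt^\top\x\bigr).
\]
The two sides of \eqref{eq:cont-time-one-range} will then follow from elementary two-sided bounds on $\regconj$ over the simplex. Since $\regconj(z)=\max_{\y\in\ysp}\{\inner{z,\y}-\reg(\y)\}$, we have
\[
\max_{\y}\inner{z,\y}-\regmax\;\le\;\regconj(z)\;\le\;\max_{\y}\inner{z,\y}-\regmin.
\]
In particular, $\regconj(0)=-\regmin$.

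\textbf{Upper bound.} I would lower bound the minimized term. For any $\x$, take $z=\learn\tend\matalt^\top\x=-\learn\tend\mat^\top\x$. Then
\[
\regconj(z)\ge \learn\tend\max_{\y}\bigl(-\x^\top\mat\y\bigr)-\regmax=-\learn\tend\min_{\y}\x^\top\mat\y-\regmax\ge-\learn\tend\,\val-\regmax,
\]
where the last step uses the definition of $\val$ as the max-min value. Dividing by $\learn$ and substituting into the formula gives
\[
\optrcont{\zeroreward}{\tend}\le -\tfrac{\regmin}{\learn}+\tend\,\val+\tfrac{\regmax}{\learn},
\]
which is the claimed upper bound.

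\textbf{Lower bound.} I would avoid the crude bound, which only yields $\tend\val-(\regmax-\regmin)/\learn$, and instead plug in the max-min strategy $\xne$ directly. Since it is a feasible constant strategy, the continuous reward it achieves equals $\int_0^\tend \xne^\top\mat\y(t)\,dt$. Because $\xne$ is max-min, $\xne^\top\mat\y\ge\val$ for every $\y\in\ysp$, so the integrand is at least $\val$ pointwise. This gives $\rcont{\xne}{0}{\tend}\ge\tend\val$, and the optimum is at least this.

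\textbf{Main obstacle.} This is really just bookkeeping: checking that $\regconj$ is finite and that the max over $\y$ is attained. Both hold because $\reg$ is continuous on the compact simplex, so $\regmax$ and $\regmin$ are finite. The other point to watch is the sign convention $\matalt=-\mat$ when converting $\max_{\y}\inner{z,\y}$ into the learner's best response. The rest is direct.
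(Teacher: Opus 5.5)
Your proof is correct. The lower bound is the paper's argument: play $\xne$ throughout and use $\xne^\top\mat\y\ge\val$ pointwise. Your upper bound, however, takes a different and somewhat more elementary route.

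\textbf{The paper's route.} It writes $\regconj$ as a $\sup_{\y}$ and exchanges $\min_{\x}\sup_{\y}\ge\sup_{\y}\min_{\x}$ (weak duality). It then bounds the resulting $\min_{\y}\bigl\{\cdots+\tend\max_{\x}\x^\top\mat\y+\reg(\y)/\learn\bigr\}$ by $\tend\,\val$ plus error terms. That step amounts to evaluating at a single comparator $\y$, and the comparator must be a min-max strategy of the learner. So the paper implicitly invokes von Neumann's minimax theorem $\min_{\y}\max_{\x}\x^\top\mat\y=\val$; weak duality alone points the wrong way.

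\textbf{Your route.} You lower-bound $\regconj(-\learn\tend\mat^\top\x)$ pointwise in $\x$, evaluating the supremum at a best response to that particular $\x$. This needs only $\min_{\y}\x^\top\mat\y\le\val$, which is the definition of $\val$ as a max-min value. No exchange of min and sup and no minimax theorem is required.

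\textbf{What each buys.} The paper's fixed-comparator form mirrors the standard FTRL regret proof. It is also written for a general initial score, which produces an extra $-\min_{\y}\inner{\HR(0),\y}$ term. Your argument extends to that case just as easily via $\max_{\y}\{a(\y)+b(\y)\}\ge\min_{\y}a(\y)+\max_{\y}b(\y)$.

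\textbf{A small correction to your aside.} The ``crude'' sandwich bound does not actually lose $(\regmax-\regmin)/\learn$. Since $\regconj(0)=-\regmin$ is exact, the upper sandwich gives $\min_{\x}\regconj(\learn\tend\matalt^\top\x)\le-\learn\tend\,\val-\regmin$, which already yields the lower bound $\tend\,\val$. Your direct argument via $\xne$ is of course equally valid.
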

\begin{proof}
    The optimal continuous-time reward is trivially lower bounded by $\tend \cdot \val$ since the optimizer can always play a max-min strategy throughout $[0, \tend]$.

For the upper bound, assume $\HR(0)=\zeroreward$. Then
\[
\regconj(0)
=\sup_{\y\in\ysp}\{-\reg(\y)\}
=-\inf_{\y\in\ysp}\reg(\y)
=-\regmin.
\]
Starting from the definition of $ \optrcont{\HR(\tstart)}{\tend}$:
\[        \optrcont{\HR(\tstart)}{\tend}
        \;=\; \max_{\x \in \xsp} \rcont{\x}{\HR(\tstart)}{\tend} =
        \Phi_\learn\bigl(\HR(\tstart)\bigr)
        - \min_{\x \in \xsp} \Phi_\learn\bigl(\HR(\tstart) + \tend \matalt^\top \x\bigr).
\]

\begin{align*}
\optrcont{\HR(0)}{\tend}
&\le
-\frac{\regmin}{\learn}
-\min_{\x\in\xsp}\frac{1}{\learn}\,
\regconj\!\parens*{\learn\parens*{\HR(0)+\tend \B^\top \x}}.
\end{align*}
By definition of $\regconj$ and $\A = -\B$,
\begin{align*}
-\min_{\x\in\xsp}\frac{1}{\learn}\regconj\!\parens*{\learn\parens*{\HR(0)+\tend \B^\top \x}}
&=
-\frac{1}{\learn}\min_{\x\in\xsp}\sup_{\y\in\ysp}
\braces*{\learn\,\inner{\HR(0)+\tend \B^\top \x,\y}-\reg(\y)}\\
&\le
-\frac{1}{\learn}\sup_{\y\in\ysp}\min_{\x\in\xsp}
\braces*{\learn\,\inner{\HR(0)+\tend \B^\top \x,\y}-\reg(\y)}\\
&=
\min_{\y\in\ysp}\braces*{-\inner{\HR(0),\y}
+\tend\max_{\x\in\xsp}\inner{\A^\top \x,\y}
+\frac{1}{\learn}\reg(\y)}\\
&\le
-\min_{\y\in\ysp}\inner{\HR(0),\y}
+\tend\cdot\val
+\frac{\regmax}{\learn}.
\end{align*}
Plugging back yields
\[
\optrcont{\HR(0)}{\tend}
\le
-\frac{\regmin}{\learn}
-\min_{\y\in\ysp}\inner{\HR(0),\y}
+\tend\cdot\val
+\frac{\regmax}{\learn},
\]
and $\HR(0)=\zeroreward$ concludes the proof.
\end{proof}

  \subsection{Details Omitted from Section \ref{section:exploitation} }
  \label{appendix:app_exploitation}
\subsubsection{Learner’s Viewpoint}
\label{appendix:learner-viewpoint}
We begin with a basic fact about FTRL when the optimizer plays a fixed constant strategy. For simplicity, assume the initial reward is $0$.
\begin{restatable}{observation}{obsftrfixedx} \it
If the optimizer plays a constant strategy $\xfix$, the learner's FTRL responses can be written, for all $t > 0$, as
\begin{align} \label{eqn:y-dynamics-when-x-fixed}
    \y(t) \;:=\; \choicemap_{\reg}\!\bigl(\learn\, t\, \B^\top \xfix\bigr) = \argmax_{\y\in\ysp} \Big\{\langle \learn\, t\, \B^\top \xfix, \y\rangle - \reg(\y)\Big\} = \argmin_{\y\in\ysp} \Big\{\langle \xfix, \A \y\rangle + \frac{1}{\learn\, t}\reg(\y)\Big\}.
\end{align}
From the last expression, we see that the regularization term vanishes as $t\to\infty$. Therefore, we expect to see the learner's FTRL responses converge to a best response to $\xfix$.
\end{restatable}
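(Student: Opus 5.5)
The plan is a direct computation: identify the score under a constant strategy, substitute it into the choice map, and then rescale the objective. No earlier lemma is needed beyond the definitions in \eqref{eq:score_def} and \eqref{eq:choice-map}.

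\textbf{Step 1 (the score is linear in time).} With $\x(\tau)\equiv\xfix$ and $\HR(0)=0$, the score is:
\begin{itemize}
\item in continuous time, $\HR(t)=\int_0^t \B^\top\xfix\,\dd\tau = t\,\B^\top\xfix$;
\item in discrete time, $\HR(t)=\sum_{\tau=0}^{t-1}\B^\top\xfix = t\,\B^\top\xfix$.
\end{itemize}
Plugging this into \eqref{eq:choice-map} gives the first equality $\y(t)=\choicemap_{\reg}(\learn t\,\B^\top\xfix)$, together with the $\argmax$ form as the definition of $\choicemap_{\reg}$.

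\textbf{Step 2 (rescaling to an $\argmin$).} Fix $t>0$. Using $\B=-\A$, the objective becomes
\[
\inner{\learn t\,\B^\top\xfix,\y}-\reg(\y) \;=\; -\learn t\Bigl(\inner{\xfix,\A\y}+\tfrac{1}{\learn t}\reg(\y)\Bigr).
\]
Since $\learn t>0$, multiplying by the negative constant $-1/(\learn t)$ turns maximizers into minimizers without changing the optimal set. This yields the last expression. Uniqueness of the optimizer, so that writing ``$=$'' rather than ``$\in$'' is legitimate, follows from strict convexity of $\reg$ on $\ysp$, as in Appendix~\ref{appendix:kkt-derivation}.

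\textbf{Step 3 (the heuristic conclusion).} To justify that the regularization term vanishes, note that $\reg$ is continuous on the compact set $\ysp$, so $|\reg|\le\max(|\regmax|,|\regmin|)<\infty$. Hence the perturbation $\reg/(\learn t)$ converges to $0$ uniformly on $\ysp$ as $t\to\infty$.

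A standard argmin-stability argument then shows that every limit point of $\y(t)$ minimizes $\y\mapsto\inner{\xfix,\A\y}$ over $\ysp$, i.e.\ lies in $\conv(\brep(\xfix))$. The argument goes as follows:
\begin{itemize}
\item take a convergent subsequence by compactness;
\item compare its objective value with that at an arbitrary competitor $\y'\in\ysp$;
\item let $t\to\infty$, so the $O(1/(\learn t))$ terms drop out.
\end{itemize}
Since the statement only claims the expectation of convergence, this sketch suffices here. The precise limit selection, namely the $\reg$-minimal point of $\conv(\brep(\xfix))$, is deferred to Lemma~\ref{lemmain:ftrl-fixed-x}.

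The only mildly delicate point is Step 3. That is where one must use uniform boundedness of $\reg$ on the simplex, which holds for entropy and Tsallis even though they are steep, rather than mere pointwise finiteness. Steps 1–2 are pure bookkeeping.
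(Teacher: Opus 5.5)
Your proposal is correct and follows the paper's route. The observation itself is just the substitution $\HR(t)=t\,\B^\top\xfix$ into the choice map followed by rescaling the objective by $-1/(\learn t)$, and your Step~3 sketch (uniform vanishing of $\reg/(\learn t)$ on the compact simplex, then a subsequence and argmin-stability argument) is the same argument the paper gives, phrased via epi-convergence, in the proof of Lemma~\ref{lem:ftrl-fixed-x}.
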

This observation will be a major tool of our analysis. We state this formally below.

\begin{restatable}[Lemma \ref{lemmain:ftrl-fixed-x}]
{lemma}{ftrlfixedx}
    \label{lem:ftrl-fixed-x}
Fix a constant strategy $\xfix\in\xsp$. Then the learner's FTRL response $\y(t)$ converges to $\ybrep$ as $t\to\infty$, where the limit is the unique point
\begin{align*}
    \ybrep
\;=\;
\argmin_{y \in \conv\{e_i:i\in\brep(\xfix)\}}
\reg(y).
\end{align*}
\end{restatable}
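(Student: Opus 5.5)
We prove convergence by a $\Gamma$-convergence (vanishing-regularization) argument built on the reformulation in \eqref{eqn:y-dynamics-when-x-fixed}. Set $\epsilon_t \defeq 1/(\learn t)$ and $F_t(\y)\defeq \inner{\vv,\y}+\epsilon_t\reg(\y)$, where $\vv=\A^\top\xfix$. Then $\y(t)=\argmin_{\y\in\ysp}F_t(\y)$, and this minimizer is unique by strict convexity of $\reg$. The candidate limit is
\[
\ybrep=\argmin\bigl\{\reg(\y): \y\in\conv\{e_i:i\in\brep(\xfix)\}\bigr\}.
\]
It is well defined because this face $F^\star$ of the simplex is compact and convex, and $\reg$ is continuous and strictly convex on it. Note that $F^\star$ is exactly $\argmin_{\y\in\ysp}\inner{\vv,\y}$, the set of points attaining $\valfix$.

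\textbf{Step 1 (first-order selection).} Since $\ysp$ is compact, it suffices to show that every cluster point $\bar\y$ of $\y(t)$ as $t\to\infty$ equals $\ybrep$. Comparing $F_t$ at $\y(t)$ and at $\ybrep$ gives
\[
\inner{\vv,\y(t)}+\epsilon_t\reg(\y(t))\le \valfix+\epsilon_t\reg(\ybrep).
\]
As $\reg$ is bounded on $\ysp$, we get $\inner{\vv,\y(t)}-\valfix\le \epsilon_t(\regmax-\regmin)\to0$. By continuity, $\inner{\vv,\bar\y}=\valfix$, so $\bar\y\in F^\star$.

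\textbf{Step 2 (second-order selection).} Because $\ybrep\in F^\star$, we have $\inner{\vv,\y(t)}\ge\valfix=\inner{\vv,\ybrep}$. Subtracting this from the inequality in Step 1 and dividing by $\epsilon_t>0$ yields $\reg(\y(t))\le\reg(\ybrep)$ for every $t$. Passing to the limit along the subsequence, using continuity of $\reg$ on $\ysp$ (assumed in the preliminaries), gives $\reg(\bar\y)\le\reg(\ybrep)$ with $\bar\y\in F^\star$. Uniqueness of the minimizer of $\reg$ on $F^\star$ then forces $\bar\y=\ybrep$. Since every cluster point coincides with $\ybrep$, the whole trajectory converges.

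For the statement in Lemma~\ref{lemmain:ftrl-fixed-x} about max-min $\xne$: the payoff $\inner{\xne,\A\y(t)}=\inner{\vv,\y(t)}$ converges to $\valfix=\val$. The example of a face where the $\reg$-minimizer is not a min-max strategy shows that $\ybrep$ need not be min-max.

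The only delicate point is continuity of $\reg$ up to the boundary of the simplex, needed in Step 2 for steep kernels such as entropy. The paper assumes $\reg$ is continuous on $\ysp$, so the limit passage is legitimate; for a merely lower semicontinuous $\reg$, lower semicontinuity alone would also give $\reg(\bar\y)\le\liminf_t\reg(\y(t))\le\reg(\ybrep)$. Everything else is a routine compactness argument.
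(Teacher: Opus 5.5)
Your proof is correct and follows essentially the same route as the paper: a compactness and cluster-point argument that first places every cluster point in the best-response face, then uses the comparison $\reg(\y(t))\le\reg(\ybrep)$ together with continuity and uniqueness of the $\reg$-minimizer on that face. The only difference is in the first step. You get the first-order selection from the direct bound $\inner{\vv,\y(t)}-\valfix\le(\regmax-\regmin)/(\learn t)$, whereas the paper cites epi-convergence (Rockafellar--Wets, Theorem~7.33); your version makes that step self-contained and quantitative.
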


\begin{proof}
We start from Equation \eqref{eqn:y-dynamics-when-x-fixed}. For $t>0$,
\[
\y(t)=\argmin_{\y\in\ysp}\Big\{\inner{\xfix,\A\y}+\frac{1}{\learn t}\reg(\y)\Big\}.
\]
Let $\lambda_t = \frac{1}{\learn t}$ and define
\[
\psi_t(\y):=\inner{\xfix,\A\y}+\lambda_t \reg(\y),
\qquad
\psi(\y):=\inner{\xfix,\A\y},
\qquad \y\in\ysp.
\]
Let
\[
F=\argmin_{\y\in\ysp}\psi(\y)=\argmin_{\y\in\ysp}\inner{\xfix,\A\y}= \argmin_{y \in \conv \{\,e_i : i \in \brep(\xfix)\,\}} \inner{\xfix,\A\y}
\]
since $\psi$ is linear over the simplex.
Define $\ybrep:=\argmin_{\y\in F}\reg(\y)$, which is unique because $\reg$ is strictly convex and $F$ is convex.

Fix any sequence $t_k\to\infty$. By compactness of $\ysp$, the sequence $\{\y(t_k)\}$ has a convergent subsequence. Assume $\y(t_k)\to \y_\infty$. Because $\ysp$ is compact and $\reg$ is continuous, we have
\begin{align*}
    \sup_{\y\in\ysp}\,|\psi_{t_k}(\y)-\psi(\y)|
=\sup_{\y\in\ysp}\,\lambda_{t_k}|\reg(\y)|
\rightarrow 0,
\end{align*}
so $\psi_{t_k}\to \psi$ uniformly on $\ysp$, hence $\psi_{t_k}$ is epi-convergence to $\psi$.
Since $\y(t_k)\in\argmin_{\y\in\ysp}\psi_{t_k}(\y)$ for each $k$,
\citet[Theorem~7.33]{rockafellar1998variational} implies that every cluster point of $\{\y(t_k)\}$ belongs to
$\argmin_{\y\in\ysp}\psi(\y)=F$. In particular, $\y_\infty\in F$.

Since $\y(t_k)$ minimizes $\psi_{t_k}$, taking $\y=\ybrep$ gives
\[
\psi_{t_k}\!\bigl(\y(t_k)\bigr)\le \psi_{t_k}(\ybrep)
\quad\Rightarrow\quad
\psi\!\bigl(\y(t_k)\bigr)+\lambda_{t_k}\reg\!\bigl(\y(t_k)\bigr)
\le \psi(\ybrep)+\lambda_{t_k}\reg(\ybrep).
\]
Because $\ybrep\in F$, we have $\psi(\ybrep)\le \psi(\y(t_k))$, hence
\[
0\le \psi\!\bigl(\y(t_k)\bigr)-\psi(\ybrep)
\le \lambda_{t_k}\bigl(\reg(\ybrep)-\reg\!\bigl(\y(t_k)\bigr)\bigr),
\]
which implies $\reg(\y(t_k))\le \reg(\ybrep)$ for all $k$.
By continuity of $\reg$,
\[
\reg(\y_\infty)=\lim_{k\to\infty}\reg\!\bigl(\y(t_k)\bigr)\le \reg(\ybrep).
\]
But $\y_\infty\in F$ and $\ybrep$ minimizes $\reg$ over $F$, so $\reg(\ybrep)\le \reg(\y_\infty)$, and thus
$\reg(\y_\infty)=\reg(\ybrep)$. By uniqueness of the minimizer of $\reg$ on $F$, we conclude $\y_\infty=\ybrep$.

We have shown that every sequence $t_k\to\infty$ admits a subsequence along which $\y(t_k)\to \ybrep$; hence $\y(t)\to \ybrep$ as $t\to\infty$.
\end{proof}
The following remark completes the learner-viewpoint analysis for Lemma~\ref{lemmain:ftrl-fixed-x}.
\begin{remark}
    If the optimizer plays a max-min strategy $\xne$, then by Lemma~\ref{lem:ftrl-fixed-x}, the learner converges to $\yopt \in \argmin_{\y\in \conv\{e_i:i\in\brep(\xne)\}} \reg(\y),$ and moreover $\inner{\xne,  \A \yopt}$ equals the value of the game $\val$. Since we do not necessarily have that $\xne$ is a best response to $\yopt$, $\yopt$ is not guaranteed to be a min-max strategy for the learner.
\end{remark}

\subsubsection{Optimizer's Viewpoint}
\label{appendix:optimizer-viewpoint}
The optimizer plays a fixed constant strategy $\xfix$.
By Lemma~\ref{lem:ftrl-fixed-x}, the learner's response $\y(t)$ converges to the unique regularizer-selected best response $\ybrep$ with $\payoff(\xfix, \ybrep) = \valfix$. The instantaneous surplus above the best-response value $\valfix$ at time $t$ is $\payoff(\xfix,\y(t))-\valfix$. We restate and prove the gap decomposition below.
\surplusgapdecomp*
\begin{proof}
    By linearity, $\payoff(\xfix,\y(t)) = \inner{\vv, \y(t)} = \sum_{i=1}^{\ynum} (\valfix + \gap_i)\y_i(t) = \valfix \sum \y_i(t) + \sum \gap_i \y_i(t)$. Since $\y(t) \in \ysp$ and $\gap_i=0$ for $i \in \optset$, the result follows.
\end{proof}

\subsubsection{Completed Proof of Lemma \ref{lem:bound-for-cont-ftrl}}
\label{appendix:lem:bound-for-cont-ftrl}
\lemboundforcontftrl*
\begin{proof}
From Equation~\eqref{eqn:y-dynamics-when-x-fixed}, $y(t) = \choicemap_{\reg}\!\bigl(\learn\, t\, \B^\top \xfix\bigr) = \choicemap_{\reg}\!\bigl(-\learn\, t\, \vv \bigr)$. The KKT condition in \eqref{eq:kkt_structure} gives $y_i(t)=\phi(\kkt(t)-\learn t\, \vv_i)$ with $\sum_{i=1}^{\ynum} y_i(t)=1$ 
for some scalar $\kkt(t)$.

(i) Since $v_i=\valfix$ for all $i\in \optset$, then $y_i(t) = \phi(\kkt(t)-\learn t\, \valfix)$ for all $i\in \optset$, which is independent of $i$. Define $a_t := \phi(\kkt(t)-\learn t\, \valfix)\ge 0$. Then, $y_i(t) = a_t$ for all $i\in \optset$. 

Then we prove the bounds on $a_t$. Write the normalization as
\begin{align*}
    1=\sum_{i\in \optset} a_t+\sum_{j\notin \optset} y_j(t)=\optnum a_t+\sum_{j\notin \optset} y_j(t)\ \ge\ \optnum a_t,
\end{align*}
so $a_t\le 1/\optnum$. Also, for $j\notin \optset$, by monotonicity of $\phi$ we have
\begin{align*}
    y_j(t)=\phi(\kkt(t)-\learn t\, \vv_j) = \phi(\kkt(t)-\learn t(\valfix+\gap_j)) 
    \le \phi(\kkt(t)-\learn t\, \valfix)=a_t.
\end{align*}
Therefore, $1 = \sum_{i=1}^{\ynum} y_i(t) \le \ynum a_t$, so $a_t\ge 1/\ynum$.

(ii) For $j\notin \optset$,
\[
y_j(t)=\phi\big(\kkt(t)-\learn t(\valfix+\gap_j)\big)
=\phi\big(\theta'(a_t)-\learn t\,\gap_j\big).
\]
Since $a_t\in[1/\ynum,1/\optnum]$ and $\theta'$ is increasing, $\theta'(a_t)\in[\theta'(1/\ynum),\theta'(1/\optnum)]$.
Applying the increasing map $\phi$ yields the claimed bounds.
\end{proof}

\section{Details Omitted from Section \ref{subsec:inverse-rate} (The Inverse-Rate Law)}
\label{app subsec:inverse-rate}
We are now in a position to state our first main result. Assuming a fixed optimizer strategy, we establish that {surplus} is governed by the interplay between:
$a)$ the \emph{strategic landscape} (specifically, the mass of learner's actions that are \emph{not} best responses to optimizer) and $b)$ the \emph{geometry} of the regularizer's conjugate $\theta^*$.
The following meta-theorem encapsulates the essence of our findings: \begin{center}\it Optimizer's {surplus} is fundamentally the ratio of strategic suboptimality to the learning speed. \end{center}
\subsection{The Cost of Learning}
\label{subsec:cost-of-learning}
\noindent Our result establishes a fundamental sweeping condition: surplus against a fixed strategy arises \emph{solely} from the presence of suboptimal actions (non-best responses). If the learner is already fully aligned with the optimizer's strategy, the cumulative lag vanishes. Conversely, any misalignment incurs an inescapable surplus cost scaling with the inverse step size.
\begin{tcolorbox}[
  enhanced,
  breakable,
  colback=white,
  colframe=black!75, 
  boxrule=0.8pt,
  arc=3mm,
  left=2mm,right=2mm,top=2mm,bottom=2mm,
  drop fuzzy shadow 
]
\textbf{Meta-Theorem.}
Consider a learner with step size $\eta$ playing against a fixed optimizer strategy. Let $N_{sub}$ be the number of suboptimal actions (non-best responses).
\begin{itemize}[leftmargin=*,itemsep=0pt,topsep=-2pt]
    \item \textbf{\small Perfect Alignment:} If all learner actions are best responses ($N_{sub} = 0$), there is \textbf{no surplus}. 
  \[
  \lag = 0
  \]
 \item \textbf{Cost of Misalignment:} If there exist suboptimal actions ($N_{sub} > 0$), the cumulative surplus is proportional to the number of bad choices divided by the step size, corrected by the boundary geometry:
	\[
       \lag \approx \Theta\left( \frac{N_{sub}}{\eta(T)} \right) - o_{\eta T,\reg}(1) \cdot \mathbb{I}(\text{steep}).
    \]
    \footnotesize{\textit{(Note: While non-steep regularizers saturate quickly, steep regularizers retain a slowly decaying residual potential, strictly reducing the realized surplus at finite $T$.)}}
\end{itemize}
\end{tcolorbox}

\subsection{Surplus in Continuous-Time FTRL Dynamics}
\label{section:exploitation bounds}
\paragraph{Regularizer Geometry via a Dual Potential}
To turn the preceding meta-theorem into precise bounds, we need a scalar measure of how ``costly'' it is for the learner to maintain probability mass on a coordinate. We capture this effect through a \emph{dual potential} associated with the regularizer.
\begin{definition}
\label{def:dual-potential} \it
Let $\theta:[0,1]\to\mathbb{R}$ be the kernel of a separable regularizer and $\theta^\ast$ its convex conjugate. We define the \emph{dual potential} $V:[0,1]\to\mathbb{R}$ by $V(u)\;\coloneqq\;\theta^\ast(\theta'(u))\;=\;u\,\theta'(u)-\theta(u)$.
\end{definition}
\begin{remark}
    For standard kernels, $V$ admits simple closed forms:
\emph{Euclidean} regularizer yields $V(u)=\tfrac12 u^2$ (up to constants), \emph{Tsallis} regularizer with parameter $q\neq 1$ yields $V(u)=u^q$, and the \emph{entropic} case arises in the limit $q\to 1$, giving $V(u)=u$ (again up to irrelevant constants).
\end{remark}

We now provide the formal statement of Theorem \ref{thm:exploit-finite}. While the main text provided a simplified version for intuition, the statement below specifies the exact bounds in terms of the potential energy change $\Delta V$ and the discretization factor $N_{\mathrm{sub}}$.
\begin{theorem}[Formal Version of Theorem~\ref{thm:exploit-finite}]
\label{app thm:exploit-finite}
Fix a constant optimizer strategy $\xfix$. Let $\gapmin, \gapmax$ be the minimum and maximum payoff gaps of the suboptimal actions. For any horizon $T \ge 0$, $\lag$ is bounded by
\begin{align}
    \frac{\ynum-\optnum}{\learn} \cdot\underline{\Delta V}(T) \;\le\; \lag \;\le\; \frac{\ynum-\optnum}{\learn} \cdot \overline{\Delta V}(T), \tag{Surplus Bound}
\end{align}
where $\underline{\Delta V}(T),\overline{\Delta V}(T)$ represent the potential drops of form $\Delta V(T) \sim V_{start} - V_{end}(T):$
\begin{align}
        \underline{\Delta V}(T) &:= \frac{\gapmin}{\gapmax} \left[ V(1/\ynum) - \theta^*\Big( \max\{ \theta'(1/\ynum) - \learn \gapmax T, \; \theta'(0^+) \} \Big) \right].\label{eq:c_h_low}\\
     \overline{\Delta V}(T) &:= \frac{\gapmax}{\gapmin} \left[ V(1/\optnum) - \theta^*\Big( \max\{ \theta'(1/\optnum) - \learn \gapmin T, \; \theta'(0^+) \} \Big) \right].\label{eq:c_h_high}
\end{align}
\end{theorem}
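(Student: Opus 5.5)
The plan is to work directly from the \ref{eqn:surplus_decomposition}, $\lag=\int_0^T\sum_{j\notin\optset}\gap_j\,y_j(t)\,dt$, and to sandwich each summand using the pointwise bounds of Lemma~\ref{lem:bound-for-cont-ftrl}. That lemma gives, for every $j\notin\optset$ and every $t\ge 0$,
\[
\phi\big(\theta'(1/\ynum)-\learn t\gap_j\big)\le y_j(t)\le \phi\big(\theta'(1/\optnum)-\learn t\gap_j\big).
\]
So the task reduces to computing one-dimensional integrals of the form $\int_0^T\phi(c-\learn\gap t)\,dt$ with $c\in\{\theta'(1/\ynum),\theta'(1/\optnum)\}$.

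The key identity is that $\phi=(\theta^*)'$ on all of $\R$, with $\theta^*$ taken as the conjugate of $\theta$ restricted to $[0,1]$. I would check this from the one-dimensional KKT computation in Appendix~\ref{appendix:kkt-derivation}: the sup defining $\theta^*(z)$ is attained at $\phi(z)=\trunc_{[0,1]}(\theta')^{-1}(z)$, and Danskin's theorem gives the derivative. The examples in Appendix~\ref{appendix:kernel-examples} illustrate this. Two consequences follow.
\begin{itemize}
\item $\theta^*$ is nondecreasing, since $\phi\ge 0$.
\item $\theta^*$ is constant, equal to $\theta^*(\theta'(0^+))$, for $z\le\theta'(0^+)$ in the non-steep case, because $\phi$ vanishes there.
\end{itemize}
By the fundamental theorem of calculus and the substitution $s=c-\learn\gap t$,
\[
\int_0^T\phi(c-\learn\gap t)\,dt=\frac{1}{\learn\gap}\Big[\theta^*(c)-\theta^*\big(\max\{c-\learn\gap T,\theta'(0^+)\}\big)\Big].
\]
The $\max$ is harmless because $\theta^*$ is flat below $\theta'(0^+)$. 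Since $c=\theta'(u)$ with $u\le 1$, we have $\theta^*(c)=V(u)$ by Definition~\ref{def:dual-potential}. Multiplying by $\gap_j$ cancels the $1/\gap_j$, so each suboptimal action contributes exactly $\frac1\learn\big[V(u)-\theta^*(\max\{\theta'(u)-\learn\gap_jT,\theta'(0^+)\})\big]$, with $u=1/\optnum$ for the upper bound and $u=1/\ynum$ for the lower bound.

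Next I remove the dependence on the individual $\gap_j$, using monotonicity of $\theta^*$ and $\gapmin\le\gap_j\le\gapmax$. For the upper bound, $\theta'(1/\optnum)-\learn\gap_jT\ge\theta'(1/\optnum)-\learn\gapmin T$, so subtracting $\theta^*$ of the former is at most subtracting $\theta^*$ of the latter. For the lower bound I use $\gapmax$ in the same way. Each bracket is nonnegative, again by monotonicity, since its argument is at most $\theta'(u)$. Therefore multiplying the upper bound by $\gapmax/\gapmin\ge1$ and the lower bound by $\gapmin/\gapmax\le1$ preserves the inequalities. Summing over the $\ynum-\optnum=N_{\mathrm{sub}}$ suboptimal actions gives exactly $\frac{\ynum-\optnum}{\learn}\underline{\Delta V}(T)\le\lag\le\frac{\ynum-\optnum}{\learn}\overline{\Delta V}(T)$.

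The main obstacle is the rigorous justification of $\phi=(\theta^*)'$ including the truncation endpoints. In the steep case $\theta'(0^+)=-\infty$, $\theta$ is only differentiable on $(0,1]$, and I must confirm that $\theta^*(z)\to 0=\theta^*(-\infty)$ so the formula is consistent, assuming the normalization $\theta(0)=0$. At the top end I need $c\le\theta'(1)$, which holds since $1/\optnum\le1$, so no upper truncation interferes with the identification $\theta^*(\theta'(u))=V(u)$. I would handle these by proving absolute continuity of $z\mapsto\theta^*(z)$ on the relevant interval via the envelope theorem for the compact-domain supremum.
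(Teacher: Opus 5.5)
Your per-action integration is correct and uses the same substitution as the paper. The step that eliminates the individual gaps $\gap_j$, however, goes the wrong way. Since $\gap_j\ge\gapmin$, you have $\theta'(1/\optnum)-\learn\gap_jT\le\theta'(1/\optnum)-\learn\gapmin T$, not $\ge$. Symmetrically, $\gap_j\le\gapmax$ gives $\theta'(1/\ynum)-\learn\gap_jT\ge\theta'(1/\ynum)-\learn\gapmax T$. Write
\[
F_u(s)\defeq V(u)-\theta^*\bigl(\max\{\theta'(u)-s,\,\theta'(0^+)\}\bigr)=\int_{\max\{\theta'(u)-s,\,\theta'(0^+)\}}^{\theta'(u)}\phi(z)\,dz .
\]
Your exact identity says the $j$-th contribution $\gap_j\int_0^T y_j(t)\,dt$ lies between $\frac1\learn F_{1/\ynum}(\learn\gap_jT)$ and $\frac1\learn F_{1/\optnum}(\learn\gap_jT)$. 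Moreover, $F_u$ is \emph{nondecreasing} in $s$. So monotonicity only yields $F_{1/\optnum}(\learn\gap_jT)\le F_{1/\optnum}(\learn\gapmax T)$ and $F_{1/\ynum}(\learn\gap_jT)\ge F_{1/\ynum}(\learn\gapmin T)$. These carry the opposite gaps from the ones inside $\overline{\Delta V}(T)$ and $\underline{\Delta V}(T)$.

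Your intermediate inequalities are in fact false. For the entropic kernel $F_u(s)=u(1-e^{-s})$; with $\gap_j=2\gapmin$ and $\learn\gapmin T=1$, your upper-bound step asserts $1-e^{-2}\le 1-e^{-1}$. Hence the factors $\gapmax/\gapmin$ and $\gapmin/\gapmax$ are not a harmless final loosening. They are exactly what is needed to swap which gap appears inside $\theta^*$, and your argument gives no justification for that swap.

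Two repairs work.

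\textbf{The paper's route.} Split $\gap_j$ \emph{before} integrating: bound the prefactor by $\gap_j\le\gapmax$ and the argument of $\phi$ via $\gap_j\ge\gapmin$, and conversely for the lower bound. The replacement then happens pointwise inside $\phi$, where monotonicity points the right way. A single substitution $u=\theta'(1/\optnum)-\learn\gapmin t$ then produces the ratio $\gapmax/\gapmin$ automatically.

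\textbf{Keeping your identity.} Add the missing ingredient. Since $F_u'(s)=\phi(\theta'(u)-s)$ is nonincreasing, $F_u$ is concave on $[0,\infty)$ with $F_u(0)=0$, so $s\mapsto F_u(s)/s$ is nonincreasing. This gives $F_{1/\optnum}(\learn\gap_jT)\le\frac{\gap_j}{\gapmin}F_{1/\optnum}(\learn\gapmin T)\le\frac{\gapmax}{\gapmin}F_{1/\optnum}(\learn\gapmin T)$. It also gives $F_{1/\ynum}(\learn\gap_jT)\ge\frac{\gap_j}{\gapmax}F_{1/\ynum}(\learn\gapmax T)\ge\frac{\gapmin}{\gapmax}F_{1/\ynum}(\learn\gapmax T)$, which is the statement.

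This second route is worth the extra line. The ratio-free bounds $\frac1\learn F_{1/\ynum}(\learn\gapmin T)\le\gap_j\int_0^T y_j(t)\,dt\le\frac1\learn F_{1/\optnum}(\learn\gapmax T)$ are at least as strong as the theorem. As $\learn T\to\infty$, they remove the $\gapmax/\gapmin$ factors from the saturation constants of Theorem~\ref{app thm:exploit-infinite}.
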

\begin{proof}
    Start from \eqref{eqn:surplus_decomposition}
\begin{align*}
    \expl(T)
=
\vag
\;+\;
\int_0^T \sum_{i\notin\optset} \gap_i\,\y_i(t)\,dt.
\end{align*}
For $i\notin \optset$, $\gapmin \le \gap_i \le \gapmax$, hence
\[
\gapmin \sum_{i\notin \optset}\y_i(t)\ \le\ \sum_{i\notin \optset}\gap_i\,\y_i(t)\ \le\ \gapmax \sum_{i\notin \optset}\y_i(t).
\]
From Lemma \ref{lem:bound-for-cont-ftrl} with monotonicity of $\phi$,
\[
 \phi\big(\theta'(1/\ynum)-\learn t\,\gapmax\big) \le \y_i(t) \le \phi\big(\theta'(1/\optnum)-\learn t\,\gapmin\big).
\]
Therefore
\begin{align} 
    \sum_{i\notin \optset}\gap_i\,\y_i(t)\ \ge\ \gapmin(\ynum-\optnum)\,\phi\big(\theta'(1/\ynum)-\learn t\,\gapmax\big) \eqdef \underline{g}(t), \label{eqn:underline-g}\\ 
    \sum_{i\notin \optset}\gap_i\,\y_i(t)\ \le\ \gapmax(\ynum-\optnum)\,\phi\big(\theta'(1/\optnum)-\learn t\,\gapmin\big) \eqdef \overline{g}(t). \label{eqn:overline-g}
\end{align}

For the lower bound, let $u=\theta'(1/\ynum)-\learn\gapmax t$. Then $dt=-du/(\learn\gapmax)$ and
\[
\int_0^T \phi\big(\theta'(1/\ynum)-\learn\gapmax t\big)\,dt
=
\frac{1}{\learn\gapmax}
\int_{\theta'(1/\ynum)-\learn\gapmax T}^{\theta'(1/\ynum)} \phi(u)\,du.
\]
From definition of $\phi$, we have that $\phi(u)=0$ for $u\le \theta'(0^+)$, this equals
\begin{align} \label{eqn:integral-phi-lower}
    \int_0^T \underline {g}(t)\,dt
    &= \frac{\ynum-\optnum}{\learn}\frac{\gapmin}{\gapmax}
\int_{\max\{\theta'(1/\ynum)-\learn\gapmax T,\ \theta'(0^+)\}}^{\theta'(1/\ynum)} \phi(u)\,du \notag \\
    &=\frac{\ynum-\optnum}{\learn}\frac{\gapmin}{\gapmax}
\Big(\theta^*(\theta'(1/\ynum))-\theta^*(\max\{\theta'(1/\ynum)-\learn\gapmax T,\ \theta'(0^+)\})\Big),
\end{align}
which yields the stated $\Chl$.

For the upper bound, do the same with $u=\theta'(1/\optnum)-\learn\gapmin t$ to obtain $\Chu$ by integrating $\int_0^T \overline {g}(t)\,dt$ such that
\begin{align} \label{eqn:integral-phi-upper}
    \int_0^T \overline {g}(t)\,dt
    &= \frac{\ynum-\optnum}{\learn}\frac{\gapmax}{\gapmin}
\int_{\max\{\theta'(1/\optnum)-\learn\gapmin T,\ \theta'(0^+)\}}^{\theta'(1/\optnum)} \phi(u)\,du  \notag \\
    &=\frac{\ynum-\optnum}{\learn}\frac{\gapmax}{\gapmin}
\Big(\theta^*(\theta'(1/\optnum))-\theta^*(\max\{\theta'(1/\optnum)-\learn\gapmin T,\ \theta'(0^+)\})\Big),
\end{align}
which yields the upper bound.
\end{proof}

\paragraph{Regime Analysis: Steep vs.\ Non-Steep}
The asymptotic behavior of surplus is dictated by the regularizer's boundary behavior— leading to two distinct regimes:
 if $|\theta'(0^+)|<\infty$ (non-steep), suboptimal actions are eliminated in finite time and surplus saturates and if $|\theta'(0^+)|=\infty$ (steep), the learner approaches the boundary only asymptotically, leaving a persistent tail.
We provide the formal statement of Theorem \ref{thm:exploit-infinite} below, which was described informally in the main text.
\begin{theorem}[Formal Version of Theorem \ref{thm:exploit-infinite}]
\label{app thm:exploit-infinite}
Fix a separable regularizer with kernel $\theta$, step size $\learn>0$, and a fixed optimizer strategy $\xfix$ inducing gaps $\{\gap_i\}$, with $\gapmin\le \gap_i\le \gapmax$ for all $i\notin\optset$.
Let $V_{\mathrm{bdry}}\;\coloneqq\;\theta^\ast(\theta'(0^+))\in[0,+\infty]$ be the boundary energy.
Then, for large horizons $\learn T\to\infty$, $\lag$ satisfies the following dichotomy.

\begin{enumerate}[label=\textbf{(\Alph*)},leftmargin=*,itemsep=-3pt]
\item \textbf{Non-steep regularizers (finite boundary slope).}
If $\theta'(0^+)>-\infty$ (equivalently, $V_{\mathrm{bdry}}<\infty$), then suboptimal actions are eliminated in finite time $T^\ast=t_{\theta}^*<\infty$ and surplus \emph{saturates}. In particular, for all $T\ge T^\ast$,
            \[\lag \in  \frac{\ynum-\optnum}{\learn} \left[ \frac{\gapmin}{\gapmax}\Big( V(1/\ynum)-V_{\mathrm{bdry}}\Big), \; \frac{\gapmax}{\gapmin}\Big( V(1/\optnum)-V_{\mathrm{bdry}}\Big)\right].\]
\item \textbf{Steep regularizers (infinite boundary slope).}
If $\theta'(0^+)=-\infty$ (so $V_{\mathrm{bdry}}=0$ under the normalization $\theta^\ast(-\infty)=0$), then the learner approaches the boundary only asymptotically and surplus has an \emph{infinite tail}. As $\learn T\to\infty$,
            \[\lag \in  \frac{\ynum-\optnum}{\learn} \left[ \frac{\gapmin}{\gapmax}V\bigl(1/\ynum\bigr), \; \frac{\gapmax}{\gapmin}V\bigl(1/\optnum\bigr) \right] \pm o(1).\]
\end{enumerate}
\end{theorem}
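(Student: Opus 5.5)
The plan is to derive both regimes directly from the finite-horizon sandwich of Theorem~\ref{app thm:exploit-finite}. The only quantities that depend on $T$ are the two terminal conjugate values
\[
\theta^*\bigl(\max\{\theta'(1/\ynum)-\learn\gapmax T,\ \theta'(0^+)\}\bigr)
\quad\text{and}\quad
\theta^*\bigl(\max\{\theta'(1/\optnum)-\learn\gapmin T,\ \theta'(0^+)\}\bigr).
\]
So the whole task is to track these two values as $\learn T\to\infty$, using only that $\theta^*$ is nondecreasing and continuous, since $(\theta^*)'=\phi\ge 0$.

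\textbf{Case (A): non-steep.} Here $\theta'(0^+)$ is finite.
\begin{itemize}
\item Set $T^\ast\defeq\frac{\theta'(1/\optnum)-\theta'(0^+)}{\learn\gapmin}=t_\theta^*$. This is finite and is exactly the elimination time of Lemma~\ref{lem:bound-for-cont-ftrl}.
\item For $T\ge T^\ast$ we have $\theta'(1/\optnum)-\learn\gapmin T\le\theta'(0^+)$, so the upper max equals $\theta'(0^+)$.
\item For the lower max, note $\theta'(1/\ynum)\le\theta'(1/\optnum)$ and $\gapmax\ge\gapmin$. Hence $\theta'(1/\ynum)-\learn\gapmax T\le\theta'(1/\optnum)-\learn\gapmin T\le\theta'(0^+)$ as well.
\item Both terminal terms are therefore $V_{\mathrm{bdry}}=\theta^*(\theta'(0^+))$ for all $T\ge T^\ast$. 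Substituting into the surplus bound gives the stated interval, independent of $T$, which is the saturation claim.
\item Lemma~\ref{lem:bound-for-cont-ftrl} shows that every suboptimal $y_j(t)$ vanishes for $t\ge T^\ast$. So $\lag$ is constant in $T$ beyond $T^\ast$, which confirms saturation rather than just bounded growth.
\end{itemize}

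\textbf{Case (B): steep.} Here $\theta'(0^+)=-\infty$, so both maxes equal their first argument, and these arguments tend to $-\infty$ as $\learn T\to\infty$.
\begin{itemize}
\item I need $\theta^*(u)\to 0$ as $u\to-\infty$, i.e. the normalization $V_{\mathrm{bdry}}=0$. Justification: $\theta^*(u)=\sup_{y\in[0,1]}\{uy-\theta(y)\}$. For $u\to-\infty$ the maximizer $\phi(u)\to 0$, so $\theta^*(u)\to-\theta(0)$, and this is $0$ after normalizing $\theta(0)=0$.
\item Equivalently, $\int_{-\infty}^{c}\phi<\infty$. The examples confirm this (entropy: $e^{u-1}$; Tsallis: $\phi(u)\sim|u|^{-1/(1-q)}$ with $1/(1-q)>1$). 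In general it holds because the integral equals $\theta^*(c)-\lim_{u\to-\infty}\theta^*(u)$ and $\theta^*\ge-\theta(0)$ is bounded below.
\item The bounds thus become $\frac{\ynum-\optnum}{\learn}\cdot\frac{\gapmin}{\gapmax}\bigl[V(1/\ynum)-\theta^*(\theta'(1/\ynum)-\learn\gapmax T)\bigr]$, and the analogous upper expression.
\item Each subtracted term $\theta^*(\cdot)\to 0$, and it is strictly positive for finite $T$ because $\phi>0$ on the interior. This gives the infinite tail with the limit interval $\frac{\ynum-\optnum}{\learn}\bigl[\frac{\gapmin}{\gapmax}V(1/\ynum),\ \frac{\gapmax}{\gapmin}V(1/\optnum)\bigr]$.
\end{itemize}

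\textbf{Main obstacle: the meaning of the $\pm o(1)$ term.} Relative to the scale $1/\learn$, the correction is $\frac{\ynum-\optnum}{\learn}\,\theta^*(\theta'(c)-\learn\gap T)$. This is $o(1/\learn)$, and is $o(1)$ in absolute terms only if $\theta^*$ decays fast enough relative to $\learn\to 0$.

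I would state it as $o(1)$ after normalizing by $\frac{\ynum-\optnum}{\learn}$, matching the $o_{\learn T,\reg}(1)$ in the meta-theorem. I would add that for entropic and Tsallis kernels the tail is explicitly $e^{-\Theta(\learn T)}$ or $(\learn T)^{-q/(1-q)}$ respectively, obtained from the closed forms \eqref{eq:appendix:entropy-conj} and \eqref{eq:appendix:tsallis-conj}.
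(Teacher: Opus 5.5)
Your proposal is correct and follows the paper's proof essentially step for step. In (A) you show both clipped arguments reach $\theta'(0^+)$ once $T\ge t_\theta^*$, which is the paper's observation $T_2\ge T_1$. In (B) you let the arguments tend to $-\infty$ and use $\theta^*(-\infty)=0$; the paper simply assumes this as a normalization, whereas you derive it from $\theta(0)=0$, with continuity of $\theta$ at $0$ guaranteeing $u\,\phi(u)\to 0$. Your caveat about the $\pm o(1)$ term is well taken: the paper's argument also only shows the tail is $o_{\eta T}(1)$ inside the bracket, i.e.\ $o(1/\eta)$ in absolute terms, so the normalized reading you propose is what is actually proved.
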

\begin{proof}
In part (A), assume $\theta'(0^+)>-\infty$. 
For $\underline{\Delta V}(T)$ in \eqref{eq:c_h_low}, the inner maximum equals $\theta'(0^+)$ whenever $\theta'(1/\ynum)-\learn\,\gapmax\,T \;\le\; \theta'(0^+),$
i.e., for all
\begin{align*}
T \;\ge\; T_1 \;:=\; \frac{\theta'(1/\ynum)-\theta'(0^+)}{\learn\,\gapmax}.
\end{align*}
Thus, for all $T\ge T_1$,
\begin{align*}
\theta^*\!\Bigl(\max\{\theta'(1/\ynum)-\learn\gapmax T,\ \theta'(0^+)\}\Bigr)
\;=\;
\theta^*\!\bigl(\theta'(0^+)\bigr).
\end{align*}
Similarly, for $\overline{\Delta V}(T)$ in \eqref{eq:c_h_high}, the maximum equals $\theta'(0^+)$ once
\begin{align*}
T \;\ge\; T_2 \;:=\; \frac{\theta'(1/\optnum)-\theta'(0^+)}{\learn\,\gapmin},
\end{align*}
so for all $T\ge T_2$,
\begin{align*}
\theta^*\!\Bigl(\max\{\theta'(1/\optnum)-\learn\gapmin T,\ \theta'(0^+)\}\Bigr)
\;=\;
\theta^*\!\bigl(\theta'(0^+)\bigr).
\end{align*}
Since $T_2 \ge T_1$, for all $T\ge T_2$ both maxima equal $\theta'(0^+)$, yielding the bounds in part (A).

In part (B), assume $\theta'(0^+)=-\infty$ and $\theta^*(-\infty)=0$.
By Theorem~\ref{app thm:exploit-finite},
\begin{align*}
\Chl
&=\frac{\gapmin}{\gapmax}\Big(
\theta^*\big(\theta'(1/\ynum)\big)
-\theta^*\!\big(\theta'(1/\ynum)-\learn\gapmax\tend\big)
\Big), \\
\Chu
&=\frac{\gapmax}{\gapmin}\Big(
\theta^*\big(\theta'(1/\optnum)\big)
-\theta^*\!\big(\theta'(1/\optnum)-\learn\gapmin\tend\big)
\Big).
\end{align*}
As $\learn T\to\infty$, we have $\theta'(1/\ynum)-\learn\gapmax T$ and
$\theta'(1/\optnum)-\learn\gapmin T$ go to $-\infty$. Hence,
\begin{align*}
\theta^*\!\bigl(\theta'(1/\ynum)-\learn\gapmax T\bigr)=o_{\learn T}(1),
\qquad
\theta^*\!\bigl(\theta'(1/\optnum)-\learn\gapmin T\bigr)=o_{\learn T}(1),
\end{align*}
which yields the bounds in part (B).
\end{proof}

\subsection{Examples of Potential Drops, $\vag$, $\lag$}
\label{appendix:app:examples-gaps}
Theorems~\ref{app thm:exploit-finite} and \ref{app thm:exploit-infinite} yield the following takeaways.
Define the lower-bound potential endpoint
\begin{align*}
\underline{V_{\textnormal{end}}}(T)
\;\defeq\;
\theta^*\!\Bigg(
\max\Bigg\{
\theta'\!\Big(\frac{1}{\ynum}\Big)-\learn\,\gapmax\,T,\;
\theta'(0^+)
\Bigg\}
\Bigg).
\end{align*}
If $\theta'(0^+)=-\infty$ (steep), then the maximum is attained by the first term for all $T\ge 0$.
If $\theta'(0^+)>-\infty$ (non-steep), by Lemma \ref{lem:bound-for-cont-ftrl}, there exists $T^* = t^*_\theta<\infty$ such that the maximum is attained by the first term for $T\le T^*$ and by the second term for $T\ge T^*$.
The same steep/non-steep dichotomy applies to the corresponding upper-bound terms.
Using the kernel choice maps in Example~\ref{appendix:kernel-examples}, we now make the bounds in
Theorems~\ref{app thm:exploit-finite} and \ref{app thm:exploit-infinite} explicit.

\begin{enumerate}[label=(\alph*),leftmargin=*]
\item \emph{Negative entropy.}
By \eqref{eq:appendix:entropy-conj} and $\theta'(y)=1+\log y$,
\begin{align*}
\underline{V_{\textnormal{start}}}
&=
\theta^*\!\Big(\theta'\!\Big(\frac{1}{\ynum}\Big)\Big)
=
\frac{1}{\ynum},
&
\underline{V_{\textnormal{end}}}(T)
&=
\theta^*\!\Big(\theta'\!\Big(\frac{1}{\ynum}\Big)-\learn\gapmax T\Big)
=
\frac{1}{\ynum}e^{-\learn\gapmax T},\\
\overline{V_{\textnormal{start}}}
&=
\theta^*\!\Big(\theta'\!\Big(\frac{1}{\optnum}\Big)\Big)
=
\frac{1}{\optnum},
&
\overline{V_{\textnormal{end}}}(T)
&=
\theta^*\!\Big(\theta'\!\Big(\frac{1}{\optnum}\Big)-\learn\gapmin T\Big)
=
\frac{1}{\optnum}e^{-\learn\gapmin T}.
\end{align*}
Thus,
\begin{align*}
\underline{\Delta V}(T)
&=
\underline{V_{\textnormal{start}}}-\underline{V_{\textnormal{end}}}(T)
=
\frac{1}{\ynum}\bigl(1-e^{-\learn\gapmax T}\bigr),
&
\overline{\Delta V}(T)
&=
\overline{V_{\textnormal{start}}}-\overline{V_{\textnormal{end}}}(T)
=
\frac{1}{\optnum}\bigl(1-e^{-\learn\gapmin T}\bigr).
\end{align*}
By Theorem~\ref{app thm:exploit-finite}, bringing back to \eqref{eqn:surplus_decomposition}, for all $T\ge 0$,
\begin{align*}
\expl(T)
&\;\ge\;
\vag
\;+\;
\frac{\ynum-\optnum}{\learn}\cdot\frac{\gapmin}{\gapmax}\cdot
\frac{1}{\ynum}\bigl(1-e^{-\learn\gapmax T}\bigr),\\
\expl(T)
&\;\le\;
\vag
\;+\;
\frac{\ynum-\optnum}{\learn}\cdot\frac{\gapmax}{\gapmin}\cdot
\frac{1}{\optnum}\bigl(1-e^{-\learn\gapmin T}\bigr).
\end{align*}
Moreover, as $\learn T\to\infty$,
\begin{align*}
\expl(T)
&\;\ge\;
\vag
\;+\;
\frac{\ynum-\optnum}{\learn}\cdot\frac{\gapmin}{\gapmax}\cdot\frac{1}{\ynum}
\;+\; o_{\learn T}(1),\\
\expl(T)
&\;\le\;
\vag
\;+\;
\frac{\ynum-\optnum}{\learn}\cdot\frac{\gapmax}{\gapmin}\cdot\frac{1}{\optnum}
\;+\; o_{\learn T}(1).
\end{align*}

\item \emph{Negative Tsallis ($q\in(0,1)$).}
By \eqref{eq:appendix:tsallis-conj} and $\theta'(y)=\frac{q\,y^{q-1}-1}{q-1}$,
\begin{align*}
\underline{V_{\textnormal{start}}}
&=
\theta^*\!\Big(\theta'\!\Big(\frac{1}{\ynum}\Big)\Big)
=
\ynum^{-q},
&
\underline{V_{\textnormal{end}}}(T)
&=
\theta^*\!\Big(\theta'\!\Big(\frac{1}{\ynum}\Big)-\learn\gapmax T\Big)
=
\Big[\ynum^{1-q}+\frac{1-q}{q}\,\learn T\,\gapmax\Big]^{-\frac{q}{1-q}},\\
\overline{V_{\textnormal{start}}}
&=
\theta^*\!\Big(\theta'\!\Big(\frac{1}{\optnum}\Big)\Big)
=
\optnum^{-q},
&
\overline{V_{\textnormal{end}}}(T)
&=
\theta^*\!\Big(\theta'\!\Big(\frac{1}{\optnum}\Big)-\learn\gapmin T\Big)
=
\Big[\optnum^{1-q}+\frac{1-q}{q}\,\learn T\,\gapmin\Big]^{-\frac{q}{1-q}}.
\end{align*}
Therefore,
\begin{align*}
\underline{\Delta V}(T)
&=
\ynum^{-q}-\Big[\ynum^{1-q}+\frac{1-q}{q}\,\learn T\,\gapmax\Big]^{-\frac{q}{1-q}},\\
\overline{\Delta V}(T)
&=
\optnum^{-q}-\Big[\optnum^{1-q}+\frac{1-q}{q}\,\learn T\,\gapmin\Big]^{-\frac{q}{1-q}}.
\end{align*}
By Theorem~\ref{app thm:exploit-finite}, substituting into \eqref{eqn:surplus_decomposition}, for all $T\ge 0$,
\begin{align*}
\expl(T)
&\;\ge\;
\vag
\;+\;
\frac{\ynum-\optnum}{\learn}\cdot\frac{\gapmin}{\gapmax}\cdot
\Bigg(
\ynum^{-q}-\Big[\ynum^{1-q}+\frac{1-q}{q}\,\learn T\,\gapmax\Big]^{-\frac{q}{1-q}}
\Bigg),\\
\expl(T)
&\;\le\;
\vag
\;+\;
\frac{\ynum-\optnum}{\learn}\cdot\frac{\gapmax}{\gapmin}\cdot
\Bigg(
\optnum^{-q}-\Big[\optnum^{1-q}+\frac{1-q}{q}\,\learn T\,\gapmin\Big]^{-\frac{q}{1-q}}
\Bigg).
\end{align*}
As $\learn T\to\infty$,
\begin{align*}
\expl(T)
&\;\ge\;
\vag
\;+\;
\frac{\ynum-\optnum}{\learn}\cdot\frac{\gapmin}{\gapmax}\cdot \ynum^{-q}
\;+\; o_{\learn T}(1),\\
\expl(T)
&\;\le\;
\vag
\;+\;
\frac{\ynum-\optnum}{\learn}\cdot\frac{\gapmax}{\gapmin}\cdot \optnum^{-q}
\;+\; o_{\learn T}(1).
\end{align*}

\item \emph{Euclidean.}
$\theta'(0^+)=0$ and $\theta^*(z_i)=\frac12 z_i^2$ by \eqref{eq:appendix:euclid-conj}.
By Lemma~\ref{lem:bound-for-cont-ftrl}, suboptimal actions are eliminated after $t_\theta^* \;=\; \frac{1}{\optnum\,\learn\,\gapmin}.$
Accordingly, the endpoints take the clipped forms
\begin{align*}
\underline{V_{\textnormal{end}}}(T)
&=
\theta^*\!\Bigg(
\max\Bigg\{\theta'\!\Big(\frac{1}{\ynum}\Big)-\learn\gapmax T,\;\theta'(0^+)\Bigg\}
\Bigg)
=
\frac12\Big(\max\Big\{\frac{1}{\ynum}-\learn\gapmax T,\;0\Big\}\Big)^2,\\
\overline{V_{\textnormal{end}}}(T)
&=
\theta^*\!\Bigg(
\max\Bigg\{\theta'\!\Big(\frac{1}{\optnum}\Big)-\learn\gapmin T,\;\theta'(0^+)\Bigg\}
\Bigg)
=
\frac12\Big(\max\Big\{\frac{1}{\optnum}-\learn\gapmin T,\;0\Big\}\Big)^2.
\end{align*}
Moreover,
\begin{align*}
\underline{V_{\textnormal{start}}}
&=
\theta^*\!\Big(\theta'\!\Big(\frac{1}{\ynum}\Big)\Big)
=
\frac{1}{2\ynum^2},
&
\overline{V_{\textnormal{start}}}
&=
\theta^*\!\Big(\theta'\!\Big(\frac{1}{\optnum}\Big)\Big)
=
\frac{1}{2\optnum^2}.
\end{align*}

\emph{Case 1: $0\le T\le t_\theta^*$.}
In this regime the maximum is attained by the first term, so
\begin{align*}
\underline{V_{\textnormal{end}}}(T)
&=
\frac12\Big(\frac{1}{\ynum}-\learn\gapmax T\Big)^2,
&
\overline{V_{\textnormal{end}}}(T)
&=
\frac12\Big(\frac{1}{\optnum}-\learn\gapmin T\Big)^2,
\end{align*}
and hence
\begin{align*}
\underline{\Delta V}(T)
&=
\frac{1}{2\ynum^2}-\frac12\Big(\frac{1}{\ynum}-\learn\gapmax T\Big)^2,
&
\overline{\Delta V}(T)
&=
\frac{1}{2\optnum^2}-\frac12\Big(\frac{1}{\optnum}-\learn\gapmin T\Big)^2.
\end{align*}
By Theorem~\ref{app thm:exploit-finite} and substituting into \eqref{eqn:surplus_decomposition}, for all $0\le T\le t_\theta^*$,
\begin{align*}
\expl(T)
&\;\ge\;
\vag
\;+\;
\frac{\ynum-\optnum}{\learn}\cdot\frac{\gapmin}{\gapmax}\cdot
\Bigg(\frac{1}{2\ynum^2}-\frac12\Big(\frac{1}{\ynum}-\learn\gapmax T\Big)^2\Bigg),\\
\expl(T)
&\;\le\;
\vag
\;+\;
\frac{\ynum-\optnum}{\learn}\cdot\frac{\gapmax}{\gapmin}\cdot
\Bigg(\frac{1}{2\optnum^2}-\frac12\Big(\frac{1}{\optnum}-\learn\gapmin T\Big)^2\Bigg).
\end{align*}

\emph{Case 2: $T> t_\theta^*$.}
In this regime both endpoints clip to the boundary:
\begin{align*}
\underline{V_{\textnormal{bdry}}}=\overline{V_{\textnormal{bdry}}}=\theta^*(0)=0.
\end{align*}
By Theorem~\ref{app thm:exploit-infinite} and \eqref{eqn:surplus_decomposition}, for all $T> t_\theta^*$,
\begin{align*}
\expl(T)
&\;\ge\;
\vag
\;+\;
\frac{\ynum-\optnum}{\learn}\cdot\frac{\gapmin}{\gapmax}\cdot \frac{1}{2\ynum^2},\\
\expl(T)
&\;\le\;
\vag
\;+\;
\frac{\ynum-\optnum}{\learn}\cdot\frac{\gapmax}{\gapmin}\cdot \frac{1}{2\optnum^2}.
\end{align*}
\end{enumerate}

\subsection{Surplus in Discrete-Time FTRL Algorithm}
\label{app sec:discrete-extension}
While continuous-time dynamics offer clean geometric insights, practical algorithms like FTRL operate in discrete steps. A natural question arises: \emph{Does the inverse-rate law survive after discretization?} We show that the answer is affirmative: the discrete-time surplus is simply the continuous-time surplus plus a bounded error term. Consider the standard discrete-time FTRL update with step size $\learn$. The optimizer's cumulative surplus over $T$ rounds is defined as:
\begin{align}
    \expld(T)
    \;\defeq\;
    \sum_{t=0}^{T-1}\bigl(\payoff(\xfix,y(t))-\val\bigr)
    \;=\;
    \vag
    \;+\;
    \underbrace{\sum_{t=0}^{T-1}\sum_{i\notin \optset}\gap_i\,y_i(t)}_{\lagd},
\end{align}
where $\lagd$ is the discrete version of the cumulative lag. The following corollary establishes that $\lagd$ tracks its continuous counterpart $\lag$ closely, preserving the $\bigoh((m-k)/\learn)$ scaling.
\begin{restatable}[Discrete-Time Robustness]{corollary}{cordiscretetimeexploitation}
\label{cor:discrete-exploit-finite}
Fix a constant optimizer strategy $\xfix$ and step size $\learn>0$. The discrete-time cumulative lag $\lagd$ satisfies the continuous-time bounds of Theorem~\ref{app thm:exploit-finite} up to an additive constant:
\begin{align}
    \frac{\ynum-\optnum}{\learn} \cdot \underline{\Delta V}(T)
    \;\le\;
    \lagd
    \;\le\;
    \frac{\ynum-\optnum}{\learn} \cdot \overline{\Delta V}(T) \;+\; \frac{\ynum-\optnum}{\optnum}\gapmax.
\end{align}
\end{restatable}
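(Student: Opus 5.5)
The plan is to compare the discrete sum with the continuous integral through monotonicity, coordinate by coordinate. With a fixed $\xfix$, the discrete FTRL iterate is $y(t)=\choicemap_{\reg}(-\learn t\,\vv)$. This is exactly the continuous-time response evaluated at integer times, so Lemma~\ref{lem:bound-for-cont-ftrl} applies verbatim at each $t\in\{0,\dots,T-1\}$. Repeating the first step of the proof of Theorem~\ref{app thm:exploit-finite}, we obtain the pointwise sandwich
\[
\underline g(t)\le \sum_{i\notin\optset}\gap_i y_i(t)\le \overline g(t),
\]
with
\[
\underline g(t)=\gapmin(\ynum-\optnum)\,\phi\big(\theta'(1/\ynum)-\learn t\gapmax\big),\qquad
\overline g(t)=\gapmax(\ynum-\optnum)\,\phi\big(\theta'(1/\optnum)-\learn t\gapmin\big).
\]
Both envelopes are nonincreasing in $t$, because $\phi$ is nondecreasing and its argument decreases in $t$. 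Hence
\[
\sum_{t=0}^{T-1}\underline g(t)\le \lagd\le \sum_{t=0}^{T-1}\overline g(t).
\]

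For the \emph{lower bound}, a nonincreasing function satisfies $g(t)\ge\int_t^{t+1}g(s)\,ds$. Summing over $t$ gives
\[
\sum_{t=0}^{T-1}\underline g(t)\ge\int_0^T\underline g(s)\,ds.
\]
By \eqref{eqn:integral-phi-lower}, this integral equals $\frac{\ynum-\optnum}{\learn}\underline{\Delta V}(T)$, which is exactly the claimed lower bound with no correction term.

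For the \emph{upper bound}, use the other Riemann comparison: $g(t)\le\int_{t-1}^{t}g(s)\,ds$ for $t\ge1$. This gives
\[
\sum_{t=0}^{T-1}\overline g(t)\le \overline g(0)+\int_0^{T-1}\overline g(s)\,ds\le \overline g(0)+\int_0^{T}\overline g(s)\,ds.
\]
The integral is $\frac{\ynum-\optnum}{\learn}\overline{\Delta V}(T)$ by \eqref{eqn:integral-phi-upper}. The leftover boundary term is
\[
\overline g(0)=\gapmax(\ynum-\optnum)\,\phi(\theta'(1/\optnum)).
\]
Since $\phi\circ\theta'=\mathrm{id}$ on $(0,1]$, this equals $\gapmax(\ynum-\optnum)/\optnum$, which is precisely the stated additive constant. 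The inequality $\int_0^{T-1}\le\int_0^T$ is valid because $\overline g\ge0$.

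The main point needing care is the exact identification of the discrete iterate with the continuous-time formula. We must confirm that $\HR(t)=t\,\B^\top\xfix$ with $\HR(0)=0$, so that $y(t)$ depends only on $\learn t$. We must also check that $\phi(\theta'(1/\optnum))=1/\optnum$ holds exactly, including the case $\optnum=1$, where $\phi$ is truncated at $1$. The remaining steps are elementary monotone-sum versus integral comparisons, and their validity rests only on monotonicity and nonnegativity of $\phi$. That property holds for steep and non-steep kernels alike, so no case split is needed.
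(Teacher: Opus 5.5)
Your proposal is correct and follows essentially the same route as the paper. It uses the same envelopes $\underline g,\overline g$ from Lemma~\ref{lem:bound-for-cont-ftrl}, the same monotone sum--integral comparisons, and the same boundary term $\overline g(0)$. The only difference is that you evaluate $\overline g(0)$ exactly as $\frac{\ynum-\optnum}{\optnum}\gapmax$, whereas the paper only bounds it from above by that quantity.
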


\begin{remark}[Robustness of Geometry]\it
    This result confirms that the ``Inverse-Rate Law" is the dominant force in surplus of FTRL methodology. The discretization error is merely $\bigoh(1)$, whereas the geometric cost scales with $1/\learn$. Therefore, the classification of regularizers into \textbf{Steep} (infinite tail) and \textbf{Non-Steep} (finite saturation) derived in Theorem~\ref{app thm:exploit-infinite} applies essentially unchanged to the discrete FTRL algorithm.
\end{remark}

\begin{proof}
    From Lemma~\ref{lem:bound-for-cont-ftrl}, for every $j \notin \optset$ and every integer $t\ge 0$,
    \begin{align*}
            \phi\big(\theta'(1/\ynum)-\learn t\,\gap_j\big)
\ \le\ y_j(t)\ \le\
\phi\big(\theta'(1/\optnum)-\learn t\,\gap_j\big).
    \end{align*}
    Following the proof of Theorem~\ref{app thm:exploit-finite}. Recall
    $\underline g(t)$ and $\overline g(t)$ be as in \eqref{eqn:underline-g} and \eqref{eqn:overline-g}. Since $t\mapsto \theta'(1/\ynum)-\eta t\,\gap_j$ and
$t\mapsto \theta'(1/\optnum)-\eta t\,\gap_j$ are decreasing and $\phi$ is increasing, both $\underline g$ and $\overline g$ are non-increasing. Hence, for all $T\ge 1$,
\[
\sum_{t=0}^{T-1}\underline g(t)\ \ge\ \int_0^T \underline g(t)\,dt,
\qquad
\sum_{t=0}^{T-1}\overline g(t)\ \le\ \overline g(0)+\int_0^T \overline g(t)\,dt \leq \frac{\ynum-\optnum}{\optnum}\,\gapmax+\int_0^T \overline g(t)\,dt.
\]
Therefore, we have 
\begin{align*}
    \vag + \int_0^T \underline g(t)\,dt \leq \expld(T) \leq \vag + \frac{\ynum-\optnum}{\optnum}\,\gapmax + \int_0^T \overline g(t)\,dt.
\end{align*}
Then the rest of the proof follows the same steps as in Theorem~\ref{app thm:exploit-finite} to evaluate the integrals and cancel $\vag$.
\end{proof}

\section{Details omitted from Section \ref{subsec:alternating} (The Alternating Trap)}
\label{appendix:app_alternating}

Rather than committing to a fixed strategy $\xfix$, can the optimizer choose an $\xd(t)$ that achieves significantly larger reward against FTRL over $T$ rounds than any fixed $\xfix$? In this appendix we answer this question for random zero-sum games and any strongly convex separable regularizer. Specifically, we consider random games with payoff matrix $\A\in\R^{\xnum\times\ynum}$ whose entries are i.i.d.\ with $\A_{ij}\sim\mathrm{Unif}[-1,1]$.
We show that, with high probability over $\A$, there exists an optimizer strategy $\xd(t)$ that guarantees a per-round surplus $\Omega\,\!\bigl(\learn/\text{poly}(nm)\bigr)$ uniformly over all horizons $T$, against FTRL with any separable regularizer on $\ysp$ for some sufficiently small step size $\learn>0$.

We restate Theorem~\ref{thm:avg-main} in the main text.
\begin{tcolorbox}[
  enhanced,
  breakable,
  colback=white,
  colframe=black!75, 
  boxrule=0.8pt,
  arc=3mm,
  left=2mm,right=2mm,top=1mm,bottom=1mm,
  drop fuzzy shadow 
]
\begin{restatable}{theorem}{appthmavgmain}
\label{app thm:avg-main}
Let $\A\in\R^{\xnum\times\ynum}$ have i.i.d.\
$\mathrm{Unif}[-1,1]$ entries. Fix any $\delta \in (0 , 1/ \ynum)$,
$\scparam$-strongly convex separable regularizer $\reg$
with curvature bound
$M=\sup_{a\in[\delta,1-\delta]}\theta''(a)$, and
step size
\begin{align*}
    \learn \;\leq\; \frac{\scparam}{\cnorm \sup_{u\in[-1,1]^{\ynum}}\dnorm{u}}
\Bigl(\frac{1}{\ynum}-\delta\Bigr).
\end{align*}
Then with probability at least
\begin{align*}
    1-\frac{\xnum!\ynum!}{(\xnum+\ynum-1)!}-\frac{1}{\xnum\ynum},
\end{align*}
there exists an \emph{alternating trap strategy}
$\xd(t)$ such that for every horizon $T$,
\[
\rdisc{\xd(t)}{0}{T}
\;\ge\;
T\,\val \;+\; \learn T\cdot C_{\A}^{\reg},
\qquad
C_{\A}^{\reg}
\;\defeq\;
\frac{1}{2M(\xnum\ynum)^6}.
\]
\end{restatable}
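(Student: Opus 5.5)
We follow the five-step blueprint sketched in the main text and make each ingredient quantitative.

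\textbf{Structural events and their probabilities.} We first isolate two events on $\A$. The first is the support-separation event $\eventtwo$. It asks for a max-min $\xne$ with two distinct best responses $i\neq j\in\brep(\xne)$ and a row $\ell\in\supp(\xne)$ with $\A_{\ell i}\neq\A_{\ell j}$. Since the entries are continuous, the game is almost surely nondegenerate. The equilibrium then has supports of equal size $k$, and the square subgame on them is invertible. The only way $\eventtwo$ can fail is if every max-min strategy has a unique best response. In that case the learner's equilibrium is pure, i.e.\ the game has a pure saddle point. A classical count for i.i.d.\ continuous matrices bounds this probability by $\xnum!\ynum!/(\xnum+\ynum-1)!$. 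If $\brep(\xne)$ has at least two elements, distinct columns almost surely differ on every row of $\supp(\xne)$, so $\ell$ exists.

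The second event, $\eventgap$, asks that $|\A_{pq}-\A_{pq'}|\ge(\xnum\ynum)^{-3}$ for all $p$ and all $q\neq q'$. A union bound over the at most $\xnum\ynum^2$ pairs, each with probability $\lesssim(\xnum\ynum)^{-3}$, gives failure probability at most $1/(\xnum\ynum)$. A union bound over the two events then yields the stated success probability.

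\textbf{Construction and two-round decomposition.} Fix $\xne,i,j,\ell$. Set $x'=\xne+\epsilon(e_\ell-\xne)$ and $x''=\xne-\epsilon(e_\ell-\xne)$, with $\epsilon\le 1$ chosen so that $x''\ge0$. The natural choice is $\epsilon=\xne_\ell/(1-\xne_\ell)$ or a fraction of it. Then $(x'+x'')/2=\xne$. Moreover $v'_i-v'_j=\epsilon(\A_{\ell i}-\A_{\ell j})$, using $v_i=v_j=\val$, and $x''$ flips the sign.

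The optimizer plays $x'$ on even rounds and $x''$ on odd rounds. Write $y_{2s}=\choicemap_\reg(-2\learn s\vv)$; this holds because the score after an even number of rounds is $-2s\vv$. The payoff over the pair $(2s,2s+1)$ is
\[
\inner{\vv',y_{2s}}+\inner{2\vv-\vv',y_{2s+1}}\ge 2\val+\inner{\vv',y_{2s}}-\inner{\vv',y_{2s+1}},
\]
since $\inner{\vv,y}\ge\val$ for every $y$. If $T$ is odd, the final round contributes at least $\val$ up to an $O(\learn)$ term, which we absorb. Equivalently, we prove the bound for even $T$.

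\textbf{Variance identity and curvature control.} We interpolate $y(r)=\choicemap_\reg(-2\learn s\vv-r\learn\vv')$ for $r\in[0,1]$. On the active set, differentiating the KKT system $\theta'(y_k)=\kkt+z_k$ gives
\[
\dot y_k=\frac{\dot\kkt-\learn v'_k}{\theta''(y_k)},\qquad \sum_k\dot y_k=0 .
\]
This yields $\frac{d}{dr}\inner{\vv',y}=-\learn W_r\Var_{\pi_r}(v'_I)$. Restricting the variance to the two atoms $i,j$ gives the harmonic bound $(v'_i-v'_j)^2/(\theta''(y_i)+\theta''(y_j))$.

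The curvature bound is where the step-size condition enters. At $r=0$, coordinates $i,j\in\brep(\xne)$ carry the largest mass, so $y_i,y_j\ge1/\ynum$. This follows exactly as in Lemma~\ref{lem:bound-for-cont-ftrl}, because $v_i=v_j=\min_k v_k$. Along the segment, $\choicemap_\reg$ is $(1/\scparam)$-Lipschitz from $\dnorm{\cdot}$ to $\norm{\cdot}$ by Proposition~\ref{prop: alpha strongly 1/alpha smooth}. Converting to the sup norm through $\cnorm$, and using $\vv'\in[-1,1]^\ynum$, we get
\[
\norm{y(r)-y(0)}_\infty\le\frac{\cnorm\learn\sup_{u\in[-1,1]^{\ynum}}\dnorm{u}}{\scparam}\le\frac1\ynum-\delta .
\]
Hence $y_i(r),y_j(r)\ge\delta$. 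Also $y_i(r),y_j(r)\le1-\delta$, because each of them is at most $1$ minus the other. So both curvatures are at most $M$, and
\[
\inner{\vv',y_{2s}}-\inner{\vv',y_{2s+1}}\ge\frac{\learn(v'_i-v'_j)^2}{2M}.
\]
Points where coordinates become inactive occur only at finitely many $r$, or not at all, since $i,j$ stay active. There the identity still holds as an inequality, since inactive coordinates simply drop out of the variance.

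\textbf{Main obstacle.} The hardest step is lower-bounding $\epsilon$, i.e.\ $\xne_\ell$, polynomially. The anti-degeneracy event only controls $|\A_{\ell i}-\A_{\ell j}|\ge(\xnum\ynum)^{-3}$. We would therefore need either to choose $\ell$ to maximize $\xne_\ell$, which gives $\xne_\ell\ge1/\xnum$ if separation holds on the heaviest row, or to strengthen $\eventgap$ to cover this.

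The plan is to require separation on all support rows, which holds almost surely, and to take $\ell=\argmax_{k}\xne_k$. Then $\epsilon\ge1/\xnum$, and
\[
(v'_i-v'_j)^2\ge\xnum^{-2}(\xnum\ynum)^{-6}.
\]
We would then tune the $\eventgap$ threshold so that the stated constant $1/(2M(\xnum\ynum)^6)$ emerges. Finally, summing over the $T/2$ pairs gives the claimed bound $T\val+\learn T\,C_\A^\reg$.
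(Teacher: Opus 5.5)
Your proposal is, in all essentials, the paper's proof. It uses the same events $\eventtwo$ (via the Karlin count of pure saddle points) and $\eventgap$. It uses the same perturbation: the paper takes $\epsilon=\xne_\ell$ with $\ell=\argmax_k\xne_k$, which lies in your admissible range. The two-round bound, the curvature-weighted variance identity, and the Lipschitz control keeping $y_i(r),y_j(r)\in[\delta,1-\delta]$ are also the same. Two loose ends remain, and neither needs a new idea.

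\textbf{The constant.} What you call the main obstacle is not one. Your $\eventgap$ already quantifies over every row, so the heaviest row $\ell$ is automatically separated and no almost-sure argument is needed. The actual problem is that your threshold $(\xnum\ynum)^{-3}$ is too small by a factor of order $\xnum$. With it you only get a per-round gain of $\learn/(4M\xnum^2(\xnum\ynum)^6)$, well short of $\learn C_{\A}^{\reg}$. The union bound has room to spare, and the fix is the paper's choice:
\begin{itemize}
\item Take $\gamma=2/(\xnum^2\ynum^3)$. The failure probability is then at most $\xnum\binom{\ynum}{2}\gamma=(\ynum-1)/(\xnum\ynum^2)\le 1/(\xnum\ynum)$.
\item Combined with $\xne_\ell\ge 1/\xnum$, this gives $(v'_i-v'_j)^2\ge 4/(\xnum\ynum)^6$.
\item Hence the gain is $2\learn/(M(\xnum\ynum)^6)$ per pair of rounds, i.e.\ twice $\learn C_{\A}^{\reg}$ per round.
\end{itemize}

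\textbf{Odd horizons.} The odd-$T$ step fails as written. If the last round is played with $x'$, its payoff $\inner{\vv',y_{T-1}}$ can fall below $\val$ by a constant, not by $O(\learn)$. For instance, at $T=1$ the iterate $y_0$ is uniform, and $\inner{\vv',y_0}$ involves the mean of row $\ell$. Odd $T$ is also not equivalent to even $T$, since the statement is claimed for every horizon. The paper's fix is to play $\xne$ in the final round, which yields at least $\val$ exactly. The factor-$2$ slack above then turns $\learn(T-1)/(M(\xnum\ynum)^6)$ into $\learn T\,C_{\A}^{\reg}$ for $T\ge 2$.

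\textbf{Support changes.} A minor point: you do not need support changes along the path to be finite. Since $r\mapsto y(r)$ is Lipschitz, the identity holds for almost every $r$ by Rademacher's theorem. That suffices for the fundamental theorem of calculus.
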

{\footnotesize\emph{Remark.} This stepsize scaling is standard for a gradient method: $\eta=O(\mu/L)$ (strong convexity over smoothness).}
\end{tcolorbox}
We prove Theorem~\ref{app thm:avg-main} by following the same sequence of steps as the proof sketch in Section~\ref{subsec:alternating}.

\paragraph{Step 1: Two high-probability events on $\A$.} We condition on two events that hold with high probability under the i.i.d.\ $\mathrm{Unif}[-1, 1]$ model. The first is the non-identification-on-support event $\eventtwo$ from Assumption \ref{assump:non-identify-on-support}, whose probability is lower bounded in Lemma \ref{lem:assump:non-identify-on-support-holds-w.h.p.}. The second is the entry-gap anti-degeneracy event $\mathcal{E}_{\text{gap}}$, proved in Lemma \ref{lem:random-A-union-bound}.

\subparagraph{Non-identification-on-support event}
We rely on the following structural event, adapted from \citet[Assumption 1]{assos2024maximizing}. It ensures that, at some max-min optimizer strategy, the learner has at least two best responses that can be distinguished on the optimizer's support.
\begin{assumption}\label{assump:non-identify-on-support}
    There exists a max-min strategy $\xne$ for the optimizer and two learner best responses
    $i, j\in\brep(\xne)$ such that they do not identify on $\supp(\xne)$.
    Equivalently, there exists a support $\ell \in\supp(\xne)$ such that
    $e_\ell^\top \A e_i =\A_{\ell i}\neq\A_{\ell j} =e_\ell^\top \A e_j.$
\end{assumption}
Define the event $\eventtwo := \{\A:\ \text{Assumption~\ref{assump:non-identify-on-support} holds}\}$.
The next lemma shows that this event holds with high probability for random matrix games.
\begin{lemma} \label{lem:assump:non-identify-on-support-holds-w.h.p.}
Let $\A\in\R^{\xnum\times \ynum}$ have i.i.d.\ continuous entries $\A_{ij}\sim \mathrm{Unif}[-1,1]$.
Then
\[
\prob\!\big[\text{Assumption~\ref{assump:non-identify-on-support} holds}\big]
\;\ge\;
1-\frac{\xnum!\,\ynum!}{(\xnum+\ynum-1)!}.
\]
\end{lemma}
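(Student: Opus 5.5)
The plan is to show that the event $\eventtwo$ contains the intersection of two simpler events. The first is that all entries of $\A$ are pairwise distinct, which has probability one since the entries are i.i.d.\ continuous. The second is that $\A$ has no pure-strategy saddle point. The probability of a pure saddle point is exactly the quantity $\frac{\xnum!\,\ynum!}{(\xnum+\ynum-1)!}$, a classical fact for random matrix games with continuous i.i.d.\ entries. I would reprove it for completeness, so the lemma reduces to a deterministic implication plus one counting computation.

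\emph{Deterministic step.} Assume all entries are distinct and there is no pure saddle point. Fix any max-min strategy $\xne$ and any min-max strategy $\yopt$ of the learner. By complementary slackness, $\supp(\yopt)\subseteq\brep(\xne)$. Suppose for contradiction that $\abs{\brep(\xne)}=1$, say $\brep(\xne)=\{i\}$. Then $\yopt=e_i$ is pure, and $(\xne,e_i)$ is an equilibrium. Hence $\supp(\xne)\subseteq\argmax_{\ell}\A_{\ell i}$. Because entries are distinct, this argmax is a single row $\ell$, so $\xne=e_\ell$. Then $(e_\ell,e_i)$ is a pure saddle point, which is a contradiction. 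Therefore $\brep(\xne)$ contains two distinct indices $i\neq j$. For any $\ell\in\supp(\xne)$, distinctness of entries gives $\A_{\ell i}\neq\A_{\ell j}$, which is exactly Assumption~\ref{assump:non-identify-on-support}.

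\emph{Probabilistic step.} Cell $(\ell,i)$ is a pure saddle point iff $\A_{\ell i}$ is the maximum of column $i$ and the minimum of row $\ell$. This involves the $\xnum+\ynum-1$ distinct i.i.d.\ entries in the ``cross'' through $(\ell,i)$. By exchangeability of their relative order, the probability that $\A_{\ell i}$ lies above the $\xnum-1$ other column entries and below the $\ynum-1$ other row entries is $\frac{(\xnum-1)!\,(\ynum-1)!}{(\xnum+\ynum-1)!}$: the center must occupy rank $\xnum$, and the two groups may be ordered arbitrarily on either side. A union bound over the $\xnum\ynum$ cells then gives
\[
\probof{\text{pure saddle point}}\le\frac{\xnum!\,\ynum!}{(\xnum+\ynum-1)!}.
\]
Equality in fact holds, since a.s.\ at most one saddle exists: two saddles would share the value of the game, contradicting distinct entries. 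Only the union bound is needed here. Combining this with the deterministic step yields the claim.

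\emph{Main obstacle.} The only delicate point is the equilibrium argument that $\abs{\brep(\xne)}\ge 2$. It relies on two facts: a min-max strategy of the learner is supported on best responses to every max-min strategy, and $\xne$ must be a best response to $\yopt$. Both are standard minimax/complementary-slackness facts and should be stated explicitly. The counting argument itself is routine.
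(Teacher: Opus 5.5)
Your proposal is correct and follows essentially the same route as the paper. Both show that (no pure saddle point) $\cap$ (all entries distinct) implies the assumption, by ruling out a pure equilibrium strategy via uniqueness of the column maximizer, and then bound the probability of a pure saddle point by $\frac{n!\,m!}{(n+m-1)!}$. The differences are cosmetic: you argue $|\mathrm{br}(x^\star)|\ge 2$ directly, whereas the paper argues $|\mathrm{supp}(y)|\ge 2$ for an equilibrium $y$. You also rederive the saddle-point bound via exchangeability and a union bound instead of citing Karlin's exact formula, which makes your argument self-contained.
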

The assumption holds with high probability as $\xnum,\ynum$ grow.
\begin{proof}
Define the events
\begin{align*}
\Pi_{\xnum\ynum} &:= \{\A:\ \text{the game has no pure Nash equilibrium}\},\\
\mathcal D_{\xnum\ynum} &:= \{\A:\ \exists (i,j)\neq(i',j')\text{ with }\A_{ij}=\A_{i'j'}\}.
\end{align*}
\begin{claim}\label{claim:d-subset-measure-zero}
    $\prob(\mathcal D_{\xnum\ynum})=0$.
\end{claim}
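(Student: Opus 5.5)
The plan is to realize $\mathcal D_{\xnum\ynum}$ as a finite union of null sets. For each ordered pair of distinct index pairs $(i,j)\neq(i',j')$, consider the event $E_{(i,j),(i',j')}\defeq\{\A:\ \A_{ij}=\A_{i'j'}\}$. Then
\[
\mathcal D_{\xnum\ynum}\;=\;\bigcup_{(i,j)\neq(i',j')} E_{(i,j),(i',j')},
\]
and there are at most $(\xnum\ynum)^2$ such pairs. So by countable (here finite) subadditivity it suffices to show $\prob(E_{(i,j),(i',j')})=0$ for each fixed pair.

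To bound a single event, I will use that $\A_{ij}$ and $\A_{i'j'}$ are distinct entries, hence independent, each with density $\tfrac12\mathbf 1_{[-1,1]}$. Conditioning on $\A_{i'j'}=a$ and applying Fubini gives
\[
\prob(\A_{ij}=\A_{i'j'})\;=\;\int_{-1}^{1}\prob(\A_{ij}=a)\,\tfrac12\,da\;=\;0,
\]
since a continuous random variable puts zero mass on any single point. Equivalently, the joint law of $(\A_{ij},\A_{i'j'})$ is absolutely continuous on $\R^2$, and the diagonal $\{u=w\}$ has Lebesgue measure zero.

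Combining the two steps, $\prob(\mathcal D_{\xnum\ynum})\le\sum_{(i,j)\neq(i',j')}0=0$.

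There is no real obstacle here. The only point needing care is independence, which holds precisely because the index pairs differ; this is exactly where the hypothesis $(i,j)\neq(i',j')$ enters.
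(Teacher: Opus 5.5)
Your proof is correct and follows the same route as the paper: you write $\mathcal D_{\xnum\ynum}$ as a finite union over distinct index pairs and use that two independent continuous entries coincide with probability zero. Your Fubini computation just spells out the step the paper asserts in one line.
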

\begin{proof}
    Since the entries are i.i.d.\ continuous, for any fixed distinct index pairs
    $(i,j)\neq(i',j')$ we have $\prob(\A_{ij}=\A_{i'j'})=0$.
    Taking a finite union over all such pairs yields $\prob(\mathcal D_{\xnum\ynum})=0$.
\end{proof}
\begin{claim} \label{claim:no-pure-intersect-d-subset-assump}
    $\Pi_{\xnum\ynum}\cap \mathcal D_{\xnum\ynum}^c \subseteq \eventtwo$.
\end{claim}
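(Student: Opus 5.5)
The plan is to show directly that on $\Pi_{\xnum\ynum}\cap\mathcal D_{\xnum\ynum}^c$ every max-min strategy $\xne$ has at least two learner best responses. Once this is shown, the support-separation condition is automatic. Fix any max-min strategy $\xne$ of the optimizer and any min-max strategy $\yopt$ of the learner; both exist by von Neumann's minimax theorem. I will use the standard complementary-slackness facts for zero-sum games:
\begin{itemize}
\item every action in $\supp(\yopt)$ is a best response to $\xne$, that is, $\supp(\yopt)\subseteq\brep(\xne)$;
\item every row in $\supp(\xne)$ maximizes $e_k^\top\A\yopt$, and this maximum equals $\val$.
\end{itemize}

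\textbf{Step 1: at least two best responses.} Suppose for contradiction that $\brep(\xne)=\{i\}$ is a singleton. Then $\supp(\yopt)\subseteq\{i\}$ forces $\yopt=e_i$. The second fact now says that every $k\in\supp(\xne)$ attains $\max_{k'}\A_{k'i}$. Since all entries are distinct (we are on $\mathcal D^c_{\xnum\ynum}$), this maximum is attained by a unique row $k$, hence $\xne=e_k$. So row $k$ is a best response of the optimizer to column $i$. Moreover, $i\in\brep(e_k)$, so column $i$ is a best response of the learner to row $k$. Hence $(k,i)$ is a pure Nash equilibrium, contradicting $\A\in\Pi_{\xnum\ynum}$. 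Therefore $|\brep(\xne)|\ge 2$.

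\textbf{Step 2: support separation.} Pick distinct $i\neq j$ in $\brep(\xne)$ and any $\ell\in\supp(\xne)$; the support is nonempty since $\xne$ is a probability vector. The entries $\A_{\ell i}$ and $\A_{\ell j}$ sit at distinct index pairs $(\ell,i)\neq(\ell,j)$. On $\mathcal D^c_{\xnum\ynum}$ they are therefore different, so $\A_{\ell i}\neq\A_{\ell j}$. This is exactly Assumption~\ref{assump:non-identify-on-support}, so $\A\in\eventtwo$.

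The only delicate point is Step 1. The complementary-slackness facts must be applied carefully, in particular the fact that the support of the learner's min-max strategy lies inside the best-response set of \emph{every} max-min strategy. Genericity then pins both players down to pure strategies. Everything else is immediate.
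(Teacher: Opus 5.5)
Your proposal is correct and is essentially the paper's argument. Both use $\supp(y)\subseteq\brep(x)$ at an equilibrium, and both use distinctness of entries to show that a pure learner equilibrium strategy would force a pure Nash equilibrium; you phrase this as a contradiction to $|\brep(\xne)|=1$, while the paper shows directly that $y$ is mixed. Both then read off $\A_{\ell i}\neq\A_{\ell j}$ on $\mathcal D_{\xnum\ynum}^c$ for any $\ell\in\supp(\xne)$.
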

\begin{proof}
    Fix $A\in \Pi_{\xnum\ynum}\cap \mathcal D_{\xnum\ynum}^c$ and let $(x,y)$ be any Nash equilibrium.

We claim $y$ cannot be pure. Suppose for contradiction that $y=e_j$.
Because $A\in \mathcal D_{\xnum\ynum}^c$, the entries in column $j$ are all distinct, hence the maximizing row $i^\star=\argmax_i A_{ij}$
is unique. Therefore any best response of the row player to $y=e_j$ must be $x=e_{i^\star}$.
Since $(x,y)$ is a Nash equilibrium, $y=e_j$ must also be a best response to $x=e_{i^\star}$.
Thus $(e_{i^\star},e_j)$ is a pure Nash equilibrium, contradicting $A\in\Pi_{\xnum\ynum}$.
Hence $y$ is mixed and $|\supp(y)|\ge 2$.

Pick two distinct indices $i_1\neq i_2$ in $\supp(y)$.
Then $e_{i_1},e_{i_2}\in \brep(x)$.
Choose any $k\in \supp(x)$.
Because $A\in \mathcal D_{\xnum\ynum}^c$, all entries of $A$ are distinct, so in particular $A_{k i_1}\neq A_{k i_2}.$
Thus the two best responses $e_{i_1}$ and $e_{i_2}$ do not identify on $\supp(x)$, which is exactly
Assumption~\ref{assump:non-identify-on-support}. Hence $A\in\eventtwo$, proving
$\Pi_{\xnum\ynum}\cap \mathcal D_{\xnum\ynum}^c \subseteq \eventtwo$.
\end{proof}
A classical formula for i.i.d.\ continuous payoff matrices (see \citet{karlin1959mathematical}, p.~79)
gives
\[
\prob(\text{there exists a pure Nash equilibrium})
=\frac{\xnum!\,\ynum!}{(\xnum+\ynum-1)!},
\qquad\text{so }
\prob(\Pi_{\xnum\ynum})=1-\frac{\xnum!\,\ynum!}{(\xnum+\ynum-1)!}.
\]
Using Claim \ref{claim:no-pure-intersect-d-subset-assump} $\Pi_{\xnum\ynum}\cap \mathcal D_{\xnum\ynum}^c \subseteq \eventtwo$ and Claim \ref{claim:d-subset-measure-zero} $\prob(\mathcal D_{\xnum\ynum})=0$,
\[
\prob(\eventtwo)
\;\ge\;
\prob(\Pi_{\xnum\ynum}\cap \mathcal D_{\xnum\ynum}^c)
\;=\;
\prob(\Pi_{\xnum\ynum})-\prob(\Pi_{\xnum\ynum}\cap \mathcal D_{\xnum\ynum})
\;\ge\;
\prob(\Pi_{\xnum\ynum})-\prob(\mathcal D_{\xnum\ynum})
\;=\;
\prob(\Pi_{\xnum\ynum}),
\]
which yields the stated lower bound.
\end{proof}

\subparagraph{Anti-degeneracy gap event for entries}
We isolate an anti-degeneracy event ensuring that no two entries in the same row are ``too close''. This lets us lower bound the distinguishing gap $\A_{\ell i} - \A_{\ell j}$ uniformly whenever $\A_{\ell i} > \A_{\ell j}$.

\begin{lemma} \label{lem:random-A-union-bound}
Let $\A \in \R^{\xnum \times \ynum}$ have i.i.d.\ entries $\A_{\ell i} \sim \mathrm{Unif}[-1, 1]$. Define
\[
\gamma := \frac{2}{\xnum^2 \ynum^3}, \qquad
\eventgap := \left\{\A: \min_{\ell \in [\xnum]} \min_{\substack{i,j \in [\ynum] \\ i \ne j}} \abs{\A_{\ell i} - \A_{\ell j}} \ge \gamma \right\}.
\]
Then $\probof{\eventgap} \ge 1 - \frac{1}{\xnum\ynum}$.
\end{lemma}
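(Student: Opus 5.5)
The plan is a union bound over all within-row pairs, combined with an explicit bound on the probability that a single pair of independent uniforms lands too close together.

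\textbf{Single-pair bound.} Fix a row $\ell\in[\xnum]$ and an unordered pair $\{i,j\}$ with $i\neq j$. The entries $\A_{\ell i}$ and $\A_{\ell j}$ are independent $\mathrm{Unif}[-1,1]$, so the difference $D=\A_{\ell i}-\A_{\ell j}$ has the triangular density $f_D(s)=(2-|s|)/4$ on $[-2,2]$. This density is bounded above by $1/2$. Therefore
\[
\probof{\abs{D}<\gamma}\;\le\;2\gamma\cdot\tfrac12\;=\;\gamma .
\]
The exact value is $\gamma-\gamma^2/4$, but the upper bound $\gamma$ is all that is needed.

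\textbf{Union bound.} The complement of $\eventgap$ is the event that some row contains a pair of entries at distance less than $\gamma$. There are $\xnum\binom{\ynum}{2}\le \xnum\ynum^2/2$ such (row, pair) combinations. Summing the single-pair bound over all of them gives
\[
\probof{\eventgap^c}\;\le\;\frac{\xnum\ynum^2}{2}\cdot\gamma .
\]
Substituting $\gamma=2/(\xnum^2\ynum^3)$ yields
\[
\probof{\eventgap^c}\;\le\;\frac{\xnum\ynum^2}{2}\cdot\frac{2}{\xnum^2\ynum^3}\;=\;\frac{1}{\xnum\ynum},
\]
which is exactly the claimed bound $\probof{\eventgap}\ge 1-\frac{1}{\xnum\ynum}$.

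\textbf{Degenerate case.} When $\ynum=1$ there are no pairs with $i\neq j$, so the minimum over the empty set is $+\infty$. Then $\eventgap$ holds surely and the bound is trivial.

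\textbf{Where care is needed.} The only step requiring attention is bounding the density of the difference by $1/2$. The constant in $\gamma$ leaves slack, so even the cruder bound $\probof{\abs{D}<\gamma}\le 2\gamma\cdot\tfrac12$ suffices. No dependence between different pairs needs to be handled, because the union bound ignores it.
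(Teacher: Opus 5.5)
Your proposal is correct and follows the paper's proof: you bound each within-row pair by $\gamma$ and take a union bound over the $\xnum\binom{\ynum}{2}\le \xnum\ynum^2/2$ triples, which gives exactly $1/(\xnum\ynum)$. The only cosmetic difference is in the single-pair step. You bound the triangular density of $\A_{\ell i}-\A_{\ell j}$ by $1/2$, whereas the paper conditions on one entry and uses that the other is uniform on an interval of length $2$; both give the same bound $\gamma$.
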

\begin{proof}
Fix $\ell \in [\xnum]$ and $i< j$ and set $X := \A_{\ell i},\, Y := \A_{\ell j}$. Then $X,\, Y$ are independent $\mathrm{Unif}[-1, 1]$.
Consider the event $\mathcal{E}_{\ell ij} := \{\abs{X - Y} < \gamma\}$. Conditioning on $Y = y$,
\[
\probof{\mathcal{E}_{\ell ij} \mid Y = y}
= \prob(y-\gamma < X < y + \gamma \mid Y = y)
\le \gamma,
\]
since $X$ is uniform on an interval of length $2$ and $(y-\gamma, y+\gamma)\cap[-1,1]$ has length at most $2\gamma$. Taking expectation over $Y$ gives $\prob(\mathcal{E}_{\ell ij}) \le \gamma$.

There are at most $\xnum\binom{\ynum}{2}\le \xnum\cdot \frac{\ynum^2}{2}$ unordered triples $(\ell,i,j)$ with $i<j$. By a union bound,
\[
\probof{\exists(\ell,i<j): \abs{\A_{\ell i} - \A_{\ell j}} < \gamma}
\le \xnum\binom{\ynum}{2}\cdot \gamma
\le \xnum\cdot \frac{\ynum^2}{2}\cdot \gamma
= \frac{1}{\xnum\ynum}.
\]

Therefore, with probability at least $1 - \frac{1}{\xnum\ynum}$, for all $\ell\in[\xnum]$ and all $i\ne j$,
$\abs{\A_{\ell i} - \A_{\ell j}}\ge \gamma$, i.e., $\eventgap$ holds.
\end{proof}

Define the good event $\mathcal{E} \defeq  \eventtwo \cap \eventgap$. By Lemmas~\ref{lem:assump:non-identify-on-support-holds-w.h.p.} and \ref{lem:random-A-union-bound}, we have the union bound
\begin{align*}
    \probof{\mathcal{E}} = 1 - \probof{\mathcal{E}^c}
    \ge 1 - \probof{\mathcal{E}^c}
    \ge 1 - \probof{\eventtwo^c} - \probof{\eventgap^c} \ge 1 - \frac{\xnum!\,\ynum!}{(\xnum+\ynum-1)!} - \frac{1}{\xnum\ynum},
\end{align*}
which is the stated success probability in Theorem~\ref{app thm:avg-main}.

\paragraph{Step 2: An alternating perturbation construction.}
We assume $\A\in\mathcal{E}$. On the non-identification event $\eventtwo$, fix an optimizer max-min strategy $\xne$ and two distinct learner best responses $i, j\in\brep(\xne)$. By definition of $\eventtwo$, there exists a support $\ell\in\supp(\xne)$ such that $\A_{\ell i }\neq \A_{\ell j}.$
On the gap event $\eventgap$, this separation is uniformly bounded away from $0$ on the support: for every $\ell\in\supp(\xne)$, $\abs{\A_{\ell i}-\A_{\ell j}}\ge \gamma$.
Let's pick $\ell \in \supp(\xne)$ be a maximizer of $\xne$ over its support, i.e., $\xne_\ell \;=\; \max_{p\in\supp(\xne)} \xne_p.$
Define two perturbed strategies
\begin{equation}\label{eq:push-to-ek}
x' := (1-\xne_{\ell})\xne + \xne_{\ell} e_\ell,
\qquad
x'' := (1+\xne_{\ell})\xne - \xne_{\ell} e_\ell.
\end{equation}
Since \(x'\) is a convex combination of \(\xne\) and \(e_\ell\), we have
\(x'\in\Delta^n\). For \(x''\), note that
\(x''_\ell=(\xne_{\ell})^2\ge 0\), \(x''_i=(1+\xne_{\ell})\xne_i \ge 0\)
for \(i\neq \ell\), and \(\sum_i x''_i=1\). Hence \(x''\in\xsp\).
The next lemma shows some useful properties of $x'$ and $x''$.
\begin{lemma}\label{lem:push-to-ek-properties}
    The construction \eqref{eq:push-to-ek} has the following properties:
    \begin{enumerate}
    \item $(x'+x'')/2=\xne.$
    
    \item ${x'}^\top \A e_i > {x'}^\top \A e_j$ and ${x''}^\top \A e_i < {x''}^\top \A e_j$.
    \end{enumerate}
\end{lemma}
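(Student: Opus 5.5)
Both properties follow from linearity of $\x\mapsto \A^\top \x$ together with the fact that $i$ and $j$ are both best responses to $\xne$. I will not need any of the FTRL machinery here.

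For property 1, I will add the two definitions in \eqref{eq:push-to-ek} directly:
\[
x'+x''=\bigl[(1-\xne_\ell)+(1+\xne_\ell)\bigr]\xne+(\xne_\ell-\xne_\ell)e_\ell=2\xne .
\]

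For property 2, let $\vv\defeq \A^\top\xne$. Since $i,j\in\brep(\xne)$, we have $\vv_i=\vv_j=\val$. Expanding by linearity gives
\begin{align*}
{x'}^\top\A e_i-{x'}^\top\A e_j &= (1-\xne_\ell)(\vv_i-\vv_j)+\xne_\ell(\A_{\ell i}-\A_{\ell j}) = \xne_\ell(\A_{\ell i}-\A_{\ell j}),\\
{x''}^\top\A e_i-{x''}^\top\A e_j &= (1+\xne_\ell)(\vv_i-\vv_j)-\xne_\ell(\A_{\ell i}-\A_{\ell j}) = -\xne_\ell(\A_{\ell i}-\A_{\ell j}).
\end{align*}
So the two differences are exact negatives of each other. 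It remains to show that $\xne_\ell(\A_{\ell i}-\A_{\ell j})>0$.

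The first factor is positive because $\ell\in\supp(\xne)$, so $\xne_\ell>0$. For the second factor, the main care point is that $\ell$ was chosen as the maximizer of $\xne$ on its support, not as the index supplied by Assumption~\ref{assump:non-identify-on-support}. The assumption therefore does not by itself guarantee $\A_{\ell i}\neq\A_{\ell j}$ for this particular $\ell$. I will instead use the event $\eventgap$ from Lemma~\ref{lem:random-A-union-bound}, which gives $|\A_{\ell i}-\A_{\ell j}|\ge\gamma>0$ for every row $\ell$ and every $i\neq j$, and in particular for the chosen $\ell$. (Alternatively, one could note that $\mathcal D_{\xnum\ynum}^c$ holds almost surely.)

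Finally, the labels $i$ and $j$ are symmetric in the construction, so I may relabel them so that $\A_{\ell i}>\A_{\ell j}$. With that labeling, the first display is strictly positive and the second strictly negative, which is exactly property 2. As a by-product, this gives the quantitative gap $v'_i-v'_j=\xne_\ell(\A_{\ell i}-\A_{\ell j})\ge \xne_\ell\gamma$ used later in Step 5. I expect no real obstacle here beyond recording the relabeling explicitly and justifying the nonzero gap via $\eventgap$ rather than via the assumption.
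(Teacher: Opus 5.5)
Your proof is correct and follows the paper's argument: direct expansion gives property 1, and for property 2 you use the identity ${x'}^\top\A e_i-{x'}^\top\A e_j=\xne_\ell(\A_{\ell i}-\A_{\ell j})=-\bigl({x''}^\top\A e_i-{x''}^\top\A e_j\bigr)$ together with a WLOG relabeling so that $\A_{\ell i}>\A_{\ell j}$. Your remark about $\ell$ is well taken: the paper's proof cites Assumption~\ref{assump:non-identify-on-support} for $\A_{\ell i}\neq\A_{\ell j}$, but for the maximizing index $\ell$ this nonzero gap really comes from $\eventgap$, which the paper invokes just before the construction, exactly as you do.
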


\begin{proof}
\noindent\textit{1. } Expanding the definition of $x'$ and $x''$,
\begin{align*}
    \frac{x'+x''}{2}
=
\frac{\bigl((1-\xne_\ell)\xne+\xne_\ell e_\ell\bigr)+\bigl((1+\xne_\ell)\xne-\xne_\ell e_\ell\bigr)}{2}
=\xne.
\end{align*}

\noindent\textit{2. } By Assumption~\ref{assump:non-identify-on-support}, $\A_{\ell i}\neq \A_{\ell j}$. WLOG, swap $i$ and $j$ so that $\A_{\ell i}>\A_{\ell j}$. Since $i, j \in \brep(\xne)$, we have $\xne^{\top} \A e_i=\xne^{\top} \A e_j$. Expanding,
\[
{x'}^\top \A e_i
=
\bigl((1-\xne_\ell)\xne+\xne_\ell e_\ell\bigr)^\top \A e_i
=
(1-\xne_\ell)\,\xne^\top \A e_i+\xne_\ell \A_{\ell i},
\]
and similarly ${x''}^\top \A e_j = (1+\xne_\ell)\,\xne^\top \A e_j-\xne_\ell \A_{\ell j}$. Subtracting and using $\xne^\top \A e_i=\xne^\top \A e_j$,
\[
{x'}^\top \A e_i-{x'}^\top \A e_j
=
\xne_\ell(\A_{\ell i}-\A_{\ell j})
>0.
\]
The same expansion yields
\[
{x''}^\top \A e_i-{x''}^\top \A e_j
=
-\xne_\ell(\A_{\ell i}-\A_{\ell j})
<0,
\]
so ${x'}^\top \A e_i>{x'}^\top \A e_j$ and ${x''}^\top \A e_i<{x''}^\top \A e_j$.
\end{proof}
We use $x'$ and $x''$ to define an alternating optimizer strategy whose time-average equals the max-min strategy $\xne$ while flipping the relative advantage between $i$ and $j$. Define
\begin{align}
    \xd(t):=
\begin{cases}
x', & t \text{ even},\\
x'', & t \text{ odd}.
\end{cases}
    \label{eq:discrete-time-xd-def-alternating}
\end{align}
Recall that learner plays FTRL with \eqref{eq:choice-map}
\begin{align*}
    y_t
=
\choicemap_{\reg}\!\Bigl(\eta \sum_{\tau=0}^{t-1} \B^\top \xd(\tau)\Bigr)
=
\argmax_{y\in\ysp}\Bigl\{\Bigl\langle \eta \sum_{\tau=0}^{t-1} \B^\top \xd(\tau),\, y \Bigr\rangle - \reg(y)\Bigr\}.
\end{align*}
Define $\vv'=\A^\top x'$ and $\vv=\A^\top \xne$. Since $(x'+x'')/2=\xne$, we have $\A^\top x'' = 2v - v'.$
For $t=2s$, using $\A=-\B$,
\[
y_{2s}   = \choicemap_{\reg}\!\bigl(-2\eta s\,v\bigr),
\qquad
y_{2s+1} = \choicemap_{\reg}\!\bigl(-2\eta s\,v-\eta v'\bigr).
\]
Over the pair of rounds $(2s,2s+1)$, the optimizer's payoff is
\[
{x'}^\top \A y_{2s}+{x''}^\top \A y_{2s+1}
=
\langle v',y_{2s}\rangle+\langle 2v-v',y_{2s+1}\rangle.
\]
Since $\xne$ is max-min, we have $\langle v,y\rangle = \xne^\top \A y \ge \val$ for all $y\in\ysp$, and in particular
$\langle 2v, y_{2s+1}\rangle \ge 2\val$. Therefore,
\begin{align}
    {x'}^\top \A y_{2s}+{x''}^\top \A y_{2s+1}
\ge
2\val+\inner*{\vv',y_{2s}} - \inner*{\vv',y_{2s+1}}.
    \label{eq:two-round-payoff-lower-bound-decomposition-discrete}
\end{align}

\paragraph{Step 3. Interpolation and a variance identity.}

Fix an integer $s\ge 0$ and define the interpolation
\begin{align}\label{eq:learner-interpolation-y(r)-def} \tag{Interpolation}
    y(r) := \choicemap_{\reg}\!\bigl(-2\learn s\,\vv - r\,\learn\, \vv'\bigr),\qquad r\in[0,1].
\end{align}
\begin{claim}\label{claim:path-is-Lipschitz}
    $y(\cdot)$ is $(\learn\|\vv'\|_*/\scparam)$-Lipschitz on $[0,1]$.
\end{claim}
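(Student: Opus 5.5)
The plan is to write the interpolation as a composition: the affine map $r\mapsto z(r)\defeq -2\learn s\,\vv - r\learn\vv'$ feeds the choice map, and the choice map is a gradient, $\choicemap_{\reg}=\grad\regconj$. Lipschitzness of $y(\cdot)$ should then come from smoothness of the conjugate, which is exactly what Proposition~\ref{prop: alpha strongly 1/alpha smooth} provides.

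First I invoke Proposition~\ref{prop: alpha strongly 1/alpha smooth}. Since $\reg$ is $\scparam$-strongly convex with respect to $\norm{\cdot}$, the gradient $\grad\regconj$ is $1/\scparam$-Lipschitz from $(\R^{\ynum},\dnorm{\cdot})$ to $(\R^{\ynum},\norm{\cdot})$. I will use the proposition in that form: dual norm on the input, primal norm on the output.

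Next, for any $r,r'\in[0,1]$ I compute
\[
\norm{y(r)-y(r')}
=\norm{\grad\regconj(z(r))-\grad\regconj(z(r'))}
\le \frac{1}{\scparam}\dnorm{z(r)-z(r')}.
\]
Because $z$ is affine, $z(r)-z(r')=-(r-r')\learn\vv'$. Homogeneity of the dual norm then gives $\dnorm{z(r)-z(r')}=\learn\abs{r-r'}\dnorm{\vv'}$.

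Combining the two displays yields $\norm{y(r)-y(r')}\le(\learn\dnorm{\vv'}/\scparam)\,\abs{r-r'}$, which is the claim.

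The only delicate point is the direction of the norms in Proposition~\ref{prop: alpha strongly 1/alpha smooth}. The proposition as stated asserts Lipschitzness of $\grad\regconj$ without naming the norms explicitly. I will use the standard reading, $\dnorm{\cdot}\to\norm{\cdot}$, which follows from strong convexity. Concretely, if $y=\grad\regconj(z)$ and $y'=\grad\regconj(z')$, then $z-y'$-type optimality conditions together with $\scparam$-strong monotonicity of $\partial\reg$ give
\[
\scparam\norm{y-y'}^2\le\inner{z-z',y-y'}\le\dnorm{z-z'}\,\norm{y-y'}.
\]
Dividing by $\norm{y-y'}$ yields the bound, so I can derive it directly if needed.
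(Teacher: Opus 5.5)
Your proof is correct and is essentially the paper's argument. You write $y(r)=\choicemap_{\reg}(z(r))=\grad\regconj(z(r))$ with $z$ affine in $r$, invoke Proposition~\ref{prop: alpha strongly 1/alpha smooth} for the $1/\scparam$-Lipschitzness of $\grad\regconj$ from $\dnorm{\cdot}$ to $\norm{\cdot}$, and finish by homogeneity of the dual norm. Your norm bookkeeping (together with the strong-monotonicity derivation) is, if anything, cleaner than the paper's display, which writes the primal norm $\norm{(r_1-r_2)\learn\vv'}$ in the intermediate step before switching to $\dnorm{\vv'}$.
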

\begin{proof}
    By Proposition \ref{prop: alpha strongly 1/alpha smooth}, since $\reg$ is $\scparam$-strongly convex with respect to $\norm{\cdot}$, the map $\choicemap_{\reg}$ is $1/\scparam$-Lipschitz. Hence, for any $r_1,r_2\in[0,1]$,
\[
    \norm{y(r_1)-y(r_2)}
    \le \frac{1}{\scparam}\norm{(r_1-r_2)\learn \vv'}
    = \frac{\learn\|\vv'\|_*}{\scparam}\,\abs{r_1-r_2}.
\]
Thus $y(\cdot)$ is $(\learn\|\vv'\|_*/\scparam)$-Lipschitz on $[0,1]$.
\end{proof}
Therefore, $y(\cdot)$ is absolutely continuous and differentiable for almost every $r\in[0,1]$ by Rademacher's theorem. Consequently, $\inner{\vv',y(\cdot)}$ is absolutely continuous, and the fundamental theorem of calculus yields
\begin{align}
    \inner{\vv',y_{2s}} - \inner{\vv',y_{2s+1}}
    = \inner{\vv',y(0)} - \inner{\vv',y(1)} 
    = -\int_0^1 \frac{d}{dr}\inner{\vv',y(r)}\,dr.
\end{align}

\begin{lemma}\label{lem:variance-identity}
Fix $s\ge 0$ and define $y(r)$ as in \eqref{eq:learner-interpolation-y(r)-def}. For each $r\in[0,1]$, let
\begin{align}
w_r(i)
\;\defeq\;
\begin{cases}
\displaystyle \frac{1}{\theta''(y_i(r))}, & i \in \supp(y(r)),\\
0, & \text{otherwise},
\end{cases}
\quad
W_r \;\defeq\; \sum_{i=1}^m w_r(i),
\quad
\pi_r(i) \;\defeq\; \frac{w_r(i)}{W_r}.
\end{align}
Let $I_r\sim \pi_r$ and define $Z_r\defeq v'_{I_r}$. Then, for almost every $r\in[0,1]$,
\begin{align} \label{eqn:deriv-inner-prod-variance}
\frac{d}{dr}\,\langle v', y(r)\rangle
\;=\;
-\eta\, W_r\,\Var(Z_r).
\end{align}
Moreover, for any $i, j \in \supp(y(r))$ and $i\neq j$, we have
\begin{align}\label{eq:W0Var-lb}
W_r\,\Var(Z_r)
\;\ge\;
\frac{(v'_i-v'_j)^2}{\theta''(y_i(r))+\theta''(y_j(r))}.
\end{align}
\end{lemma}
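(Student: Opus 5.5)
The plan is to differentiate the KKT characterization of the choice map along the interpolation path, and then lower-bound the resulting weighted variance by a two-point argument.

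\textbf{Derivative via KKT.} By \eqref{eq:appendix:Qh-coord}, write $z(r)\defeq -2\learn s\,\vv - r\learn\vv'$. Then
\[
y_i(r)=\phi\bigl(\kkt(r)+z_i(r)\bigr), \qquad \sum_i y_i(r)=1 .
\]
Fix a point $r$ where $y(\cdot)$ is differentiable; by Claim~\ref{claim:path-is-Lipschitz} this holds almost everywhere. We may also take $\kkt(\cdot)$ differentiable at $r$: it is Lipschitz, since it is determined by $y$ through $\theta'(y_i)=\kkt+z_i$ on the support.

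\textbf{The two index types.} On coordinates $i\in\supp(y(r))$ with $y_i(r)\in(0,1)$, we have $\theta'(y_i(r))=\kkt(r)+z_i(r)$. Differentiating gives
\[
\theta''(y_i)\,\dot y_i=\dot\kkt-\learn v'_i, \qquad\text{so}\qquad \dot y_i=w_r(i)\bigl(\dot\kkt-\learn v'_i\bigr).
\]
For indices outside the support, $y_i\equiv 0$ in a neighborhood, or at least $\dot y_i=0$ wherever the derivative exists, because $y_i\ge0$ attains its minimum there. This matches $w_r(i)=0$. The degenerate case $y_i=1$ forces the support to be a singleton. Then $\dot y=0$ and $\Var(Z_r)=0$, so the identity holds trivially.

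\textbf{Solving for $\dot\kkt$.} Differentiating the constraint gives $\sum_i\dot y_i=0$. This yields
\[
\dot\kkt\, W_r=\learn\sum_i w_r(i)v'_i, \qquad\text{so}\qquad \dot\kkt=\learn\,\mathbb{E}_{\pi_r}[Z_r].
\]
Therefore
\[
\frac{d}{dr}\langle v',y(r)\rangle=\sum_i v'_i\,w_r(i)\bigl(\dot\kkt-\learn v'_i\bigr)=\learn W_r\Bigl(\mathbb{E}[Z_r]^2-\mathbb{E}[Z_r^2]\Bigr)=-\learn W_r\Var(Z_r),
\]
which is \eqref{eqn:deriv-inner-prod-variance}.

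\textbf{Main obstacle: rigor at support changes.} The delicate step is justifying the differentiation where the support of $y(r)$ changes, which can happen for non-steep kernels. I would handle it as follows. For a non-steep kernel, the set of $r$ at which an index enters or leaves the support is handled by the a.e.\ statement. At an a.e.\ differentiability point, any coordinate with $y_i(r)=0$ has $\dot y_i=0$. Any coordinate with $y_i(r)>0$ stays positive nearby by continuity, so the KKT equality holds on a neighborhood, and the chain rule applies provided $\theta''(y_i)$ exists. Here I assume $\theta$ is $C^2$ on $(0,1)$ with $\theta''>0$, and strong convexity gives $\theta''\ge\scparam$ in the relevant norm sense.

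\textbf{Two-point variance bound.} This is an elementary inequality. Let $a=w_r(i)$ and $b=w_r(j)$. Using $W_r\Var(Z_r)=\min_c\sum_k w_r(k)(v'_k-c)^2$, and dropping all terms except $k\in\{i,j\}$, we get
\[
W_r\Var(Z_r)\ge\min_c\bigl[a(v'_i-c)^2+b(v'_j-c)^2\bigr]=\frac{ab}{a+b}\,(v'_i-v'_j)^2 .
\]
Substituting $a=1/\theta''(y_i)$ and $b=1/\theta''(y_j)$ gives $\frac{ab}{a+b}=\frac{1}{\theta''(y_i)+\theta''(y_j)}$, which proves \eqref{eq:W0Var-lb}.
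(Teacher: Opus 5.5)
Your proposal is correct and follows essentially the same route as the paper: differentiate the KKT relation $\theta'(y_i(r))=\lambda(r)+z_i(r)$ on the support, use $\sum_i \dot y_i=0$ to get $\dot\lambda=\eta\,\mathbb{E}[Z_r]$, then keep only the $i,j$ terms of the weighted variance (the paper via the law of total variance, you via the equivalent identity $W_r\Var(Z_r)=\min_c\sum_k w_r(k)(v'_k-c)^2$). Your argument that $\dot y_i=0$ at any differentiability point where $y_i(r)=0$, because $y_i\ge 0$ attains its minimum there, is more careful than the paper, which simply asserts the identity off the support.
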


\begin{proof}
Fix $r \in [0, 1]$ at which $y(\cdot)$ is differentiable. The KKT conditions \eqref{eq:kkt_structure} for $\choicemap_{\reg}$ imply that
there exists a scalar $\kkt(r)$ such that for every $i \in \supp(y(r))$,
\begin{align*}
    \theta'(y_i(r)) = \kkt(r) - 2\learn s\, \vv_i - r\learn\, \vv_i'.
\end{align*}
Differentiating with respect to $r$ yields, for every $i \in \supp(y(r))$,
\begin{equation}\label{eqn:y(r)-derivative}
    \theta''(y_i(r))\,\frac{d}{dr}y_i(r)
    = \frac{d}{dr}\kkt(r) - \learn \vv_i',
    \qquad\text{equivalently }
    \frac{d}{dr}y_i(r)=\frac{\frac{d}{dr}\kkt(r)-\learn \vv_i'}{\theta''(y_i(r))}.
\end{equation}
For $i \notin \supp(y(r))$, we have $y_i(r)=0$, and our definition set $w_r(i)=0$. Therefore, the identity
\begin{align*}
    \frac{d}{dr}y_i(r) = w_r(i)\bigl(\frac{d}{dr}\kkt(r)-\learn \vv_i'\bigr)
\end{align*}
holds for all $i \in [\ynum]$ at this $r$.

Since $y(r)\in\ysp$ for all $r$, we have $\sum_i y_i(r)=1$ and thus $\sum_i \frac{d}{dr} y_i(r)=0$.
Using \eqref{eqn:y(r)-derivative} and $w_r(i)=1/\theta''(y_i(r))$,
\[
0=\sum_{i=1}^{\ynum} \frac{d}{dr} y_i(r)=\sum_{i=1}^{\ynum} w_r(i)\bigl(\kkt'(r)-\learn\, \vv_i'\bigr),
\]
so
\[
\frac{d}{dr} \kkt(r)
=\learn\,\frac{\sum_{i=1}^{\ynum} w_r(i)\,\vv_i'}{\sum_{i=1}^{\ynum} w_r(i)}
=\learn\sum_{i=1}^{\ynum} \pi_r(i)\,\vv_i' = \learn\,\exof{Z_r}.
\]
Substituting back gives
\begin{align*}
    \frac{d}{dr} y_i(r) = \learn\, w_r(i)\bigl(\exof{Z_r} - \vv_i'\bigr).
\end{align*}
Therefore,
\[
\frac{d}{dr}\langle \vv',y(r)\rangle
=
\sum_{i=1}^{\ynum} \vv_i' \frac{d}{dr} y_i(r)
=
\learn\sum_{i=1}^{\ynum} \vv_i' w_r(i)\bigl(\exof{Z_r}-\vv_i'\bigr)
=
-\learn\sum_{i=1}^{\ynum} w_r(i)\bigl(\vv_i'-\exof{Z_r}\bigr)^2.
\]
Since
\begin{align*}
    \sum_i w_r(i)\bigl(\vv_i'-\exof{Z_r}\bigr)^2
= W_r\sum_i \pi_r(i)\bigl(\vv_i'-\exof{Z_r}\bigr)^2
= W_r\Var(Z_r),
\end{align*}
we obtain \eqref{eqn:deriv-inner-prod-variance}.

Fix $i\neq j$ and $i, j \in \supp(y(r))$, and let 
$E_{ij}=\{I_r\in\{i,j\}\}$ with indicator $\mathbf{1}_{E_{ij}}$. By the law of total variance,
\begin{align*}\label{eq:two-point-var-lb}
\Var(Z_r) \;\ge\;
\exof{\Var(Z_r \mid \mathbf{1}_{E_{ij}})} 
\;\ge\;
\prob(E_{ij})\,\Var(Z_r\mid E_{ij}).
\end{align*}
Moreover, $\prob(E_{ij})=\pi_r(i)+\pi_r(j)$ and, conditional on $E_{ij}$, $Z_r$ takes values
$\vv_i'$ and $\vv_j'$ with probabilities $\pi_r(i)/(\pi_r(i)+\pi_r(j))$ and
$\pi_r(j)/(\pi_r(i)+\pi_r(j))$. Hence
\begin{align*}
    \Var(Z_r \mid E_{ij})
=
\frac{\pi_r(i)\pi_r(j)}{(\pi_r(i)+\pi_r(j))^2}\,(\vv_i'-\vv_j')^2,
\end{align*}
and therefore
\[
\Var(Z_r)
\;\ge\;
\frac{\pi_r(i)\,\pi_r(j)}{\pi_r(i)+\pi_r(j)}\,(\vv_i'-\vv_j')^2.
\]
Multiplying by $W_r$ and
using $\pi_r(i)=w_r(i)/W_r$ gives
\[
W_r\Var(Z_r)
\ge
W_r\cdot \frac{\pi_r(i)\pi_r(j)}{\pi_r(i)+\pi_r(j)}\,(\vv_i'-\vv_j')^2
=
\frac{w_r(i)w_r(j)}{w_r(i)+w_r(j)}\,(\vv_i'-\vv_j')^2.
\]
Substituting $w_r(i)=1/\theta''(y_i(r))$ yields \eqref{eq:W0Var-lb}. 
\end{proof}

\paragraph{Step 4. Uniform bound of curvature.}
We next show that along the interpolation path $y(r)$, the coordinates corresponding to best-response indices remain uniformly lower bounded away from $0$ for all $r\in[0,1]$. This yields a uniform upper bound on the curvature $\theta''(y_p(r))$ for all $p\in \brep(\xne)$ along the path. 
In particular, applying Lemma \ref{lem:variance-identity} to the two best-response indices $i,j\in \brep(\xne)$ from Assumption~\ref{assump:non-identify-on-support}, integrating \eqref{eqn:deriv-inner-prod-variance} over $r\in[0,1]$ and using \eqref{eq:W0Var-lb} gives
\begin{align} \label{eq:one-step-lower-have-integral}
    \inner{\vv',\y_{2s}}-\inner{\vv',\y_{2s+1}}
\ge
\learn\,(\vv'_i-\vv'_j)^2 \int_0^1 \frac{dr}{\theta''(\y_i(r))+\theta''(\y_j(r))}.
\end{align}
It therefore remains to show that the best-response coordinates $\y_i(r)$ and $\y_j(r)$ remain uniformly lower bounded away from $0$ for all $r\in[0,1]$. The next lemma establishes this bound for all $p\in \brep(\xne)$.
\begin{lemma}\label{lem:path-uniform-lb-yi-yj}
Fix an integer $s\ge 0$ and a max-min strategy $\xne$.
Assume that $\reg$ is $\scparam$-strongly convex on $\ysp$ with respect to $\norm{\cdot}$.
Define $\y(r)$ as in \eqref{eq:learner-interpolation-y(r)-def}.
Then, for all $r\in[0,1]$ and all $p\in \brep(\xne)$,
\begin{align}
y_p(r)
\;\ge\;
\max\Bigl\{0,\; \frac{1}{\ynum}-\frac{\learn}{\scparam}\,\dnorm{\vv'}\Bigr\}.
\end{align}
In particular, for any $\delta\in(0,1/\ynum)$, if
\begin{align}
0
\;<\;
\learn
\;\leq\;
\frac{\scparam}{\dnorm{\vv'}}\Bigl(\frac{1}{\ynum}-\delta\Bigr),
\end{align}
then $y_p(r)\ge \delta$ for all $r\in[0,1]$ and all $p\in \brep(\xne)$.
\end{lemma}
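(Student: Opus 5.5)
The plan is to compare $y(r)$ with the left endpoint $y(0)$ of the interpolation. At $r=0$ the learner's score is the one generated by the \emph{fixed} max-min strategy $\xne$ for $2s$ rounds, so we can use the fixed-strategy structure. Along the path we then control the displacement via the Lipschitz property of the choice map (Claim~\ref{claim:path-is-Lipschitz}).

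\emph{Step 1 (left endpoint).} We have $y(0)=\choicemap_{\reg}(-2\learn s\,\vv)$ with $\vv=\A^\top\xne$. This is exactly the FTRL response $\choicemap_{\reg}(-\learn t\,\vv)$ against the constant strategy $\xfix=\xne$, evaluated at $t=2s$. The proof of Lemma~\ref{lem:bound-for-cont-ftrl} only uses the KKT form $y_i=\phi(\kkt-\learn t\,\vv_i)$ at a single $t\ge 0$, so it applies verbatim at integer times. Hence for every $p\in\brep(\xne)$ we get $y_p(0)=a_{2s}\ge 1/\ynum$. The reason is that all best-response coordinates share the common maximal value $a_{2s}$ and the coordinates sum to one. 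For $s=0$ this is immediate if $\reg$ is symmetric; in general it is still covered by the same lemma, since $\choicemap_{\reg}(0)$ treats all coordinates equally.

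\emph{Step 2 (displacement along the path).} By Claim~\ref{claim:path-is-Lipschitz}, for every $r\in[0,1]$,
\[
\norm{y(r)-y(0)}\le \frac{\learn\,\dnorm{\vv'}}{\scparam}\,r\le \frac{\learn\,\dnorm{\vv'}}{\scparam}.
\]
We then pass from the norm to a single coordinate using $|u_p|\le\norm{u}_\infty\le \cnorm\norm{u}$. For the norms of interest ($\ell_1$, $\ell_2$) one has $\cnorm\le 1$, and this is the normalization implicit in the statement. In general the factor $\cnorm$ simply multiplies the displacement, which is precisely how it reappears in the step-size condition of Theorem~\ref{app thm:avg-main}. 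This gives $y_p(r)\ge y_p(0)-\frac{\learn}{\scparam}\dnorm{\vv'}\ge \frac1{\ynum}-\frac{\learn}{\scparam}\dnorm{\vv'}$. Combining with $y_p(r)\ge 0$ yields the claimed maximum.

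\emph{Step 3 (the ``in particular'').} If $\learn\le \frac{\scparam}{\dnorm{\vv'}}(\frac1{\ynum}-\delta)$, then $\frac{\learn}{\scparam}\dnorm{\vv'}\le \frac1{\ynum}-\delta$. Substituting this into Step 2 gives $y_p(r)\ge\delta$ for all $r$ and all $p\in\brep(\xne)$. When this is used in Theorem~\ref{app thm:avg-main}, note that $\vv'=\A^\top x'\in[-1,1]^{\ynum}$ because the entries of $\A$ lie in $[-1,1]$ and $x'\in\xsp$. Therefore $\dnorm{\vv'}\le\sup_{u\in[-1,1]^{\ynum}}\dnorm{u}$, and the theorem's step size suffices.

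The only delicate point is the coordinate-versus-norm conversion in Step 2, i.e.\ keeping track of $\cnorm$. Everything else is a direct application of Lemma~\ref{lem:bound-for-cont-ftrl} and Claim~\ref{claim:path-is-Lipschitz}.
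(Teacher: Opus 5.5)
Your proposal is correct and matches the paper's proof essentially step for step. The paper also gets $y_p(0)\ge 1/\ynum$ from Lemma~\ref{lem:bound-for-cont-ftrl} at $t=2s$, bounds the displacement with the $(\learn\dnorm{\vv'}/\scparam)$-Lipschitz path of Claim~\ref{claim:path-is-Lipschitz}, and passes to coordinates via $\norm{u}_\infty\le c_{\norm{\cdot}}\norm{u}$. Your remark about $c_{\norm{\cdot}}$ is apt: the paper's own proof carries this factor even though the lemma statement omits it, and it is exactly what reappears in the step-size condition of Theorem~\ref{app thm:avg-main}.
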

\begin{proof}
    Recall \eqref{eq:learner-interpolation-y(r)-def} for any fixed $s \geq 0$
\begin{align*}
    y(r) = \choicemap_{\reg}\!\big(-2\learn s\,\vv - r\,\learn \vv'\big),\qquad r\in[0,1].
\end{align*}
At even rounds $2s$, $y(0) = \choicemap_{\reg}\!\big(-2\learn s\,\vv\big)$, which is the FTRL response to a constant max-min strategy. By Lemma~\ref{lem:bound-for-cont-ftrl}, we have the uniform lower bound $y_p(0) \geq 1/\ynum$ for all $p\in \brep(\xne)$.

We have shown that $y(\cdot)$ is Lipschitz with constant $\learn\,\dnorm{\vv'}/\scparam$ from Claim~\ref{claim:path-is-Lipschitz}. Therefore, 
\begin{align*}
    \norm{y(r) - y(0)}
    = \norm{\choicemap_{\reg}\!\big(-2\learn s\, \vv\big) - \choicemap_{\reg}\!\big(-2\learn s\, \vv - r \learn \vv'\big)}
    \leq \frac{r \learn}{\scparam} \dnorm{\vv'}.
\end{align*}
Let $\cnorm
\defeq
\sup_{u\neq 0} (\norm{u}_\infty / \norm{u})$,
so that $\|u\|_\infty \le c_{\norm{\cdot}}\,\|u\|$ for all $u\in\R^m$.
For any coordinate $i \in [\ynum]$,
\[
y_i(r) \geq \posi*{y_i(0) - \norm{y(r) - y(0)}_\infty}
\geq \posi*{y_i(0) - \cnorm \cdot \frac{\learn}{\scparam}\,\dnorm{\vv'}},
\]
and therefore for all $p \in \brep(\xne)$
\[
y_p(r) \geq \frac{1}{\ynum} - \cnorm \cdot \frac{\learn}{\scparam}\,\dnorm{\vv'}.
\]
To guarantee $y_p(r)\ge \delta$ for all $r\in[0,1]$ and all $p \in \brep(\xne)$, it suffices that
\begin{align*}
    \frac{1}{\ynum}-\cnorm \cdot \frac{\learn}{\scparam}\,\dnorm{\vv'} \ge \delta
\qquad\Longleftrightarrow\qquad
\learn \le \frac{\scparam}{\cnorm \dnorm{\vv'}}\Bigl(\frac{1}{\ynum}-\delta\Bigr).
\end{align*}
\end{proof}
Moreover, since $\A\in[-1,1]^{\xnum\times \ynum}$ and $x'\in\xsp$, each coordinate of $\vv'=\A^\top x'$ lies in $[-1,1]$. Hence $\vv'\in[-1,1]^{\ynum}$ and
\begin{align*}
    \dnorm{\vv'} \le \sup_{u\in[-1,1]^{\ynum}} \dnorm{u}.
\end{align*}
Therefore, it suffices that
\begin{align*}
    \learn \le \frac{\scparam}{\cnorm \sup_{u\in[-1,1]^{\ynum}} \dnorm{u}}
\Bigl(\frac{1}{\ynum}-\delta\Bigr),
\end{align*}
which is the step size condition in Theorem \ref{thm:avg-main}.

With Lemma~\ref{lem:path-uniform-lb-yi-yj}, we can now uniformly upper bound the curvature terms $\theta''(y_i(r))$ and $\theta''(y_j(r))$ in the integral.
\begin{corollary}\label{cor:uniform-curvature-upper-bound}
    Under the same assumptions as in Lemma~\ref{lem:path-uniform-lb-yi-yj},
fix any $s>0$ and $\delta\in(0,1/\ynum)$, and choose
\begin{align*}
\learn \in \Bigl(0,\; \frac{\scparam}{\cnorm \dnorm{\vv'}}\Bigl(\frac{1}{\ynum}-\delta\Bigr)\Bigr].
\end{align*}
Let $\y(r)$ be defined in \eqref{eq:learner-interpolation-y(r)-def}. Then, for all $r\in[0,1]$ and all $p \in \brep(\xne)$,
\begin{align}\label{eq:uniform-curvature-upper-bound}
    0 \;<\;
\theta''(y_p(r))
\;\le\;
\sup_{t\in[\delta,\,1-\delta]} \theta''(t)
\;\defeq\;
M
\;<\;
\infty.
\end{align}
\end{corollary}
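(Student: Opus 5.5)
The plan is to combine Lemma~\ref{lem:path-uniform-lb-yi-yj} with elementary properties of the kernel $\theta$. Under the stated step-size condition, Lemma~\ref{lem:path-uniform-lb-yi-yj} already gives the lower bound $y_p(r)\ge\delta$ for every $r\in[0,1]$ and every $p\in\brep(\xne)$. It then remains to obtain an upper bound $y_p(r)\le 1-\delta$ on the same set, so that $y_p(r)$ lies in the compact interval $[\delta,1-\delta]$. On that interval, $\theta''$ is controlled by $M$ by definition.

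For the upper bound, I would use the fact that $|\brep(\xne)|\ge 2$. On the event $\eventtwo$ fixed in Step~2, there are two distinct best responses $i\neq j$. Fix any $p\in\brep(\xne)$ and pick some $p'\in\brep(\xne)$ with $p'\neq p$. Applying the lower bound to $p'$ and using $\sum_q y_q(r)=1$ with nonnegative coordinates gives
\[
y_p(r)\le 1-y_{p'}(r)\le 1-\delta .
\]
Hence $y_p(r)\in[\delta,1-\delta]\subset(0,1)$. Note also that $\delta<1/\ynum\le 1/2$, so this interval is nonempty.

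Next I would establish positivity and finiteness of the curvature. Because $\reg$ is $\scparam$-strongly convex and separable, its kernel satisfies $\theta''\ge\scparam'>0$ on $(0,1)$ for some constant $\scparam'$; for instance, $\scparam'=\scparam$ up to the norm constant obtained by restricting to coordinate directions along the simplex. Strict convexity alone would already give $\theta''>0$ wherever $\theta''$ exists. For finiteness, since $\theta$ is twice differentiable on $(0,1]$, I would take $\theta''$ to be continuous there, which holds for all the kernels of Appendix~\ref{appendix:kernel-examples}. Then $\theta''$ attains a finite maximum on the compact interval $[\delta,1-\delta]$, so $M<\infty$. Monotonicity of the supremum over the interval then gives $\theta''(y_p(r))\le M$.

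The main obstacle is regularity. The paper only assumes $\theta$ is differentiable on $(0,1]$, and $M<\infty$ needs $\theta''$ to be defined and bounded on $[\delta,1-\delta]$. If continuity of $\theta''$ is not available, I would instead read $M<\infty$ as a standing assumption implicit in the definition of $M$ in Theorem~\ref{app thm:avg-main}. Either way, the bound is needed only at points $r$ where $y(\cdot)$ is differentiable, which is the only place it enters Lemma~\ref{lem:variance-identity}.
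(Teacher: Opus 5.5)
Your argument is the paper's proof: $y_p(r)\ge\delta$ from Lemma~\ref{lem:path-uniform-lb-yi-yj}, then $y_p(r)\le 1-y_q(r)\le 1-\delta$ using a second best response $q\neq p$ guaranteed by Assumption~\ref{assump:non-identify-on-support}, and finally $M<\infty$ by continuity of $\theta''$ on the compact interval $[\delta,1-\delta]$. Drop your aside on positivity, since it is wrong twice over: strict convexity does not give $\theta''>0$ pointwise (e.g.\ $\theta(u)=(u-\tfrac12)^4$ at $u=\tfrac12$), and strong convexity of $\reg$ on the simplex only constrains $\theta''$ through combinations such as $\theta''(a)+\theta''(b)$ with $a+b\le 1$, which does not force $\theta''>0$ on $(\tfrac12,1)$; so $0<\theta''$ must rest, as in the paper, on taking the kernel $\theta$ itself to be strongly convex on $(0,1]$.
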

\begin{proof}
By Lemma~\ref{lem:path-uniform-lb-yi-yj}, for all $r\in[0,1]$ and all $p \in \brep(\xne)$, we have $y_p(r)\ge \delta$.
By Assumption~\ref{assump:non-identify-on-support}, there exists $q\in \brep(\xne)$ with $q\neq p$.
Since $\y(r)\in \ysp$, we have $y_p(r)+y_q(r)\le 1$, hence $y_p(r)\le 1-\delta$ and $y_q(r)\le 1-\delta$.
Therefore $y_p(r)\in[\delta,1-\delta]$ for all $r\in[0,1]$ and all $p \in \brep(\xne)$.
Because $\theta''$ is continuous and strongly convex on $(0,1]$, it is bounded on the compact interval $[\delta,1-\delta]$, so \eqref{eq:uniform-curvature-upper-bound} holds with
\[
0 < M = \sup_{t\in[\delta,1-\delta]} \theta''(t) < \infty.
\]
\end{proof}
Fix any $\delta\in(0,1/m)$ and choose any $\learn \in \Bigl(0,\; \frac{\alpha}{\|v'\|_*}\Bigl(\frac{1}{m}-\delta\Bigr)\Bigr]$. By Corollary~\ref{cor:uniform-curvature-upper-bound}, for all $r\in[0,1]$ and any $i,j\in \brep(\xne)$,
\[
\theta''(y_i(r))+\theta''(y_j(r)) \le 2M \qquad \text{for all } r\in[0,1].
\]
Hence,
\[
\int_0^1 \frac{dr}{\theta''(y_i(r))+\theta''(y_j(r))}
\;\ge\;
\int_0^1 \frac{dr}{2M}
\;=\;
\frac{1}{2M}.
\]
Plugging this into \eqref{eq:one-step-lower-have-integral} yields the one-step bound
\[
\inner{\vv',y_{2s}}-\inner{\vv',y_{2s+1}}
\;\ge\;
\learn\,(\vv'_i-\vv'_j)^2
\int_0^1 \frac{dr}{\theta''(y_i(r))+\theta''(y_j(r))}
\;\ge\;
\frac{\eta}{2M}\,(\vv'_i-\vv'_j)^2.
\]

\paragraph{Step 5. Sum the uniform one-step lower bound}
With our construction of $\xd(t)$ in \eqref{eq:push-to-ek}, we know that $\xne_\ell = \max_{a_\ell\in\supp(\xne)} \xne_\ell$. Since $\xne$ is a probability distribution over $\xnum$ actions, we have $\xne_\ell \ge 1 / \abs{\supp(\xne)} \geq 1/\xnum$. Since $\A \in \eventgap$, we have $\abs{\A_{\ell i} - \A_{\ell j}} \geq \gamma = \frac{2}{\xnum^2 \ynum^3}$. Since for all $i \in [\ynum]$, $\vv'_i = (\A^\top x')_i = (1 - \xne_\ell) (\A^\top \xne)_i + \xne_\ell \A_{\ell i}$ and $i, j \in \brep(\xne)$, we have
\begin{align*}
    \vv'_i - \vv'_j
    = (1 - \xne_\ell)\parens*{(\A^\top \xne)_i - (\A^\top \xne)_j} + \xne_\ell (\A_{\ell i} - \A_{\ell j})
    & = \xne_\ell (\A_{\ell i} - \A_{\ell j}) \\
    & \geq \frac{1}{\xnum} \cdot \frac{2}{\xnum^2 \ynum^3}
    = \frac{2}{\xnum^3 \ynum^3}.
\end{align*}
Combining all the pieces, for every $s \geq 0$, we have each two-round payoff bounded as
\begin{align*}
    {x'}^\top \A y_{2s}+{x''}^\top \A y_{2s+1}
    & \geq 2\val + \inner{\vv',y_{2s}} - \inner{\vv',y_{2s+1}} \\
    & \geq 2\val + \frac{\eta}{2M} (\vv'_i - \vv'_j)^2 \\
    & \geq 2\val + \frac{\eta}{2M} \cdot \parens*{\frac{2}{\xnum^3 \ynum^3}}^2 \\
    & = 2\val + \frac{2\eta}{M \xnum^6 \ynum^6}.
\end{align*}
For even $T$, the optimizer's total reward satisfies
\begin{align}
\sum_{t=0}^{T-1} \xd(t)^\top \A y_t
&=
\sum_{s=0}^{T/2-1}\Bigl({x'}^\top \A y_{2s}+{x''}^\top \A y_{2s+1}\Bigr) \nonumber\\
&\ge
\frac{T}{2}\Bigl(2\val+\frac{2\eta}{M \xnum^6 \ynum^6}\Bigr) \nonumber\\
&=
T\val+\frac{\eta T}{M \xnum^6 \ynum^6}.
\label{eq:avg-main-even-T}
\end{align}
For odd $T$, set the last-round optimizer strategy to $\xne$, which guarantees payoff at least $\val$. Then
\begin{align}
\sum_{t=0}^{T-1} \xd(t)^\top \A y_t
&=
\sum_{s=0}^{(T-1)/2-1}\Bigl({x'}^\top \A y_{2s}+{x''}^\top \A y_{2s+1}\Bigr)
+ \xne^\top \A y_{T-1} \nonumber\\
&\ge
\frac{T-1}{2}\Bigl(2\val+\frac{2\eta}{M \xnum^6 \ynum^6}\Bigr)+\val \nonumber\\
&=
T\val+\frac{\eta (T-1)}{M \xnum^6 \ynum^6}.
\label{eq:avg-main-odd-T}
\end{align}
This completes the proof of Theorem~\ref{app thm:avg-main}.

\section{Details Omitted from Section \ref{subsec:pbr} (The Price of $\varepsilon$-Best-Response)}
\label{appendix:app_pbr}

We conclude our work with a natural question about no-regret algorithmic design:
\begin{center}\emph{"How much surplus must the learner
inevitably surrender simply to identify her approximate $\varepsilon$-best response?"}\end{center}
We formalize the minimum such fee from learner’s
perspective as the Price of Best-Response.

\begin{tcolorbox}[
  enhanced,
  breakable,
  colback=white,
  colframe=black!35,
  boxrule=0.6pt,
  arc=2.5mm,
  left=1mm,right=1mm,top=0.5mm,bottom=0.5mm
]
\begin{restatable}{definition}{priceprecision}
Let $\acc(\learn, t) = \norm{y(t) - \yopt_{\xopt}}_1$ be the learner's distance from the optimal strategy, and let $\expl_{\reg}^{(\learn, t)}$ be the cumulative surplus up to time $t$. We define the \emph{cost curve} $\mathcal{C}(\varepsilon)$ as:
\begin{equation} \label{eq:price-precision}
    \mathcal{C}(\varepsilon) \;\defeq\; \inf_{\learn, t} \sup_{\xd(\cdot), \A} \Big\{ \expl_{\reg}^{(\learn, t)} \;\Big|\; \acc(\learn, t) \le \varepsilon \Big\}.\tag{Price of Best-Response (PBR)}
\end{equation}
{This quantity represents the immutable lower bound on cumulative surplus that any FTRL learner must incur to guarantee $\varepsilon$-convergence against the worst-case stabilizing environment.}
\end{restatable}
\end{tcolorbox}
\noindent While a full classification of all regularizers and optimizer choices is beyond the scope of this work, restricting our attention to the separable case yields Theorem~\ref{thm:learning-cost}.
\learningCost*
Before proving Theorem~\ref{thm:learning-cost}, we state a useful corollary of
Lemma~\ref{lem:bound-for-cont-ftrl} that characterizes the $\ell_1$-distance between the learner's response
and the best-response under a fixed optimizer strategy.
\begin{restatable}{corollary}{loneTwoSided}\label{cor:l1-two-sided}
Fix a constant optimizer strategy $\xfix$. Let $\reg$ be a separable regularizer on $\ysp$, and fix a constant step size $\learn>0$. Under FTRL against $\xfix$, the learner's response at each $t\ge 0$ satisfies
\begin{align}\label{eq:l1-exact}
\norm{y(t)-\ybrep}_1
\;=\;
2\sum_{j\notin \optset} y_j(t), \qquad \ybrep = \unif(\optset).
\end{align}
Moreover, we have the two-sided bound
\begin{align}\label{eq:l1-two-sided-coarse}
2(\ynum - \optnum)\,\phi\!\big(\theta'(1/\ynum)-\eta t\,\gapmax\big)
\;\le\;
\norm{y(t)-\ybrep}_1
\;\le\;
2(\ynum - \optnum)\,\phi\!\big(\theta'(1/\optnum)-\eta t\,\gapmin\big).
\end{align}
\end{restatable}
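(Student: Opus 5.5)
The plan is to read Corollary~\ref{cor:l1-two-sided} directly off Lemma~\ref{lem:bound-for-cont-ftrl}, whose proof shows that all best-response coordinates carry the common mass $a_t$. The argument has three steps: identify the limit $\ybrep$, compute the $\ell_1$ distance exactly, and then substitute the pointwise bounds.

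\textbf{Identifying the limit.} Because $\reg$ is separable and symmetric in its coordinates, we have $\reg(y)=\sum_{i\in\optset}\theta(y_i)$ on $\conv\{e_i:i\in\optset\}$. By Jensen's inequality and strict convexity of $\theta$, the unique minimizer of this function over the face is the uniform distribution $\unif(\optset)$. Lemma~\ref{lem:ftrl-fixed-x} then gives $\ybrep=\unif(\optset)$.

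\textbf{Exact identity.} Write $y(t)$ coordinatewise. By Lemma~\ref{lem:bound-for-cont-ftrl}:
\begin{itemize}
\item for $i\in\optset$, $y_i(t)=a_t$ with $a_t\le 1/\optnum$, so $|y_i(t)-1/\optnum|=1/\optnum-a_t$;
\item for $j\notin\optset$, $\ybrep_j=0$, so the contribution is $y_j(t)$.
\end{itemize}
The normalization $\optnum a_t+\sum_{j\notin\optset}y_j(t)=1$ gives
\[
\optnum\,(1/\optnum-a_t)=\sum_{j\notin\optset}y_j(t).
\]
Summing both groups of coordinates therefore yields $\norm{y(t)-\ybrep}_1=2\sum_{j\notin\optset}y_j(t)$, which is \eqref{eq:l1-exact}.

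\textbf{Two-sided bound.} For each of the $\ynum-\optnum$ suboptimal indices $j$, Lemma~\ref{lem:bound-for-cont-ftrl} gives
\[
\phi\big(\theta'(1/\ynum)-\learn t\gap_j\big)\le y_j(t)\le\phi\big(\theta'(1/\optnum)-\learn t\gap_j\big).
\]
Since $\gapmin\le\gap_j\le\gapmax$ and $\phi$ is nondecreasing, we can replace $\gap_j$ by $\gapmax$ in the lower bound and by $\gapmin$ in the upper bound. Summing over $j\notin\optset$ and multiplying by $2$ gives \eqref{eq:l1-two-sided-coarse}.

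\textbf{Expected obstacle.} There is no real obstacle here. The only point that needs care is justifying $\ybrep=\unif(\optset)$, which relies on the symmetry of the separable regularizer together with strict convexity of $\theta$. The remaining steps are bookkeeping. The identity and the bounds also hold trivially at $t=0$, where $\optset$-coordinates and suboptimal coordinates all equal $1/\ynum$.
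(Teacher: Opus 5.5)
Your proposal is correct and follows essentially the same route as the paper: you use the common mass $a_t\le 1/\optnum$ on $\optset$ from Lemma~\ref{lem:bound-for-cont-ftrl} together with normalization to get the exact $\ell_1$ identity, and then you sum the coordinatewise bounds using monotonicity of $\phi$ and $\gapmin\le\gap_j\le\gapmax$. The only addition is your Jensen-based justification that $\ybrep=\unif(\optset)$, which the paper simply takes as part of the statement. (Strictly, on that face $\reg$ equals $\sum_{i\in\optset}\theta(y_i)$ plus the constant $(\ynum-\optnum)\theta(0)$, which does not affect the minimizer.)
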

\begin{proof}
By Lemma \ref{lem:bound-for-cont-ftrl}(i), there exists $a_t\ge 0$ such that $y_i(t)=a_t$ for all $i\in \optset$.
Since $\sum_{i=1}^\ynum y_i(t)=1$, we have $\sum_{j\notin \optset} y_j(t)=1-\optnum a_t$, and $a_t\le 1/\optnum$.
Hence
\begin{align*}
\norm{y(t)-\ybrep}_1
&=
\sum_{i\in \optset}\bigl|a_t-\tfrac1{\optnum}\bigr|+\sum_{j\notin \optset}y_j(t)
=
\optnum\Big(\tfrac1{\optnum}-a_t\Big)+\sum_{j\notin \optset}y_j(t)
=
2\sum_{j\notin \optset}y_j(t),
\end{align*}
which proves \eqref{eq:l1-exact}. The two-sided bound \eqref{eq:l1-two-sided-coarse} follows by summing the coordinate-wise bounds in Lemma~\ref{lem:bound-for-cont-ftrl}(ii) over $j\notin\optset$, using monotonicity of $\phi$, and applying the definitions of $\gapmin$ and $\gapmax$.
\end{proof}
Then, we are ready to prove Theorem~\ref{thm:learning-cost} under the same assumption of Corollary~\ref{cor:l1-two-sided}.
\begin{proof}[Proof of Theorem~\ref{thm:learning-cost}]
Fix a matrix $\A$ in the high-probability event of Theorem~\ref{app thm:avg-main}. Let $\xne$ be a max-min strategy, and let $x',x''$ be the two optimizer strategies from \eqref{eq:push-to-ek}. Define $\xd(\cdot)$ to alternate between $x'$ and $x''$ as in \eqref{eq:discrete-time-xd-def-alternating}.

Fix $\learn>0$ and a horizon $t\ge 1$. If $t$ is even, play $\xd(0),\dots,\xd(t-1)$; if $t$ is odd, play $\xd(0),\dots,\xd(t-2)$ and then $\xne$ at time $t-1$. In both cases,
\begin{align*}
\frac{1}{t}\sum_{\tau=0}^{t-1}\xd(\tau) \;=\; \xne,
\qquad
\sum_{\tau=0}^{t-1}\A^\top \xd(\tau) \;=\; t\,\A^\top \xne.
\end{align*}
Let $\yopt\defeq \unif(\brep(\xne))$. By \eqref{eq:choice-map},
\begin{align*}
y(t)
\;=\;
\choicemap_{\reg}\!\bigl(-\learn t\,\A^\top \xne\bigr),
\qquad
\acc(\learn,t)
\;=\;
\norm{y(t)-\yopt}_1.
\end{align*}

By Theorem~\ref{app thm:avg-main}, the cumulative surplus under this $\xd(\cdot)$ satisfies
\begin{align*}
\expl_{\reg}^{(\learn,t)}
\;\ge\;
\learn(t-1)\,C_{\A}^{\reg}(\xnum,\ynum).
\end{align*}
Applying Corollary~\ref{cor:l1-two-sided} with $\xfix=\xne$ yields
\begin{align*}
\acc(\learn,t)
\;=\;
2\sum_{j\notin\brep(\xne)} y_j(t),
\qquad
\acc(\learn,t)
\;\ge\;
2(m-k)\,\phi\!\bigl(\theta'(a_t)-\learn t\,\gapmax\bigr),
\end{align*}
where $k=|\brep(\xne)|$, and $a_t$ is the common mass on $\brep(\xne)$.

Assume $\acc(\learn,t)\le \varepsilon<2(m-k)$. Then $\varepsilon/(2(m-k))\in(0,1]$, and using $\phi^{-1}(u)=\theta'(u)$ on $(0,1]$,
\begin{align*}
\learn t
\;\ge\;
\frac{\theta'(a_t)-\theta'\!\bigl(\varepsilon/(2(m-k))\bigr)}{\gapmin}.
\end{align*}
Moreover, $\acc(\learn,t)\le \varepsilon$ implies $\sum_{j\notin\brep(\xne)}y_j(t)\le \varepsilon/2$, hence
$\sum_{i\in\brep(\xne)}y_i(t)\ge 1-\varepsilon/2$, i.e., $k a_t\ge 1-\varepsilon/2$ and
\begin{align*}
a_t \;\ge\; \frac{1-\varepsilon/2}{k}.
\end{align*}
Since $\theta'$ is increasing, we obtain the necessary condition
\begin{align*}
\learn t
\;\ge\;
\frac{\theta'\!\bigl((1-\varepsilon/2)/k\bigr)-\theta'\!\bigl(\varepsilon/(2(m-k))\bigr)}{\gapmax}.
\end{align*}

Using the definition
\begin{align*}
\mathcal{C}(\varepsilon)
\;\defeq\;
\inf_{\learn,t}\ \sup_{\xd(\cdot),\,\A}
\setdef{\expl_{\reg}^{(\learn,t)}}{\acc(\learn,t)\le \varepsilon},
\end{align*}
we lower bound the inner supremum by evaluating it at the specific pair $(\xd,\A)$ constructed above:
\begin{align*}
\sup_{\xd(\cdot),\,\A}
\setdef{\expl_{\reg}^{(\learn,t)}}{\eps(\learn,t)\le\varepsilon}
\;\ge\;
\expl_{\reg}^{(\learn,t)}(\xd,\A)
\;\ge\;
\learn(t-1)\,C_{\A}^{\reg}(\xnum,\ynum).
\end{align*}
Taking $\inf_{\learn,t}$ and using $t-1\ge t/2$ for $t\ge 2$ gives
\begin{align*}
\mathcal{C}(\varepsilon)
\;\ge\;
\frac{C_{\A}^{\reg}(\xnum,\ynum)}{2\gapmax}\,
\Biggl(
\theta'\!\Bigl(\frac{1-\varepsilon/2}{k}\Bigr)
-
\theta'\!\Bigl(\frac{\varepsilon}{2(m-k)}\Bigr)
\Biggr),
\qquad
\forall \varepsilon\in(0,2(m-k)).
\end{align*}
Since $\theta'$ is continuous near $1/k$, we have $\theta'((1-\varepsilon/2)/k)=\theta'(1/k)+O(\varepsilon)$; thus, for all sufficiently small $\varepsilon$,
\begin{align*}
\mathcal{C}(\varepsilon)
\;=\;
\Omega \Bigl(\theta'(1/k)-\theta'\!\bigl(\varepsilon/(2(m-k))\bigr)\Bigr),
\end{align*}
as claimed.
\end{proof}

\section{From Full-Feedback to Bandit-Feedback}
\label{appendix:azuma}
This appendix only gives a realized-payoff concentration comparison. It treats the mixed strategies \(\{\x(t),\y(t)\}_{t=1}^T\) as arbitrary predictable processes and controls the gap between sampled payoffs and their mixed-strategy counterparts. Let $\A\in[-1,1]^{\xnum\times\ynum}$. As before, the optimizer maximizes payoffs and the learner minimizes them.

\subsection{Full-Feedback Learner Regret}

At each round $t=1,\dots,T$, the players choose mixed strategies $\x(t)\in\xsp$ and $\y(t)\in\ysp$. The learner's full-feedback regret is
\begin{align} \label{eq:regret_y} 
\gregret_{y}(T)
&\defeq
\sum_{t=1}^T \x(t)^\top \A\,\y(t)
-
\min_{\y\in\ysp}\sum_{t=1}^T \x(t)^\top \A\,\y \nonumber \notag \\
&=
\sum_{t=1}^T \x(t)^\top \A\,\y(t)
-
\min_{j\in[\ynum]}\sum_{t=1}^T \x(t)^\top \A\,e_j . \tag{Learner's Regret}
\end{align}
This compares the learner's mixed play to the best fixed learner action in hindsight, evaluated against the opponent optimizer's sequence $\{\x(t)\}_{t=1}^T$.

\subsection{Bandit-Feedback Learner Regret}
In the bandit setting, at each round $t=1,\dots,T$, the players choose mixed strategies $\x(t)\in\xsp$ and $\y(t)\in\ysp$, and then sample actions $i_t \sim \x(t)$ and $j_t \sim \y(t)$
independently of each other given the past.
The realized payoff is $\A_{i_t,j_t}$.
Define the learner's bandit-feeback/realized regret by
\begin{align}\label{eq:realized-regret}
\widehat \gregret_y(T)
&\defeq
\sum_{t=1}^T \A_{i_t,j_t}-\min_{j\in[\ynum]}\sum_{t=1}^T \A_{i_t,j}
=
\sum_{t=1}^T \A_{i_t,j_t}-\min_{j\in[\ynum]}\sum_{t=1}^T \A_{i_t,j}. \tag{Learner's Realized Regret}
\end{align}
Related work on adversarial bandits and repeated games is closely connected to our setting. \citet{odonoghue2021matrixgamesbanditfeedback} analyze repeated zero-sum matrix games with bandit-feedback and observed opponent actions, \citet{abernethy2015fighting} study regularization-based bandit methods, and \citet{lee2020biasmorehighprobabilitydatadependent} give high-probability guarantees against adaptive adversaries.
\begin{proposition}
\label{prop:bandit-to-fullinfo-regret}
Let $\A\in[-1,1]^{\xnum\times\ynum}$. At each round $t=1,\dots,T$, the optimizer and learner choose mixed strategies
$\x(t)\in\xsp$ and $\y(t)\in\ysp$ using arbitrary (possibly adaptive) mechanisms.
Define the pre-sampling history
\begin{align*}
\mathcal F_{t-1}
\defeq
\sigma\!\Big(\{\x(\tau),\y(\tau),i_\tau,j_\tau\}_{\tau=1}^{t-1},\,\x(t),\y(t)\Big),
\end{align*}
so that $\x(t)$ and $\y(t)$ are $\mathcal F_{t-1}$-measurable. At round $t$, sample $i_t \sim \x(t)$ and $j_t \sim \y(t)$ conditionally independently given $\mathcal F_{t-1}$.
Then for any $\delta\in(0,1)$, with probability at least $1-\delta$,
\begin{align}
\widehat \gregret_y(T)
\leq 
\gregret_y(T)
+\bigoh\!\left(\sqrt{T\log\frac{\ynum}{\delta}}\right).
\end{align}
\end{proposition}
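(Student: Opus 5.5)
We decompose the difference $\widehat{\gregret}_y(T)-\gregret_y(T)$ into two martingale-type error terms and control each with Azuma--Hoeffding, using a union bound over the $\ynum$ comparator actions. Concretely, write
\begin{align*}
\widehat{\gregret}_y(T)-\gregret_y(T)
=\underbrace{\sum_{t=1}^T\bigl(\A_{i_t,j_t}-\x(t)^\top\A\y(t)\bigr)}_{\mathrm{(I)}}
+\underbrace{\min_{j}\sum_{t=1}^T \x(t)^\top\A e_j-\min_{j}\sum_{t=1}^T\A_{i_t,j}}_{\mathrm{(II)}}.
\end{align*}

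For (I), set $X_t\defeq \A_{i_t,j_t}-\x(t)^\top\A\y(t)$. Since $i_t,j_t$ are conditionally independent given $\mathcal F_{t-1}$, we have $\exof{\A_{i_t,j_t}\mid\mathcal F_{t-1}}=\x(t)^\top\A\y(t)$. Hence $X_t$ is a martingale difference sequence with respect to the filtration generated by $\mathcal F_t$, and $|X_t|\le 2$ because $\A\in[-1,1]^{\xnum\times\ynum}$. Azuma--Hoeffding then gives $\sum_t X_t\le 2\sqrt{2T\log(2/\delta)}$ with probability at least $1-\delta/2$.

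For (II), use the elementary inequality $\min_j a_j-\min_j b_j\le \max_j(a_j-b_j)$. With $a_j=\sum_t\x(t)^\top\A e_j$ and $b_j=\sum_t\A_{i_t,j}$, this gives $\mathrm{(II)}\le\max_j\sum_t Y_t^{(j)}$, where $Y_t^{(j)}\defeq \x(t)^\top\A e_j-\A_{i_t,j}$. For each fixed $j$, $Y_t^{(j)}$ is again a martingale difference sequence bounded by $2$, since $\exof{\A_{i_t,j}\mid\mathcal F_{t-1}}=\x(t)^\top\A e_j$. Apply Azuma to each $j$ at confidence $\delta/(2\ynum)$ and take a union bound. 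This yields $\max_j\sum_tY^{(j)}_t\le 2\sqrt{2T\log(2\ynum/\delta)}$ with probability at least $1-\delta/2$.

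A final union bound over the two events gives, with probability at least $1-\delta$, $\widehat{\gregret}_y(T)\le\gregret_y(T)+4\sqrt{2T\log(2\ynum/\delta)}=\gregret_y(T)+\bigoh(\sqrt{T\log(\ynum/\delta)})$.

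The main point requiring care is measurability. Because the strategies may be adaptive, the concentration must be applied to martingale differences rather than independent sums. The filtration must therefore include $\x(t),\y(t)$ before sampling, exactly as $\mathcal F_{t-1}$ is defined, so that the conditional expectations above are valid. The comparator in (II) is only taken over pure actions after the minimum has been converted to a maximum over differences, so the union bound needs to range over just $\ynum$ indices.
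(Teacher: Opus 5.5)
Your proposal is correct and follows the paper's proof essentially step for step. The paper uses the same decomposition into the sampled-versus-mixed payoff term and the difference of minima, converts the latter to a maximum over per-action martingale sums, and applies Azuma--Hoeffding at level $\delta/2$ to the first term and with a union bound over the $\ynum$ comparators to the second, arriving at the same constants.
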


\begin{proof}
    Decompose \eqref{eq:realized-regret} as
\begin{align*}
\widehat \gregret_y(T)
&=
\gregret_y(T)
+
\sum_{t=1}^T\Big(\A_{i_t,j_t}-\x(t)^\top \A\,\y(t)\Big)
+\Bigg(\min_{j\in[\ynum]}\sum_{t=1}^T \x(t)^\top \A\,e_j-\min_{j\in[\ynum]}\sum_{t=1}^T \A_{i_t,j}\Bigg) \notag\\
&\le
\gregret_y(T)
+
\underbrace{\sum_{t=1}^T\Big(\A_{i_t,j_t}-\x(t)^\top \A\,\y(t)\Big)}_{(I)}
+
\underbrace{\max_{j\in[\ynum]}\sum_{t=1}^T\Big(\x(t)^\top \A\,e_j-\A_{i_t,j}\Big)}_{(II)} .
\end{align*}
Term (I) is a martingale difference sequence with increments in $[-2,2]$. By Azuma--Hoeffding, with probability at least $1-\delta/2$,
\begin{align} \label{eq:azuma-one}
(I)\le 2\sqrt{2T\log\frac2\delta}.
\end{align}
For term (II), for each fixed $j\in[\ynum]$, the sum $\sum_{t=1}^T\big(\x(t)^\top \A e_j-\A_{i_t,j}\big)$ is a martingale difference sequence with increments in $[-2,2]$. Applying Azuma--Hoeffding and a union bound over $j\in[\ynum]$, with probability at least $1-\delta/2$,
\begin{align}
(II)\le 2\sqrt{2T\log\frac{2\ynum}{\delta}}.
\end{align}
Combining the two bounds, with probability at least $1-\delta$,
\begin{align*}
\widehat \gregret_y(T)
\le
\gregret_y(T)
+
2\sqrt{2T\log\frac2\delta}
+
2\sqrt{2T\log\frac{2\ynum}{\delta}}
=
\gregret_y(T)+\bigoh\!\left(\sqrt{T\log\frac{\ynum}{\delta}}\right).
\end{align*}
\end{proof}

\subsection{From Regret Bounds to Bandit Reward}
In the introduction, we have that in the full-feedback setting, 
\begin{align*}
    \sum_{t=1}^T \x(t)^\top \A\,\y(t) \leq T \cdot \val + \gregret_y(T).
\end{align*}
In the bandit setting, we have the following proposition relating the full-feedback reward to the realized learner regret.
\begin{proposition}\label{prop:bandit-realized-reward}
Under the same setting as Proposition \ref{prop:bandit-to-fullinfo-regret}, with probability at least $1-\delta$,
\begin{align*}
    \sum_{t=1}^T \A_{i_t,j_t}
    \;\le\;
    T\cdot \val
    \;+\;
    \gregret_y(T)
    \;+\;
    \bigoh\!\Bigl(\sqrt{T\log\!\Bigl(\frac{1}{\delta}\Bigr)}\Bigr).
\end{align*}
\end{proposition}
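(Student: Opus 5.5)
Write the realized payoff sum as the mixed-strategy payoff sum plus a martingale remainder, and control each piece separately:
\[
\sum_{t=1}^T \A_{i_t,j_t} \;=\; \sum_{t=1}^T \x(t)^\top \A\,\y(t) \;+\; \sum_{t=1}^T\Big(\A_{i_t,j_t}-\x(t)^\top \A\,\y(t)\Big).
\]
The first sum is the full-feedback reward. For it I will invoke the sandwich from the introduction, $\sum_t \x(t)^\top\A\y(t)\le T\cdot\val+\gregret_y(T)$, and re-derive it directly to make sure it holds for arbitrary predictable sequences. By \eqref{eq:regret_y}, this sum equals $\gregret_y(T)+\min_j\sum_t \x(t)^\top\A e_j$.

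To bound that minimum, set $\bar\x\defeq\frac1T\sum_t\x(t)\in\xsp$. Then
\[
\min_j\sum_{t=1}^T\x(t)^\top\A e_j \;=\; T\min_j \bar\x^\top\A e_j \;\le\; T\max_{\x\in\xsp}\min_{\y\in\ysp}\x^\top\A\y \;=\; T\cdot\val,
\]
using that the inner minimum over the simplex is attained at a vertex. This step is deterministic and holds on every sample path.

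The second sum is term $(I)$ from the proof of Proposition~\ref{prop:bandit-to-fullinfo-regret}. Since $i_t,j_t$ are sampled conditionally independently given $\mathcal F_{t-1}$, we have $\ex[\A_{i_t,j_t}\mid\mathcal F_{t-1}]=\x(t)^\top\A\y(t)$. The increments therefore form a martingale difference sequence. Because $\A\in[-1,1]^{\xnum\times\ynum}$, each increment lies in $[-2,2]$.

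Apply Azuma--Hoeffding to this sequence. With probability at least $1-\delta$, the sum is at most $2\sqrt{2T\log(1/\delta)}=\bigoh(\sqrt{T\log(1/\delta)})$. No union bound over $j$ is needed here, which is why the bound carries no $\log\ynum$ factor, unlike Proposition~\ref{prop:bandit-to-fullinfo-regret}.

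Combining the deterministic bound on the first sum with the high-probability bound on the second gives the claim. The only delicate point is the filtration bookkeeping. One must check that $\x(t),\y(t)$ are $\mathcal F_{t-1}$-measurable, so that the conditional expectation identity holds even when the optimizer adapts to past samples. This is given by the setup of Proposition~\ref{prop:bandit-to-fullinfo-regret}, so I expect no real obstacle.
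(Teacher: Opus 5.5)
Your proposal is correct and matches the paper's proof. You use the same split into the mixed-strategy payoff, bounded by the reward sandwich $T\cdot\val+\gregret_y(T)$, plus the martingale remainder (term $(I)$), which Azuma--Hoeffding at confidence $1-\delta$ bounds by $2\sqrt{2T\log(1/\delta)}$; the only difference is that you re-derive the sandwich deterministically via the averaged strategy $\bar\x$, whereas the paper simply cites it from the introduction.
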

\begin{proof}
    Observe that
\begin{align*}
    \sum_{t=1}^T \A_{i_t,j_t}
    &=
    \sum_{t=1}^T \x(t)^\top \A\,\y(t)
    +
    \sum_{t=1}^T\Big(\A_{i_t,j_t}-\x(t)^\top \A\,\y(t)\Big) \\
    &\le
    T\cdot \val + \gregret_y(T)
    +
    \sum_{t=1}^T\Big(\A_{i_t,j_t}-\x(t)^\top \A\,\y(t)\Big) \\
    &\le
    T\cdot \val + \gregret_y(T)
    +
    2\sqrt{2T\log\frac1\delta},
\end{align*}
with probability at least $1-\delta$, where the last inequality follows from \eqref{eq:azuma-one}.
\end{proof}
\section{Experimental Evaluation}
\label{appendix:experiment}
{We design experiments to illustrate both passive and active utility extraction by the optimizer in the full-feedback setting. We further include bandit-feedback experiments to examine how these phenomena change when the learner observes only realized payoffs. Unlike the full-feedback setting, the bandit experiments do not exhibit the same clean qualitative patterns, likely due to the additional sampling noise and the variance introduced by importance-weighted feedback. Since our theoretical analysis is restricted to full feedback, we leave a rigorous understanding of the bandit-feedback setting as an open question for future work. All experiments were run on a standard laptop (MacBook M1 Pro). The full-feedback experiments took less than one minute to run, whereas the bandit-feedback experiments took around three hours for each game.}

\subsection{Setup.}
We evaluate the FTRL dynamics induced by three regularizers: negative entropy, Euclidean, and negative Tsallis with \(q=\tfrac12\). We consider both the full-feedback and bandit-feedback settings. In each setting, we study three zero-sum games and two optimizer strategies: the max-min strategy and the alternating strategy defined from Equation~\eqref{eq:push-to-ek}.

For all experiments, the learner plays
\[
\y(t)=\choicemap_{\reg}\!\bigl(\eta \HR(t)\bigr),
\qquad
\eta=T^{-\beta},
\qquad
\beta\in\{0,0.3,0.5,0.7,1.0\}.
\]

\paragraph{Full-Feedback Update.}
In the full-feedback setting, after the optimizer plays \(\x(t)\), the learner observes the full reward vector \(-\A^\top \x(t)\) and updates
\[
\HR(t+1)=\HR(t)-\A^\top \x(t).
\]

\paragraph{Bandit-Feedback Update.}
Appendix~\ref{appendix:azuma} studies realized-payoff concentration for sampled actions, treating the mixed strategies as given. The experiments below use a different, algorithmic bandit FTRL update: the learner must update its score from a one-coordinate importance-weighted estimate of the reward vector. Suppose first that the learner samples \(j_t\sim \y(t)\), while the optimizer samples \(i_t\sim \x(t)\). The learner observes only the realized reward \(-A_{i_t,j_t}\), and forms the importance-weighted estimator (see \citet[Section 11.2]{lattimore2020bandit} for background on importance-weighted estimators in the EXP3 algorithm for adversarial bandits):
\begin{align}
\label{eq:bandit-estimator-exp}
\hat r(t)=\frac{-A_{i_t,j_t}}{\y(t)_{j_t}}\,e_{j_t}.
\end{align}
This estimator is unbiased conditionally on the past
\(
\mathbb E[\hat r(t)\mid \mathcal F_{t-1}]=-\A^\top \x(t).
\)

However, without explicit exploration, the estimator is only well-defined and
unbiased on the support of \(\y(t)\). If a non-steep FTRL, like Euclidean regularizer, assigns zero probability
to an action, then that action is never sampled, and the bandit feedback no
longer provides an unbiased estimate of the full reward vector. To avoid this
issue, we also consider the explicitly explored sampling distribution
\[
    q(t)=(1-\gamma)\y(t)+\gamma u,
    \qquad
    u=\frac{1}{m}\mathbf 1,
\]
where \(\gamma\in[0,1]\) is the exploration parameter. In this case, the learner
samples \(j_t\sim q(t)\) and uses
\begin{align}
\label{eq:bandit-estimator-exp-exploration}
\hat r_\gamma(t)
=
\frac{-A_{i_t,j_t}}{q(t)_{j_t}}\,e_{j_t}.
\end{align}
Since \(q(t)_j\ge \gamma/m\) for every action \(j\), the denominator is bounded
away from zero whenever \(\gamma>0\). Moreover,
\[
    \mathbb E[\hat r_\gamma(t)\mid \mathcal F_{t-1}]
    =
    -\A^\top \x(t),
\]
so the estimator remains unbiased for the full reward vector (see \citet{auer2003nonstochastic} for details on explicit exploration in EXP3). The score vector is then updated by
\(
\HR(t+1)=\HR(t)+\hat r_\gamma(t)
\)
in the experiment.

We consider the following three zero-sum games.

\begin{description}[leftmargin=1.5cm,style=nextline]
\item[\textbf{Game 1.}]
The optimizer payoff matrix is
\[
A =
\begin{pmatrix}
1 & -1 & 1 & 2\\
-1 & 1 & 1 & 0\\
0 & 0 & -1 & 0
\end{pmatrix},
\]
and the learner payoff matrix is \(-A\).
A max-min strategy is \(x^\star=(\tfrac13,\tfrac13,\tfrac13)\), with
\(\brep(x^\star)=\{1,2\}\), game value \(0\), and
\(\nsub=4-2=2\).
For the alternating strategy, we use Equation~\eqref{eq:push-to-ek} with \(\ell=1\), yielding
\[
x'=\Bigl(\tfrac59,\tfrac29,\tfrac29\Bigr),
\qquad
x''=\Bigl(\tfrac19,\tfrac49,\tfrac49\Bigr).
\]

\item[\textbf{Game 2.}]
The Rock--Paper--Scissors payoff matrix is
\[
A_{\mathrm{RPS}}=
\begin{pmatrix}
0 & -1 & 1\\
1 & 0 & -1\\
-1 & 1 & 0
\end{pmatrix},
\]
and the learner payoff matrix is \(-A_{\mathrm{RPS}}\).
The max-min strategy is \(x^\star=(\tfrac13,\tfrac13,\tfrac13)\), with
\(\brep(x^\star)=\{1,2,3\}\), game value \(0\), and
\(\nsub=3-3=0\).
We use the same alternating construction as in Game~1.

\item[\textbf{Game 3.}]
The third game is a random zero-sum game
\(A_{\mathrm{rand}}\in[-1,1]^{30\times 30}\),
whose entries are drawn independently and uniformly from \([-1,1]\).
We compute a max-min strategy \(x^\star\) and the game value
\(\Valop(A_{\mathrm{rand}})\) via linear programming.
We then choose
\(\ell\in\supp(x^\star)\) maximizing \(x^\star_\ell\) over the support, and construct the alternating strategy using Equation~\eqref{eq:push-to-ek}.
\end{description}

\subsection{Full-Feedback Experiments}
In the full-feedback experiments, for each triple \((\beta,\reg,\A)\) and each
optimizer strategy, we run one trajectory from the initialization
\(\HR(0)=0\). The horizon \(T\) is chosen separately for each game and is
specified in the corresponding figure. We report the cumulative optimizer
surplus
\[
    \sum_{t=1}^T \x(t)^\top \A\,\y(t)-T\,\val .
\]

\paragraph{Rock--Paper--Scissors: isolating the alternating effect}
Figure~\ref{fig:full-rps} illustrates the contrast between passive and active utility extraction in Rock--Paper--Scissors. When the optimizer repeatedly plays the max-min strategy \(\xne=(1/3,1/3,1/3)\), the cumulative surplus is identically zero for all three regularizers. This is the boundary case of the Inverse-Rate Law: every learner action is a best response to \(\xne\), so \(\nsub=0\), and there is no passive surplus for the optimizer to collect.

The right column shows that the disappearance of fixed-strategy surplus does not make the game unexploitable. Under the alternating trap strategy, the optimizer alternates between two perturbations whose average is \(\xne\). Thus the value benchmark is preserved over each pair of rounds, while the perturbations reverse the payoff ordering between two learner best responses. Because the FTRL iterate cannot track this reversal instantaneously, its regularized inertia creates a positive payoff gap across each pair of rounds. Summing these gains yields the active-extraction rate predicted by Theorem~\ref{thm:avg-main}, namely \(\Omega(\eta T)\).

Since we set \(\eta=T^{-\beta}\), the active surplus scales as \(T^{1-\beta}\), so smaller \(\beta\) produces steeper growth. The small sawtooth pattern is the finite-time signature of the alternating trap. Across rows, negative entropy, negative Tsallis with \(q=1/2\), and Euclidean regularizer all exhibit positive surplus under the alternating optimizer. This supports the main message that active extraction is not specific to MWU, but is a structural phenomenon for FTRL.

\begin{figure}[H]
\centering
\includegraphics[width=0.95\textwidth]{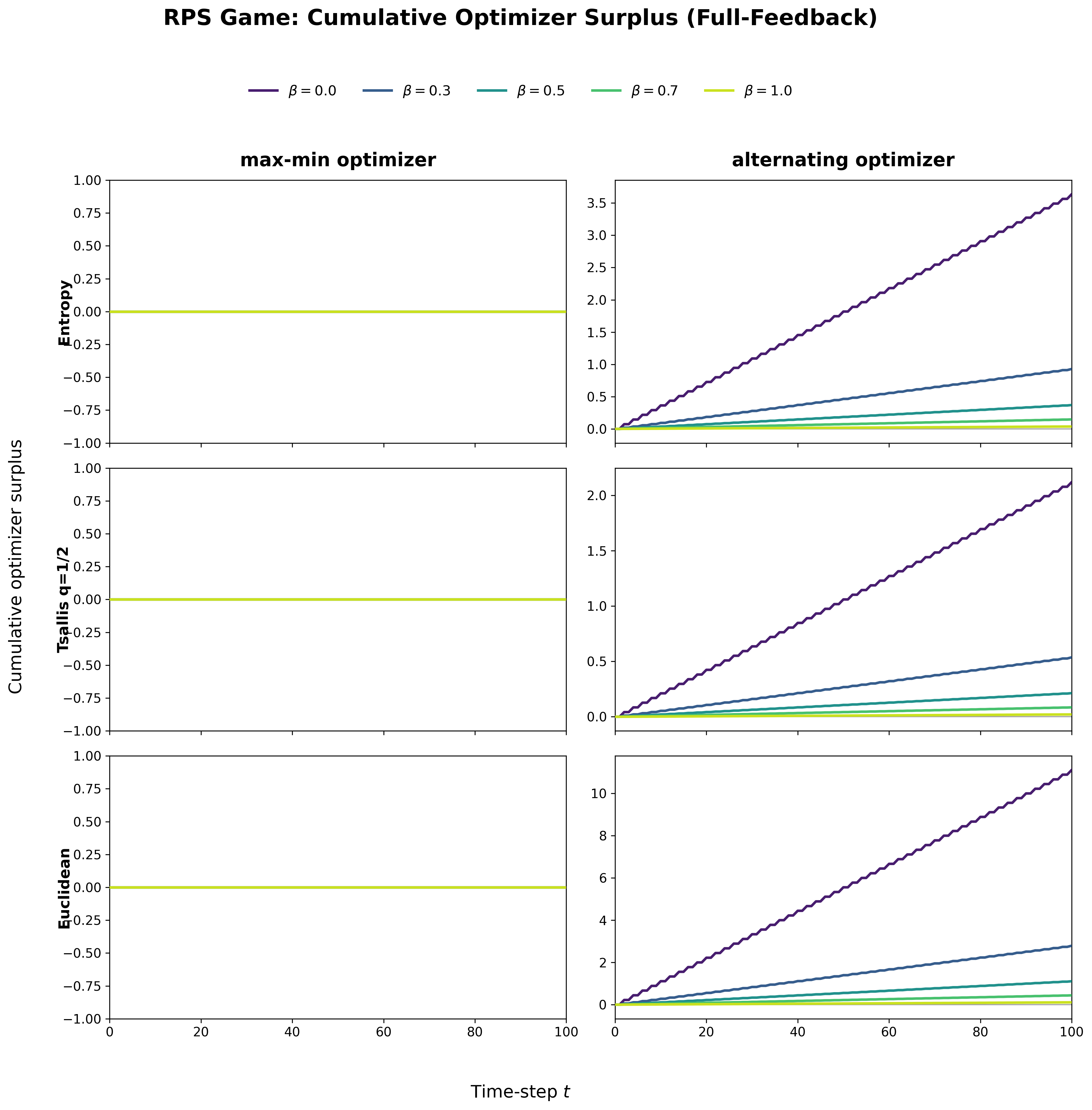}
\caption{Full-feedback results for RPS. We plot cumulative optimizer surplus under the fixed max-min optimizer (left) and alternating trap optimizer (right) for Negative Entropy, Negative Tsallis \(q=1/2\), and Euclidean regularizers. Curves use \(\eta=T^{-\beta}\), \(\beta\in\{0,0.3,0.5,0.7,1.0\}\), with \(T=100\). For the fixed strategy, all surplus curves are identically zero.
}
\label{fig:full-rps}
\end{figure}

The three regularizers exhibit the same predicted \(\Omega(\eta T)\) order, but Euclidean regularizer has a noticeably larger empirical constant at this horizon. This is consistent with the curvature dependence in the variance lower bound, Equation~\eqref{eq:W0Var-lb} of Lemma~\ref{lem:variance-identity}. Euclidean regularizer has constant curvature \(\theta''=1\), whereas negative entropy and negative Tsallis with \(q=1/2\) have curvatures \(\theta''(y_i)=1/y_i\) and \(\theta''(y_i)=\frac12 y_i^{-3/2}\), respectively. This explains why the Euclidean curve can have a larger empirical slope even though all three regularizers follow the same \(\Omega(\eta T)\) mechanism.

\paragraph{Game 1: coexistence of passive and active surplus}
\begin{figure}[H]
\centering
\includegraphics[width=0.95\textwidth]{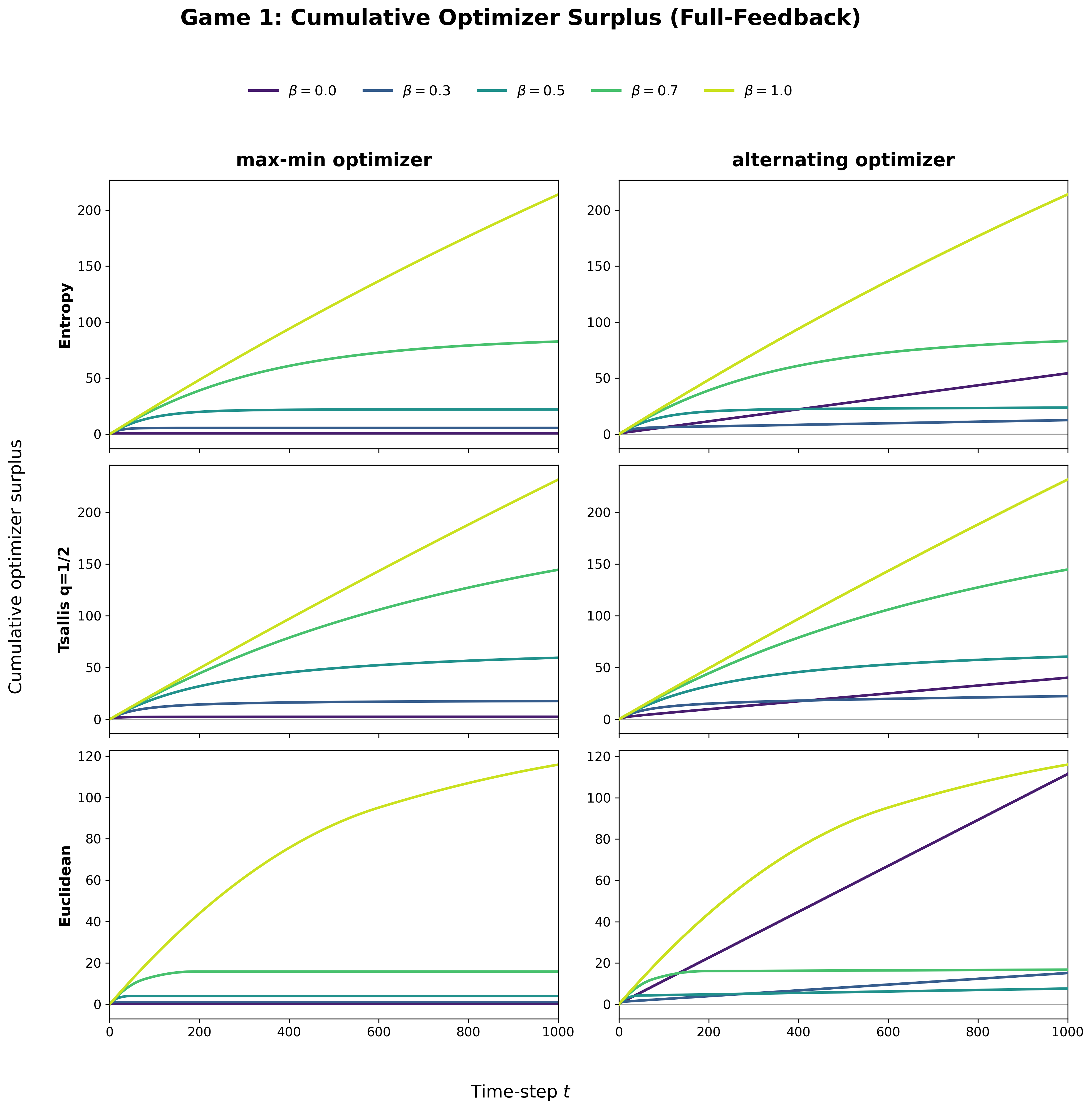}
\caption{Full-feedback results for Game~1. We plot cumulative optimizer surplus under the fixed max-min optimizer (left) and alternating trap optimizer (right) for Negative Entropy, Negative Tsallis \(q=1/2\), and Euclidean regularizers. Curves use \(\eta=T^{-\beta}\), \(\beta\in\{0,0.3,0.5,0.7,1.0\}\).}
\label{fig:full-game1}
\end{figure}
Figure~\ref{fig:full-game1} applies the same fixed-versus-alternating comparison to Game~1. Unlike RPS, the max-min optimizer strategy has \(\nsub=2\).Hence the fixed max-min optimizer already extracts
positive surplus. This is the passive mechanism in Theorem~\ref{thm:exploit-finite}: the learner initially assigns mass to
suboptimal actions, and the optimizer collects the resulting convergence lag at scale \(\Theta(\nsub/\eta)\). Since \(\eta=T^{-\beta}\), this passive contribution scales as \(T^\beta\), so larger \(\beta\) produces larger fixed-strategy surplus.

The right column uses the same alternating construction as in Figure~\ref{fig:full-rps}, so we do not repeat the mechanism here. In Game~1, this active term is added to the passive surplus already present under the fixed max-min strategy. Thus the total surplus reflects two competing scales: the passive \(\Theta(\nsub/\eta)=\Theta(T^\beta)\) term and the active \(\Omega(\eta T)=\Omega(T^{1-\beta})\) term. Small \(\beta\) emphasizes the active contribution, while large \(\beta\) emphasizes the passive inverse-rate contribution, making the fixed and alternating curves closer on the raw scale.

The regularizer-dependent constants follow the same curvature explanation discussed for RPS. Euclidean regularizer has a larger empirical constant under the alternating strategy when $\beta = 0$. Under the fixed max-min strategy, the optimizer obtains less surplus against the Euclidean regularizer, consistent with the discussion of steep and non-steep geometries and the price of best response in Section~\ref{subsec:pbr}.

Overall, Game~1 shows that the passive and active mechanisms are complementary. Suboptimal learner actions generate the passive \(\Theta(\nsub/\eta)\) surplus, while alternating play can further exploit FTRL inertia through the \(\Omega(\eta T)\) trap.

\paragraph{Random games: surplus extraction beyond toy examples}
Figure~\ref{fig:full-game3} repeats the fixed-versus-alternating comparison for a random \(30\times 30\) zero-sum game. As in Game~1, the fixed max-min
optimizer already obtains positive surplus, indicating that the random instance contains learner actions that are suboptimal against \(\xne\). The passive contribution therefore follows the Inverse-Rate Law, scaling as \(\Theta(\nsub/\eta)=\Theta(T^\beta)\).

The alternating part use the same active trap mechanism discussed above. In this random game, the alternating optimizer adds an \(\Omega(\eta T)=\Omega(T^{1-\beta})\) contribution on top of the passive surplus. Thus small \(\beta\) emphasizes the active term, while large \(\beta\) emphasizes the passive inverse-rate term.

The regularizer-dependent constants are especially visible in this higher-dimensional instance. Under the alternating strategy, Euclidean regularizer has a much larger empirical constant, particularly when \(\beta=0\). This is consistent with the curvature explanation: in a \(30\times 30\) game, the entropy and Tsallis iterates can place smaller mass on best-response actions, making \(\theta''(y_i)=1/y_i\) and \(\theta''(y_i)=\frac{1}{2}y_i^{-3/2}\) larger relative to the constant Euclidean curvature \(\theta''=1\) than in the previous games. Under the fixed max-min strategy, the optimizer obtains less surplus against Euclidean regularizer for all values of \(\beta\), consistent with the steep versus non-steep geometry discussion and the price of best response in Section~\ref{subsec:pbr}.

Overall, the random-game experiment shows that the passive and active mechanisms are not artifacts of the handcrafted examples. Passive surplus comes from suboptimal learner actions, active surplus comes from the alternating trap, and the regularizers mainly differ through geometry-dependent constants.

\begin{figure}[H]
\centering
\includegraphics[width=0.95\textwidth]{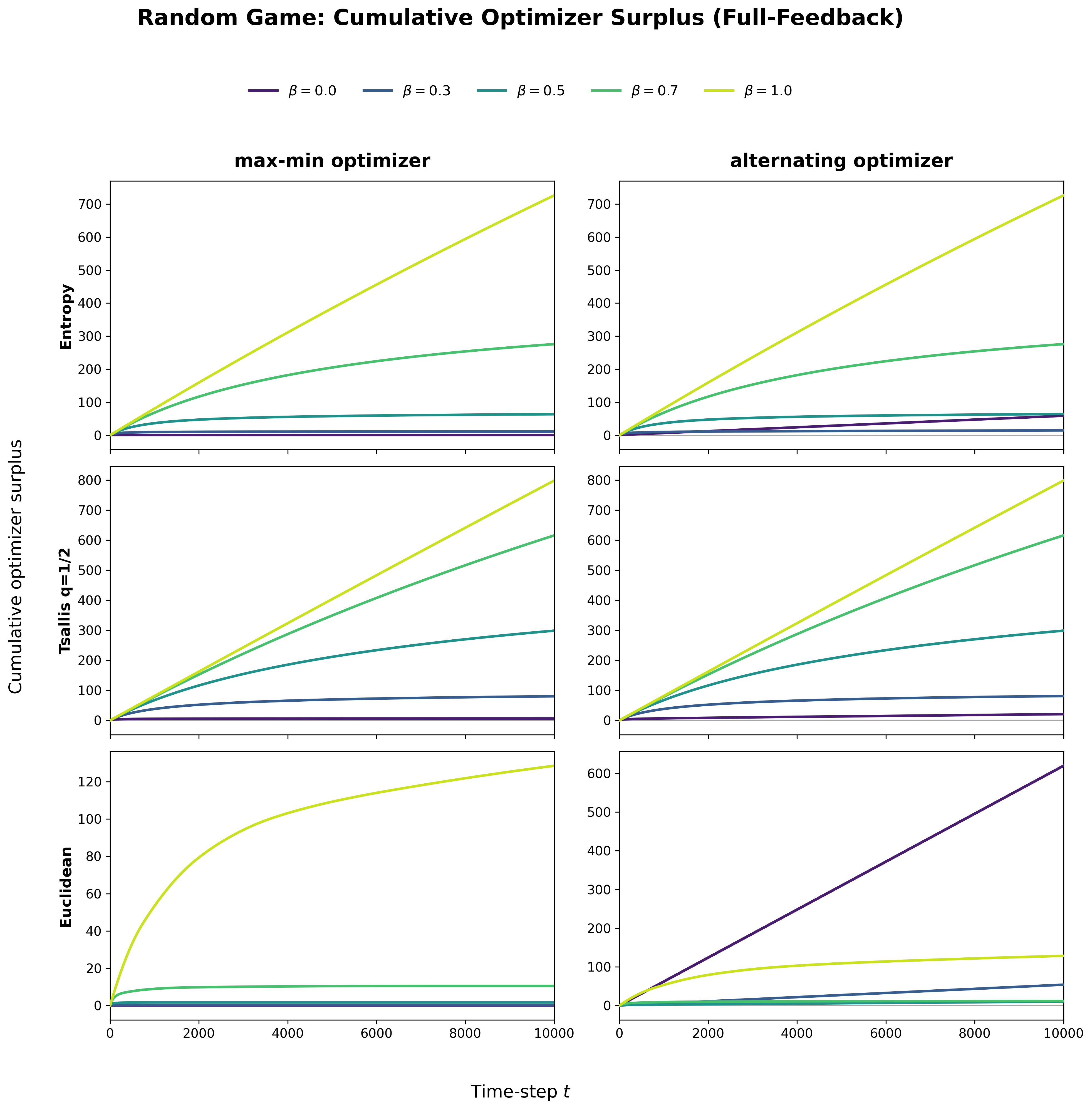}
\caption{Full-feedback results for random Game~3. We plot cumulative optimizer surplus under the fixed max-min optimizer (left) and alternating trap optimizer (right) for Negative Entropy, Negative Tsallis \(q=1/2\), and Euclidean regularizers. Curves use \(\eta=T^{-\beta}\), \(\beta\in\{0,0.3,0.5,0.7,1.0\}\), with \(T=10000\). 
}
\label{fig:full-game3}
\end{figure}

\subsection{Bandit-Feedback Experiments}
In the bandit experiments with explicit exploration, for each triple \((\beta,\reg,\A, \gamma)\) and each optimizer strategy, we run \(500\) independent trials with horizon \(T=5000\), again with initialization \(\HR(0)=0\). For each trial, we record the cumulative realized optimizer surplus
\[
\sum_{t=1}^T A_{i_t,j_t}-T\,\val.
\]
Across trials, we summarize the distribution at each time \(t\) by the median together with the \(10\)th and \(90\)th percentiles because of realized action draws and importance-weighted feedback. 
We set \(\gamma\approx 0.02, 0.025, 0.11\) for the three games below, respectively, following the standard EXP3 tuning in \citet[Corollary~3.2]{auer2003nonstochastic}
\[
    \gamma
    =
    \sqrt{\frac{m\ln m}{(e-1)T}},
\]
where \(m\) is the number of learner actions and \(T\) is the horizon. We use the same exploration rate for all regularizers as a standard variance-control choice for the importance-weighted bandit estimator.

\paragraph{Rock--Paper--Scissors under bandit-feedback}
Figure~\ref{fig:bandit-rps} shows that, under bandit feedback, the fixed max-min optimizer produces no visible surplus. This mirrors the full-feedback RPS behavior in the left column of Figure~\ref{fig:full-rps}, where the fixed-strategy surplus is identically zero. The wide percentile bands reflect the additional randomness from realized action draws and importance-weighted feedback, rather than passive utility extraction.

The alternating trap optimizer exhibits a less clean pattern under bandit-feedback than in the full-feedback experiment. In contrast to the right column of Figure~\ref{fig:full-rps}, the median realized surplus remains close to zero for all three regularizers and all tested step sizes, while the percentile bands are large. A rigorous analysis of whether utility extraction persists in expectation is left for future work.

\paragraph{Game 1 under bandit-feedback}
We next repeat the Game~1 experiment under bandit-feedback. The median realized surplus is clearly positive under both optimizer strategies, and its dependence on \(\beta\) mainly follows the inverse-rate law already visible in the full-feedback experiment: larger \(\beta\) produces larger median surplus. This is consistent with Game~1 having suboptimal learner actions against \(\xne\), which already generate passive utility extraction under the fixed max-min optimizer.

The \(10\)th--\(90\)th percentile bands are wider for smaller \(\beta\), indicating that sampling noise has a larger effect when the step size is larger. Under the alternating optimizer, the median curves remain broadly similar to the fixed max-min curves, rather than showing the clean additional separation seen in the right column of Figure~\ref{fig:full-game1}. Thus, Figure~\ref{fig:bandit-game1} still displays the passive utility extraction, but it does not provide a clean additional active extraction under alternating play.

\paragraph{Random game under bandit-feedback}
Figure~\ref{fig:bandit-random} shows the same qualitative behavior as Game~1 under bandit-feedback. The median realized surplus still follows the inverse-rate law. However, the alternating optimizer does not produce a clean additional extraction.

Overall, the full-feedback experiments match the theoretical predictions: fixed max-min play exhibits the inverse-rate passive surplus when suboptimal learner actions are present, while alternating play produces the additional active-extraction effect. In the bandit-feedback experiments, the median realized surplus still reflects the inverse-rate law in games with suboptimal actions, but the alternating effect is not cleanly visible at the median level. We also note that the bandit surplus curves include the cost of explicit exploration, which is absent from the full-feedback experiments and may further affect the observed trajectories. Whether active utility extraction persists for the expected realized reward under bandit-feedback remains an important direction for future work.

\begin{figure}[H]
\centering
\includegraphics[width=\textwidth]{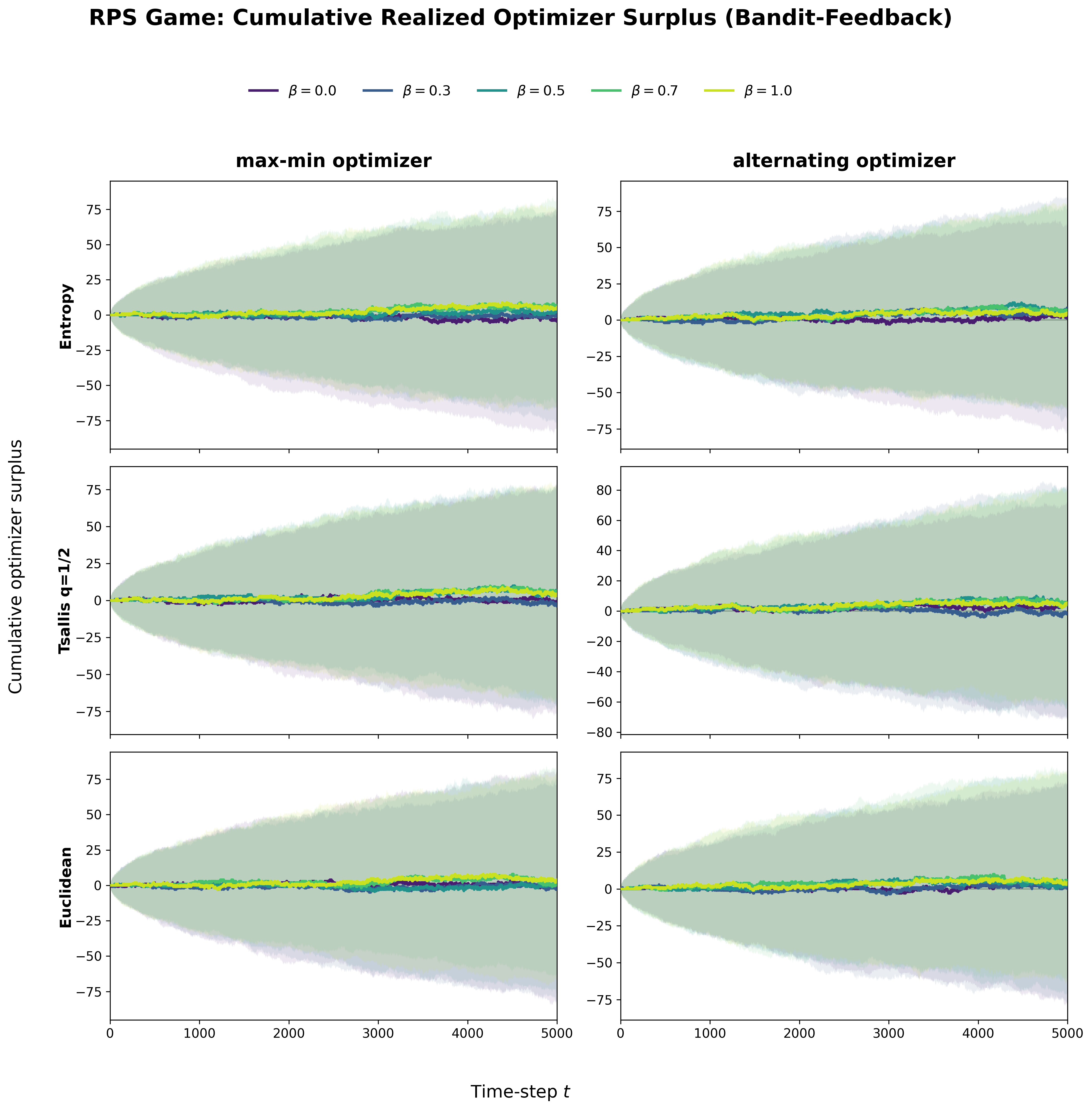}
\caption{
    Bandit-feedback results for RPS. We plot cumulative realized optimizer surplus under the fixed max-min optimizer (left) and the alternating trap optimizer (right) for Negative Entropy, Negative Tsallis with \(q=1/2\), and Euclidean regularizer (top to bottom). Curves correspond to step sizes \(\eta=T^{-\beta}\), \(\beta\in\{0,0.3,0.5,0.7,1.0\}\), with \(T=5000\) and \(\gamma=0.02\). For each setting, we run \(500\) independent trials and plot the median together with the \(10\)th and \(90\)th percentiles.
}
\label{fig:bandit-rps}
\end{figure}

\begin{figure}[H]
\centering
\includegraphics[width=\textwidth]{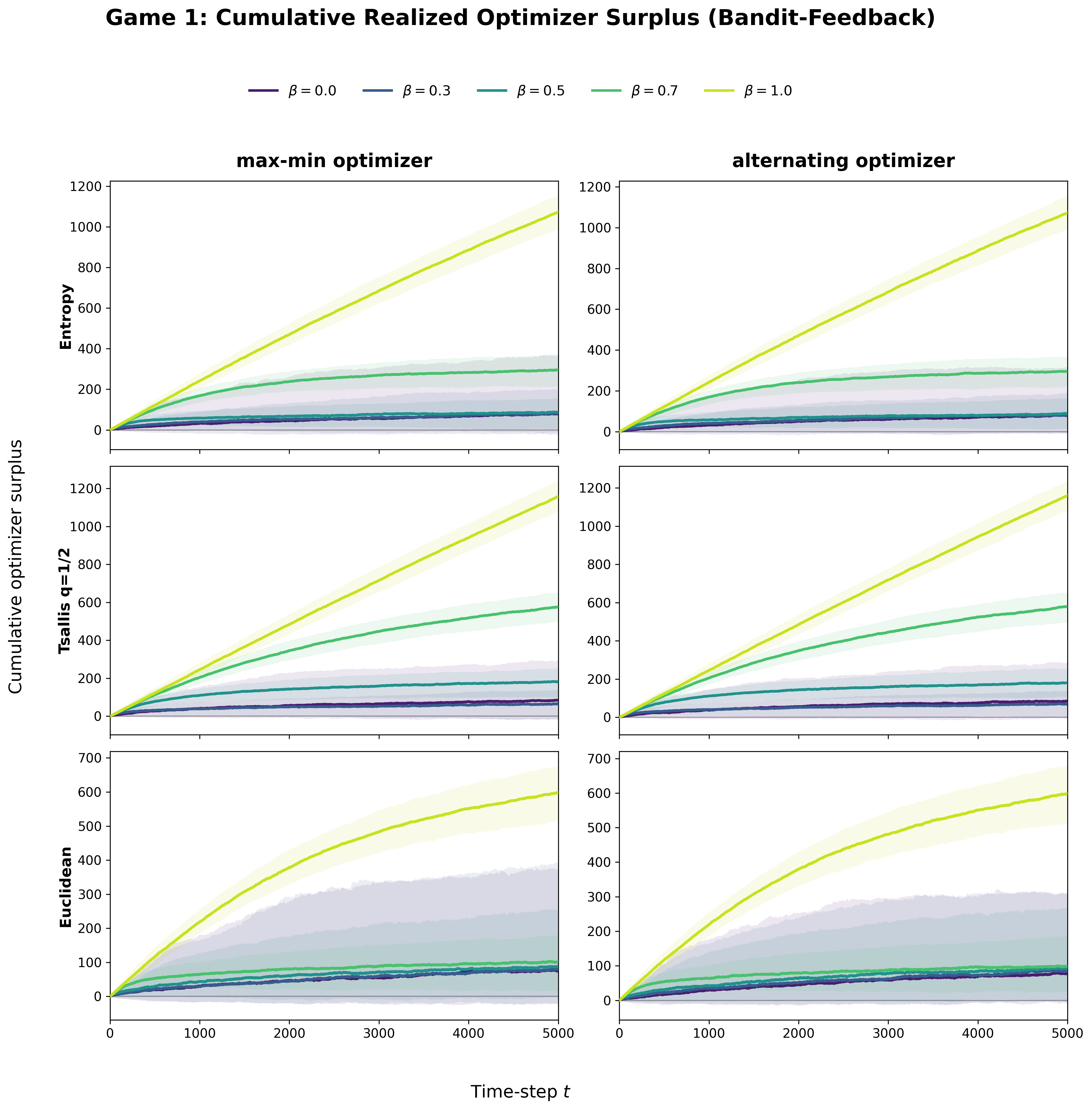}
\caption{Bandit-feedback results for Game~1. We plot the cumulative realized
optimizer surplus under the fixed max-min optimizer (left) and the alternating
trap optimizer (right) for negative entropy, negative Tsallis with \(q=1/2\),
and Euclidean regularizer (top to bottom). Curves correspond to step sizes
\(\eta=T^{-\beta}\), \(\beta\in\{0,0.3,0.5,0.7,1.0\}\), with \(T=5000\) and \(\gamma=0.025\). For
each parameter setting we run \(500\) independent trials and plot the median
together with the \(10\)th and \(90\)th percentiles. 
}
\label{fig:bandit-game1}
\end{figure}

\begin{figure}[H]
\centering
\includegraphics[width=\textwidth]{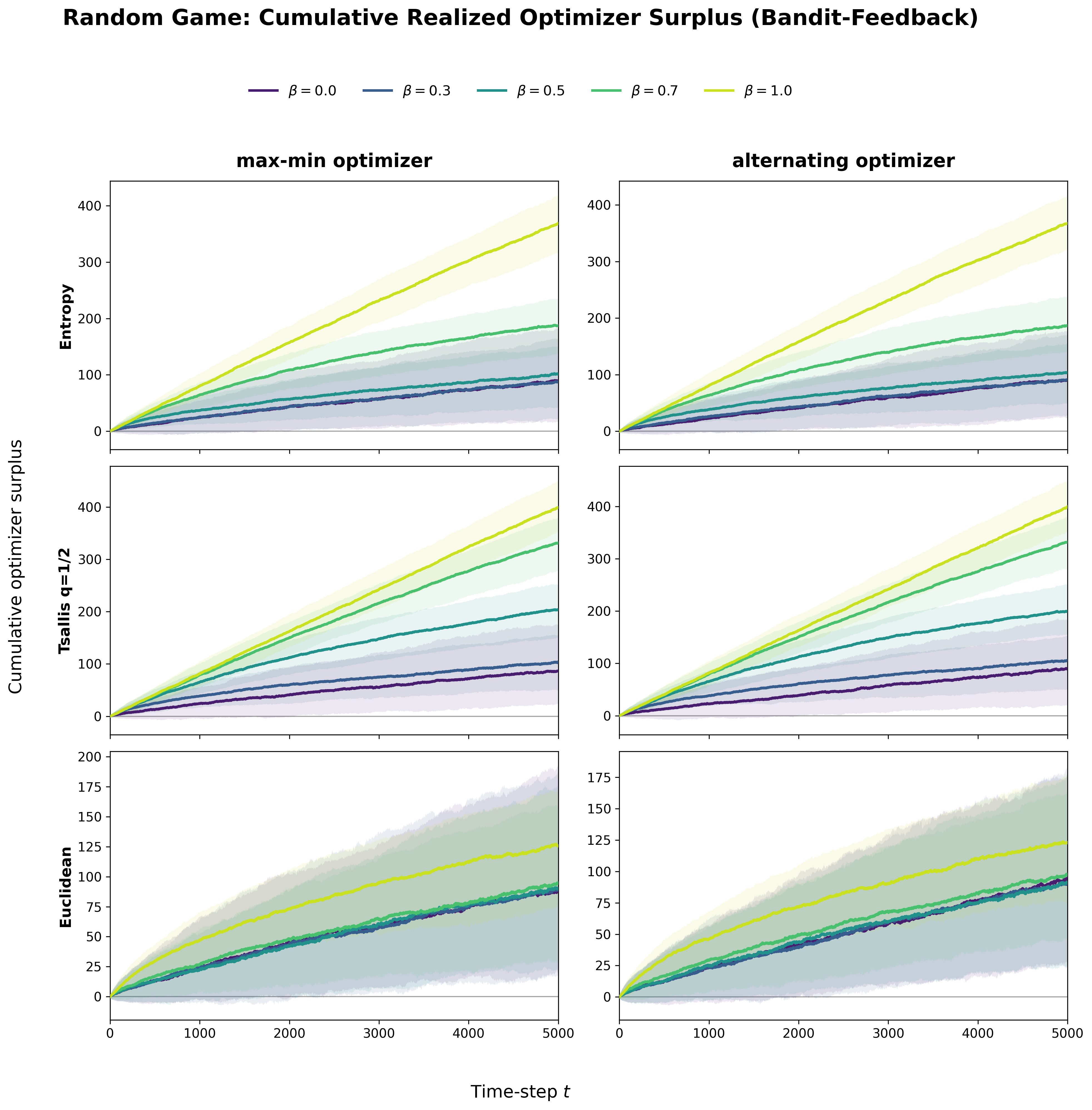}
\caption{Bandit-feedback results for random game. We plot cumulative realized optimizer surplus under the fixed max-min optimizer (left) and the alternating trap optimizer (right) for negative entropy, negative Tsallis with \(q=1/2\), and Euclidean regularizer (top to bottom). Curves correspond to step sizes \(\eta=T^{-\beta}\), \(\beta\in\{0,0.3,0.5,0.7,1.0\}\), with \(T=5000\) and \(\gamma=0.11\). For each parameter setting we run \(500\) independent trials and plot the median together with the \(10\)th and \(90\)th percentiles.}
\label{fig:bandit-random}
\end{figure}

\section{Future Directions: An Extended Discussion}
\label{appendix:future-work}

The framework developed in this paper — pricing the FTRL learner's regret as a transferable resource extracted by a clairvoyant optimizer, and reading the regularizer's geometry as the mediator of that transfer — opens several natural research directions. We organize them around the three themes flagged in the main-text epilogue: the learner's flexibility, partial feedback, and structure beyond full knowledge. For each, we describe the question, what we believe the right form of the answer should be, and which technical pieces of the present work do (and do not) carry over.

\subsection{The Learner's Flexibility}
\label{appendix:future-flexibility}

Our results assume two simplifying choices on the learner's side: a constant step size $\eta$, and an unconstrained ability to revise the mixed strategy from one round to the next. Relaxing either assumption changes the picture in a structured way, and the two relaxations interact in ways we find conceptually appealing.

\paragraph{Adaptive and decaying schedules.}
The inverse-rate law $\Theta(N_{\mathrm{sub}}/\eta)$ and the alternating-trap rate $\Omega(\eta T / \mathrm{poly}(n,m))$ balance precisely at $\eta = \Theta(T^{-1/2})$, recovering the classical $\sqrt{T}$ benchmark — but as a \emph{consequence} of the static–dynamic duality rather than as a free design choice. For an adaptive or vanishing schedule $\{\eta_t\}_{t=1}^T$, the natural conjecture is that $\eta$ should be replaced by an effective rate derived from a \emph{cumulative learning-rate budget}. The cleanest candidate is $H_T := \sum_{t=1}^T \eta_t$, yielding a static-style bound of the form $\Omega(N_{\mathrm{sub}} \cdot T / H_T)$ in terms of the time-averaged rate $\overline{\eta}_T = H_T/T$, and a trap-style bound of the form $\Omega(H_T / \mathrm{poly}(n,m))$. Under the AdaGrad-style schedule $\eta_t = \Theta(1/\sqrt{t})$, $H_T = \Theta(\sqrt{T})$, again recovering the benchmark. Establishing this rigorously requires extending our continuous-time decomposition (Theorem~\ref{thm:main-reduction}) to time-inhomogeneous mirror flows, where the standard energy arguments acquire an additional drift term proportional to $\dot\eta_t$. The technical question is whether this drift is benign — preserving the inverse-rate structure with $H_T$ replacing $\eta T$ — or whether it introduces a genuinely new failure mode for steep regularizers, whose mirror map is most sensitive precisely in the regimes where $\eta_t$ shrinks fastest.

\paragraph{Switching-constrained and lazy learners.}
A more realistic class of learners cannot revise their mixed strategy arbitrarily fast — either because of an explicit \emph{switching cost} per unit movement in $y$, or because the update rule itself enforces a small step (e.g., projected lazy mirror descent, or FTRL with proximal regularization). This constraint cuts both ways. On one hand, the alternating trap of Theorem~\ref{thm:avg-main} relies on the learner's response cycling between two best responses within a constant number of rounds; if switching is penalized, the trap is weakened or destroyed entirely, and the dynamic surplus drops below $\Omega(\eta T)$. On the other hand, the inverse-rate law \emph{sharpens}: a learner that cannot abandon suboptimal actions quickly accrues a longer transient on each one. We conjecture the static surplus grows from $\Theta(N_{\mathrm{sub}}/\eta)$ to $\Theta(N_{\mathrm{sub}}/(\eta \cdot \kappa))$, where $\kappa \in (0,1]$ is a switching-cost parameter measuring the maximum per-round movement allowed. The borderline case — where the learner's switching budget is comparable to the optimizer's alternation period — is the regime in which we expect the most interesting phase transition, and possibly a third extraction mechanism that interpolates between static and alternating play.

\paragraph{Joint adaptivity.}
Combining the two relaxations above is the most realistic — and most challenging — variant. A learner who tunes $\eta_t$ adaptively while paying a switching cost effectively chooses a trajectory in a two-dimensional design space, and the optimizer's optimal extraction strategy presumably depends on which axis dominates. We expect a clean trade-off curve here, parameterized by the ratio of the switching budget to the cumulative learning-rate budget $H_T$.

\subsection{Partial Feedback}
\label{appendix:future-bandit}

Appendix~\ref{appendix:azuma} extends our extraction laws to bandit feedback under \emph{multiplicatively suboptimal} step-size choices, via an Azuma-style concentration argument.
The behaviour at the optimal step size — say, EXP3 with $\eta_T = \Theta(\sqrt{\log n / (nT)})$ — is genuinely unsettled, and the experiments in Appendix~\ref{appendix:experiment} are themselves ambiguous on this point: under the alternating-trap optimizer, the median realized surplus collapses toward zero across all three regularizers and all tested step sizes, while the percentile bands remain wide. Two competing hypotheses are consistent with this evidence.

\paragraph{The variance hypothesis.}
The importance-weighting noise of $\hat r_t = r_t \cdot \mathbf{1}[i_t = i] / y_t(i)$ inflates the second moment of the learner's gradient estimate by a factor of $1/y_{\min}$, where $y_{\min}$ is the minimum probability the learner places on any action. Under this hypothesis, the same $o(T)$ leak persists in expectation, but is masked at the median level by variance that scales with $1/y_{\min}$. The leak should re-emerge over horizons $T \gtrsim 1/y_{\min}^2$, i.e., once the optimizer has had enough rounds for the law of large numbers to dominate the per-round noise. A clean test would be to run the bandit experiment at exponentially longer horizons and look for a crossover at which the median surplus separates from zero.

\paragraph{The cancellation hypothesis.}
Alternatively, the unbiasedness of $\hat g_t$ together with the learner's optimally-tuned regularization actually averages the leak away in expectation, and the true bandit surplus is $o(\sqrt{T})$ or smaller. This would mark a genuine separation between the full-feedback and bandit settings: the geometric vulnerability of FTRL would be a property of \emph{deterministic} gradient access, not of the regularizer alone. Distinguishing the two hypotheses requires either a tight Azuma-style upper bound matching our lower bound at the optimal rate, or a constructive optimizer strategy that survives unbiased noise — both of which appear to demand new technical tools beyond the concentration arguments developed here. Our prior is mildly toward the variance hypothesis, since the inverse-rate mechanism is fundamentally about transient suboptimality, and unbiasedness alone does not eliminate transients; but we view the question as open.

\paragraph{Bandits with side information.}
A softer variant, intermediate between full and bandit feedback, is the case where the learner observes the realized payoff plus a coarse signal about unplayed actions — for example, a noisy unbiased estimate of the full gradient with bounded variance. We expect our extraction rates to interpolate smoothly between the two extremes as a function of the signal's variance, and this regime may be technically more tractable than pure bandit feedback while still capturing the realistic case of an analyst who has access to historical aggregates but not full counterfactuals.

\subsection{Structure Beyond Full Knowledge}
\label{appendix:future-structure}

The third theme groups together two extensions that look superficially different — going beyond two-player zero-sum games, and weakening the optimizer's payoff knowledge — but which share a common technical core: in both, the optimizer must operate without one of the structural primitives our current proofs rely on.

\paragraph{Fine grained analysis beyond two-player zero-sum.}
Our analysis leans on three structural features of the zero-sum bilinear setting: the reward sandwich $T \cdot \gvalue
\;\le\;
\mathrm{Rew}_T
\;\le\;
T \cdot \gvalue + \gregret(T)$ that pins the surplus into the $o(T)$ regime, the bilinear form $x^\top A y$ that decouples the two players' contributions, and the simplex geometry of mixed strategies that supports our regularizer dichotomy. Each of these breaks differently in the natural extensions. \emph{General-sum games} preserve the simplex geometry but lose the sandwich, so the right benchmark is the Stackelberg value $V^{\mathrm{Stack}}$ rather than the minimax value $V^*$; the inverse-rate law must be reformulated as extraction \emph{above the Stackelberg payoff}, and the alternating trap may yield linear surplus when the two values separate. \emph{Network and polymatrix games} preserve bilinearity locally but introduce cross-coupling between learners, opening the possibility of \emph{indirect} extraction, where the optimizer steers one learner through another — a phenomenon with no analogue in the bilateral setting. \emph{Extensive-form games} replace the simplex with the sequence-form polytope, and the regularizer's geometry interacts with the tree structure in ways not captured by our separability assumption. We conjecture the inverse-rate law survives in network games (with $N_{\mathrm{sub}}$ summed across players, weighted by the network's spectral structure) and in extensive-form games under dilated regularizers, but with constants that depend on the depth of the tree and possibly degrade exponentially in pathological branching patterns.

\paragraph{The partially-informed optimizer.}
Perhaps the most operationally interesting variant: in realistic learning markets, the optimizer rarely has full knowledge of the payoff matrix $A$, but often does know — or can fingerprint from a few rounds of observation — the learner's \emph{algorithmic profile}: the regularizer family, the step-size schedule, and the feedback model. We conjecture that a substantial fraction of our surplus is recoverable from the profile alone, with $A$ entering only as a lower-order term. Formally, let $S^\star(A, \mathcal{L})$ denote the optimal extractable surplus under full payoff knowledge against a learner $\mathcal{L}$, and define the profile-only worst-case surplus
\[
S^\dagger(\mathcal{L}) \;:=\; \sup_{\sigma}\; \inf_{A}\; \mathbb{E}\bigl[\, S(\sigma; A, \mathcal{L}) \,\bigr],
\]
where $\sigma$ ranges over optimizer strategies that depend only on the learner's profile and the observed history of play. The conjecture is that $S^\dagger(\mathcal{L}) \ge (1 - o(1)) \cdot S^\star(A, \mathcal{L})$ for a $1-o(1)$ fraction of payoff matrices drawn from a natural prior, with the gap closing as the optimizer accumulates observations. A constructive version would proceed in two phases: a short \emph{profiling} phase in which the optimizer estimates the active best-response correspondence from the learner's trajectory, followed by an \emph{extraction} phase in which the alternating trap of Section~\ref{section: toolbox} is deployed once two non-dominated best responses have been identified. A clean such reduction — say, $\widetilde O(n^2 m)$ rounds of profile-only exploration sufficient to recover the trap rate up to log factors — would unify our static and dynamic results into a single anytime algorithm, and would represent, in our view, the natural endpoint of the program initiated here.

\paragraph{Profile uncertainty.}
A symmetric variant is the case where the optimizer knows $A$ but is uncertain about the learner's profile — for example, the regularizer family or the step size is drawn from a small set and revealed only through play. Here we expect a robust extraction rate that interpolates between the worst-case and best-case profiles, with the optimizer paying a logarithmic exploration overhead to identify the active profile before deploying the matching trap. This is technically closer to a partial-monitoring problem than to the full-knowledge analysis of the present work, and we expect the right tools to come from the literature on online learning against unknown opponents rather than from FTRL theory itself.

\subsection{A Closing Thought}
\label{appendix:future-closing}

Across all three themes, a common pattern emerges: the inverse-rate law and the alternating trap are robust under \emph{relaxations of the learner's side} (slower updates, bandit feedback, structurally constrained games), but require new ideas under \emph{relaxations of the optimizer's side} (unknown payoffs, unknown profiles). This asymmetry is, we think, the right one to expect — the optimizer's leverage in our framework comes from its informational advantage, and chipping away at that advantage is precisely what makes the questions hard. We hope the framework here proves a useful starting point for chipping back.

\end{document}